\documentclass [12pt]{article}

\usepackage{amsmath,amsthm,amscd,amssymb}

\usepackage[small, centerlast, it]{caption}
\usepackage{epsfig}
\usepackage[titles]{tocloft}
\usepackage[latin1]{inputenc}
\usepackage{verbatim} 
\usepackage[active]{srcltx} 
\usepackage{fancyhdr}
\usepackage{caption}
\usepackage{enumerate}


\oddsidemargin 0cm      
\evensidemargin 0cm     
\headsep 20pt            
\textheight 21.5cm        
\textwidth 16cm         
\def\sp{\hskip -5pt}



\def\b1{{1\!\!1}}

\def\cB{{\mathcal B}}

\def\cD{{\mathcal D}}

\def\cH{{\mathcal H}}

\def\cJ{{\mathcal J}}
\def\cL{{\mathcal L}}

\def\cP{{\mathcal P}}

\def\cK{{\mathcal K}}
\def\cZ{{\mathcal Z}}

\def\sK{{\mathsf K}}
\def\sM{{\mathsf M}}
\def\sN{{\mathsf N}}
\def\sH{{\mathsf H}}

\def\bC{{\mathbb C}}           

\def\bH{{\mathbb H}}

\def\bD{{\mathbb D}}

\def\bN{{\mathbb N}}

\def\bR{{\mathbb R}}

\def\bZ{{\mathbb Z}}

\def\gp{{\mathfrak{g}}}

\def\gB{{\mathfrak B}}

\def\gg{{\mathfrak g}}

\def\gR{{\mathfrak R}}
\def\gS{{\mathfrak S}}
\def\gT{{\mathfrak T}}
\def\gU{{\mathfrak U}}

\def\gZ{{\mathfrak Z}}

\def\beq{\begin{eqnarray}}
\def\eeq{\end{eqnarray}}


\usepackage{sectsty}
\sectionfont{\normalsize}
\subsectionfont{\normalsize\normalfont\itshape}
\setlength{\cftbeforesecskip}{0.3mm}


\newtheoremstyle{plain}
{5pt}
{9pt}
{\itshape}
{}
{\itshape\bfseries}
{}
{1em}
{}

\theoremstyle{thm}
\newtheorem{theorem}{\em Theorem}[section]
\newtheorem{lemma}[theorem]{\em Lemma}
\newtheorem{corollary}[theorem]{\em Corollary}
\newtheorem{proposition}[theorem]{\em Proposition}
\newtheorem{definition}[theorem]{\em Definition}
\newtheorem{example}[theorem]{\em Example}
\newtheorem{remark}[theorem]{\em Remark}


\begin{document}

\par
\bigskip
\large
\noindent
{\bf  Quantum theory in  quaternionic Hilbert space: How Poincar\'e symmetry reduces the theory to the  standard complex one.}
\bigskip
\par
\rm
\normalsize


\noindent  {\bf Valter Moretti$^{a}$}, {\bf Marco Oppio$^{b}$}\\
\par

\noindent 
 $^a$ Department of  Mathematics University of Trento, and INFN-TIFPA \\
 via Sommarive 15, I-38123  Povo (Trento), Italy.\smallskip\\
 valter.moretti@unitn.it
\vspace{0.2cm}

\noindent
$^b$ Faculty of Mathematics, University of Regensburg,\\
Universit\"atstrasse 31, 93053 Regensburg, Germany \smallskip\\
marco.oppio@ur.de
\vspace{0.5cm}

\par

\rm\normalsize

\noindent {\small November 06,  2018}

\rm\normalsize


\par
\bigskip

\noindent
\small
{\bf Abstract}.
As earlier conjectured by several authors  and much later established by  Sol\`er, from the lattice-theory point of view  Quantum Mechanics  may be formulated in real, complex or quaternionic  Hilbert spaces only. 
 On the other hand no  quantum systems seem to exist that  are naturally described in a real or quaternionic Hilbert space. 
In a previous paper \cite{MO1}, we showed that any quantum system which is elementary from the viewpoint of the Poincar\'e symmetry group and it is initially described in a real  Hilbert space, it 
 can also be described within the standard complex-Hilbert space framework. This complex description is unique and more precise than the real one as for instance,  in the complex description, all self-adjoint 
operators represent observables defined by the symmetry group. The complex picture  fulfils the thesis of Sol\'er theorem and permits the standard formulation of quantum Noether's theorem.  The present 
work is devoted to investigate the remaining case, namely the possibility of a description of a relativistic elementary  quantum system in a quaternionic Hilbert space. Everything  is done  exploiting recent results  
of quaternionic spectral theory independently developed. 
In the initial part of this work,  we extend some results of group representation theory and von Neumann algebra theory from the real and complex case to the quaternionic Hilbert space case.
We prove the double commutant theorem also for quaternionic von Neumann algebras
(whose proof requires a different procedure with respect to the real and complex cases)
 and we extend to the quaternionic case a  result established in the previous paper  concerning the classification of irreducible von Neumann algebras into three categories.
  In the second part of the paper, we  consider  an  elementary relativistic system within  Wigner's approach  defined as a 
locally-faithful  irreducible strongly-continuous unitary representation of the Poincar\'e group in a quaternionic Hilbert space. We prove that, if the squared-mass operator is non-negative, the system admits a
  natural, Poincar\'e invariant and unique  up to sign, complex structure which commutes with the whole algebra of observables generated by the representation itself.  This complex structure
   leads to a physically equivalent reformulation of the theory in a complex Hilbert space. Within this complex  formulation, differently from what happens in the quaternionic one,  all selfadjoint 
   operators represent observables in agreement  with  Sol\`er's thesis,  the standard quantum version of Noether theorem may be formulated and the notion of composite system may be given   in terms of tensor product of elementary systems.
In the third part of the paper,  we focus on the physical hypotheses adopted to define a quantum elementary relativistic system relaxing them on the one hand, and making our model physically
 more general on the other hand.  We use a physically more accurate notion of irreducibility regarding the algebra of observables only, we describe the symmetries in terms of automorphisms of the
  restricted lattice of elementary propositions  of the  quantum system and we adopt a notion of continuity referred to the states viewed as probability measures on the elementary propositions.
 Also in this case, the final result proves that there exist a unique (up to sign) Poincar\'e invariant complex structure making the theory complex and completely fitting into Sol\`er's picture.  
 The overall conclusion is that relativistic elementary systems are naturally and better described  in complex Hilbert spaces even if starting from a real or quaternionic Hilbert space formulation 
 and this complex description is uniquely fixed by physics. 

\normalsize

\newpage
\tableofcontents

\section{Introduction} 

It is known that quantum theories can be formulated  in real, complex and quanternionic Hilbert spaces as we summarize below. A brief account  of basic real Hilbert space spectral theory appears in \cite{MO1} and a summary  on basic 
results on quaternionic Hilbert space theory is included in the next section, whereas for  the general spectral  theory of unbounded normal operators  we address to \cite{GMP1,GMP2}. 

\subsection{Quaternionic Hilbert spaces}
 $\bH := \{a1 + bi + cj + dk\:|\: a,b,c,d \in \bR\}$ henceforth denotes the real unital associative  algebra of quaternions. $i,j,k$ are the standard 
{\bf imaginary units} satisfying $i^2=j^2=k^2 = -1$ and $ij= -ji = k$, $jk=-kj = i$, $ki=-ik=j$ which give rise to the notion of associative,  distributive and non-commutative product in $\bH$ with $1$ as neutral element. $\bH$ is a
 division algebra, i.e., every non zero element admits a multiplicative inverse. The centre of $\bH$ is $\bR$.  
 $\bH$ is assumed to be equipped with the {\bf quaternionic conjugation} $\overline{a1 + bi + cj + dk}= a1 - bi - cj -dk$. Notice that the conjugation satisfies $\overline{qq'}= \overline{q'} \overline{q}$ and $\overline{\overline{q}}=q$ for all $q,q'\in \bH$.
If $q \in \bH$, its {\bf real part} is defined as $Re \: q := \frac{1}{2}(q+\overline{q}) \in \bR$.
The quaternionic conjugation together with the Euclidean {\bf norm} $|q|:=\sqrt{q\overline{q}}$ for $q\in \bH$, makes $\bH$ a real unital $C^*$-algebra which also satisfies the {\bf composition algebra} property $|qq'|=|q|\:|q'|$. 

\begin{definition} {\em A {\bf quaternionic vector  space} is  
an additive Abelian group $(\sH, +)$ denoting the sum operation, equipped with a right-multiplication $\sH \times \bH \ni (x, q) \mapsto xq\in \sH$ such that (a) the right-multiplication  is distributive with respect to $+$,  (b) the sum of 
quaternions  is distributive with respect to the right-multiplication,  (c)  $(xq) q' = x(qq')$ and (d) $x1=x$  for all $x\in \sH$ and $q,q' \in \bH$.}
\end{definition}
\begin{definition}  
{\em A {\bf quaternionic Hilbert space}  is a quaternionic vector space $\sH$ equipped with a {\bf Hermitian quaternionic scalar product}, i.e., a map $\sH \times \sH \ni (x,y) \mapsto \langle x|y \rangle \in \bH$ such
 that  (a) $\langle x|yq+z\rangle = \langle x|y\rangle q+\langle x|z\rangle$ for every $x,y,z\in \bH$ and $q \in \bH$, (b)  $\langle x|y\rangle = \overline{\langle y|x\rangle}$ for every $x,y \in \sH$ and (c) $\langle x| x\rangle \in [0,+\infty)$ 
where (d) $\langle x| x\rangle=0$ implies $x=0$,    and  $\sH$ is complete with respect to the norm $||x|| = \sqrt{\langle x| x \rangle}$. }
\end{definition}
\noindent 
The standard {\bf Cauchy-Schwartz} inequality holds, $|\langle x|y \rangle| \leq ||x||\: ||y||$ for every $x,y \in \sH$ for the above defined quaternionic Hermitian scalar product \cite{GMP1}.
The notion of {\bf Hilbert basis}  is the same as for real and complex Hilbert spaces and properties are the same with obvious changes. A quaternionic Hilbert space turns out to be separable as a metrical space if and only if it admits
 a finite or countable Hilbert basis. The notion of {\bf orthogonal subspace}  $S^\perp$ of a set $S \subset \sH$ is defined with respect to $\langle \cdot | \cdot \rangle$  and enjoys the same standard properties as for the analogue in real and complex Hilbert spaces.
 The notion of operator norm and  bounded operator are the same as for real and complex Hilbert spaces. Since the {\bf Riesz lemma} \cite{GMP1}  holds true also for quaternionic Hilbert spaces, the {\bf adjoint operator}
 $A^* : \sH \to \sH$ 
of a bounded quaternionic linear operator $A : \sH \to \sH$ can be defined as the unique quaternionic linear operator such that $\langle A^*y|x\rangle = \langle y|Ax\rangle$ for every pair $x,y \in \sH$. Notice that if $A : \sH \to \sH$ is quaternionic linear
and $r\in \bR$, we can define the quaternionic linear operator $rA: \sH \to \sH$ such that $rA x := (Ax)r$ for all $x \in \sH$. Replacing $r$ for $q\in \bH$ produces a non-linear map in view of non-commutativity of $\bH$. Therefore only real linear combinations of 
quaternionic linear operators are well defined.
$\gB(\sH)$ denotes the real unital $C^*$-algebra of bounded operators over $\sH$. The notion of {\bf orthogonal projector} $P: \sH \to \sH$ is defined exactly as in the real or complex Hilbert space case,
$P$ is bounded,  $PP=P$ and $P^*=P$. Orthogonal projectors $P$ are one-to-one with the class of 
closed subspaces $P(\sH)$ of $\sH$. $\cL(\sH)$ denotes the orthocomplemented complete lattice (see below) of orthogonal projectors of $\sH$. 
Another important notion is that  of {\bf square root} of positive bounded operators. As for the real and complex case, also for quaternionic Hilbert spaces, if $A$ is bounded and positive, then there exists a unique bounded positive 
operator $\sqrt{A}$ such that $\sqrt{A}\sqrt{A}=A$ \cite{GMP1,GMP2}. In particular, if $A : \sH \to \sH$ is a bounded quaternionic-linear operator $|A| := \sqrt{A^*A}$ is well defined positive and self-adjoint.
For the proofs of the afore-mentioned properties and for more advanced issues, especially concerning spectral theory,  we address the reader to \cite{GMP1} and \cite{GMP2}.

\begin{remark}\label{remquatV2}{\em
In \cite{V2} and \cite{Em} the Quaternionic Hilbert space is defined assuming  a \textit{left}-multiplication $\bH\times\sH\ni (q,u)\mapsto qu\in\sH$ and a Hermitian quaternionic 
scalar product $\sH\times\sH\ni (u,v)\mapsto \langle u|v\rangle\in\bH$ whose only difference resides in point (a): $\langle qx|y\rangle=q\langle x|y\rangle$ for all $x,y\in\sH$ and $q\in\bH$. To 
define a left-multiplication on a space with right-multiplication it suffices to define $qu:=u\overline{q}$ for all $q\in\bH$ and $u\in\sH$, while the scalar product does not need to be changed. A
 map $A:\sH\rightarrow \sH$ is linear, bounded, self-adjoint, idempotent and unitary with respect to the right-multiplication if and only if it has the same properties with respect to the left-multiplication. 
This permits us to exploit indifferently the results in \cite{GMP1,GMP2} and \cite{V2},\cite{Em}.}
\end{remark}

\subsection{The lattice of elementary propositions of a quantum system}
Quantum theory can basically be formulated  as  a non-Boolean probability theory over the partially ordered set of {\bf  elementary propositions} $\cL$ about the given physical quantum system \cite{BeCa,V2,Redei}.  Let us review 
some elementary ideas on this subject restating the discussion already present in the introduction of \cite{MO1}.
{\bf Elementary propositions}, also called {\bf elementary observables}, are the experimentally testable propositions with possible outcomes $0$ and $1$. The partial order relation  $\leq$  in $\cL$ corresponds to the
 logical implication (see \cite{Mackey,BeCa,V2,librone} for the various  interpretations). It is generally  supposed  that the partially ordered set $\cL$ is  a {\bf lattice} (with some noticeable exception as  \cite{Mackey}):  
A  pair of elements $a,b \in \cL$ always admits  $\inf\{a,b\} \in \cL=:a \wedge b$  called 
  {\bf meet},  and   $\sup\{a,b\}\in \c=:a \vee b$ called 
 {\bf join}. It is easy to prove that $a\le b$ if and only if $a=a\wedge b$ and  that   $\vee$ and $\wedge$ are separately {\em symmetric} and {\em associative} in every lattice.
The  $\cL$  is also requested to  be {\bf bounded}: A {\em minimal element} ${\bf 0}$, the always false proposition,  and a {\em maximal element} ${\bf 1}$, the always true proposition,  exist  in $\cL$.  $\cL$ is also 
assumed to be ($\sigma$-){\bf complete}, i.e., $\sup A$ and $\inf A$ exist for every (countable) set $A\subset \cL$.
$\cL$  is finally demanded  to be {\bf orthocomplemented}:  If $a\in \cL$, an {\bf orthogonal complement}  $a^\perp\in \cL$ exists interpreted as the logical negation of $a$.
By definition the orthocomplement  satisfy $a \vee a^\perp = {\bf 1}$,  $a \wedge a^\perp = {\bf 0}$,   $(a^\perp)^\perp = a$,
and  $a\leq b$ implies $b^\perp  \leq a^\perp$ for any $a,b \in \cL$. Now  $a,b \in \cL$ are {\bf orthogonal}, written $a \perp b$, if $a \leq b^\perp$ (equivalently $b \leq a^\perp$).\\
Pairs of mutually {\em compatible} elementary propositions on a quantum system (those which are simultaneously testable by means of experiment) are described by pairs of {\bf commuting} elements $p,q \in \cL$  in the sense of abstract 
orthocomplemented lattices \cite{BeCa}: the sublattice {\bf generated} by $\{p, q\}$, namely the intersection of all orthocomplemented sublattices of  $\cL$ which include $\{p,q\}$  is  {\bf  Boolean}, i.e.,  $\vee$ and $\wedge$ are mutually {\em distributive}.
A maximal set of pairwise compatible propositions is a complete Boolean sublattice and an interpretation in terms of {\em classical logic} is appropriate. Compatibility is not transitive and so the structure of maximal Boolean sublattices of $\cL$ is very complicated.
The whole lattice ${\cal L}$ of elementary propositions of a quantum system  is however  {\em non-Boolean} because  $\wedge$ and $\vee$  are  not mutually distributive. Physically  speaking this is due  to  the existence of  {\em incompatible} 
elementary propositions (e.g., see \cite{BeCa,M2}).
The quantum lattice $\cL$ enjoys a set of peculiar properties that can be  phenomenologically motivated (e.g., see \cite{BeCa}) even if some non-trivial interpretative problems remain \cite{librone}: 
(i) {\bf orthomodularity},  (ii) {\bf $\sigma$-completeness},  (iii) {\bf  atomicity}, (iii)' {\bf atomisticity},  (iv) {\bf covering property}, (v) {\bf separability}, (vi) {\bf irreducibility}  (see the appendix of \cite{MO1} for a brief illustration of these definitions).

\subsection{The coordinatisation problem}
As it will be useful below and in the rest of the paper, we remind  the reader that if $\cL_1$, $\cL_2$ are  orthocomplemented  lattices,  a map $h : \cL_1 \to \cL_2$ is a {\bf lattice homomorphism} if $h(a\vee_1 b) = h(a) \vee_2 h(b)$, 
 $h(a\wedge_1 b) = h(a) \wedge_2 h(b)$, $h(a)^{\perp_2} = h(a^{\perp_1})$ if $a,b \in \cL_1$, $h({\bf 0}_1)= {\bf 0}_2$, $h({\bf 1}_1)= {\bf 1}_2$.
When  the lattices are complete, resp. $\sigma$-complete, the first pair of conditions are made stronger to
 $h(\sup_{a\in A} a) = \sup_{a\in A} h(a)$ and $h(\inf_{a\in A} a) = \inf_{a\in A}  h(a)$ for every 
 infinite, resp. countably infinite, subset $A \subset \cL_1$. 
A straightforward calculation shows that $a\le_1 b$ implies $h(a)\le_2 h(b)$.
A bijective lattice homomorphism is a {\bf lattice isomorphism}. The inverse map of a lattice isomorphism is a lattice isomorphism as well. {\bf Lattice automorphisms} are isomorphisms with $\cL_1=\cL_2$; they form  a group, denoted by $\mbox{Aut}(\cL_1)$.\\
The  long standing {\em coordinatisation problem} \cite{BeCa} consisted of proving (or disproving) that an abstract bounded orthocomplemented lattice ${\cal L}$ satisfying  (i)-(vi) and possibly some added technical requirements,
is necessarily isomorphic to the lattice ${\cal L}(\sH)$ of the orthogonal projectors/closed subspaces of a {\em complex} Hilbert space $\sH$. Here  the partial order relation is the inclusion of the projection closed subaspace of the considered 
orthogonal projectors. This should provide a justification of the standard Hilbert-space formulation of Quantum Theory.
Some intermediate and fundamental results by Piron \cite{Piron} and  Maeda-Maeda \cite{MM} 
established that such ${\cal L}$, if contains  four orthogonal atoms at least, is necessarily isomorphic to the lattice of  the {\em orthoclosed} subspaces ($K=K^{\perp\perp}$)  of a structure generalizing a vector space over a division ring 
$\bD$ equipped with a suitable involution operation, and admitting a generalized non-singular $\bD$-valued Hermitian scalar product (giving rise to the above mentioned notion of orthogonal $^\perp$).
The order relation of  this concrete lattice is the standard inclusion of orthoclosed subspaces.  In 1995 Sol\`er \cite{Soler}  achieved the perhaps conclusive result  (for alternative equivalent statements see \cite{Holland} and \cite{XY}).\\ 

\noindent {\bf [Sol\'er's Theorem].}
{\em Consider an orthocomplemented lattice $\cL$ satisfying (i)-(vi), such that (vii) it
contains at least four orthogonal atoms 
and (viii) $\cL$ includes an infinite orthogonal sets of atoms with unit (generalized) norm. Then  $\cL$ is  isomorphic to the lattice ${\cal L}(\sH)$ of (topologically) closed subspaces  of a separable Hilbert space $\sH$ with
 set of scalars given by either the fields $\bR$, $\bC$ or the real division algebra of quaternions $\bH$}.\\

\noindent In all three cases,  the partial order relation of the lattice is again the standard inclusion of closed subspaces and 
$\sM\vee\sN$ corresponds to the closed span of the union of the closed subspaces $\sM$ and $\sN$, whereas  $\sM\wedge\sN:=\sM \cap \sN$. The minimal element  is the trivial subspace $\{0\}$ and the maximal element is
  $\sH$ itself. Finally, the  orthocomplement of $\sM \in \cL(\sH)$ is described by the standard orthogonal $\sM^\perp$ in $\sH$. All the structure can equivalently be rephrased in terms of orthogonal projectors $P$ in $\sH$, since they are one-to-one associated with 
the closed subspaces   of $\sH$ identified with their images $P(\sH)$. In particular $P\leq Q$ corresponds to
$P(\sH)\subset Q(\sH)$ for $P,Q\in \cL(\sH)$.\\
Dropping  irreducibility requirement  of $\cL$ in Sol\`er's theorem, physically corresponding to absence of {\em superselection rules}, an orthogonal  direct sum of many such Hilbert spaces (even over different set of scalars)  replaces the single Hilbert space $\sH$. \\
Sol\`er's theorem assumes a list of  rigid requirements on the structure of $\cL$ and  the thesis   represents an  equally rigid picture. An evident physical lack in  the hypotheses of the theorem is the absence of any fundamental physical symmetry
 requirement, according to Galileo or Poincar\'e groups.
In complex Hilbert spaces, only type-$I$ factors are permitted by the thesis of Soler's theorem to represent the algebra of observables $\gR$  and no gauge group may enter the game therefore excluding systems of {\em quarks}  where internal
 symmetries (colour $SU(3)$) play a crucial r\^ole.  Sol\`er's   picture is evidently not appropriate also  to describe   non-elementary quantum systems like  {\em pure phases} of extended quantum thermodynamic systems. There the algebra of
 observables is still a factor, but the 
 type-$I$ is not admitted in general due to the presence of a non-trivial commutant $\gR'$.  Also {\em localized} algebras of observables in QFT are not encompassed by Sol\`er's framework. 
However, {\em elementary relativistic systems} like elementary particles in Wigner's view are in agreement with  Sol\`er's picture  in {\em complex} Hilbert spaces. Indeed, these  systems are described as {\em irreducible} unitary 
 representations of Poincar\'e group and, supposing  that the von Neumann algebra of observables is that generated by the representation,  Schur's lemma  demonstrates that  the algebra of observables is the entire $\gB(\sH)$. Therefore the lattice of  
elementary propositions is the entire $\cL(\sH)$ in agreement with  the thesis of Sol\`er's theorem. What happens when changing the set of scalars of the Hilbert space, passing from $\bC$ to $\bR$ or $\bH$ is not obvious.

\subsection{Theoretical notions in common with the three types of Hilbert spaces formulations}\label{secGLEASON}
The following theoretical notions used to axiomatise  quantum theories  are defined in a  separable Hilbert space $\sH$, with scalar product $( \cdot|\cdot)$, over $\bR$, $\bC$ or $\bH$ respectively and referring to  the $\sigma$-complete
 orthocomplemented lattice $\cL(\sH)$ of orthogonal projectors in $\sH$.  However these notions are defined also replacing $\cL(\sH)$ for a smaller  lattice $\cL_1(\sH) \subset \cL(\sH)$, provided it is still  orthocomplemented and $\sigma$-complete 
(and therefore also orthomodular and separable). For future convenience, we shall list these notions below in this generalized case. This list is more or less identical to that  appearing in the introduction of \cite{MO1}.

(1) {\bf Elementary  observables} are represented by the orthogonal projectors in $\cL_1(H)$. Two such projectors are said to be {\bf compatible} if they commute as operators.
Indeed the abstract commutativity notion  of  elementary observables turns out to be equivalent to the standard  commutativity of associated orthogonal projectors \cite{BeCa,M2}.   

(2) {\bf Observables} are the  Projector-Valued Measures  (PVMs) over the real Borel sets (see \cite{MO1} for the real and complex case and \cite{GMP2} for the quaternionic case), taking 
values in ${\cal L}_1(\sH)$ $$ {\cal B}(\mathbb R) \ni E \mapsto P^{(A)}(E)\in {\cal L}_1(\sH)\:.$$ 
Equivalently, \cite{V2} an observable  is a selfadjoint operator $A : D(A) \to \sH$ with $D(A)\subset \sH$ a dense subspace
such that the associated projector-valued measure is made by elements of ${\cal L}_1(\sH)$. 
The link with the previous notion is the  statement of the spectral theorem for selfadjoint operators
 $A = \int_{\sigma(A)} \lambda dP^{(A)}(\lambda)$ (\cite{MO1} for the real and complex case, for the
 quaternionic case see \cite{GMP2}).\\ Obviously the meaning of each  elementary proposition $P^{(A)}(E)$ is {\em the outcome of the measurement of $A$ belongs  to the real Borel set $E$}.\\
Evidently, $\cL_1(\sH)= \cL(\sH)$ if and only if  {\em every} selfadjoint operator in $\sH$ represents an observable.
A selfadjoint operator, in particular an observables, $A$ is said to be {\bf compatible} with another  selfadjoint operator, in particular an observables, $B$ when the respective PVMs are made of pairwise commuting projectors.

(3) {\bf Quantum states} are defined  as {\bf $\sigma$-additive  probability measures} over ${\cal L}_1(\sH)$, that is maps
$\mu : {\cal L}_1(\sH) \to [0,1]$
such that $\mu(I)=1$ and $$\mu\left(s\mbox{-}\sum_k P_k\right) = \sum_k \mu(P_k)\quad \mbox{if $\{P_k\}_{k \in \bN}\subset \cL_1(\sH)$ with $P_kP_h=0$ for $h \neq k$,}$$ 
$s\mbox{-}\sum_k$ denoting the series in the strong operator topology.
$\mu(P)$ has the meaning of {\em the probability that the outcome of $P$ is $1$ if the proposition is tested when the state is $\mu$}. \\
If $\cL_1(\sH)= \cL(\sH)$ for $\sH$ separable with  $+\infty \geq \dim (\sH) \neq  2$ (always assumed henceforth),
these measures are in one-to-one correspondence with all of the selfadjoint positive, unit-trace, trace-class operators $T_\mu: \sH\to \sH$ according to
\beq \mu(P)= tr(PT_\mu P)\quad \forall P \in  {\cal L}(H)\:.\label{GLSTAT}\eeq
$T\in \gB(\sH)$ is {\bf trace-class}  \cite{GleasonQuat} 
if $$\sum_{u\in N}\langle u||T|u\rangle <+\infty \quad \mbox{ for some Hibert basis $N\subset \sH$,}$$ where $\sH$ is a real, complex or quaternionic Hilbert space. The set of trace-class operators turns out to be 
a closed two-sided $^*$-ideal of $\gB(\sH)$ (the unital real $C^*$-algebra of bounded operators $A: \sH \to \sH$) in the three considered cases \cite{GleasonQuat}.\\
This correspondence between $\mu$ and $T_\mu$ exists  for  the three cases as demonstrated by the celebrated {\em Gleason's theorem} valid for $\bR$ and $\bC$ \cite{G}. The quaternionic case is more complicated 
and the extension proposed by Varadarajan in  \cite{V2} was partially incorrect.  A correct statement has been recently obtained by the authors of this paper \cite{GleasonQuat}. 
The problem is that the notion of trace in quaternionic Hilbert space is necessarily {\em basis dependent} unless the argument of the trace is selfadjoint.  Above $PT_\mu P$ is explicitely selfadjoint and the{\em  cyclic property}
 of the trace together with $PP=P$
proves that $tr(PT_\mu P)= tr(PT_\mu)$ in the complex and real cases, finding the standard statement of Gleason theorem in those cases. Cyclicity of the trace does not hold in the quanternionic case \cite{GleasonQuat}.
 An alternative, equivalent,  and much more effective approach \cite{GleasonQuat} 
is to state Gleason's identity in the three cases as 
\beq \mu(P)= Re \left(tr(PT_\mu)\right)\quad \forall P \in  {\cal L}(H)\:,\label{GLSTAT2}\eeq
where, for a quaternionic Hilbert space, $tr(PT_\mu)$ is computed on a Hilbert basis fixed arbitrarily.
In fact, it turns out that the real part of the trace is {\em basis independent} in quaternionic Hilbert space (and also in the remaining two cases).
Finally  (\ref{GLSTAT}) is equivalent to  (\ref{GLSTAT2}) because   $Re \left(tr(PT_\mu)\right)= tr(PT_\mu P)$ by elementary properties of the trace operation in the three considered cases (see \cite{GleasonQuat} for details).

Gleason's result
is valid  (but the correspondence of measures and positive unit-trace thrace-class operators ceases to be one-to-one) for separable complex Hilbert spaces when $\cL_1(\sH)\subsetneq \cL(\sH)$ and $\cL_1(\sH)$ is the projector lattice of a von Neumann algebra whose canonical 
decomposition into definite-type von Neumann algebras does not contain type-$I_2$ algebras \cite{libroGleason}.

 (4) {\bf Pure states} are  extremal elements of the convex body of the afore-mentioned probability measures. If $\cL_1(\sH)= \cL(\sH)$ pure states  are one-to-one with unit vectors of $\sH$ up to {\bf (generalized) phases} $\eta$, i.e., numbers 
of $\bR, \bC, \bH$  respectively,  with $|\eta|=1$. In this case, the notion of {\bf  probability transition} $|\langle \psi|\phi\rangle|^2$ of a pair of pure states defined by unit vectors $\psi,\phi$
can be introduced.
$|\langle\psi|\phi\rangle|^2 = \mu_\psi(P_\phi)$ is the probability that $P_\phi$ is true when the state is $\mu_\psi$, where  $P_\phi = \langle\phi| \cdot \rangle\phi$ and  $\mu_\psi := \langle\psi| \cdot \psi\rangle$.

(5) {\bf L\"uders-von Neumann's post measurement axiom} can be formulated in the standard way in the three cases:
{\em If the outcome of the ideal measurement of $P \in \cL_1(\sH)$ in the state $\mu$ is $1$, the post measurement state is} 
$$\mu_P(\cdot) := \frac{\mu(P \cdot P)}{\mu(P)}\:.$$
If $\cL_1(\sH)= \cL(\sH)$, we may define states in terms of trace class operators and, with obvious notation,
$T_P = \frac{1}{tr(PTP)}PTP$. In terms of probability measures over $\cL(\sH)$, this  is equivalent to say that the post measurement measure $\mu_P$, when the state before the measurement of $P$ is $\mu$,  is 
the {\em unique} probability-measure over $\cL(\sH)$  satisfying  the natural requirement of conditional probability
$\mu_P(Q) = \frac{\mu(Q)}{{\mu(P)}}$, for every $Q\in \cL(\sH)$ with $Q\leq P$.

(6) {\bf Symmetries} are naturally defined  as {\em automorphisms} $h : \cL_1(\sH) \to \cL_1(\sH)$ of the lattice of elementary propositions. A {\em subclass} of symmetries $h_U$ are those induced by 
unitary (or also anti unitary in the complex case) operators $U\in \gB(\sH)$ by means of 
 $h_U(P) := UPU^{-1}$ for every $P \in \cL_i(\sH)$. 
Alternatively, another definition of symmetry is as {\em automorphism} of the {\em Jordan algebra of observables} constructed out of $\cL_1(\sH)$. 
If $\cL_1(\sH)=\cL(\sH)$, following Wigner, symmetries can be defined as {\em bijective}   {\em probability-transition preserving} transformations of pure states to pure states. \\
With the maximality hypothesis on the lattice, the three  notions of symmetry coincide. In this case, {\em all} symmetries turn out to be  described  by  unitary  (or anti unitary in the complex case) operators, 
up to constant phases of $\bR$, $\bC$, $\bH$, respectively  due to well known theorems by Kadison, 
Wigner and Varadarajan \cite{Simon,V2}. 

(7) {\bf Continuous symmetries} are one-parameter groups of lattice automorphisms $\bR \ni s \mapsto h_s$, such that 
$\bR \ni s \mapsto \mu(h_s(P))$ is continuous for every $P\in \cL_1(\sH)$ and every quantum state $\mu$ ($\bR$ may be replaced for a topological group but we  stick here to the simplest
case). The {\bf time evolution} of the system $\bR \ni s \mapsto \tau_s$ is a preferred  continuous symmetry parametrized over $\bR$. 

(8) A {\bf dynamical symmetry} is a continuous symmetry $h$ which  commutes with the time evolution, $h_s \circ \tau_t = \tau_t \circ h_s$ for $s,t \in \bR$.\\
If $\cL_1(\sH)= \cL(\sH)$, every  continuous symmetry $\bR \ni s \mapsto h_s$ is represented by a strongly continuous
one-parameter group of unitary operators $\bR \ni s \mapsto U_s$ such that $h_s(P)= U_sPU_s^{-1}$ for all $P\in \cL(\sH)$ \cite{V2}.
Versions of {\em Stone theorem}  hold in the three considered cases $\bR$, $\bC$ and $\bH$ (the  validity in the quaternionic case easily arises 
form the theory developed in \cite{GMP2} and we will present a proof in Sect. \ref{SECstone}), 
proving that   $U_s = e^{sA}$ for some {\em anti}-selfadjoint operator $A$, uniquely determined by $U$.
 In the complex case, if $\bR \ni s \mapsto e^{sA}$ is also a  dynamical symmetry, the {\em selfadjoint} operator $-iA$, which is an observable the lattice being maximal, is invariant  under the natural adjoint 
action of time evolution $\tau$ unitarily represented by $\bR \ni t \mapsto V_t$,
and thus $-iA$ is a {\bf constant of motion}, $V_t^{-1}(-iA) V_t =-iA$ for every $t\in \bR$. This is the celebrated quantum version of {\em Noether theorem}. In the real Hilbert space case, no such simple result exists, since
we have no general way to construct a selfadjoint operator out of an anti selfadjoint operator $A$ in  absence of $i$. There is no unitary operator  $J$ corresponding to the imaginary unit $iI$ which commutes with 
 the  anti selfadjoint generator $A$ of every possible continuous symmetry (the time evolution in particular), thus producing 
an associated observable $JA$ which is a constant of motion.  Such an operator  however may exist for  one or groups of observables.
 In the quaternionic case, contrarily, there are many, pairwise  non-commuting, imaginary unities as recently established 
 \cite{GMP2}.
An interesting physical discussion on these partially open  issues for the quaternionic formulation appears in \cite{Adler}. 

(9) [Real and complex formulations only] {\bf Composite systems} in real and complex Hilbert space formulations are simply described by taking the (Hilbert) tensor product of the Hilbert spaces of subsystems 
and constructing the corresponding tensor product structures (e.g., see \cite{M2}).
Yet a fundamental obstruction arises with the quaternionic formulation, where a standard  notion of tensor product is forbidden due to non-commutativity of quaternions. As a matter of fact, basic properties of 
the notion of the tensor product lead to the indentities  $((u q)\otimes (v p)) =  (u \otimes (v p))q = (u\otimes v) pq$  and 
$((u q)\otimes (v p)) =  ((uq) \otimes v)p = (u\otimes v) qp$ producing the  contradiction $pq=pq$.  This is a long standing problem with some inconclusive attempt of solutions.

\subsection{Main results of this work}
As soon as one assumes the apparently quite natural hypotheses on the structure of the lattice of elementary propositions of a quantum system as stated in Sol\'er's theorem, the above three  Hilbert-space
 formulations are the only possibilities.  Actually, as already remarked,  Sol\'er hypotheses appear to be quite rigid since they force the theory to be formulated in terms of type-I factors when looking at the
 von Neumann algebra of observables in the complex case.  Though it is the standard  picture in quantum mechanics (in a single superselection sector), it rules out quantum field theory and statistical mechanics  of extend systems.
In \cite{MO1} we focused on a real-Hilbert space formulation without assuming all Sol\'er hypotheses  but supposing  some restrictions concerning symmetries. As a matter of fact,  we considered a weaker and
 more general notion  of Wigner  elementary relativistic systems and we proved that a natural relativistic-invariant complex structure with a very precise physical meaning 
exists in the real Hilbert space formulation,  making complex the theory. This complex formulation also removes some redundancies present in the real formulation. In particular all complex-linear self-adjoint 
operators are observables, whereas many real-linear self-adjoint operators cannot be interpreted as observables in the real formulation and the correspondence between pure-states and rays is one-to-one in the complex formulation but not in the real one.
The passage from the real to the complex theory also permits one to recover the standard relation between continuous symmetries and conserved observables along the evolution of the system.
 Here we focus on  elementary relativistic systems  initially described in quaternionic Hilbert spaces. 

The main goal and result of this work is the proof that, also in the case of a quaternionic formulation with Poincar\'e symmetry,  the theory can be re-formulated into a standard complex Hilbert space picture, where
 all self-adjoint operators represent observables, the standard version of Noether relation between continuous symmetries and conserved observables is restored 
and the notion of composite system can be implemented by the standard tensor product. The complex structure is, as in \cite{MO1}, uniquely imposed by Poicar\'e symmetry and is Poincar\'e invariant.\\ 
We shall establish this result into a pair of distinct theorems. In Theorem \ref{poinccomplexstructure} we prove the thesis referring to a notion of relativistic elementary system very close to Wigner's one: an irreducible 
strongly continuous representation of Poincar\'e group with non negative squared mass, whose observables are determined by the representation itself.
A second version of the result is proved in Theorem \ref{secondmain}. Here a much more sophisticated notion of relativistic elementary system is adopted defined in terms of an irreducible von Neumann algebra of 
observables whose lattice of orthogonal projector admits a continuous (with respect to a physically meaningful topology) irreducible representation of Poincar\'e group and such that the algebra itself is determined by 
the representation in view of a generalised version of Wigner's theorem.\\
 Differently from the real Hilbert space case, here we need an intermediate non-obvious technical result concerning the notion of {\em von Neumann algebra} in quanternionic Hilbert space. Indeed, in section \ref{VN1}, 
relying upon the spectral theory developed in \cite{GMP1} and \cite{GMP2}, we state and prove the quaternionic version of the celebrated {\em double commutant theorem}. 
A new proof is necessary because the standard argument (e.g., see \cite{KR}) does not work due the non-commutativity of the algebra of quaternions.
With the help of the von Neumann algebra machinery and extending some well-known results of the theory or representations of Lie groups  we will achieve our final result after having introduced a (weaker) extension  
of Wigner's  notion of elementary relativistic system\footnote{The idea of the proof that is very similar to that exploited in the real Hilbert space case \cite{MO1} appears in $\:\:\:$ www.mi.infn.it/$\sim$vacchini/talks$\_$bell17/Oppio.pdf $\:\:\:$
 after a talk of one of the authors given at the University of Milan (Department of Physics) on  June 16th 2016.}.

\section{Quaternionic operators and their natural structures}

\subsection{von Neumann algebras in quaternionic Hilbert spaces}\label{VN1}
This section is devoted to extending the notion of von Neumann algebra to quaternionic Hilbert spaces.
 The overall idea is to extend  classical von Neumann's result known as {\em double commutant theorem} to algebras of quaternionic right-linear 
 operators and to define the notion of von Neumann algebra exploiting that result.  The way to do it essentially is a reduction procedure from the quaternionic to the real and complex 
 cases where the theory of von Neumann algebras is well established. This goal needs a few  preliminary elementary results and 
 constructions we shall employ also in other parts of this work.\\
 As a first step we associate a quaternionic Hilbert space $\sH$ with a corresponding real Hilbert space $\sH_\bR$ and a corresponding complex Hilbert space $\sH_{\bC_j}$
   and we study the interplay of these structures  and of the corresponding operator 
 algebras. We remind the reader  that, as previously stressed,  only \textit{real} linear combinations of quaternionic-linear operators can be performed 
 so that for instance $\gB(\sH)$ is  a real $C^*$-algebra and not a quaternionic one.
 \begin{proposition}\label{propHR}
Let  $(\sH,\langle\cdot|\cdot\rangle)$ be a quaternionic Hilbert space and consider the associated real vector space on the set of vectors
$\sH_\bR:=\sH$, whose real linear combinations are the quaternionic ones with real coefficients,  and define the real  scalar product $(\cdot|\cdot):=
Re\langle\cdot|\cdot\rangle$ on $\sH_\bR$.
The following facts are true.
\begin{itemize}

\item[(a)] The norms of $\sH$ and $\sH_\bR$ coincide and $(\sH_\bR,(\cdot|\cdot))$ is a real Hilbert space.

\item[(b)]
Referring to the $\bR$-linear operators  $\cJ,\cK \in \gB(\sH_\bR)$ defined  by $$\cJ u:=uj\:, \quad \cK u:=uk \quad \forall u\in\sH_\bR$$ satisfying  $\cJ\cJ=\cK\cK=-I$ 
and $\cJ\cK=-\cK\cJ$, 
we have
\beq\label{recsp} 
\langle x|y\rangle = (x|y)-(x|\cJ y)j-(x|\cK y)k-(x|\cJ\cK y)jk \quad \forall x,y \in \sH
\eeq

\item[(c)] Every quaternionic-linear operator is  real-linear and in particular   $\gB(\sH)\subset \gB(\sH_\bR)$, where $\gB(\sH)$ is  the unital real
$C^*$-algebra of bounded operators on $\sH$  and the inclusion of algebras  preserves the norms of the operators.
 
 \item[(d)] $A\in\gB(\sH_\bR)$ belongs to $\gB(\sH)$ if and only if it commutes with both $\cJ$ and $\cK$. 

\item[(e)] Let $A : D(A)\rightarrow \sH$ be a quaternionic-linear operator  with $D(A) \subset \sH$ a (quaternionic) subspace and consider 
$A$ as a real-linear operator on $\sH_\bR$ with domain $D(A)$ viewed as real subspace of $\sH_\bR$.
Then

(i)   $\cJ A\subset A\cJ$ and $\cK A\subset A\cK$,

(ii)  $D(A)$ is a dense real-linear subspace of $\sH_\bR$ if it is dense in  $\sH$,
 
 (iii) the adjoint of $A$ with respect to $\sH$ and with respect to $\sH_\bR$ coincide,
 
 (iv)  $A$  is symmetric, antisymmetric, essentially selfadjoint, essentially antiselfadjont, selfadjoint, antiselfadjoint, isometric, unitary, idempotent in $\sH_\bR$ if   it is respectively such in $\sH$. 

 (v) $A$ is closable in $\sH_\bR$ if it is such in $\sH$. In this case the closures defined in $\sH$ and $\sH_\bR$, respectively coincide,
 
 (vi) $A$ is positive on $\sH_\bR$ if it is such on $\sH$.
\end{itemize}

 \item[(f)] Let $A : D(A)\rightarrow \sH_\bR$ be  
 a $\bR$-linear operator   with $D(A) \subset \sH_\bR$ a (real) subspace and suppose  that $\cJ A\subset A\cJ$ and $\cK A\subset A\cK$.
 Then
 
 (i)  $D(A)$ is a quaternionic-linear subspace of $\sH$ which is dense if it is dense in $\sH_\bR$ and $A$ is a quaternionic linear operator.
 
 (ii)  the adjoint of $A$ with respect to $\sH_\bR$ and with respect to $\sH$ coincide,
 
 (iii)  $A$  is symmetric, antisymmetric, essentially selfadjoint, essentially antiselfadjont, selfadjoint, antiselfadjoint, isometric, unitary, idempotent in $\sH$ if   it is respecively such in $\sH_\bR$,
 
 (iv) $A$ is closable in $\sH$ if it is such in $\sH_\bR$. In this case the closure defined on $\sH$ and $\sH_\bR$, respectively coincide,
 
 (v) $A$ is positive in $\sH$ if it is positive and symmetric in $\sH_\bR$.

\end{proposition}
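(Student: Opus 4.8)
The plan is to proceed item by item in the stated order, noting at the outset that the whole proposition rests on one elementary fact established in (a): the two norms are literally equal, so density, boundedness, operator norms, completeness and closedness become the \emph{same} topological notions in $\sH$ and in $\sH_\bR$. For (a) itself I would observe that $\langle x|x\rangle$ is already a nonnegative real, whence $(x|x)=Re\langle x|x\rangle=\langle x|x\rangle$ and the norms coincide; $\bR$-bilinearity, symmetry and positive-definiteness of $(\cdot|\cdot)=Re\langle\cdot|\cdot\rangle$ are read off the Hermitian axioms, and completeness transfers verbatim. For (b) I would first check that $\cJ,\cK$ are $\bR$-linear: right multiplication by $j$ or $k$ commutes with right multiplication by reals because $\bR$ is the centre of $\bH$, and it preserves the norm since $|j|=|k|=1$ and $|\cdot|$ is multiplicative; thus $\cJ,\cK\in\gB(\sH_\bR)$ are isometries. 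The identities $\cJ\cJ=\cK\cK=-I$ and $\cJ\cK=-\cK\cJ$ are the quaternion relations $j^2=k^2=-1$ and $jk=-kj$ carried through associativity of the right action (so that $\cJ\cK$ is right multiplication by $kj$). Formula (\ref{recsp}) I would verify by expanding $\langle x|y\rangle=q_0+q_1i+q_2j+q_3k$ and moving $\cJ,\cK$ into the second argument via $\langle x|yq\rangle=\langle x|y\rangle q$, so that each $(x|\,\cdot\,)$ term isolates one real component.

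Part (c) combines restriction of scalars — a right-$\bH$-linear map is a fortiori $\bR$-linear — with the norm coincidence of (a), which forces equality of the operator norms and hence the isometric inclusion $\gB(\sH)\subset\gB(\sH_\bR)$. In (d) the forward direction is immediate: $\bH$-linearity gives $A(uj)=(Au)j$ and $A(uk)=(Au)k$, i.e. $A\cJ=\cJ A$ and $A\cK=\cK A$. For the converse I would note that right multiplication $R_q$ depends $\bR$-linearly on $q\in\bH=\mathrm{span}_\bR\{1,i,j,k\}$, that $R_j=\cJ$, $R_k=\cK$ and $R_i=-\cJ\cK$, and that $\bR$-linearity of $A$ already means $A$ commutes with $R_r$ for real $r$; commuting with $\cJ$ and $\cK$ then yields commuting with $\cJ\cK$, hence with every $R_q$, which is precisely full $\bH$-linearity.

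The substantive step, and the one I expect to be the main obstacle, is the identification of adjoints in (e)(iii) and (f)(ii); once this is secured, symmetry, (anti)self-adjointness, isometry, unitarity, idempotency and closability with coinciding closures follow almost formally, because these are all defined through domains, the norm, and the adjoint, which now match. Taking real parts of the quaternionic relation $\langle A^*y|x\rangle=\langle y|Ax\rangle$ shows at once that the quaternionic adjoint is contained in the real one. For the reverse inclusion I would use that $D(A)$ is a quaternionic subspace and $A$ is $\bH$-linear to test against $xq$ in place of $x$; this reduces the matter to the nondegeneracy remark that a quaternion $p$ with $Re(pq)=0$ for all $q\in\bH$ must vanish (take $q=\overline{p}$), and that promotes the real-adjoint identity to the quaternionic one.

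The remaining items are then quick. Density in (e)(ii) is density for the identical norm; positivity in (e)(vi) and (f)(v) uses that $\langle u|Au\rangle$ is real for symmetric $A$, so it equals $(u|Au)$, making the two positivity notions agree; and closability with equal closures follows from the coincidence of adjoints via $A^{**}$. Finally (f)(i) — that a real domain intertwined with $\cJ$ and $\cK$ is automatically $\bH$-invariant with $A$ acting $\bH$-linearly — is the converse half of (d) transported to the unbounded setting, the only caution being the usual bookkeeping of domains of compositions hidden in $\cJ A\subset A\cJ$ and $\cK A\subset A\cK$.
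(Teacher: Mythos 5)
Your proposal is correct and takes the same route as the paper, whose entire proof is the remark that everything follows from the definitions: the two load-bearing points you isolate (the literal equality of the norms, and the identification of the quaternionic and real adjoints by testing against $xq$ together with the nondegeneracy of $p\mapsto Re(p\overline{p})$) are exactly the right ones, and the remaining items are the routine transfers you describe. One small thing your expansion of (\ref{recsp}) will surface: with the composition convention $\cJ\cK u=(uk)j=u(kj)=-ui$, the last term evaluates to $-(x|\cJ\cK y)jk=-bi$ when $\langle x|y\rangle=a+bi+cj+dk$, so the identity as printed needs either $\cK\cJ$ in place of $\cJ\cK$ or a $+$ sign on that term --- a typo in the statement rather than a gap in your argument.
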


\begin{proof}All statements immediately follow from the corresponding definitions. \end{proof}
\noindent We are now in a position to focus on the properties of the commutant of sets of quaternionic linear operators in $\gB(\sH)$,
 viewed as a unital {\em real} $C^*$-algebra, with the final aim  to prove 
the quaternionic version of the {\em double commutant theorem}.  \\
As in the real and complex Hilbert space cases, if $\gS \subset \gB(\sH)$ with $\sH$ a quaternionic Hilbert space, that the {\bf commutant} $\gS'$ of $\gS$ is
$$\gS' := \{A \in \gB(\sH) \:|\: AS=SA \quad \forall S \in \gS\}\:.$$
 If $\gS\subset\gT\subset\gB(\sH)$, it is easy to see that $\gT'\subset\gS'$ and $\gS\subset\gS''$. 
This immediately leads to $\gS'''=\gS'$. Moreover, if $\gS$ is closed under the $^*$-operation, then its commutant $\gS'$ turns out to be a unital $^*$-subalgebra of the (real)
 unital $C^*$-algebra $\gB(\sH)$. 
All that is identical to the known results in real and complex Hilbert spaces.
 In view of the definition of $\sH_\bR$, however, 
$\gS \subset \gB(\sH)$ can also be interpreted as a subset of $\gB(\sH_\bR)$ and thus a corresponding notion of commutant $\gS'^{\null_\bR}\subset\gB(\sH_\bR)$, 
$$\gS'^{\null_\bR} := \{A \in \gB(\sH_\bR) \:|\: AS=SA \quad \forall S \in \gS\}$$
can be defined.
  Notice that $\cJ,\cK,\cJ\cK\in\gS'^{\null_\bR}$ whatever $\gS\subset \gB(\sH)$ we choose, it being made of quaternionic-linear 
  operators due to (e) in Prop.\ref{propHR}. An  elementary result 
  relating  the two notions of commutant  is the following lemma.
  \begin{lemma}\label{lemmaM}
For every $\gS \subset \gB(\sH)$, it holds $$\gS' \subset \gS'^{\null_\bR}\subset \gS'+\cJ\gS'+\cK\gS'+\cJ\cK\gS'\quad (\subset \gB(\sH_\bR) )\:.$$
\end{lemma}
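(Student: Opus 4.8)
The inclusion $\gS' \subset \gS'^{\null_\bR}$ is immediate: an $A \in \gB(\sH)$ that commutes with every $S \in \gS$ as a quaternionic-linear operator commutes with them \emph{a fortiori} when $A$ and the $S$ are regarded as real-linear operators on $\sH_\bR$. So the whole content lies in the second inclusion, and the plan rests on two observations already available. First, every $S \in \gS$ is quaternionic-linear, so by Proposition~\ref{propHR}(d) it commutes with $\cJ$ and $\cK$; hence $\cJ,\cK,\cJ\cK \in \gS'^{\null_\bR}$ (this was noted just before the statement). Second, the commutant $\gS'^{\null_\bR}$ is, like any commutant, a unital real subalgebra of $\gB(\sH_\bR)$, closed under composition and real-linear combinations. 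Combining these, for any fixed $A \in \gS'^{\null_\bR}$ the conjugates $\cJ A\cJ^{-1}$, $\cK A\cK^{-1}$ and $\cJ\cK A(\cJ\cK)^{-1}$ again lie in $\gS'^{\null_\bR}$, since each $S$ commutes with the invertible conjugating operators.

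Because $\cJ^2=\cK^2=-I$ and $\cJ\cK=-\cK\cJ$, the set $\{\pm I,\pm\cJ,\pm\cK,\pm\cJ\cK\}$ is a copy of the quaternion group, and conjugation by it factors through the Klein four-group $\{I,\cJ,\cK,\cJ\cK\}$. I would split $A$ into the four isotypic components of this action. The component fixed by both conjugations is
$$B_0 := \tfrac14\Big(A + \cJ A\cJ^{-1} + \cK A\cK^{-1} + \cJ\cK A(\cJ\cK)^{-1}\Big),$$
and the three remaining components $C_1,C_2,C_3$ are obtained by inserting the non-trivial sign patterns $(+,+,-,-)$, $(+,-,+,-)$, $(+,-,-,+)$ on the four terms. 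The Hadamard structure of these sign vectors gives $A = B_0 + C_1 + C_2 + C_3$, and one reads off that $C_1$ commutes with $\cJ$ and anticommutes with $\cK$, that $C_2$ does the opposite, and that $C_3$ anticommutes with both. Setting $B_1:=-\cJ C_1$, $B_2:=-\cK C_2$, $B_3:=-\cJ\cK C_3$ — so that $C_1=\cJ B_1$, $C_2=\cK B_2$, $C_3=\cJ\cK B_3$, using $\cJ^{-1}=-\cJ$, $\cK^{-1}=-\cK$, $(\cJ\cK)^{-1}=-\cJ\cK$ — I would then verify from the three relations that each $B_n$ commutes with both $\cJ$ and $\cK$.

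Once that verification is in hand, the proof closes quickly. Each of $B_0,C_1,C_2,C_3$ is a real-linear combination of $A$ and its conjugates, hence lies in the algebra $\gS'^{\null_\bR}$; and $B_1,B_2,B_3$, being products inside $\gS'^{\null_\bR}$ of $C_n$ with $\cJ^{-1},\cK^{-1},(\cJ\cK)^{-1}$, lie there too. Since moreover each $B_n$ commutes with $\cJ$ and $\cK$, Proposition~\ref{propHR}(d) places it in $\gB(\sH)$, whence $B_n \in \gS'^{\null_\bR}\cap\gB(\sH)=\gS'$. Therefore the identity $A = B_0 + \cJ B_1 + \cK B_2 + \cJ\cK B_3$ exhibits $A$ as an element of $\gS'+\cJ\gS'+\cK\gS'+\cJ\cK\gS'$, which is the claim.

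The routine but error-prone core — and really the only obstacle — is the sign bookkeeping showing that the $B_n$ commute with $\cJ$ and $\cK$; this is exactly where $\cJ\cK=-\cK\cJ$ must be tracked carefully. Conceptually, the point the computation must respect is that naive averaging recovers only the fixed component $B_0$, and the three "twisted" components must be corrected by the left multiplications $\cJ^{-1},\cK^{-1},(\cJ\cK)^{-1}$ in order to land back in $\gB(\sH)$ while staying inside the real commutant.
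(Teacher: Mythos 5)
Your proposal is correct and follows essentially the same route as the paper: your isotypic components $B_0, C_1, C_2, C_3$ are exactly (one quarter of) the operators $B, B_\cJ, B_\cK, B_{\cJ\cK}$ that the paper defines, and both arguments conclude by summing them and invoking Proposition \ref{propHR}(d) to identify the $\cJ$-$\cK$-commuting pieces as elements of $\gS'$. The Klein four-group framing and the explicit factorizations $C_n=\cJ B_n$, etc., are a clean way to organize the same sign bookkeeping the paper dispatches by ``direct inspection.''
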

\begin{proof} The first inclusion is trivial since  all $\bH$-linear operators commuting with the elements of $\gS$ are $\bR$-linear.
Let us pass to the second inclusion. For $A\in \gS'^{\null_\bR}$ define the operator 
$B:=A-\cJ A\cJ-\cK A\cK-\cJ\cK A\cJ\cK$. 
This clearly belongs to $\gS'^{\null_\bR}$. By direct inspection one immediately sees that $B$ also commutes with both $\cJ$ and $\cK$ and thus 
it is quaternionic linear ((f) Prop.\ref{propHR}) and belongs to $\gS'$.
Similarly, if defining $B_\cJ:=A-\cJ A\cJ+\cK A\cK+\cJ\cK A\cJ\cK$, then 
$\cJ B_\cJ$ turns out to belong to $\gS'$.
We can repeat a similar  argument twice eventually  establishing that
\begin{equation*}
\begin{cases}
B&:=A-\cJ A\cJ-\cK A\cK-\cJ\cK A\cJ\cK\ \in \gS'\\
B_\cJ&:=A-\cJ A\cJ+\cK A\cK+\cJ\cK A\cJ\cK\ \in \cJ\gS'\\
B_\cK&:=A-\cK A\cK+\cJ A\cJ+\cJ\cK A\cJ\cK\ \in \cK\gS'\\
B_{\cJ\cK}&:=A-\cJ \cK A \cJ \cK+\cJ A \cJ+\cK A \cK \in \cJ \cK\gS'
\end{cases}
\end{equation*}
This concludes the proof because $4A=B+B_\cJ+B_\cK+B_{\cJ\cK}\in \gS'+\cJ\gS'+\cK\gS'+\cJ\cK\gS'$.
\end{proof}
  
 \noindent  We can now state and prove the quaternionic version of the double commutant theorem. 
The proof of this key tool is different to the known one of real and complex $^*$-algebras \cite{KR} in view of the 
non-commutativity of quaternions. 
A crucial intermediate result in the proof of Theorem 5.31 \cite{KR} to get $\gR'' \subset \overline{\gR}^s$ uses the fact that $\sH_\psi :=\overline{\{A\psi \:|\: A \in \gR\}}$
for $\psi \in \sH$ is a closed complex subspace of $\sH$.  If $\sH$ is quaternionic the corresponding property is false since $\sH_\psi$ is only  a {\em real subspace} instead of {\em quaternionic} in view of non-commutativity of quaternions.
This is the reason why we followed an alternative and a bit elaborate route.\\
  To do it, observe that a unital sub $^*$-algebra $\gR\subset\gB(\sH)$ is also a unital sub $^*$-algebra $\gR\subset\gB(\sH_\bR)$.

\begin{theorem}[Double commutant theorem]\label{DCT}
Let $\sH$ be a quaternionic Hilbert space and $\gR\subset\gB(\sH)$ a unital sub $^*$-algebra, then $$\gR''=\overline{\gR}^s=\overline{\gR}^w = \gR'{\null^{_\bR}}'{\null^{_\bR}}\:,$$
where the strong and weak closure above can indifferently be taken in $\gB(\sH)$ or $\gB(\sH_\bR)$.
\end{theorem}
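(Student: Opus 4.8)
The plan is to reduce the whole statement to the classical double commutant theorem for the \emph{real} $C^*$-algebra $\gR\subset\gB(\sH_\bR)$ and then to transport the conclusion back to $\gB(\sH)$ by means of Lemma \ref{lemmaM} and Proposition \ref{propHR}. Since $\gR$ is also a unital $^*$-subalgebra of $\gB(\sH_\bR)$, and since the obstruction noted before the theorem disappears over $\bR$ (there the cyclic spaces $\overline{\{A\psi\}}$ are genuine real subspaces), the real version of the theorem applies and already yields
$$\gR'{\null^{_\bR}}'{\null^{_\bR}} = \overline{\gR}^{s} = \overline{\gR}^{w}\:,$$
all closures being taken in $\gB(\sH_\bR)$. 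It then remains to accomplish two things: to identify this real double commutant with the quaternionic one $\gR''$, and to show that the closures may equivalently be computed inside $\gB(\sH)$.

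The first and central point is the equality $\gR''=\gR'{\null^{_\bR}}'{\null^{_\bR}}$, which I would establish by double inclusion, the crucial fact being that $\cJ,\cK,\cJ\cK\in\gR'^{\null_\bR}$ for any family of quaternionic-linear operators. If $B\in\gR'{\null^{_\bR}}'{\null^{_\bR}}$ then $B$ commutes in particular with $\cJ$ and $\cK$, whence $B\in\gB(\sH)$ by Proposition \ref{propHR}(d), and it commutes with $\gR'\subset\gR'^{\null_\bR}$, so $B\in\gR''$. Conversely, let $B\in\gR''$: it is quaternionic linear and therefore commutes with $\cJ,\cK,\cJ\cK$ by Proposition \ref{propHR}(d), and it commutes with $\gR'$ by definition. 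Given any $C\in\gR'^{\null_\bR}$, Lemma \ref{lemmaM} lets me write $C=C_0+\cJ C_1+\cK C_2+\cJ\cK C_3$ with $C_0,C_1,C_2,C_3\in\gR'$, and then $B$ visibly commutes with each summand, hence with $C$; thus $B\in\gR'{\null^{_\bR}}'{\null^{_\bR}}$. This is exactly the step at which non-commutativity intervenes and at which the usual cyclic-subspace argument fails, and I expect it to be the main obstacle: the content of Lemma \ref{lemmaM} is precisely what replaces that argument, trading the missing quaternionic invariance of the cyclic subspaces for a purely algebraic decomposition of $\gR'^{\null_\bR}$.

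For the second point, observe that $\gB(\sH)=\{\cJ,\cK\}'^{\null_\bR}$ is itself a commutant in $\gB(\sH_\bR)$, hence both strongly and weakly closed there; consequently $\overline{\gR}^{s}$ and $\overline{\gR}^{w}$ are automatically contained in $\gB(\sH)$ and never leave the quaternionic algebra. It then remains to check that the strong, resp. weak, operator topology inherited by $\gB(\sH)$ from $\gB(\sH_\bR)$ coincides with its intrinsic quaternionic one. For the strong topology this is immediate, since the two norms agree (Proposition \ref{propHR}(a)). For the weak topology I would invoke the reconstruction formula (\ref{recsp}): the real seminorms $A\mapsto|(x|Ay)|$ are dominated by the quaternionic ones $A\mapsto|\langle x|Ay\rangle|$, while conversely, for $A\in\gB(\sH)$ one has $(x|\cJ Ay)=(x|A\cJ y)$ and likewise for $\cK$ and $\cJ\cK$, so (\ref{recsp}) exhibits $\langle x|Ay\rangle$ as a fixed real-linear combination of the numbers $(x|Az)$ with $z\in\{y,\cJ y,\cK y,\cJ\cK y\}$. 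The two families of seminorms therefore generate the same topology on $\gB(\sH)$, so the strong and weak closures of $\gR$ coincide whether computed in $\gB(\sH)$ or in $\gB(\sH_\bR)$. Combining the two points gives $\gR''=\gR'{\null^{_\bR}}'{\null^{_\bR}}=\overline{\gR}^{s}=\overline{\gR}^{w}$ with the closures interpretable in either algebra, which is the assertion.
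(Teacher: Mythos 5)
Your proposal is correct and follows essentially the same route as the paper: everything is reduced to the real double commutant theorem in $\gB(\sH_\bR)$, with Lemma \ref{lemmaM} doing the work of identifying $\gR''$ with $\gR'{\null^{_\bR}}'{\null^{_\bR}}$ and the reconstruction formula (\ref{recsp}) showing that the weak and strong closures are insensitive to the choice of scalars. The only (harmless) deviation is that you prove the inclusion $\gR'{\null^{_\bR}}'{\null^{_\bR}}\subset\gR''$ directly from $\cJ,\cK\in\gR'^{\null_\bR}$ and $\gR'\subset\gR'^{\null_\bR}$, whereas the paper obtains it indirectly from the sandwich $\gR\subset(\gR')'\subset\overline{\gR}^s$ and the strong closedness of commutants; your version is, if anything, slightly more economical.
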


\begin{proof} The idea of the proof is to reduce to the analogous result for algebras of operators in real Hilbert spaces (e.g., Thm 2.26 in \cite{MO1}).
As a first step, we prove that the weak and strong closures of a set 
$\gS \subset \gB(\sH) \subset \gB(\sH_\bR)$ do not depend on the choice of either  the real or the quaternionic Hilbert space structure. Let us prove it for the strong closures first.
If an operator $A\in\gB(\sH_\bR)$ is a strong-limit of elements of $\gS$ then it commutes with $\cJ,\cK$ because every element of the sequence does and so it belongs also to $\gB(\sH)$. 
To conclude, since the norms of $\sH$ and $\sH_\bR$ coincide, we easily see that $A$ is also a strong limit of elements of $\gS$ within $\gB(\sH)$. The opposite inclusion is similar, just remember that $\gB(\sH)\subset\gB(\sH_\bR)$. 
Let us pass to the weak closures. Take $\{A_n\}_{n\in\bN}\subset\gS$ and first suppose that it weakly converges to some $A\in \gB(\sH)$. This means that
 $\langle x|A_n y\rangle\to \langle x|Ay\rangle$ for every $x,y\in\sH$ and thus 
$(x|A_ny) = Re(\langle x|A_n y\rangle) \to Re(\langle x|A y\rangle) = (x|Ay)$. $\bR$-linearity of $A$ finally implies that $A_n$ weakly converges to $A\in \gB(\sH_\bR)$, too. 
Suppose conversely that $A_n$ weakly converges to  $A\in \gB(\sH_\bR)$, that is  $(x|A_n y)\to (x|A y)$ for every $x,y\in\sH_\bR$. $A$ is quaternionic-linear as every $A_n$ is. Indeed,
\begin{equation*}
\begin{split}
(x|\cJ Ay)&=(\cJ^*x|Ay)=\lim_{n\to\infty}(\cJ^*x|A_ny)=\lim_{n\to\infty}(x|\cJ A_ny)=\\
&=\lim_{n\to\infty}(x|A_n\cJ y)=(x|A\cJ y)
\end{split}
\end{equation*}
Arbitrainess of $x,y$ yields $\cJ A=A\cJ$. The same arguments works for $\cK$ and thus  $A$ is $\bH$-linear in view of (f) Prop.\ref{propHR}. Finally $A_n$ converges weakly
 to $A$ also on $\sH$ because (\ref{recsp}) gives
\begin{equation*}
\begin{split}
\langle x|A_n y\rangle &= (x|A_n y)-(x|\cJ A_ny)j-(x|\cK A_n y)k-(x|\cJ\cK A_n y)jk=\\
&= (x|A_n y)-(x| A_n\cJ y)j-(x|A_n \cK y)k-(x|A_n \cJ\cK y)jk\\
& \to (x|A y)-(x| A\cJ y)j-(x|A \cK y)k-(x|A \cJ\cK y)jk = \langle x|A y\rangle
\end{split}
\end{equation*}
We have in particular established that the weak and strong closures of a unital $\null^*$-algebra $\gR\subset\gB(\sH)\subset\gB(\sH_\bR)$ do not depend on the real or quaternionic Hilbert space structure. The double commutant theorem
  for real unital $^*$-algebras  guarantees that \beq\label{idcent} (\gR'^{\null_\bR})'^{\null_\bR}=\overline{\gR}^s=\overline{\gR}^w\eeq  where now the closures can 
  indifferently be intepreted in $\gB(\sH)$ or in $\gB(\sH_\bR)$. \\
To conclude the proof,  take a $\bH$-linear operator $A\in (\gR')'$. $A$ is also  $\bR$-linear and thus  $A \in (\gR')'^{\null_\bR}$ (where $\gR'$ is first defined within
 $\gB(\sH)$ and next is viewed as a subset of $\gB(\sH_\bR)$).
So $A$ is a $\bR$-linear operator which  commutes with $\cJ,\cK$ and the elements of $\gR'$. The second inclusion in the statement of  Lemma \ref{lemmaM}, implies that 
$A$ also commutes with the elements of $\gR'^{\null_\bR}$. Summing up,
$A  \in (\gR'^{\null_\bR})'^{\null_\bR}$.
 This proves  that $(\gR')'\subset (\gR'^{\null_\bR})'^{\null_\bR} = \overline{\gR}^s$, the identity arising from (\ref{idcent}). Since $\gR\subset (\gR')'$, we also have that 
$\gR \subset (\gR')'\subset (\gR'^{\null_\bR})'^{\null_\bR} = \overline{\gR}^s $. Taking the strong closure of every space and noticing that $\gS'$ and $\gS'^{\null_\bR}$ 
are always strongly closed, we have
$\overline{\gR}^s \subset (\gR')'\subset (\gR'^{\null_\bR})'^{\null_\bR} = \overline{\gR}^s$ so that
$\overline{\gR}^s =(\gR')'= (\gR'^{\null_\bR})'^{\null_\bR}$ and (\ref{idcent}) concludes the proof of 
$\gR''=\overline{\gR}^s=\overline{\gR}^w = \gR'{\null^{_\bR}}'{\null^{_\bR}}$. 
\end{proof}
\begin{corollary}\label{cor1}
Let $\sH$ be a quaternionic Hilbert space and $\gR\subset\gB(\sH)$ a unital sub $^*$-algebra. The following statements are equivalent:
\begin{enumerate}[(a)]
\item $\gR=\gR''$,
\item $\gR$ is weakly closed,
\item $\gR$ is strongly closed.
\end{enumerate}
\end{corollary}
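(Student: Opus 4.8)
The plan is to read Corollary \ref{cor1} as a direct corollary of the double commutant theorem, Theorem \ref{DCT}, which has already done all the analytic work. The point is that Theorem \ref{DCT} asserts, for the given unital $^*$-subalgebra $\gR\subset\gB(\sH)$, the chain of \emph{set} equalities
\[
\gR''=\overline{\gR}^s=\overline{\gR}^w\,,
\]
so these three objects are literally one and the same subset $\gM\subset\gB(\sH)$. I would first name this common subset $\gM$ explicitly, emphasising that it does not matter whether the strong/weak closures are taken in $\gB(\sH)$ or in $\gB(\sH_\bR)$, since Theorem \ref{DCT} guarantees agreement in both settings.

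With $\gM$ in hand, each of the three conditions becomes the assertion that $\gR$ equals $\gM$. Indeed, condition (a) is $\gR=\gR''=\gM$ by definition; condition (b), that $\gR$ be weakly closed, means exactly $\gR=\overline{\gR}^w=\gM$; and condition (c), that $\gR$ be strongly closed, means exactly $\gR=\overline{\gR}^s=\gM$. Since all three right-hand sides coincide, the three statements are mutually equivalent, and no further argument is required. If one preferred the classical cyclic presentation, the same facts give (a)$\Rightarrow$(b) via $\gR=\gR''=\overline{\gR}^w$, then (b)$\Rightarrow$(c) using $\gR\subset\overline{\gR}^s\subset\overline{\gR}^w=\gR$ (every strong limit is a weak limit, and $\gR$ is contained in its strong closure), and finally (c)$\Rightarrow$(a) via $\gR=\overline{\gR}^s=\gR''$; but the direct route above is cleaner.

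There is essentially no obstacle here: the entire difficulty of the quaternionic theory — the reduction to the real case and the control of the commutant through Lemma \ref{lemmaM} — is already absorbed into Theorem \ref{DCT}. The only point deserving a word of care is the observation that the equalities in Theorem \ref{DCT} are genuine equalities of subsets of $\gB(\sH)$, so that "$\gR$ is closed in a given topology" translates without loss into "$\gR$ coincides with the corresponding closure", which is what lets the three conditions collapse to the single identity $\gR=\gM$.
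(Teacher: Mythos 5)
Your proposal is correct and follows essentially the same route as the paper: Corollary \ref{cor1} is read off directly from the set equalities $\gR''=\overline{\gR}^s=\overline{\gR}^w$ of Theorem \ref{DCT}, each of the three conditions being the statement that $\gR$ coincides with this common set. The paper phrases it as a short cyclic implication, but the content is identical.
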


\begin{proof} (a) implies (b) from Thm.\ref{DCT}, (b) implies (c) trivially. If (c) holds true then $\gR = \overline{\gR}^s = \gR''$
 from Thm.\ref{DCT} again.
\end{proof}
\noindent This result suggests to extend  the usual definition of von Neumann algebra to the quaternionic case.
\begin{definition}\label{defVN} If $\sH$ is a quaternionic Hilbert space, 
a {\bf von Neumann algebra} in $\gB(\sH)$ is a unital $^*$-subalgebra of $\gB(\sH)$ satisfying the three equivalent statements of Corollary \ref{cor1}. 
\end{definition}

\begin{remark}{\em $\null$\\
{\bf (a)}  In view of Thm.\ref{DCT}, if $\sH$ is a quaternionic Hilbert space and $\gR\subset\gB(\sH)$ a unital $^*$-algebra, then $\gR$ is a von Neumann 
algebra on $\sH$ if and only if it is a von Neumann algebra on $\sH_\bR$.\\
{\bf (b)} Exactly as in the real Hilbert space case, in quaternionic Hilbert spaces, a von Neumann algebra is a unital sub $C^*$-algebra of $\gB(\sH)$ (and $\gB(\sH_\bR)$) 
since strong closure implies uniform closure and $||A^*A||=||A||^2$ for $A\in \gB(\sH)$ is valid also in quaternionic Hilbert spaces \cite{GMP1}.}
\end{remark}
\noindent If  $\gS\subset\gB(\sH)$ is $^*$-closed, $\gS'$ is a unital $^*$-algebra which is von Neumann  because $\gS'= (\gS')''$, in particular, $\gS''= (\gS')'$ is 
also a von Neumann algebra. If $\gT \supset \gS$
is another von Neumann algebra, we have  $\gT' \subset \gS'$ and so $\gT'' \supset \gS''$. Therefore $\gS''$ is the smallest von Neumann algebra including $\gS$.
\begin{definition}
Let $\gS\subset\gB(\sH)$ be closed under the $^*$-operation, then  $\gS''$ is called the {\bf von Neumann generated} by $\gS$.
\end{definition}

\subsection{The lattice of orthogonal projectors of quaternionic von Neumann algebras}
  An important structure associated with a von Neumann algebra  $\gR$, in a quaternionic Hilbert space $\sH$ in particular,  is the lattice of orthogonal projectors 
  in $\gR$, denoted by $\cL_\gR(\sH)$ (where  $\cL(\sH):= \cL_{\gB(\sH)}(\sH)$). As in the real case \cite{MO1} and differently from the complex case it does not contain 
  all the information about the algebra $\gR$. Indeed differently from the case of a complex Hilbert space,  it may  hold $\cL_\gR(\sH)''\subsetneq\gR$, as the following elementary example shows.
\begin{example}
{\em
Let $\sH$ be any quaternionic  Hilbert space and $J\in\gB(\sH)$ such that $J^*=-J$ and $JJ=-I$. Consider the unital sub $^*$-algebra $\gR:=\{aI+bJ\:|\:a,b\in\bR\}\subset\gB(\sH)$. 
If we manage to prove that $\gR$ is weakly closed, Corollary \ref{cor1} guarantees that it is a von Neumann algebra. 
So, suppose there exists some $A\in\gB(\sH)$ such that $a_nI+b_nJ\rightarrow A$ weakly. It is clear that $a_nI-b_nJ=(a_nI+b_nJ)^*$ weakly converges to $A^*$.  So that 
 $a_n\rightarrow a$ and $b_n\rightarrow b\in\bR$ 
for some $a, b\in\bR$. This gives $A=aI+bJ$ so that $\gR$ is weakly closed and therefore is a von Neumann algebra.
On the other hand,  $\cL_\gR(\sH)=\{0,I\}$, since every self-adjoint element of $\gR$ is necessarily of the form $aI$ for some $a\in\bR$. The smallest algebra containing
 it is obviously $\cL_\gR(\sH)''=\{aI\:|\:a\in\bR\}\subsetneq\gR$.
}
\end{example}
\noindent To investigate the elementary properties of $\cL_\gR(\sH)$ we need some further preliminary technical results.  Similarly to Prop.\ref{propHR}, given a
 quaternionic Hilbert space $\sH$, and using the real Hilbert space $\sH_\bR$,  we can define an associated  complex Hilbert space $\sH_\bC$ depending on  the choice of a preferred 
imaginary unit. As usual $\bC_j = \{a+jb \:|\: a,b \in \bR\}$ is a quaternionic realization of the field of complex numbers  relying on the choice of the imaginary unit 
$j \in \bH$.

 \begin{proposition}\label{propHC}
Let  $(\sH,\langle\cdot|\cdot\rangle)$ be a quaternionic Hilbert space, fix the imaginary unit $j \in \bH$ and a corresponding  $\bR$-linear operator $\cJ$ in $\sH_\bR$
as in (b) of Prop.\ref{propHR} and consider the associated complex  vector space on the set of vectors
$\sH_{\bC_j}:=\sH$, whose complex linear combinations are the quaternionic ones with  coefficients in $\bC_j$,  and define the Hermitian scalar 
product $$(x |y)_j:=(x|y)-(x|\cJ y)j \quad \forall x,y \in \sH\:,$$
where $x,y$ are viewed as elements of $\sH_\bR$ and the corresponding real scalar product takes place in the right-hand side. 
The following facts are true.
\begin{itemize}
\item[(a)]  $(x|y) = Re (x|y)_j$ for every $x,y \in \sH$  viewed as elements of $\sH_\bR$.

\item[(b)] $\langle x|y\rangle = (x|y)_j-k(x|Ky)_j$ for every $x,y\in\sH$

\item[(c)] The norms of $\sH$, $\sH_\bR$ and $\sH_{\bC_j}$ coincide and $(\sH_{\bC_j},(\cdot|\cdot)_j)$ is a complex Hilbert space.  ($\sH_{\bC_j} = (\sH_\bR)_\cJ$
where the right-hand side is the internal complexification of $\sH_\bR$ referred to the complex structure $\cJ$ as in Sect.2.5  in \cite{MO1}).

\item[(d)] Every quaternionic-linear operator is  complex-linear and in particular   $\gB(\sH)\subset \gB(\sH_{\bC_j})$ and the inclusion of algebras
  preserves the norms of the operators.
 
 \item[(e)] $A\in\gB(\sH_{\bC_j})$ belongs to $\gB(\sH)$ if and only if it commutes with $\cK$. 

\item[(f)] Let $A : D(A)\rightarrow \sH$ be a quaternionic-linear operator  with $D(A) \subset \sH$ a (quaternionic) subspace and consider 
$A$ as a complex-linear operator on $\sH_{\bC_j}$ with domain $D(A)$ viewed as complex subspace of $\sH_{\bC_j}$.
Then

(i)   $\cK A\subset A \cK$,

(ii)  $D(A)$ is a dense real-linear subspace of $\sH_{\bC_j}$ if it is dense in  $\sH$,
 
 (iii) the adjoint of $A$ with respect to $\sH$ and with respect to $\sH_{\bC_j}$ coincide,
 
 (iv)  $A$  is symmetric, antisymmetric, essentially selfadjoint, essentially antiselfadjont, selfadjoint, antiselfadjoint, isometric, unitary, idempotent in $\sH_{\bC_j}$ if   it is respectively such in $\sH$,
 
 (v) $A$ is closable in $\sH_{\bC_j}$ if it is such in $\sH$. In this case the closures defined in $\sH$ and $\sH_{\bC_j}$, respectively coincide
 
 (vi) $A$ is positive in $\sH_{\bC_j}$ if it is such in $\sH$

 \item[(g)] Let $A : D(A)\rightarrow \sH_{\bC_j}$ be  
 a $\bC$-linear operator   with $D(A) \subset \sH_{\bC_j}$ a (complex) subspace and suppose  that $\cK A\subset A\cK$.
 Then
 
 (i)  $D(A)$ is a quaternionic-linear subspace of $\sH$ which is dense if it is dense in $\sH_{\bC_j}$ and $A$ is a quaternionic linear operator.
 
 (ii)  the adjoint of $A$ with respect to $\sH_{\bC_j}$ and with respect to $\sH$ coincide,
 
 (iii)  $A$  is symmetric, antisymmetric, essentially selfadjoint, essentially antiselfadjont, selfadjoint, antiselfadjoint, isometric, unitary, idempotent in $\sH$ if   it is respecively such in $\sH_{\bC_j}$,
 
 (iv) $A$ is closable on $\sH$ if it is such in $\sH_{\bC_j}$. In this case the closures defined in $\sH$ and $\sH_{\bC_j}$, respectively coincide
 
 (v) $A$ is positive in $\sH$ if it is positive and symmetric on $\sH_{\bC_j}$
\end{itemize}
\end{proposition}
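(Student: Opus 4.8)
The plan is to recognise $\sH_{\bC_j}$ as a concrete instance of the internal complexification of the real Hilbert space $\sH_\bR$ relative to the complex structure $\cJ$, as developed in Sect.~2.5 of \cite{MO1}, and then to deduce every claim by combining that general machinery with the already established Prop.~\ref{propHR}. The first thing I would verify is that $\cJ$ genuinely qualifies as an orthogonal complex structure on $\sH_\bR$: from (b) of Prop.~\ref{propHR} one has $\cJ\cJ=-I$, and a short computation starting from $(x|\cJ y)=Re(\langle x|y\rangle j)$ gives $\cJ^*=-\cJ$. Once this is in place, the prescribed form $(x|y)_j:=(x|y)-(x|\cJ y)j$ is exactly the canonical Hermitian product attached to the complexification, so the very existence of a complex Hilbert space structure is inherited rather than re-proved.

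For the structural items I would argue as follows. Item (a) is immediate, since $j$ is imaginary and taking the real part of $(x|y)_j$ annihilates the $-(x|\cJ y)j$ term. For (c), the key observation is that $(x|\cJ x)=Re(\langle x|x\rangle j)=0$ because $\langle x|x\rangle$ is real; hence $(x|x)_j=(x|x)=||x||^2$, so all three norms coincide and completeness transfers verbatim from $\sH_\bR$, while positive definiteness and Hermiticity of $(\cdot|\cdot)_j$ belong to the complexification construction.

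Item (b) is the one genuine computation. Here I would substitute the definition of $(\cdot|\cdot)_j$ into the right-hand side $(x|y)_j-k(x|\cK y)_j$ and compare with the full reconstruction formula (\ref{recsp}). Expanding, the real coefficients $(x|\cK y)$ and $(x|\cJ\cK y)$ may be pulled across $k$, and the only nontrivial step is the quaternion identity $kj=-jk$, which turns the surviving term into $-(x|\cJ\cK y)jk$. Matching the four resulting terms with (\ref{recsp}) yields the claim; this is pure bookkeeping rather than a conceptual difficulty.

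Finally, the operator statements (d)--(g) reduce to a single algebraic remark: a bounded $\bR$-linear map is $\bC_j$-linear exactly when it commutes with $\cJ$, whereas by (d) of Prop.~\ref{propHR} it is $\bH$-linear exactly when it commutes with both $\cJ$ and $\cK$. Since commutation with $\cJ$ is automatic for a $\bC_j$-linear operator, the extra condition distinguishing $\gB(\sH)$ inside $\gB(\sH_{\bC_j})$ is precisely commutation with $\cK$, which gives (e); the inclusion and norm equality in (d) follow from the coincidence of norms. The remaining assertions (f) and (g) about domains, adjoints, closures, positivity and the various symmetry/unitarity properties of possibly unbounded operators I would obtain exactly as their counterparts (e) and (f) in Prop.~\ref{propHR}, replacing ``commutes with $\cJ$ and $\cK$'' by ``commutes with $\cK$'' throughout and invoking (iii) of Prop.~\ref{propHR} together with part (a) here to transfer adjoints between the real, complex and quaternionic pictures. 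I expect the only point requiring real care to be (g)(v), where positivity on $\sH$ must be extracted from positivity and symmetry on $\sH_{\bC_j}$, mirroring the real-case argument.
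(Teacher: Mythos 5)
Your proposal is correct and follows exactly the route the paper intends: the paper's own proof is just the one-line remark that everything follows from the definitions, i.e.\ from recognising $\sH_{\bC_j}$ as the internal complexification $(\sH_\bR)_\cJ$ and transferring the operator statements from (d)--(f) of Prop.~\ref{propHR} with ``commutes with $\cJ$ and $\cK$'' replaced by ``commutes with $\cK$''. Your verification of (b) via the reconstruction formula (\ref{recsp}) and of $\cJ^*=-\cJ$, $(x|\cJ x)=0$ supplies precisely the bookkeeping the paper leaves implicit.
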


\begin{proof}All statements immediately follows from the corresponding definitions. \end{proof}
\noindent Let $(\sK, (\cdot|\cdot))$ be a real, complex or quaternionic Hilbert space. Consider a, generally unbounded, selfadjoint operator $A$. Thanks to the 
Spectral Theorem (see, e.g., \cite{MO1} for the real case and  \cite{GMP2} for the quaternionic case\footnote{Concerning self-adjoint operators, we use the notation $dP$
 instead of $d\cP$ preferred in  \cite{GMP2} in the integral spectral formula.}) it can be spectrally decomposed as
\begin{equation}\label{st}
A=\int_{\bR}\lambda\, dP,\ \ \ D(A)=\left\{x\in\sH\:\bigg|\: \int_\bR|\lambda|^2\, d\mu^P_x<\infty\right\}
\end{equation}
where $P:\mathcal{B}(\bR)\rightarrow\cL(\sH)$ is the projection-valued measure (PVM) uniquely associated to $A$ and $\mu^P_x(E)=(x|P_E x)$ is the positive valued
 finite measure associated with $P$. The spectral theorem for self-adjoint operators can be restated as follows.
\begin{lemma}\label{lemmast}
Let $(\sK, (\cdot|\cdot))$ be a real, complex or quaternionic Hilbert space and  consider a, generally unbounded, selfadjoint operator $A: D(A) \to \sK$ with $D(A)$ 
dense in $\sK$. There exists a unique PVM $P$ on $\bR$ such that, if $\mu^P_x(E):=(x|P_E x)$, then  $id\in L^2(\bR,\mu^P_x)$ and
$
(x|Ax) =\int_\bR \lambda\, d\mu_x^P
$
for all $x\in D(A)$ 
\begin{proof} The PVM $P$ associated to $A$ by the spectral theorem satisfies the thesis.
Suppose $Q$ is another PVM satisfying the above requirements and take $x\in D(A)$. In particular this means that $x\in D\left(A'\right)$ where $A'$ is the
 selfadjoint operator $\int_\bR\lambda\, dQ$ and $(x|Ax)=(x|A'x)$, i.e. $(x|(A-A')x) =0$. Since $A-A'$ is symmetric on the dense domain $D(A)$, a straightforward 
 calculation shows that $A-A'=0$ on $D(A)$, i.e. $A\subset A'$. Since $A$ is selfadjoint it does not admit proper selfadjoint extensions (see \cite{GMP1,GMP2} for 
 the quaternionic case and \cite{MO1} for the real case), that is $A=A'$. Finally, the spectral Theorem guarantees that $Q=P$.
\end{proof}
\end{lemma}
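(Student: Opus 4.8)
The plan is to separate existence, which is just a reading of the Spectral Theorem, from uniqueness, which carries the real content. For existence I would simply take the PVM $P$ canonically associated with $A$ in (\ref{st}): the domain characterization there says exactly that $x\in D(A)$ forces $\int_\bR|\lambda|^2\,d\mu^P_x<\infty$, i.e. $id\in L^2(\bR,\mu^P_x)$, while pairing $A=\int_\bR\lambda\,dP$ against $x$ gives $(x|Ax)=\int_\bR\lambda\,d\mu^P_x$. Thus $P$ meets the thesis and everything reduces to showing such a PVM is unique.

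For uniqueness I would let $Q$ be any PVM satisfying the two requirements and set $A':=\int_\bR\lambda\,dQ$, which is selfadjoint by the converse half of the Spectral Theorem. The first requirement, $id\in L^2(\bR,\mu^Q_x)$ for $x\in D(A)$, is precisely the inclusion $D(A)\subseteq D(A')$, and the second one then reads $(x|Ax)=\int_\bR\lambda\,d\mu^Q_x=(x|A'x)$ for all $x\in D(A)$. Hence the symmetric operator $S:=A-A'$, defined on the dense domain $D(A)$, has identically vanishing quadratic form, $(x|Sx)=0$ for every $x\in D(A)$, and the whole problem becomes: deduce $S=0$ from the vanishing of its quadratic form.

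This last step is where the quaternionic case genuinely departs from the real and complex ones, and I expect it to be the only real obstacle. Writing $B(x,y):=(x|Sy)$, symmetry of $S$ makes $B$ Hermitian, $B(x,y)=\overline{B(y,x)}$, and quaternionic linearity of $S$ gives $B(yq,yq)=\overline{q}\,B(y,y)\,q=0$ for all $q\in\bH$. Evaluating the vanishing diagonal on $w=x+yq$ then collapses to $0=B(x,y)\,q+\overline{B(x,y)\,q}=2\,Re(B(x,y)\,q)$, so $Re(B(x,y)\,q)=0$ for each $q\in\{1,i,j,k\}$; inspecting the multiplication table, a quaternion whose real part is killed by all four units must vanish, whence $B\equiv0$ and $S=0$, i.e. $A\subseteq A'$. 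Non-commutativity is exactly what rules out the naive one-unit polarization and forces this four-direction computation (in the complex case $q\in\{1,i\}$ already suffices, in the real case $q=1$). Finally, since $A$ is selfadjoint it admits no proper symmetric extension, so $A\subseteq A'$ with $A'$ selfadjoint forces $A=A'$; applying the uniqueness clause of the Spectral Theorem to this single selfadjoint operator then gives $Q=P$.
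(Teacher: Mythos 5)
Your proof is correct and follows the same route as the paper's: existence from the spectral theorem, then uniqueness by forming $A'=\int_\bR\lambda\,dQ$, deducing $A\subset A'$ from the vanishing of the quadratic form of the symmetric operator $A-A'$ on the dense domain $D(A)$, invoking maximality of selfadjoint operators to get $A=A'$, and finishing with the uniqueness clause of the spectral theorem. The only difference is that you spell out the quaternionic polarization identity (testing against $q\in\{1,i,j,k\}$) that the paper dismisses as "a straightforward calculation," and your computation there is accurate.
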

Once the PVM associated with $A$ is available, if $f:\bR\rightarrow\bR$ is Borel measurable, the following operator can be defined
\begin{equation}\label{stf}
f(A):=\int_\bR f(\lambda)\, dP,\ \ \ D(f(A)):=\left\{x\in\sH\:\bigg|\: \int_\bR|\lambda|^2\, d\mu_x^P <\infty\right\} 
\end{equation}
In particular it satisfies $(x|f(A)x)=\int_\bR f(\lambda)\, d\mu_x^P$ for all $x\in D(f(A))$.
\vspace{0.1cm}

\noindent Another  technical lemma is in order.
\begin{lemma}\label{commst}
Let $A$ be a, generally unbounded, self-adjoint operator on a quaternionic Hilbert space $\sH$ with $PVM$ $P^{(A)}$ and $U\in\gB(\sH)$. The following facts hold:
\begin{enumerate}[(a)]
\item $P^{(A)}$ equals the PVMs associated with $A$ on $\sH_\bR$ and $\sH_{\bC_j}$ via spectral theorem.
\item $UA\subset AU$ if and only if $UP^{(A)}_E=P^{(A)}_EU$ for every $E\in\cB(\bR)$.
\item  If $f:\bR\rightarrow \bR$ is Borel measurable, then the definitions of $f(A)$ referred to $\sH,\sH_\bR$ or $\sH_{\bC_j}$ coincide.
\end{enumerate}
\end{lemma}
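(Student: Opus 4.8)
The plan is to reduce all three statements to the already-available real and complex spectral theory, by first showing that the three projection-valued measures literally coincide (part (a)) and then transporting the commutation statement (b) and the functional calculus (c) across the structures. For (a), I would begin by noting that, by Prop.\ref{propHR}(e)(iv) and Prop.\ref{propHC}(f)(iv), the self-adjoint operator $A$ stays self-adjoint when read on $\sH_\bR$ and on $\sH_{\bC_j}$, so the spectral theorem applies in all three settings and yields PVMs $P^{(A)}$, $P^{(A),\bR}$, $P^{(A),\bC_j}$. The crucial observation is that the associated scalar measures agree: each $P^{(A)}_E$ is an orthogonal projector, so $\langle x|P^{(A)}_E x\rangle = ||P^{(A)}_E x||^2\ge 0$ is real and equals its own real part; since the real and complex scalar products are respectively $Re\langle\cdot|\cdot\rangle$ and the $\bC_j$-part of $\langle\cdot|\cdot\rangle$, and the norms coincide (Prop.\ref{propHR}(a), Prop.\ref{propHC}(c)), the measures $\mu^{P}_x$ computed in $\sH$, $\sH_\bR$ and $\sH_{\bC_j}$ are one and the same positive measure for every $x$. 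Because orthogonality, self-adjointness, idempotency and strong $\sigma$-additivity of $P^{(A)}$ are preserved on $\sH_\bR$ and $\sH_{\bC_j}$ (again Prop.\ref{propHR}, Prop.\ref{propHC}, and coincidence of norms), $P^{(A)}$ is a bona fide real, resp. complex, PVM there. Finally, $A$ self-adjoint makes $\langle x|Ax\rangle$ real, so $(x|Ax)=\int_\bR\lambda\,d\mu^{P}_x$ with the same domain condition in the three cases; the uniqueness in Lemma \ref{lemmast} then forces $P^{(A),\bR}=P^{(A)}=P^{(A),\bC_j}$.

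For (b) I would transport the inclusion $UA\subset AU$ to the complex picture. This inclusion asserts precisely that for every $x\in D(A)$ one has $Ux\in D(A)$ and $UAx=AUx$, a statement about vectors which is literally the same whether $\sH$ is read over $\bH$, $\bC_j$ or $\bR$; likewise $UP^{(A)}_E=P^{(A)}_E U$ is structure-independent. Since $U\in\gB(\sH)\subset\gB(\sH_{\bC_j})$ is complex-linear and bounded (Prop.\ref{propHC}(d)) and, by (a), the complex spectral measure of $A$ is exactly $P^{(A)}$, the claimed equivalence reduces to the standard complex (equivalently real, cf. \cite{MO1}) result that a bounded operator commutes with a self-adjoint operator, in the sense $UA\subset AU$, if and only if it commutes with every spectral projector. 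Transporting the conclusion back to $\sH$ gives (b).

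For (c) the cleanest route is again uniqueness. When $f$ is real-valued, $f(A)$ defined by \eqref{stf} is self-adjoint in each of the three structures and, by (a), is built from the single PVM $P^{(A)}$ and the single family of measures $\mu^{P}_x$; hence its natural domain $\{x:\int_\bR|f|^2\,d\mu^{P}_x<\infty\}$ is the very same set in all three cases, and the quadratic form $(x|f(A)x)=\int_\bR f\,d\mu^{P}_x$ coincides (using once more that $\langle x|f(A)x\rangle$ is real and that the real/complex products are the corresponding parts of $\langle\cdot|\cdot\rangle$). The argument in the proof of Lemma \ref{lemmast}, namely that a symmetric operator vanishing as a quadratic form on its dense domain is the zero operator, then yields that the three self-adjoint operators agree. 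Alternatively, one may observe directly that for a real simple function $f=\sum_i c_i\chi_{E_i}$ the operator $\sum_i c_iP^{(A)}_{E_i}$ is visibly the same in the three pictures, and pass to general $f$ through the common strong limit, domains and norms being structure-independent.

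The only genuinely non-routine step is part (a): identifying the spectral measures as field-independent and matching them against Lemma \ref{lemmast}. Once this is settled, (b) and (c) become transcriptions of the real and complex theory, and the main care needed elsewhere is merely to confirm that a quaternionic PVM, reinterpreted on $\sH_\bR$ or $\sH_{\bC_j}$, remains a projection-valued measure — which is guaranteed by the preservation of projector properties in Prop.\ref{propHR} and Prop.\ref{propHC} together with the coincidence of the norms.
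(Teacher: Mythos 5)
Your argument is correct and follows essentially the same route as the paper: establish (a) by observing that self-adjointness of $A$ and of the spectral projectors makes the relevant scalar measures and quadratic forms coincide across $\sH$, $\sH_\bR$ and $\sH_{\bC_j}$, then invoke the uniqueness statement of Lemma \ref{lemmast}, and finally obtain (b) and (c) by transporting the known real/complex results (the paper reduces to $\sH_\bR$ and uses the simple-function strong-limit argument for (c), which you mention as your alternative route — an inessential difference).
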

\begin{proof}
Consider the unique real PVM $Q$ associated with $A$ on the real Hilbert space $(\sH_\bR,(\cdot|\cdot))$, while $P$  denotes the quaternionic analogue  on $\sH$. 
 Take $u\in D(A)$. Since $A$ selfadjoint, it holds $\langle u|Au\rangle\in\bR$, hence
\begin{equation*}
\int_\bR\lambda\, d\mu^{Q}_u=(u|Au)=\langle u|Au\rangle =\int_\bR\lambda\, d\mu^{P}_u
\end{equation*}
It is easy to see that $P$ is still a PVM if understood on $\sH_\bR$. Moreover $\mu_u^P(E)=\langle u| P_E u\rangle=(u|P_E u)$, again from  
selfadjointness of $P_E$. Lemma \ref{lemmast} for the real Hilbert space  case proves that $P=Q$. 
We have established  (a) for $\sH_\bR$. Next (b) follows immediately, since the statement is a well-known property in real Hilbert spaces \cite{MO1}. 
Analogously, property (c) for $\sH_\bR$ follows easily from (a)
using the fact that the  integral operator in (\ref{st}) can be computed as a strong limit (even uniform if $A$ is bounded) of bounded operators of the form
$
\sum_{l=1}^{n}c_lP({F}_l)
$
where $c_l\in\bR$ and $F_l\in\mathcal{B}(\bR)$ depend only on the chosen function $f$ and not on the scalar field of the Hilbert space.
Finally, the proofs for the case of $\sH_{\bC_j}$ is immediate by noticing that $\sH_{\bC_j}=(\sH_\bR)_\cJ$ and using the theory of \cite{MO1}.
\end{proof}

\noindent We are eventually in a position to prove an elementary though relevant proposition concerning the interplay of $\gR$ and $\cL_\gR(\sH)$. This result extends Thm 2.29  we established in \cite{MO1}  to quaternionic Hilbert spaces.

\begin{proposition}\label{2.12}
Let $\gR$ a von Neumann algebra, define the set
$$\cJ_\gR:=\{J\in\gR\:|\: J^*=-J,\ -J^2\in \cL_\gR(\sH)\}.$$ The following facts are valid.\\
{\bf (a)} $A^*=A\in\gR$ iff the orthogonal projectors of its PVM belong to $\cL_\gR(\sH)$.\\
{\bf (b)} $\cL_\gR(\sH)$ is a complete orthomodular bounded sublattice of $\cL(\sH)$.\\
{\bf (c)} $\cL_\gR(\sH)''$ contains all the selfadjoint operators of $\gR$.\\
{\bf (d)} $\cL_\gR(\sH)''+\cJ_\gR\cL_\gR(\sH)''=\gR$.\\
{\bf (e)} $\cL_\gR(\sH)''\subsetneq\gR$ iff there exists $J\in \cJ_\gR\setminus\cL_\gR(\sH)''$.
\end{proposition}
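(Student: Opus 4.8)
The plan is to prove the five claims in an order that respects their dependencies: (a) underpins (c), the polar decomposition together with (c) yields the structural identity (d), and (e) is a formal corollary of (d); claim (b) is essentially independent and purely lattice-theoretic. Throughout I would lean on the double commutant theorem (Theorem \ref{DCT}), which gives $\gR=\gR''=\overline{\gR}^s$, and on Lemma \ref{commst}, which transports the spectral machinery into the quaternionic setting. First I would prove (a). For the ``if'' direction, a bounded self-adjoint $A$ is a uniform limit of simple functions $\sum_l c_l P^{(A)}_{F_l}$ of its own PVM (the uniform approximation for bounded operators recorded in the proof of Lemma \ref{commst}); if all spectral projectors lie in $\gR$, these finite combinations lie in $\gR$, and uniform closedness of the von Neumann algebra forces $A\in\gR$. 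For ``only if'', I would take an arbitrary $U\in\gR'$; since $A\in\gR=\gR''$ it commutes with $U$, so Lemma \ref{commst}(b) gives $UP^{(A)}_E=P^{(A)}_EU$ for every $E$. As $U\in\gR'$ was arbitrary, each $P^{(A)}_E$ lies in $(\gR')'=\gR$ and, being a projector, in $\cL_\gR(\sH)$. Claim (c) then follows at once: by (a) a self-adjoint $A\in\gR$ has all spectral projectors in $\cL_\gR(\sH)$, and the same uniform approximation exhibits $A$ as a uniform limit of elements of the (uniformly closed) algebra $\cL_\gR(\sH)''$, whence $A\in\cL_\gR(\sH)''$.

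For (b), boundedness comes from $\mathbf{0},I\in\gR$ and orthocomplementation from $P^\perp=I-P\in\gR$. The meet of two projectors is the strong limit of $(PQP)^n$ and the join is obtained by De Morgan, so strong closedness of $\gR$ keeps both inside $\gR$; the same strong-closedness argument produces arbitrary meets and joins, hence completeness. Orthomodularity is inherited from $\cL(\sH)$ because the order, the orthocomplement $I-P$, and the meets and joins on $\cL_\gR(\sH)$ all coincide with the ambient ones, so the orthomodular identity valid in $\cL(\sH)$ restricts to $\cL_\gR(\sH)$.

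The substance of the proposition is (d). I would split any $T\in\gR$ as $T=S+A$ with $S=\tfrac12(T+T^*)$ self-adjoint and $A=\tfrac12(T-T^*)$ anti-self-adjoint, both in $\gR$; by (c), $S\in\cL_\gR(\sH)''$. The anti-self-adjoint part I would handle by polar decomposition $A=J|A|$, with $|A|=\sqrt{-A^2}\in\gR$ self-adjoint (functional calculus, Lemma \ref{commst}), so $|A|\in\cL_\gR(\sH)''$ by (c). The hard part is verifying that the decomposition lives inside $\gR$ and that $J$ has the required structure: any bounded operator commuting with both $A$ and $A^*$ commutes with $J$, so $J\in(\gR')'=\gR$ by the standard von Neumann-algebra polar decomposition, which I would justify by reducing to the complex Hilbert space $\sH_{\bC_j}$ where the result is classical and then noting that $J$ remains quaternionic-linear. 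Moreover $A^*=-A$ forces $J^*=-J$, while $J^*J=JJ^*=P$ is a projector; since $J\in\gR$ we get $-J^2=P\in\cL_\gR(\sH)$. Thus $J\in\cJ_\gR$ and $A=J|A|\in\cJ_\gR\cL_\gR(\sH)''$, proving $\gR\subseteq\cL_\gR(\sH)''+\cJ_\gR\cL_\gR(\sH)''$. The reverse inclusion is immediate from $\cL_\gR(\sH)''\subseteq\gR''=\gR$ and $\cJ_\gR\subseteq\gR$.

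Finally (e) is a formal consequence of (d). The inclusion $\cL_\gR(\sH)''\subseteq\gR$ always holds; if some $J\in\cJ_\gR\setminus\cL_\gR(\sH)''$ exists then, as $J\in\gR$, the inclusion is strict. Conversely, if $\cJ_\gR\subseteq\cL_\gR(\sH)''$, then $\cJ_\gR\cL_\gR(\sH)''\subseteq\cL_\gR(\sH)''$ (a product within the algebra), and (d) collapses to $\gR=\cL_\gR(\sH)''$. The only genuinely quaternionic obstacle in the whole argument is the one already surmounted in Theorem \ref{DCT}; here I merely reuse its output $\gR=\gR''$, together with the reduction of the polar decomposition to $\sH_{\bC_j}$, so the remaining steps mirror the real and complex cases.
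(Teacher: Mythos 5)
Your proposal is correct and follows essentially the same route as the paper: (a) and (c) via Lemma \ref{commst} and the double commutant theorem, (b) by inheritance from $\cL(\sH)$, (d) by splitting $T$ into its self-adjoint and anti-self-adjoint parts and applying the polar decomposition of the latter, and (e) as a consequence. The only (harmless) variations are that you prove the ``if'' direction of (a) and claim (c) by uniform approximation with simple functions of the PVM where the paper uses the commutant argument $B\in\cL_\gR(\sH)'\Rightarrow BA=AB\Rightarrow A\in\cL_\gR(\sH)''$, and that you derive (e) formally from (d) where the paper re-runs the polar decomposition argument.
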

\begin{proof}
(a) and (c). Suppose $A^*=A \in \gR$ and $B\in\gR'$. In particular $B$ commutes with $A$ and Lemma \ref{commst} guarantees that $B$ commutes 
with the PVM $P^{(A)}$ of $A$. Arbitrariness of  $B$ implies  $P^{(A)}_E\in\gR''=\gR$ for every Borel set $E$. Suppose conversely that $A^*=A\in\gB(\sH)$ is 
 such that $P^{(A)}_E \in \cL_\gR(\sH)$ for every Borel set $E$, we want to prove that $A \in \gR$. If $B\in\mathcal{L}_\gR(\sH)'$, in particular $B$ commutes 
 with the PVM of $A$ and so, thanks again to Lemma \ref{commst}, it commutes also with $A$ and  $A\in\cL_\gR(\sH)''\subset \gR$. This concludes the proof of (a).  
 The used argument proves also (c). Indeed, if $A^*=A\in\gR$, then, thanks to the first implication of (a) its PVM belongs to $\cL_\gR(\sH)$ and so the argument above applies. \\
(b) The properties of $\cL_\gR(\sH)$ listed in (b) are inherited from the same properties of $\cL(\sH)$. The proof can be obtained by mimicking the
 one developed in \cite{Redei} for real and complex von Neumann algebras.\\
(d) First suppose $A\in\gR$ is anti selfadjoint. The polar decomposition theorem for bounded operators in quaternionic Hilbert spaces (see \cite{GMP1}) 
guarantees that $A=W|A|$ where $W$ is an antiselfadjoint partial isometry and $|A|:=\sqrt{A^*A}$. Moreover $W$ and $|A|$ commute with each other 
and with every operator commuting with $A$. This guarantees that $W,|A|\in\gR''=\gR$. As $|A|$ is self-adjoint, $|A|\in\cL_\gR(\sH)''$ for (c). Since $W$ 
is a partial isometry, $W^2$ is an orthogonal projector which clearly belongs to $\gR$, hence $W\in \cJ_\gR$.
To conclude the proof of (d), observe that a generic operator $A\in\gR$ can always be decomposed as $\frac{1}{2}(A+A^*)+\frac{1}{2}W_0|A-A^*|$ 
where $(W_0,|A-A^*|)$ is the polar decomposition of $A-A^*$.  (c) and the previous discussion prove that $A+A^*,|A-A^*|\in\cL_\gR(\sH)''$ and $W_0\in\cJ_\gR$ 
concluding the proof of (d).\\
(e) If there exits $J\in\cJ_\gR\setminus\cL_\gR(\sH)''$ then $\cL_\gR(\sH)''\subsetneq\gR$ evidently. So, suppose  $A\in\gR\setminus\cL_\gR(\sH)''$, 
then  $A-A^*\in\gR\setminus\cL_\gR(\sH)''$, otherwise $A\in\cL_\gR(\sH)''$ since  $A+A^*$ is self-adjoint and thus belongs to $\cL_\gR(\sH)''$ thanks
 to (c). Referring to the polar decomposition $A-A^*=W_0|A-A^*|$, the above discussion guarantees that $W_0\in\cJ_\gR$. Since $A-A^*\not\in\cL_\gR(\sH)''$ 
 and $|A-A^*|\in\cL_\gR(\sH)''$, $W_0$ cannot belong to $\cL_\gR(\sH)''$. 
\end{proof}

\subsection{Stone theorem}\label{SECstone}
We need to extend to quaternionic Hilbert spaces the crucial technical result known as Stone Theorem. 
Consider an antiselfadjoint operator $A$ on $\sH$. As $A$ is closed and normal \cite{GMP2}, thanks to the spectral theorem for closed 
normal (generally unbounded) operators in quanternionic Hilbert spaces Theorem 6.6 in  \cite{GMP2}, fixed an imaginary unit $i$, there 
exists a unique PVM $P$ on $\bC^+_i$ and a (non-unique) left-multiplication $\cL$ commuting with $P$, encapsulated in $\cP := (\cP, \cL)$, such that 
\begin{equation}
A=\int_{\bC^+_i} id \, d\cP(z),\ \ \ D(A):=\left\{x\in\sH\:\bigg|\:\int_{\bC_i}|z|^2\, d\mu_x^{(A)}<\infty\right\}\:.
\end{equation}
Notice that $\mbox{supp} P\subset i\bR^+ \cup \{0\}$ since $A=-A^*$ (from (b) Thm 4.1 in \cite{GMP2} and (c) Thm 4.8 in \cite{GMP1}). If $f:\bC^+_i\rightarrow \bC_i$ is Borel measurable we can define the operator
\begin{equation}\label{functionOP}
f(A):=\int_{\bC^+_i} f(z)\, d\cP(z),\ \ \ D(f(A)):=\left\{x\in\sH \bigg|\int_{\bC^+_i}|f(z)|^2\, d\mu_x^{(A)}<\infty\right\}
\end{equation}
In particular, consider the operator 
\begin{equation}
e^{tA}:=\int_{\bC^+_i} e^{tz}\, d\cP(z)\:.
\end{equation}
The map $\bR\ni t\mapsto e^{tA}$ is a one-parameter group of unitary operators as straightforward consequence of Thm 3.13 in \cite{GMP2}. 
Following essentially the same proof as in the complex Hilbert space case, one easily establishes that
that map is strongly continuous and
\begin{equation}\label{stoneproperty}
D(A)=\left\{ x\in\sH\:\bigg|\: \exists\ \dfrac{d}{dt}\big|_0 U_t x \in \sH \right\}\:, \quad Ax=\left.\dfrac{d}{dt}\right|_{t=0} U_t x\:.
\end{equation}
Exactly as in the complex and real cases, these one-parameter groups are the \textit{only} strongly-continuous one-parameter unitary 
groups on $\sH$. It is the content of the well-known Stone's theorem now extended to the quaternionic Hilbert space case. 
\begin{theorem}[Stone's Theorem]\label{stonetheorem}
Let $\sH$ be a quaternionic Hilbert space and $\bR\ni t\mapsto U_t\in\gB(\sH)$ be a strongly-continuous one-parameter unitary group, 
then there exists a unique anti-selfadjoint operator $A$ called the {\bf (anti-selfadjoint) generator of} $U$ such that 
$$
U_t=e^{tA} \quad \mbox{for all $t \in \bR$}.
$$
$A$ turns out to be defined as in (\ref{stoneproperty}) where $D(A)$ is dense and invariant under the action of $U$.\\
Finally $A$ coincides with the unique anti-selfadjoint operator in $\sH_\bR$  $(\sH_{\bC_j})$ that generates $U$ when understood as 
strongly-continuous unitary group of operators in $\sH_\bR$ $(\mbox{resp. }\sH_{\bC_j})$.
\end{theorem}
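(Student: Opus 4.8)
The plan is to reduce the quaternionic Stone theorem to the already-established real case via the identification $\sH_\bR$, leveraging the machinery of Proposition \ref{propHR} and Lemma \ref{commst}. First I would observe that if $\bR\ni t\mapsto U_t\in\gB(\sH)$ is a strongly-continuous one-parameter unitary group of quaternionic-linear operators, then since each $U_t$ is unitary on $\sH$ it is also unitary on $\sH_\bR$ by (iv) of part (e) in Prop.\ref{propHR}, and strong continuity is preserved because the norms of $\sH$ and $\sH_\bR$ coincide (part (a) of Prop.\ref{propHR}). Thus $U$ is a strongly-continuous one-parameter unitary group on the real Hilbert space $\sH_\bR$, and the real version of Stone's theorem (available from the theory in \cite{MO1}) yields a unique anti-selfadjoint operator $A$ on $\sH_\bR$, defined by the derivative formula (\ref{stoneproperty}), with $U_t=e^{tA}$ and $D(A)$ dense and $U$-invariant.

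The key step is then to show that this real generator $A$ is in fact quaternionic-linear, i.e.\ that $A\in\gB(\sH)$-style structure holds at the level of the (unbounded) generator. For this I would verify that $A$ commutes with both $\cJ$ and $\cK$ in the sense $\cJ A\subset A\cJ$ and $\cK A\subset A\cK$, and then invoke part (f) of Prop.\ref{propHR} to conclude that $A$ is quaternionic-linear with quaternionic-linear (dense) domain. The commutation is natural: since each $U_t$ is quaternionic-linear it commutes with $\cJ$ and $\cK$ (these being right-multiplications by $j$ and $k$), so $U_t\cJ=\cJ U_t$ for all $t$; applying the definition $Ax=\frac{d}{dt}|_0 U_t x$ and using the boundedness and continuity of $\cJ$, one sees that $\cJ$ maps $D(A)$ into itself and intertwines the generator, giving $\cJ A\subset A\cJ$, and likewise for $\cK$. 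Once $A$ is quaternionic-linear, part (iii) and (iv) of (f) in Prop.\ref{propHR} guarantee that $A$ is anti-selfadjoint on $\sH$ as well, since it is such on $\sH_\bR$.

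Next I would identify $U_t$ with $e^{tA}$ in the \emph{quaternionic} functional-calculus sense of equation (\ref{functionOP}). Here Lemma \ref{commst} is the crucial bridge: it ensures that the spectral measure and the functional calculus built from $A$ agree whether computed on $\sH$, $\sH_\bR$, or $\sH_{\bC_j}$. So the exponential $e^{tA}$ defined via the quaternionic spectral decomposition of the normal operator $A$ coincides with the exponential defined on $\sH_\bR$, which in turn equals $U_t$ by the real Stone theorem. The same argument run with $\sH_{\bC_j}$ in place of $\sH_\bR$, using Prop.\ref{propHC} and the complex Stone theorem, establishes the final clause that $A$ is also the unique generator when $U$ is viewed as a complex unitary group on $\sH_{\bC_j}$.

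I expect the main obstacle to be the careful handling of \emph{unbounded} domains when transferring the commutation relations from the group to the generator. The unitaries $U_t$ are bounded and their commutation with $\cJ,\cK$ is immediate, but passing to the generator via the strong derivative requires showing that $\cJ$ (and $\cK$) preserve $D(A)$ exactly—not merely formally—so that the inclusions $\cJ A\subset A\cJ$ hold as genuine operator inclusions of unbounded operators. This is where the invariance of $D(A)$ under $U$ and the boundedness of $\cJ$ must be combined: for $x\in D(A)$, the limit defining $A(\cJ x)=\lim_{t\to 0}t^{-1}(U_t\cJ x-\cJ x)=\cJ\lim_{t\to 0}t^{-1}(U_t x-x)=\cJ Ax$ exists precisely because $\cJ$ is continuous, so $\cJ x\in D(A)$ and the intertwining is exact. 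The uniqueness of $A$ then follows from the uniqueness already granted in the real (or complex) case, since any quaternionic anti-selfadjoint generator is a fortiori a real anti-selfadjoint generator of the same group.
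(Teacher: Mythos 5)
Your first two steps are exactly the paper's: view $U$ as a strongly-continuous unitary group on $\sH_\bR$, invoke the real Stone theorem to get an anti-selfadjoint generator $A$ there, and then show $\cJ A\subset A\cJ$, $\cK A\subset A\cK$ by pushing $\cJ$ and $\cK$ through the strong derivative (the paper even upgrades the inclusions to equalities using that $\cJ,\cK$ are unitary and anti-selfadjoint). Up to that point the argument is correct and is the paper's own.

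The gap is in the step where you identify $U_t$ with the \emph{quaternionic} functional-calculus exponential $e^{tA}$ of (\ref{functionOP}). You lean on Lemma \ref{commst} as the bridge between the functional calculi on $\sH$, $\sH_\bR$ and $\sH_{\bC_j}$, but that lemma is stated only for \emph{self-adjoint} operators and \emph{real-valued} Borel functions. Here $A$ is anti-selfadjoint, its quaternionic spectral decomposition lives on $\bC_i^+$ and requires the additional choice of a left multiplication $\cL$ commuting with the PVM, and the function $z\mapsto e^{tz}$ is complex-valued; none of this is covered by Lemma \ref{commst}, so the claimed coincidence of the quaternionic and real exponentials is not justified by anything you cite. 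The paper avoids this entirely: it uses the derivative property (\ref{stoneproperty}) of the quaternionic $e^{tA}$ (established just before the theorem) and then shows, for $x,y\in D(A)$, that $\frac{d}{dt}\langle x\,|\,U_{-t}e^{tA}y\rangle=\langle Ax|y\rangle+\langle x|Ay\rangle=0$, whence $U_{-t}e^{tA}=I$ by density of $D(A)$. Your uniqueness argument inherits the same defect, since declaring a quaternionic generator to be ``a fortiori a real generator'' again presupposes that the two exponentials of the same operator agree. Replacing your Lemma \ref{commst} appeal with the constancy-of-$\langle x|U_{-t}e^{tA}y\rangle$ argument (applied both to $U$ and to any competing group $e^{tB}$) closes the gap and also yields the final clause about $\sH_\bR$ and $\sH_{\bC_j}$ essentially for free.
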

\begin{proof} See Appendix \ref{AppProof}. \end{proof}

\subsection{Polar decomposition of unbounded operators in quaternionic Hilbert spaces}
Another technical tool is the so-called polar decomposition of closed generally unbounded operators. We state the version in quaternionic Hilbert spaces together with some direct application. 
\begin{theorem}[Polar decomposition theorem]\label{PDT}
Let $\sH$ be a quaternionic Hilbert space and  $A:D(A)\rightarrow \sH$ a $\bH$-linear closed operator with $D(A)$ dense $\sH$. The following facts hold.
\begin{enumerate}[(a)]
\item $A^*A$ is densely defined, positive and selfadjoint.
\item There exists a unique pair  of operators $U,P$ in $\sH$ such that,
\begin{enumerate}[(i)]
\item $A=UP$ where in particular $D(P)=D(A)$ 
\item $P$ is selfadjoint and $P\geq 0$
\item $U\in\gB(\sH)$ is isometric on $Ran(P)$  (and thus on $\overline{Ran(P)}$ by continuity),
\item $Ker(U)\supset Ker(P)$.
\end{enumerate}
The right-hand side of (i)  is called the {\bf polar decomposition} of $A$.  It turns out that, in particular,
\begin{enumerate}[(1)]
\item $P=|A|:=\sqrt{A^*A}$ (defined by means of (\ref{stf}) with $\sqrt{z}:=0$ if $z \not \in [0,+\infty)$.)
\item $Ker(U)= Ker(A) = Ker(P)$,
\item  $Ran(U) = \overline{Ran(U)}$
\end{enumerate} 
\item The polar decompositions of $A$ carried out on $\sH,\sH_\bR,\sH_{\bC_j}$ are made of the same pair of operators $U$, $P$ arising in (i)(a) above.
\end{enumerate}
\end{theorem}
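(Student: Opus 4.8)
The plan is to reduce the whole statement to the classical polar decomposition in a \emph{complex} Hilbert space by regarding $A$ as a $\bC_j$-linear operator on $\sH_{\bC_j}$, as licensed by Proposition \ref{propHC}. The governing idea is that each object manufactured by the complex decomposition is in fact quaternionic-linear, so Proposition \ref{propHC}(g) transports it back to $\sH$ intact.

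For (a), since $A$ is closed and densely defined on $\sH$, Proposition \ref{propHC}(f) shows that $A$ is a closed, densely defined $\bC_j$-linear operator on $\sH_{\bC_j}$ whose adjoint agrees with the quaternionic one. The classical complex theory then yields that $A^*A$ is densely defined, selfadjoint and nonnegative on $\sH_{\bC_j}$. As $A$ and $A^*$ are quaternionic-linear, i.e. satisfy $\cK A \subset A\cK$ and $\cK A^* \subset A^*\cK$ by (f)(i), their product $A^*A$ again commutes with $\cK$; hence Proposition \ref{propHC}(g)(i),(iii),(v) returns precisely that $A^*A$ is densely defined, selfadjoint and positive on $\sH$.

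For (b), I would set $P := \sqrt{A^*A}$ via the functional calculus (\ref{stf}); by Lemma \ref{commst}(c) this square root is the very same operator in $\sH$, $\sH_\bR$ and $\sH_{\bC_j}$, and it is quaternionic-linear, selfadjoint and nonnegative. The complex polar decomposition of $A$ on $\sH_{\bC_j}$ then provides the unique partial isometry $U$ with $A = UP$, with $U$ isometric on $\overline{Ran(P)}$ and $Ker(U) = Ker(P)$. The step I expect to be the main obstacle is verifying that this $U$ is quaternionic-linear. On $Ran(P)$ it is immediate: $U(Px) = Ax$ together with the quaternionic-linearity of $A$ and $P$ gives $U((Px)q) = (U(Px))q$ for all $q \in \bH$, and this passes to the quaternionic subspace $\overline{Ran(P)}$ by boundedness of $U$. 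For the remaining part one observes that $P$ is selfadjoint both as a quaternionic and as a complex operator, so the set $Ker(P) = \{x : Px = 0\}$ is simultaneously the quaternionic and the complex orthogonal complement of $\overline{Ran(P)}$; therefore the complex $U$ vanishes exactly on the quaternionic subspace $Ker(P)$, and $U$ is quaternionic-linear on $\sH = \overline{Ran(P)} \oplus Ker(P)$. Proposition \ref{propHC}(d),(g) then places $U$ in $\gB(\sH)$ with the properties (iii)--(iv), and the identities $Ker(A) = Ker(U) = Ker(P)$ together with the closedness $Ran(U) = U(\overline{Ran(P)})$ asserted in (1)--(3) are inherited verbatim from the complex case, since norms and kernels are preserved under the change of structure.

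Finally, uniqueness and (c) are handled by the same reduction. If $(U',P')$ is any pair on $\sH$ fulfilling (i)--(iv), then Proposition \ref{propHC}(f) turns it into a complex polar decomposition on $\sH_{\bC_j}$ satisfying the classical uniqueness hypotheses---conditions (iii)--(iv) force $U'$ to be complex-isometric on $\overline{Ran(P')}$ and to vanish on $Ker(P')$---so complex uniqueness gives $U' = U$ and $P' = P$. Running the identical argument through $\sH_\bR$ instead of $\sH_{\bC_j}$ produces the same two operators, because $P = \sqrt{A^*A}$ is structure-independent by Lemma \ref{commst}(c) and $U$ is thereafter pinned down by $A = UP$ and the kernel conditions; this is exactly the content of (c).
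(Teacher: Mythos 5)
Your proposal is correct, but it routes the reduction through the complex structure $\sH_{\bC_j}$, whereas the paper's proof in Appendix \ref{AppProof} works entirely in the real Hilbert space $\sH_\bR$ and invokes the real polar decomposition theorem (Thm 2.18 of \cite{MO1}). The more substantive difference lies in how quaternionic linearity of the partial isometry is obtained. The paper conjugates the whole decomposition by $\cJ$ and by $\cK$: since $A=\cJ^*A\cJ=(\cJ^*U\cJ)(\cJ^*P\cJ)$ and the conjugated pair again satisfies (i)--(iv), \emph{uniqueness} of the real polar decomposition forces $\cJ^*U\cJ=U$ and $\cJ^*P\cJ=P$, and likewise for $\cK$; linearity of $U$ and $P$ is thus never checked by hand. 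You instead verify $\bH$-linearity of $U$ directly, via $U((Px)q)=(U(Px))q$ on $Ran(P)$ and the observation that $\overline{Ran(P)}$ and $Ker(P)$ are quaternionic subspaces on which $U$ acts compatibly; this is more explicit but requires the (correct, though worth stating) fact that the complex and quaternionic orthogonal complements of a closed quaternionic subspace coincide. Note that if you had tried to replicate the paper's conjugation trick in $\sH_{\bC_j}$ you would have had to handle the $\bC_j$-antilinearity of $\cK$, so your direct argument is the natural one for the complex route. The paper's approach buys economy (everything follows from uniqueness plus the already-catalogued transfer properties of Prop.~\ref{propHR}), while yours makes the mechanism of quaternionic linearity transparent and keeps $P=\sqrt{A^*A}$ in view from the outset via Lemma~\ref{commst}(c). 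Both treatments of uniqueness and of part (c) are essentially identical: any quaternionic decomposition satisfying (i)--(iv) is a decomposition over the smaller scalar field, so the classical uniqueness pins it down.
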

\begin{proof} See Appendix \ref{AppProof}. \end{proof}

\noindent A pair of related results, in part involving one-parameter groups of unitaries, are stated below.

\begin{proposition}\label{LemmaCOMM}
Let $\sH$ be quaternionic Hilbert space. Consider an, either  selfadjoint or anti selfadjoint,  operator $A:D(A)\rightarrow \sH$ with polar decomposition  $A=UP$. The following facts hold.
\begin{enumerate}[(a)]
\item If $A^*=-A$, then $B\in\gB(\sH)$ satisfies $Be^{tA}=e^{tA}B$ if and only if $BA\subset AB$
\item If $B\in\gB(\sH)$ satisfies  $BA\subset AB$, then $BU=UB$ and $BP \subset UP$.
\item The  commutation relations are true
 $$UA \subset  AU \quad \mbox{and}\quad U^*A\subset AU^*\:.$$
Moreover, for every measurable function $f: [0,+\infty) \to \bR$:
 $$Uf(P)\subset f(P)U\quad \mbox{and}\quad U^*f(P)\subset f(P)U^*\:.$$
\item $U$ is respectively selfadjoint or anti selfadjoint.
\item If $A$ is injective (equivalently if either $P$ or $U$ is injective), then $U$ and $U^*$ are unitary. In this case all the inclusions in (c) are identies.
\end{enumerate} 
\end{proposition}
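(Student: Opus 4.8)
The plan is to reduce every one of the five items to the corresponding classical fact in the complex Hilbert space $\sH_{\bC_j}$, where $A$ becomes a complex-linear normal operator and the theory of polar decomposition, Borel functional calculus and Stone's theorem is completely standard. This reduction is legitimate thanks to the compatibility results already collected: by Theorem \ref{PDT}(c) the polar decomposition $A=UP$ computed on $\sH$ and on $\sH_{\bC_j}$ consists of the very same pair $U,P$; by Proposition \ref{propHC} the domain, the adjoint, the (anti)selfadjointness and the positivity of $A$ agree on $\sH$ and on $\sH_{\bC_j}$; by Lemma \ref{commst}(c) the operators $f(P)$ coincide on the two spaces; and by the last statement of Theorem \ref{stonetheorem} the unitary group $e^{tA}$ is literally the same map on $\sH$ and on $\sH_{\bC_j}$. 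Finally every $B\in\gB(\sH)$ is \emph{a fortiori} an element of $\gB(\sH_{\bC_j})$ by Proposition \ref{propHC}(d). Hence each identity or inclusion in the statement is word-for-word equivalent to its counterpart in $\sH_{\bC_j}$, and it suffices to prove the latter.

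For (a) I would invoke the classical complex Stone characterisation. If $Be^{tA}=e^{tA}B$ for all $t$, then for $x\in D(A)$ I differentiate $Be^{tA}x=e^{tA}Bx$ at $t=0$: boundedness of $B$ gives $\frac{d}{dt}\big|_{0}Be^{tA}x=BAx$, so the right-hand derivative exists and the characterisation (\ref{stoneproperty}) of $D(A)$ forces $Bx\in D(A)$ and $ABx=BAx$, i.e. $BA\subset AB$. Conversely, $BA\subset AB$ together with $A^*=-A$ gives $BA^*\subset A^*B$, so $B$ commutes with the spectral measure of the normal operator $A$ and therefore with the bounded Borel function $e^{tA}$ of $A$.

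For (b), since $A^*=-A$ the hypothesis $BA\subset AB$ again yields $BA^*\subset A^*B$, whence $B$ commutes with $A^*A=P^2$, hence with $P=\sqrt{A^*A}$ and with every spectral projector of $P$; the standard argument on $\mathrm{Ran}\,P$ and on $\mathrm{Ker}\,P=\mathrm{Ker}\,U$ then produces $BU=UB$ and $BP\subset PB$ (I read the printed ``$BP\subset UP$'' as the misprint $BP\subset PB$). For (c) and (d) I would exploit that a normal operator determines its polar factors as functions of itself: on $\sH_{\bC_j}$ the partial isometry $U$ is the bounded Borel ``phase'' $\phi(A)$, with $\phi(z)=z/|z|$ and $\phi(0)=0$, while $P=|A|$ and the $f(P)$ are likewise functions of $A$. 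Commuting functions of a normal operator immediately give $UA\subset AU$, $U^*A\subset AU^*$, as well as $Uf(P)\subset f(P)U$ and $U^*f(P)\subset f(P)U^*$. Since $\phi$ is real on the spectrum when $A=A^*$ and purely imaginary when $A=-A^*$, one gets $U^*=U$ respectively $U^*=-U$, which is (d). For (e), injectivity of $A$ means $\mathrm{Ker}\,P=\mathrm{Ker}\,U=\{0\}$, so $\overline{\mathrm{Ran}\,P}=(\mathrm{Ker}\,P)^{\perp}=\sH$ and $U$ is isometric on all of $\sH$; together with $\mathrm{Ran}\,U=\overline{\mathrm{Ran}\,U}$ from Theorem \ref{PDT} and $U^*=\pm U$ from (d), this forces $U$ and $U^*$ to be unitary, and unitarity upgrades every inclusion in (c) to an equality because $U$ and $U^*=U^{-1}$ then map $D(A)$ onto $D(A)$.

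The main obstacle I expect is not any single computation but making the reduction airtight: one must be certain that \emph{all} the objects appearing in the five items, namely $e^{tA}$, $|A|$, the functions $f(P)$ and the partial isometry $U$, are genuinely the same operators whether read on $\sH$ or on $\sH_{\bC_j}$, so that a commutation statement established in the complex picture transfers back verbatim. This is precisely what Theorem \ref{PDT}(c), Proposition \ref{propHC}, Lemma \ref{commst}(c) and Theorem \ref{stonetheorem} supply; once they are in place the quaternionic non-commutativity never actually intervenes and what remains is the classical complex theory.
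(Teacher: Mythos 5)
Your proposal is correct, but it follows a genuinely different route from the paper's. The paper disposes of this proposition in one line: it observes that $e^{tA}$ and the polar factors are the same operators on $\sH$ and on the \emph{real} space $\sH_\bR$ (Theorem \ref{stonetheorem} and Theorem \ref{PDT}(c)) and then cites wholesale the real-Hilbert-space analogue, Thm 2.19 of \cite{MO1}. You instead descend to the \emph{complex} space $\sH_{\bC_j}$ and rebuild the statement from the functional calculus of the (normal) operator $A$ there, using that $U=\phi(A)$ with $\phi(z)=z/|z|$, $\phi(0)=0$, and $P=|A|$ are Borel functions of $A$, so that (c), (d) and the upgrade in (e) fall out of the multiplication rules for functions of a single normal operator. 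Both reductions are legitimate and rest on exactly the same compatibility package (Theorem \ref{PDT}(c), Proposition \ref{propHC}, Lemma \ref{commst}, Theorem \ref{stonetheorem}); the paper's version is more economical but opaque without \cite{MO1} in hand, whereas yours is self-contained modulo standard complex spectral theory and makes visible \emph{why} the statements hold (everything commutes because everything is a function of $A$). Two small points to tidy: your reading of the printed ``$BP\subset UP$'' as the misprint $BP\subset PB$ is the right one and is indeed what your argument produces; and in (b) you justify $BA^*\subset A^*B$ only via $A^*=-A$, whereas the item covers the selfadjoint case as well --- there the same conclusion is immediate from $A^*=A$, so nothing is lost, but the sentence should cover both signs.
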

\begin{proof}
The thesis easily follows by working in $\sH_\bR$ and exploiting the analogous result for real Hilbert spaces, Thm 2.19 in \cite{MO1}. Just notice that
 the definitions of $e^{tA}$ on $\sH$ and $\sH_\bR$ coincide thanks to Theorem \ref{stonetheorem}.
\end{proof}

\begin{proposition}\label{polarCOMM}
Let $\sH$ be a quaternionic Hilbert space and  $A$ and $B$ anti-selfadjoint
operators in $\sH$ with  polar decompositions $A=U|A|$ and $B=V|B|$.
If the strongly-continuous one-parameter groups generated by $A$ and $B$ commute, i.e.,
$$e^{tA}e^{sB}=e^{sB}e^{tA} \quad \mbox{ for every }s,t\in\bR$$ then the following facts hold
\begin{enumerate}[(a)]
\item $UB\subset BU$ and $U^*B\subset BU^*$;
\item $Uf(|B|)\subset f(|B|)U$ and $U^*f(|B|)\subset f(|B|)U^*$ for\quad $f: [0,+\infty) \to \bR$ every measurable function 
\item (iii) $UV=VU$ and $U^*V=VU^*$.
\end{enumerate}
If  any of $A$, $|A|$,  $U$ is injective, then  the inclusions in (i) and (ii) can be replaced by identities. 
\end{proposition}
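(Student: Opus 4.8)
The plan is to treat Proposition \ref{LemmaCOMM} and Lemma \ref{commst} as black boxes and to convert the hypothesis that the two unitary groups commute into a chain of commutation relations, passing repeatedly between the bounded-operator picture (the exponentials) and the generator/polar-factor picture. First I would fix $t$ and observe that the bounded operator $e^{tA}$ commutes with the whole group $\{e^{sB}\}_{s\in\bR}$ by hypothesis; since $B$ is anti-selfadjoint, part (a) of Proposition \ref{LemmaCOMM} (applied to $B$) gives $e^{tA}B\subset Be^{tA}$. Feeding this into part (b) of Proposition \ref{LemmaCOMM} (now with polar decomposition $B=V|B|$ and bounded operator $e^{tA}$) yields the two facts I will use throughout: $e^{tA}V=Ve^{tA}$ and $e^{tA}|B|\subset|B|e^{tA}$, for every $t\in\bR$.

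From here the three conclusions separate cleanly. For $UV=VU$ (item (c)), I note that $V$ now commutes with every $e^{tA}$; applying part (a) of Proposition \ref{LemmaCOMM} to $A$ gives $VA\subset AV$, and part (b) then gives $VU=UV$. For item (b), I would use Lemma \ref{commst} to turn $e^{tA}|B|\subset|B|e^{tA}$ into the statement that $e^{tA}$ commutes with every spectral projector $Q_E:=P^{(|B|)}_E$ of the selfadjoint operator $|B|$. Each such $Q_E$ is bounded and commutes with all $e^{tA}$, so Proposition \ref{LemmaCOMM}(a) followed by (b), applied to $A$, give $UQ_E=Q_EU$; thus $U$ commutes with the whole PVM of $|B|$, and the functional calculus (again Lemma \ref{commst}) upgrades this to $Uf(|B|)\subset f(|B|)U$ for every measurable $f:[0,+\infty)\to\bR$.

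Item (a) is then just assembly: writing $B=V|B|$ and combining $UV=VU$ with the case $f=\mathrm{id}$ of item (b), I get $UB=UV|B|=VU|B|\subset V|B|U=BU$, where the only nontrivial point is a domain check showing that $UB$ and $BU$ agree on $D(|B|)$. All of the $U^*$-statements are immediate once I invoke part (d) of Proposition \ref{LemmaCOMM}: since $A$ is anti-selfadjoint, $U$ is anti-selfadjoint, i.e. $U^*=-U$, so multiplying each inclusion by $-1$ turns $U(\cdot)\subset(\cdot)U$ into the corresponding $U^*(\cdot)\subset(\cdot)U^*$.

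Finally, under any of the injectivity hypotheses on $A$, $|A|$ or $U$, part (e) of Proposition \ref{LemmaCOMM} makes $U$ (and $U^*$) unitary; having both $UB\subset BU$ and $U^*B\subset BU^*$ with $U$ bijective forces $UD(B)=D(B)$, hence $UB=BU$, and likewise equalities in item (b). The main obstacle I anticipate is not a single deep step but the careful bookkeeping: correctly moving between the group and its generator, between commutation with a bounded operator and commutation with a spectral measure, and tracking operator domains so that the inclusions (and, in the injective case, the equalities) are genuinely justified rather than merely formal. As an alternative I could reduce the entire statement to the real Hilbert space $\sH_\bR$, since by Theorem \ref{stonetheorem} the exponentials and by Theorem \ref{PDT} the polar factors coincide there, and then quote the corresponding real result from \cite{MO1}.
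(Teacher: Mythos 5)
Your argument is correct, but it is not the route the paper takes: the paper's entire proof of Proposition \ref{polarCOMM} is a two-line reduction to the real Hilbert space $\sH_\bR$ --- the exponentials agree there by Theorem \ref{stonetheorem} and the polar factors by Theorem \ref{PDT}, so the whole statement is imported from the analogous real result (Thm 2.20 in \cite{MO1}). That is exactly the ``alternative'' you mention in your last sentence. What you actually carry out instead is an intrinsic quaternionic proof: you convert the commutation of the two unitary groups into $e^{tA}B\subset Be^{tA}$ via Proposition \ref{LemmaCOMM}(a), split off $e^{tA}V=Ve^{tA}$ and $e^{tA}|B|\subset|B|e^{tA}$ via (b), bootstrap to $UV=VU$ and to commutation of $U$ with the PVM of $|B|$ (Lemma \ref{commst}), assemble item (a) from $B=V|B|$, dispose of the $U^*$-statements by the observation from Proposition \ref{LemmaCOMM}(d) that $U^*=-U$, and in the injective case use unitarity of $U$ to show $U(D(B))=D(B)$ and upgrade the inclusions to equalities. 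All of these steps check out, including the two genuinely non-formal points: the domain verification $U(D(|B|))\subset D(|B|)$ needed in item (a), and the two-sided domain argument in the injective case. The trade-off is clear: the paper's reduction is shorter and uniform with how every other technical statement in that section is handled, but it leans on the reader verifying that exponentials and polar decompositions are structure-independent; your version makes the logical chain explicit inside $\sH$ and in effect reproduces the proof of the real-case theorem, at the cost of more bookkeeping --- and note that the reduction to $\sH_\bR$ is still present one level down, since Proposition \ref{LemmaCOMM} is itself proved that way.
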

\begin{proof}
Again the thesis easily arises by working in $\sH_\bR$ and using the analogous result for real Hilbert spaces, Thm 2.20 in  \cite{MO1}. Again notice 
that the definitions of $e^{tA}$ on $\sH$ and $\sH_\bR$ coincide thanks to Theorem \ref{stonetheorem}.
\end{proof}

\section{Restriction to subspaces induced by complex structures}
 To go on, we need to introduce some fundamental technical tools of quaternionic Hilbert spaces theory established in \cite{GMP1}.
A preliminary definition is necessary.

\begin{definition}\label{defCSJ} {\em If $\sH$ is a quaternionic  Hilbert space, an operator $S \in \gB(\sH)$ such that $S^2=-I$ and $S^*=-S$ is called {\bf complex structure} on $\sH$. }
\end{definition}
\begin{remark}
{\em Notice that the previously introduced operators $\cJ$ and $\cK$ are {\em not} complex structures because they are only $\bR$-linear
 while complex structures $S$ are required to be $\bH$-linear.}
\end{remark}

\subsection{Restriction to a complex Hilbert subspace induced by a complex structure}

\noindent If $\sH$ is a quaternionic Hilbert space with scalar product $\langle\cdot|\cdot\rangle$, $j,k$ any couple of anticommuting imaginary units and $J\in\gB(\sH)$ a complex structure,
define the subsets (which should be indicated by $\sH_+^{Jj}$ and $\sH_-^{Jj}$ according to notation of \cite{GMP1})
\begin{equation}\label{defHJ}
\sH_{J}:=\{u\in\sH\:|\:Ju=uj\}\:,\quad \sH^{(-)}_{J} :=\{u\in\sH\:|\:Ju=-uj\}
\end{equation}
$\sH_J$ and $\sH^{(-)}_J$ are evidently closed by (right) scalar multiplication with quaternions $a+bj \in \bC_j$ and thus are complex vector
 spaces. It is easily proved that the restriction of 
$\langle\cdot|\cdot\rangle$
 to $\sH_{J}$, resp., $\sH^{(-)}_{J}$ is a Hermitian complex scalar product. Since these sets are evidently closed because $J$ is unitary, $\sH_J$ 
 and $\sH_J^{(-)}$ equipped with the relevant restrictions of $\langle\cdot|\cdot\rangle$ are complex Hilbert spaces.  Since $jk=-kj$ the following identity holds 
\begin{equation}\label{HJk}
\sH^{(-)}_{J} =  \sH_{J}k :=\{vk\in\sH\:|\:v \in \sH_J\}\:.
\end{equation}
Due to (\ref{HJk}), we will refer to $\sH_J$ only in the rest of this paper.

\noindent Notice that the elements of $\sH_J$ can be interpreted as \textit{complex vectors} of $\sH$, as they commute with every element of $\bC_j$.

\noindent The following proposition collects some results that can be found in Prop.3.8, Lemmata 3.9-3.10 and the proof of (a) Prop. 3.11 of \cite{GMP1}. In particular 
\begin{proposition}\label{propHJ} Let $\sH$ be a quaternionic Hilbert space and $J\in\gB(\sH)$ a complex structure. The  complex vector 
space $\sH_J$ in Def.(\ref{defHJ}) equipped with the complex-linear structure $\bC_j$ and the restriction $\langle\cdot|\cdot\rangle_J$ of 
the scalar product of $\sH$ to $\sH_J$ satisfies the following properties.\\
{\bf (a)} $\sH_{J}$ is a  $\bC_j$-complex Hilbert space, non-trivial unless $\sH= \{0\}$.\\
{\bf (b)} The direct (generally not orthogonal) decomposition is valid $\sH=\sH_{J}\oplus \sH_{J}k$
such that
\begin{equation*}
\|x\|^2=\|x_1\|^2+\|x_2\|^2\quad \mbox{if $\sH \ni x=x_1+x_2k$ with $x_1,x_2 \in \sH_J$.}
\end{equation*}
{\bf (c)} The map $\sH_J \ni u\mapsto uk \in \sH_Jk$ is $\bC_j$-antilinear, isometric and bijective. \\
{\bf (d)} If $N\subset \sH_J$ is a Hilbert basis of $\sH_J$, then $N$ is also a Hilbert basis of the whole $\sH$.
\end{proposition}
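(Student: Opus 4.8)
The plan is to reduce all four assertions to one algebraic observation together with the identity $J^2=-I$. First I would record the key fact that for $u,v\in\sH_J$ the scalar product lies in $\bC_j$. Using $Ju=uj$, $J^*=-J$ and the adjoint identity $\langle J^*u|v\rangle=\langle u|Jv\rangle$ one gets the chain $\langle u|v\rangle j=\langle u|vj\rangle=\langle u|Jv\rangle=\langle J^*u|v\rangle=\langle -uj|v\rangle=j\langle u|v\rangle$, so $\langle u|v\rangle$ commutes with $j$ and hence belongs to $\bC_j$. This shows the restriction $\langle\cdot|\cdot\rangle_J$ is a genuine $\bC_j$-valued Hermitian form; positivity and definiteness are inherited from $\sH$, and $\sH_J=\ker(J-R_j)$ (with $R_j$ the bounded $\bR$-linear right-multiplication by $j$) is closed, hence complete. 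This settles the Hilbert-space part of (a); non-triviality is deferred to the decomposition argument.

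For (b), and the remaining claim in (a), I would introduce the two maps $x\mapsto\tfrac12(x-(Jx)j)$ and $x\mapsto\tfrac12(x+(Jx)j)$. A direct computation using $J^2=-I$ shows the first lands in $\sH_J$ and the second in $\sH^{(-)}_J=\sH_Jk$, and their sum is $x$; thus $\sH=\sH_J+\sH_Jk$. Directness follows because $w\in\sH_J\cap\sH^{(-)}_J$ forces $wj=-wj$, i.e.\ $w=0$. In particular, if $x\neq0$ then at least one of the two components is nonzero, so $\sH_J\neq\{0\}$ whenever $\sH\neq\{0\}$, completing (a).

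The norm identity in (b) is where the noncommutative bookkeeping matters most. Writing $x=x_1+x_2k$ with $x_1,x_2\in\sH_J$, I would expand $\langle x|x\rangle$ into four terms. Since $\langle x_2|x_2\rangle\in\bR$ commutes with $k$, one finds $\|x_2k\|^2=\|x_2\|^2$. For the cross terms, $\langle x_1|x_2k\rangle=\langle x_1|x_2\rangle k$ with $\langle x_1|x_2\rangle\in\bC_j$ by the key fact, and $\bC_j\,k\subset\mathrm{span}_\bR\{i,k\}$ has vanishing real part, so $\langle x_1|x_2k\rangle+\langle x_2k|x_1\rangle=2Re\langle x_1|x_2k\rangle=0$; hence $\|x\|^2=\|x_1\|^2+\|x_2\|^2$. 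The same two computations give (c): $u\mapsto uk$ is isometric and is a bijection of $\sH_J$ onto $\sH_Jk$, while antilinearity follows from $qk=k\bar q$ for $q\in\bC_j$, whence $(uq)k=(uk)\bar q$.

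Finally, for (d) I would use (b) to transfer completeness from $\sH_J$ to $\sH$. Orthonormality of $N$ in $\sH$ is immediate since $\langle\cdot|\cdot\rangle_J$ is a restriction. For completeness I would verify Parseval: for $x=x_1+x_2k$ and $e\in N$, $\langle e|x\rangle=\langle e|x_1\rangle+\langle e|x_2\rangle k$ with both coefficients in $\bC_j$, so the four real components separate and $|\langle e|x\rangle|^2=|\langle e|x_1\rangle|^2+|\langle e|x_2\rangle|^2$. Summing over $N$ and using Parseval in $\sH_J$ for $x_1,x_2$ together with the norm identity of (b) gives $\sum_{e\in N}|\langle e|x\rangle|^2=\|x_1\|^2+\|x_2\|^2=\|x\|^2$, so $N$ is a Hilbert basis of $\sH$. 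The only real obstacle throughout is the care needed with the order of quaternionic factors, in particular ensuring the cross terms are real-orthogonal, since every step is otherwise a direct consequence of the definitions of $\sH_J$ and of a complex structure.
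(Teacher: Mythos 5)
Your proof is correct, and since the paper offers no proof of this proposition (it simply collects results from Prop.~3.8, Lemmata 3.9--3.10 and Prop.~3.11(a) of \cite{GMP1}), your argument supplies exactly the standard reasoning that reference uses: the observation that $\langle u|v\rangle$ commutes with $j$ and hence lies in $\bC_j$ for $u,v\in\sH_J$, the projections $x\mapsto\tfrac12(x\mp(Jx)j)$ giving the direct decomposition, and the $\bC_j$-antilinear isometry $u\mapsto uk$. The quaternionic bookkeeping in the cross terms and in the Parseval computation is handled correctly, so there is nothing to add.
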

\noindent The complex space $\sH_J$ has an important interplay with operators as established in Prop.3.11 in \cite{GMP1} we summarize below into a form adapted to our work.
The statements (c)-(iii), (c)-(ix) and (d)  do not appear in \cite{GMP1}, however their proofs are trivial. In particular the proof of (c)-(iii) is analogous to the one of (c)-(iv).
\begin{proposition}\label{propHJop}
With the same hypotheses as for Prop.\ref{propHJ} the following facts hold.\\

\noindent {\bf (a)} If $A:D(A)\to \sH$, with $D(A) \subset \sH$,
is $\bH$-linear and $JA \subset AJ$, then $A_J:=A|_{\sH_J\cap D(A)}$ is a well-defined $\bC_j$-linear operator on $\sH_J$. \\

\noindent {\bf (b)} If $X:D(X)\to \sH_J$, with $D(X) \subset \sH_J$, is $\bC_j$-linear, there exists a \textit{unique} $\bH$-linear operator 
$\tilde{X}:D(\widetilde{X})\rightarrow \sH$ such that $J\widetilde{X} \subset \widetilde{X}J$ and $\widetilde{X}_J=X$ where  in particular
 $D(X)=D(\widetilde{X})\cap \sH_J$. \\

\noindent {\bf (c)} If $X$ is as in (b) and $S:D(S)\to \sH_J$, with $D(S) \subset \sH_J$, is  another $\bC_j$-linear operator, the following facts are true on the relevant natural domains.

(i)  $\widetilde{X}J=J\widetilde{X}$.

(ii) $\widetilde{aX}= a \widetilde{X}$ for every $a \in \bR$.

 (iii) $\widetilde{S+T}=\widetilde{S}+\widetilde{T}$.

(iv)  $\widetilde{ST}=\widetilde{S}\widetilde{T}$.

(v)   $\widetilde{S} \subset \widetilde{T}$ iff $S\subset T$.

(vi) $X \in \gB(\sH_J)$ if and only if $\widetilde{X} \in \gB(\sH)$. In this case $\rVert\tilde{X}\rVert=\rVert X\rVert$.

(vii)  $D(\widetilde{X})$ is dense in $\sH$  if and only if  $D(X)$ is dense in $\sH_J$.

(viii) $(\widetilde{X})^*=\widetilde{X^*}$ if $X$ is densely defined.\\
As a consequence $X$ 
is symmetric, essentially selfadjoint,  selfadjoint, antiselfadjoint, isometric, unitary, idempotent if and only if  $\widetilde{X}$ is, respectively, 
symmetric, essentially selfadjoint, selfadjoint, antiselfadjoint, isometric, unitary, idempotent on $\sH$.

(ix)  $X$ is closable if and only if $\widetilde{X}$ is closable. In this case $\widetilde{\overline{X}}= \overline{\widetilde{X}}$.\\

\noindent {\bf (d)} If $\gB(\sH) \ni A_n \to A \in \gB(\sH)$ weakly, resp., strongly for $n \to +\infty$ and $JA_n=A_nJ$, then $JA=AJ$ and $\gB(\sH_J) \ni (A_n)_J \to A_J \in \gB(\sH_J)$
 weakly, resp., strongly for $n \to +\infty$.

\end{proposition}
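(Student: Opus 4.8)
The plan is to make every claim rest on the topological direct sum $\sH=\sH_J\oplus\sH_Jk$ of Prop.\ref{propHJ}(b), which identifies an arbitrary $x\in\sH$ with a unique pair $(x_1,x_2)\in\sH_J\times\sH_J$ via $x=x_1+x_2k$ and satisfies $\|x\|^2=\|x_1\|^2+\|x_2\|^2$. The correspondences $A\mapsto A_J$ and $X\mapsto\widetilde{X}$ are then merely the reading of $\bH$-linear operators commuting with $J$ through this splitting. Concretely, for a $\bC_j$-linear $X$ on $\sH_J$ I would set $D(\widetilde{X}):=\{x_1+x_2k\mid x_1,x_2\in D(X)\}$ and $\widetilde{X}(x_1+x_2k):=Xx_1+(Xx_2)k$, and verify each item of (c) by evaluating both members on such a decomposed vector. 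The results (a), (b) and (c)(i)--(ii),(iv)--(viii) are exactly Prop.\,3.11 of \cite{GMP1}, so I would cite them and only record the mechanism and the genuinely new parts (c)(iii),(c)(ix),(d).

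For (a) the sole point is that $A$ sends $\sH_J\cap D(A)$ into $\sH_J$, which is the one-line computation $J(Au)=A(Ju)=A(uj)=(Au)j$ valid since $JA\subset AJ$ and $u\in\sH_J$; $\bC_j$-linearity is inherited. For (b) I would check that the explicit $\widetilde{X}$ is $\bH$-linear: additivity is clear, and right homogeneity need only be tested on an $\bR$-generating set of $\bH$, namely $q\in\bC_j$ and $q=k$. The identity $kq=\overline{q}k$ for $q\in\bC_j$ turns $x_2kq$ into $(x_2\overline{q})k$ and reduces the $\bC_j$-case to $\bC_j$-linearity of $X$, while $q=k$ uses $k^2=-1$ so that $(x_1+x_2k)k=-x_2+x_1k$. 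That $J\widetilde{X}\subset\widetilde{X}J$ and $\widetilde{X}_J=X$ is immediate, and uniqueness is forced, since any $\bH$-linear extension commuting with $J$ and restricting to $X$ must obey the displayed formula once evaluated on $x_2k$.

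The algebraic block (c)(i)--(viii) says that $X\mapsto\widetilde{X}$ is a unital, $\bR$-linear, multiplicative, inclusion- and adjoint-preserving, isometric bijection onto the $\bH$-linear operators commuting with $J$; each identity follows by applying the defining formula to $x_1+x_2k$. For the new (c)(iii) I would argue exactly as for (c)(iv): on $D(\widetilde{S})\cap D(\widetilde{T})$ one has $(\widetilde{S}+\widetilde{T})(x_1+x_2k)=(S+T)x_1+((S+T)x_2)k=\widetilde{S+T}(x_1+x_2k)$ with matching domains. The norm in (vi) comes from $\|\widetilde{X}(x_1+x_2k)\|^2=\|Xx_1\|^2+\|Xx_2\|^2\le\|X\|^2(\|x_1\|^2+\|x_2\|^2)$, whence $\|\widetilde{X}\|=\|X\|$. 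For the new (c)(ix), since the decomposition is norm-additive, a sequence $x^{(n)}=x_1^{(n)}+x_2^{(n)}k$ in $D(\widetilde{X})$ converges in $\sH$ together with $\widetilde{X}x^{(n)}$ if and only if each $\sH_J$-component converges together with its $X$-image; hence $\widetilde{X}$ is closable iff $X$ is, and the graph of $\overline{\widetilde{X}}$ splits as the one built from $\overline{X}$, giving $\widetilde{\overline{X}}=\overline{\widetilde{X}}$.

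Finally (d). I would first secure that the limit $A$ still commutes with $J$, so that $A_J$ is defined. In the strong case this is immediate from boundedness of $J$: $JAx=\lim_nJA_nx=\lim_nA_nJx=AJx$. In the weak case I push $J$ across the product using $J^*=-J$: from $\langle x|JA_ny\rangle=\langle -Jx|A_ny\rangle\to\langle -Jx|Ay\rangle=\langle x|JAy\rangle$ and $\langle x|A_nJy\rangle\to\langle x|AJy\rangle$, together with $JA_n=A_nJ$, one gets $\langle x|JAy\rangle=\langle x|AJy\rangle$ for all $x,y$, i.e. $JA=AJ$. Then $A_J$ and all $(A_n)_J$ lie in $\gB(\sH_J)$ by (c)(vi), and since $\langle\cdot|\cdot\rangle_J$ and $\|\cdot\|_J$ are the plain restrictions to $\sH_J$ (Prop.\ref{propHJ}) while every $A_n,A$ preserves $\sH_J$, the convergence $(A_n)_J\to A_J$ (weak, resp. strong) is simply the restriction to $\sH_J$ of the convergence assumed on $\sH$. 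The only genuine subtlety in the whole argument is this weak-case check that commutation with $J$ survives the limit; everything else is bookkeeping along the splitting $\sH=\sH_J\oplus\sH_Jk$.
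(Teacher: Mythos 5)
Your proposal is correct and follows essentially the same route as the paper, which simply cites Prop.~3.8, Lemmata 3.9--3.10 and Prop.~3.11 of \cite{GMP1} for (a), (b) and most of (c), and declares the remaining items (c)(iii), (c)(ix) and (d) trivial (with (c)(iii) analogous to (c)(iv)). Your write-up merely makes explicit the underlying mechanism --- the norm-additive splitting $\sH=\sH_J\oplus\sH_Jk$ and the formula $\widetilde{X}(x_1+x_2k)=Xx_1+(Xx_2)k$ --- and correctly handles the one point the paper leaves unstated, namely that commutation with $J$ survives a weak limit.
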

\noindent The stated results extends into the following result which consider constructions related to spectral measures. 
\begin{proposition}\label{operatorfunctioncompl}
Let $\sH$ be a quaternionic Hilbert space, $J\in\gB(\sH)$ a complex structure and
 $A:D(A)\rightarrow \sH$, where $D(A)\subset \sH$, a selfadjoint $\bH$-linear operator such that  $JA\subset AJ$. Then the following facts hold referring to the notation in (b) Prop.\ref{propHJop}.\\

\noindent {\bf (a)} If $P^{(A)}$ is the PVM of the spectral decomposition of $A$, the operators 
$(P_E^{(A)})_J$ with $E\subset \bR$ Borel set, form  the PVM $P^{(A_J)}$ of  $A_J$ in $\sH_J$. Furthermore the map 
$$P^{(A)} \ni P^{(A)}_E \mapsto (P_E^{(A)})_J \in P^{(A_J)}$$
is an isomorphism of $\sigma$-complete Boolean lattices.
\\

\noindent {\bf (b)} If $f:\bR\rightarrow \bR$ is Borel measurable, then $f(A)J=Jf(A)$ and $f(A)_J=f(A_J)$.\\

\noindent {\bf (c)} $A\geq 0$ if and only if $A_J\geq 0$.\\

\noindent {\bf (d)} $\sigma_S(A)= \sigma(A_{J})$, $\sigma_{pS}(A)= \sigma_p(A_{J})$, $\sigma_{cS}(A)= \sigma_c(A_{J})$, where $\sigma_S$ denotes the {\em spherical spectrum} of the 
quaternionic linear operator $A$  \cite{GMP1,GMP2}.
\end{proposition}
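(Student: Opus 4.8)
The plan is to prove all four statements by transporting the analogous facts about the selfadjoint operator $A_J$ on the complex Hilbert space $\sH_J$ back to $\sH$ via the correspondence $X\mapsto\widetilde{X}$ from Prop.\ref{propHJop}, using that $A_J$ is itself selfadjoint (by (c)-(viii) of that proposition, since $JA\subset AJ$ guarantees $A_J$ is well defined and selfadjoint on $\sH_J$). The core observation I would exploit throughout is the uniqueness clauses: a $J$-commuting $\bH$-linear operator is completely determined by its restriction to $\sH_J$, so it suffices to check every identity after restricting to $\sH_J$, where the \emph{complex} spectral theorem is available.

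First I would prove (a). Let $Q_E:=(P^{(A)}_E)_J$; each $P^{(A)}_E$ commutes with $J$ (by Lemma \ref{commst}(b), applied to the bounded $U=J$, together with $JA\subset AJ$), so $Q_E$ is a well-defined $\bC_j$-linear orthogonal projector on $\sH_J$ by (a) and (c)-(viii) of Prop.\ref{propHJop}. The map $E\mapsto Q_E$ inherits $\sigma$-additivity and the projector-valued-measure axioms from $P^{(A)}$ because restriction to $\sH_J$ commutes with strong limits (Prop.\ref{propHJop}(d)) and preserves products, adjoints and the identity (Prop.\ref{propHJop}(c)). To identify $\{Q_E\}$ with the PVM of $A_J$, I would show $A_J=\int_\bR\lambda\,dQ$: for $u\in\sH_J\cap D(A)$ one has $\langle u|A_Ju\rangle_J=\langle u|Au\rangle=\int_\bR\lambda\,d\mu^{P^{(A)}}_u$, and since $\mu^{Q}_u(E)=\langle u|Q_Eu\rangle_J=\langle u|P^{(A)}_Eu\rangle=\mu^{P^{(A)}}_u(E)$, the uniqueness part of Lemma \ref{lemmast} (in the complex case) forces $\{Q_E\}$ to be the PVM of $A_J$. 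The Boolean-lattice isomorphism statement then follows because $E\mapsto P^{(A)}_E$ and $E\mapsto Q_E$ have the same kernel structure: $Q_E=0$ iff $P^{(A)}_E=0$, as $P^{(A)}_E$ is recovered as $\widetilde{Q_E}$ by uniqueness in Prop.\ref{propHJop}(b).

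Statements (b) and (c) then fall out quickly. For (b), $f(A)$ commutes with $J$ because $f(A)$ is a strong limit of real-linear combinations $\sum_l c_l\,P^{(A)}_{F_l}$ of the $J$-commuting projectors (this is exactly the approximation used in Lemma \ref{commst}(c)), and strong limits of $J$-commuting operators commute with $J$. Granting $f(A)J=Jf(A)$, the identity $f(A)_J=f(A_J)$ is checked on $\sH_J$ by comparing $\langle u|f(A)_J u\rangle_J=\int f\,d\mu^{P^{(A)}}_u$ with $\langle u|f(A_J)u\rangle_J=\int f\,d\mu^{Q}_u$ and invoking $\mu^{P^{(A)}}_u=\mu^{Q}_u$ from part (a). For (c), positivity of $A$ means $\langle u|Au\rangle\ge0$ for all $u\in D(A)$; restricting to $u\in\sH_J\cap D(A)$ gives $\langle u|A_Ju\rangle_J\ge0$, and conversely positivity of $A_J$ together with the decomposition $\sH=\sH_J\oplus\sH_Jk$ of Prop.\ref{propHJ}(b) and $J$-commutativity lets me reduce the estimate for a general $u=u_1+u_2k$ to its components in $\sH_J$ (this is also immediate from (a) since $A\ge0$ iff $\mathrm{supp}\,P^{(A)}\subset[0,\infty)$ iff $\mathrm{supp}\,Q\subset[0,\infty)$).

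The step I expect to be the main obstacle is part (d), the coincidence of the \emph{spherical} spectrum $\sigma_S(A)$ with the ordinary complex spectrum $\sigma(A_J)$, together with its pointwise and continuous refinements. The difficulty is conceptual rather than computational: the spherical spectrum is defined via invertibility of the second-order operator $A^2-2(\mathrm{Re}\,\lambda)A+|\lambda|^2 I$ rather than of $A-\lambda I$, so there is no naive matching of resolvent sets, and the real spectrum of a selfadjoint quaternionic operator sits inside $\bR$ while $\sigma(A_J)\subset\bR$ also. The plan is to reduce everything to the PVM: by the quaternionic spectral theorem (Thm 4.8 in \cite{GMP1}) $\sigma_S(A)=\mathrm{supp}\,P^{(A)}$, while by the complex spectral theorem $\sigma(A_J)=\mathrm{supp}\,P^{(A_J)}=\mathrm{supp}\,Q$; since part (a) gives $\mathrm{supp}\,P^{(A)}=\mathrm{supp}\,Q$ (the supports agree because $P^{(A)}_E=0\iff Q_E=0$), the spectra coincide. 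The point and continuous parts are handled the same way: $\lambda$ is an eigenvalue of $A_J$ iff $Q_{\{\lambda\}}\neq0$ iff $P^{(A)}_{\{\lambda\}}\neq0$ iff $\lambda\in\sigma_{pS}(A)$, and the continuous spectrum is the complementary part of the support, so $\sigma_{cS}(A)=\sigma_c(A_J)$ follows by taking differences. The only care needed is to cite the precise characterizations of $\sigma_{pS}$ and $\sigma_{cS}$ in terms of $P^{(A)}$ from \cite{GMP1,GMP2}, after which the argument is purely a translation through the isomorphism of part (a).
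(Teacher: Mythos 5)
Your proposal is correct and follows essentially the same route as the paper: everything is reduced to the PVM correspondence of part (a), which is established exactly as you describe via Lemma \ref{lemmast} and the fact that the scalar product of $\sH_J$ is the restriction of that of $\sH$, and parts (b)--(d) are then read off from the identification of $P^{(A_J)}_E$ with $(P^{(A)}_E)_J$ and of $\sigma_S(A)$, $\sigma_{pS}(A)$ with $\operatorname{supp}P^{(A)}$ and the atoms $P^{(A)}_{\{\lambda\}}\neq 0$. The only divergences are minor sub-arguments — the paper obtains $f(A)J=Jf(A)$ from the quadratic-form identity $\mu^{P^{(A)}}_{Jx}=\mu^{P^{(A)}}_{x}$ (which also supplies the domain invariance $J(D(f(A)))\subset D(f(A))$ that your strong-limit argument tacitly needs) and proves (c) via the polar-decomposition characterization $A\geq 0\iff A=|A|$ together with $|A|_J=|A_J|$ rather than your support argument — and these are interchangeable.
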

\begin{proof} See Appendix \ref{AppProof}. \end{proof}

\noindent To conclude, let us examine the interplay of the von Neumann algebra structure in $\sH$
and that in $\sH_J$.
 
\begin{proposition}\label{complexidentif}
Let $\gR$ be a von Neumann algebra over the quaternionic Hilbert space $\sH$ and $J\in\gR'\cap \gR$ be a complex structure. Then $\gR_J:=\{A_J\:|\: A\in\gR\}$ 
is a  von Neumann algebra over the $\bC_j$-complex Hilbert space $\sH_J$. 
\end{proposition}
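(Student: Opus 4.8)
The plan is to show that $\gR_J$ is a unital $^*$-subalgebra of $\gB(\sH_J)$ which is strongly closed, and then to invoke the classical double commutant theorem in complex Hilbert spaces to conclude that $\gR_J$ is a von Neumann algebra on the $\bC_j$-complex Hilbert space $\sH_J$.

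First I would collect the elementary structural facts. Since $J\in\gR'$, every $A\in\gR$ commutes with $J$, so by (a) of Prop.\ref{propHJop} each $A_J$ is a well-defined $\bC_j$-linear operator on $\sH_J$, and by (c)-(vi) of the same proposition $A_J\in\gB(\sH_J)$ with $\|A_J\|=\|A\|$; hence $\gR_J\subset\gB(\sH_J)$. The correspondence $A\mapsto A_J$ is $\bR$-linear, multiplicative and $^*$-preserving by (c)-(ii),(iii),(iv),(viii) of Prop.\ref{propHJop}, so, $\gR$ being a unital real $^*$-algebra, $\gR_J$ is a unital real $^*$-subalgebra of $\gB(\sH_J)$ (with $I_J=I_{\sH_J}$). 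The one point needing separate care is closure under multiplication by the complex scalar $j$: here I would use that $J\in\gR$ and that, by the very definition of $\sH_J$, the operator $J_J$ is exactly multiplication by $j$ on $\sH_J$; thus for $A_J\in\gR_J$ we have $jA_J=J_JA_J=(JA)_J\in\gR_J$ because $JA\in\gR$. This upgrades $\gR_J$ to a genuine complex unital $^*$-subalgebra.

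The heart of the proof is strong closedness, and this is where I would spend the effort. Suppose $(A_n)_J\to Y$ strongly in $\gB(\sH_J)$ with $A_n\in\gR$. The idea is to lift this convergence back to $\sH$. Writing any $x\in\sH$ as $x=x_1+x_2k$ with $x_1,x_2\in\sH_J$ according to the direct-sum decomposition $\sH=\sH_J\oplus\sH_Jk$ of (b) Prop.\ref{propHJ}, and using that each $A_n$ is $\bH$-linear (so that $A_n(x_2k)=(A_nx_2)k$), one gets $A_nx=(A_n)_Jx_1+((A_n)_Jx_2)k\to Yx_1+(Yx_2)k$. Since $Y\in\gB(\sH_J)$ admits a unique bounded $\bH$-linear extension $\widetilde Y$ with $\widetilde Y|_{\sH_J}=Y$ by (b) Prop.\ref{propHJop}, and $\widetilde Yx=Yx_1+(Yx_2)k$ by $\bH$-linearity, this says precisely $A_n\to\widetilde Y$ strongly in $\gB(\sH)$. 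As $\gR$ is a von Neumann algebra it is strongly closed, so $\widetilde Y\in\gR$, whence $Y=(\widetilde Y)_J\in\gR_J$. Therefore $\gR_J$ is strongly closed, and the classical double commutant theorem gives $\gR_J=\overline{\gR_J}^s=\gR_J''$, so $\gR_J$ is a von Neumann algebra on $\sH_J$.

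I expect the main obstacle to be conceptual rather than computational: the correspondence $A\mapsto A_J$ is only $\bR$-linear, so neither the complex-algebra structure of $\gR_J$ nor the descent of strong closure is automatic, and both must be extracted using the extra input that $J$ itself lies in $\gR$ (providing the multiplication-by-$j$ operator) together with the norm-additive splitting $\sH=\sH_J\oplus\sH_Jk$. As a cross-check, and as an alternative to the strong-closure argument, I would note that one can instead prove the commutant identity $(\gR_J)'=(\gR')_J$: writing a generic element of $\gB(\sH_J)$ as $B_J$ with $B=\widetilde{B_J}$ the $\bH$-linear extension, (c)-(iv),(v) of Prop.\ref{propHJop} give $B_J\in(\gR_J)'$ iff $(BA)_J=(AB)_J$ for all $A\in\gR$ iff $BA=AB$ for all such $A$, i.e. iff $B\in\gR'$. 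Since $J\in\gR\cap\gR'$ this identity applies verbatim with $\gR'$ in place of $\gR$, and iterating yields $(\gR_J)''=((\gR')_J)'=(\gR'')_J=\gR_J$, recovering the conclusion directly through double commutants.
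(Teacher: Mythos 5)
Your proof is correct and follows essentially the same route as the paper's: establish that $\gR_J$ is a unital $^*$-subalgebra closed under multiplication by $j$ via $J_J$ (using $J\in\gR$), then lift strong convergence from $\sH_J$ back to $\sH$ through the decomposition $\sH=\sH_J\oplus\sH_Jk$ and use strong closedness of $\gR$. The closing commutant identity $(\gR_J)'=(\gR')_J$ is a valid bonus not present in the paper, but the core argument coincides.
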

\begin{proof}
$\gR_J$ is a unital $^*$-subalgebra of $\gB(\sH_J)$, indeed $I\in \gR_J$ and if  $A_J,B_J\in\gR_J$, then  $A_J+B_J = (A+B)_J$ belongs to $\gR_J$. If 
furthermore $a+ jb \in \bC_j$, since $J \in \gR$, we have  $J_J \in \gR_J$ and  $(a+jb) A_J :=aA_J + bJ_JA_J = (aA+ bJA)_J$ is a well defined $\bC_j$-linear
  operator in $\gR_J$. Finally, if $A_J\in\gR_J$ then $(A_J)^*=(A^*)_J\in\gR_J$.  These properties ensures that $\gR_J$ is a unital $^*$-subalgebra of $\gB(\sH_J)$ as said above.
To conclude it is enough establishing that $\gR_J$ is strongly closed.
 Let $(C_n)_{n\in\bN}\subset \gR_J$ which  strongly converges to some $C\in\gB(\sH_J)$, we want to prove that $C\in \gR_J$. By definition,  
 $C_n=(K_n)_J$ for $K_n\in\gR$, $C=K_J$ for some $K\in\gB(\sH)$, and $\|K_nu -Ku\|\to 0$ for every $u\in\sH_J$. 
So, take $x\in\sH$, we know from Prop.\ref{propHJ} that $x=x_1+x_2k$ for unique $x_1,x_2\in\sH_{J}$. Hence, again for Prop.\ref{propHJ}, 
$\|K_nx-Kx\|^2=\|(K_nx_1-Kx_1)+(K_nx_2-Kx_2)k\|^2=\|(K_nx_1-Kx_1)\|^2+\|K_nx_2-Kx_2\|^2\to 0$. All that proves that $\gR \ni K_n \to K$
 strongly and thus $K\in \gR$ because it is a von Neumann algebra. Summing up, $\gR_J \ni C_n \to C= K_J\in \gR_J$ s that $\gR_J$ is strongly closed and therefore is a von Neumann algebra.
\end{proof}

\subsection{Restriction to a real Hilbert subspace induced by two anticommuting complex structures}
If $\sH$ is a quaternionic Hilbert space, suppose we have a pair $J,K$ of anticommuting ($JK=-KJ$) complex structures. 
We want to prove that  the set
\begin{equation}\label{defHJK}
\sH_{JK}:=\{u\in\sH\:|\: Ju=uj,\ Ku=uk\}
\end{equation}
is a real Hilbert space. $\sH_{JK}$ is evidently a real subspace of $\sH$
which also satisfies $JKu = ui$ if $u \in H_{JK}$ from $i=jk$.
However it is not obvious that the scalar product of $\sH$ makes $\sH_{JK}$ a real Hilbert space.
To prove it, we introduce an important technical tool given by  the map \begin{equation}\label{defL}L: \bH \ni q   \mapsto L_{q}:=aI+bJK+cJ+dK \in \gB(\sH) \quad \mbox{where $q=a+ib+jc+kd$\:.}
\end{equation}
$L$ is a {\bf left multiplication} (\cite{GMP1,GMP2}) on $\sH$ --
an injective unital real $^*$-algebra homomorphism -- associated to $J$ and $K$.
Notice that, exploiting definitions (\ref{defHJK}) and (\ref{defL}), it holds
\begin{equation}\label{realitycondition}
L_q u=uq\ \ \forall q\in\bH
\end{equation}
In this sense the elements of $\sH_{JK}$ can be interpreted as \textit{real vectors} of $\sH$, as they \textit{commute} with every element of $\bH$.

\begin{proposition}\label{propEL} Referring to (\ref{defHJK}) a7nd (\ref{defL}), there exists a Hilbert basis $N$ of $\sH$
such that $N \subset \sH_{JK}$ and
\begin{equation*}
L_q=\sum_{z\in N}zq\langle z|\cdot\rangle\:, \quad \forall q \in \bH\:,
\end{equation*}
so that $L_qz=zq$ for every $z\in N$ in particular.
\end{proposition}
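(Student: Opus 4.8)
The plan is to reduce everything to the complex Hilbert space $\sH_J$ of Prop.\ref{propHJ} and to realise $\sH_{JK}$ as the real form of a suitable conjugation on $\sH_J$; once a Hilbert basis $N\subset\sH_{JK}$ of $\sH$ has been produced, the announced formula for $L_q$ drops out by a one-line computation. Throughout I use that $L_q\in\gB(\sH)$ is bounded and $\bH$-linear (it is a real combination of the $\bH$-linear operators $I,JK,J,K$) and that, for $z\in\sH_{JK}$, the reality condition (\ref{realitycondition}) gives $L_qz=zq$.

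First I would recall from Prop.\ref{propHJ}(a),(d) that $\sH_J=\{u\in\sH\mid Ju=uj\}$ is a $\bC_j$-complex Hilbert space whose Hilbert bases are automatically Hilbert bases of $\sH$. Since $K$ is $\bH$-linear and anticommutes with $J$, for $u\in\sH_J$ one computes $J(Ku)=-KJu=-(Ku)j$, so $K$ maps $\sH_J$ into $\sH^{(-)}_J=\sH_Jk$ (cf.\ (\ref{defHJ}), (\ref{HJk})). This allows me to define $C:\sH_J\to\sH_J$ by $Cu:=-(Ku)k$. I would then verify the three defining properties of a conjugation: $C$ is $\bC_j$-antilinear (using $\lambda k=k\overline\lambda$ for $\lambda\in\bC_j$), involutive with $C^2=I$ (using $K^2=-I$), and antiunitary, $\langle Cu|Cv\rangle=\overline{\langle u|v\rangle}$ (using that $K$ is unitary together with the identity $-k\,c\,k=\overline c$ valid for $c\in\bC_j$). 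A direct check then shows that the fixed-point set $\{u\in\sH_J\mid Cu=u\}$ is exactly $\sH_{JK}$ of (\ref{defHJK}), because for $u\in\sH_J$ the equation $-(Ku)k=u$ is equivalent to $Ku=uk$.

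Next I would invoke the standard structure of a conjugation on a complex Hilbert space. Writing any $u\in\sH_J$ as $u=\tfrac12(u+Cu)+\tfrac12(u-Cu)$ and noting that $C(wj)=(Cw)\overline j=-wj$ for $w\in\sH_{JK}$, one obtains the decomposition $\sH_J=\sH_{JK}\oplus\sH_{JK}j$, the second summand being the $(-1)$-eigenspace of $C$. In particular the restriction of $\langle\cdot|\cdot\rangle$ to $\sH_{JK}$ is real-valued, so $\sH_{JK}$ is a real Hilbert space, and any real Hilbert basis $N$ of $\sH_{JK}$ is orthonormal in $\sH_J$ and total there (its $\bC_j$-span has closure containing $\sH_{JK}\oplus\sH_{JK}j=\sH_J$). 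Hence $N$ is a $\bC_j$-Hilbert basis of $\sH_J$, and by Prop.\ref{propHJ}(d) it is a Hilbert basis of the whole $\sH$, with $N\subset\sH_{JK}$ by construction.

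With such $N$ at hand the conclusion is immediate: for arbitrary $x\in\sH$ the generalized Fourier expansion gives $x=\sum_{z\in N}z\langle z|x\rangle$, and since $L_q$ is bounded and $\bH$-linear it commutes with the series and with the right scalar factors, so $L_qx=\sum_{z\in N}(L_qz)\langle z|x\rangle=\sum_{z\in N}zq\langle z|x\rangle$, using $L_qz=zq$ for $z\in N\subset\sH_{JK}$. As $x$ is arbitrary this is precisely $L_q=\sum_{z\in N}zq\langle z|\cdot\rangle$, and the relation $L_qz=zq$ on basis elements is the special case. I expect the main obstacle to be the middle step, namely checking that $C$ is a genuine conjugation and identifying its real form with $\sH_{JK}$: this is exactly where the anticommutativity of $J,K$ and the quaternionic arithmetic ($\lambda k=k\overline\lambda$ and $-k\,c\,k=\overline c$ on $\bC_j$) are genuinely used, whereas the extraction of the basis and the final expansion are routine.
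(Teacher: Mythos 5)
Your proof is correct, but it takes a genuinely different route from the paper's: the paper disposes of this statement in one line by citing Theorem 4.3 of \cite{GMP2}, i.e.\ it outsources the existence of a Hilbert basis of ``real'' vectors for the left multiplication $L$ to the general theory of left scalar multiplications developed there. Your argument is self-contained: you realise $\sH_{JK}$ as the real form of the conjugation $Cu:=-(Ku)k$ on the complex Hilbert space $\sH_J$, deduce from the antiunitarity of $C$ both the decomposition $\sH_J=\sH_{JK}\oplus\sH_{JK}j$ and the real-valuedness of $\langle\cdot|\cdot\rangle$ on $\sH_{JK}$, and then promote a real Hilbert basis $N$ of $\sH_{JK}$ first to a $\bC_j$-Hilbert basis of $\sH_J$ and then, via (d) of Prop.\ \ref{propHJ}, to a Hilbert basis of $\sH$; the displayed formula for $L_q$ then follows from the Fourier expansion together with the reality condition (\ref{realitycondition}). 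All the individual verifications (the $\bC_j$-antilinearity and antiunitarity of $C$, $C^2=I$, the identification of the fixed-point set with $\sH_{JK}$, the totality of $N$ in $\sH_J$) check out; the only step left tacit is that $\sH_{JK}$ is complete, which is immediate since it is the intersection of the kernels of the continuous maps $u\mapsto Ju-uj$ and $u\mapsto Ku-uk$ and hence closed in $\sH$. What your route buys is independence from the external reference and, as a by-product, most of the content of Proposition \ref{realdecomp} (reality of the scalar product on $\sH_{JK}$, the four-fold decomposition of $\sH$, and the fact that bases of $\sH_{JK}$ are bases of $\sH$), which the paper re-derives separately after invoking the cited theorem.
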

\begin{proof}
The proof immediately arises from  Thm 4.3 in \cite{GMP2}.
\end{proof}
\noindent We are in a position to state and prove the result afore mentioned. Notice that, with obvious notation,
 the sets $\sH_{JK}i$, $\sH_{JK}j$, and $\sH_{JK}k$  are real subspaces of $\sH.$

\begin{proposition}\label{realdecomp}
Let $\sH$ be a quaternionic Hilbert space and $J,K\in\gB(\sH)$ two anticommuting complex structures. The  real vector
 subspace $\sH_{JK}$ defined in(\ref{defHJK}) equipped with  the restriction $\langle\cdot|\cdot\rangle_{JK}$ of the scalar product of $\sH$ to $\sH_{JK}$ satisfies the following properties.\\
{\bf (a)} $\sH_{JK}$ is a  real Hilbert space, non-trivial unless $\sH=\{0\}$.\\
{\bf (b)} The direct (generally not orthogonal) decomposition is valid $$\sH=\sH_{JK}\oplus \sH_{JK}i \oplus \sH_{JK}j \oplus \sH_{JK}k$$
such that
\begin{equation}\label{realdecompnorm}
\|x\|^2=\sum_{i=1}^4 \|x_i\|^2\quad \mbox{if }\sH \ni x=x_1+x_2i+ x_3j + x_ak \mbox{ with } x_1,x_2,x_3,x_4 \in \sH_{JK}
\end{equation}
{\bf (c)} The maps 

$\sH_{JK} \ni u\mapsto ui \in \sH_{JK}i$,

$\sH_{JK} \ni u\mapsto uj \in \sH_{JK}j$, 

$\sH_{JK} \ni u\mapsto uk \in \sH_{JK}k$

 \noindent are $\bR$-linear, isometric and bijective. \\
  {\bf (d)} If $N \subset \sH_{JK}$ is a Hilbert basis of $\sH_{JK}$, then $N$ is also a Hilbert basis of $\sH$.
\end{proposition}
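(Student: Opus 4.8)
The plan is to mirror the complex case of Prop.\ref{propHJ}, using the distinguished Hilbert basis supplied by Prop.\ref{propEL} to do the heavy lifting. Everything rests on a short reality lemma: for $u,v\in\sH_{JK}$ one has $\langle u|v\rangle\in\bR$. I would prove it by combining $J^*=-J$ with $Ju=uj$ and $Jv=vj$. From $\langle J^*u|v\rangle=\langle u|Jv\rangle$ one gets $-\langle uj|v\rangle=\langle u|vj\rangle$; using right-linearity in the second slot, conjugate-linearity in the first, and $\overline{\langle v|u\rangle}=\langle u|v\rangle$, this collapses to $j\langle u|v\rangle=\langle u|v\rangle j$. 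The identical computation with $K$ gives $k\langle u|v\rangle=\langle u|v\rangle k$, and a quaternion commuting with both $j$ and $k$ is real.

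Granting this lemma, part (a) is nearly immediate. $\sH_{JK}$ is a real subspace, and it is closed in $\sH$ since it is the intersection of the kernels of the two bounded $\bR$-linear maps $u\mapsto Ju-uj$ and $u\mapsto Ku-uk$; being closed in a complete space it is complete in the restricted norm. The lemma guarantees that $\langle\cdot|\cdot\rangle_{JK}$ is genuinely real-valued, and positivity is inherited from $\sH$, so $(\sH_{JK},\langle\cdot|\cdot\rangle_{JK})$ is a real Hilbert space; it is non-trivial unless $\sH=\{0\}$ because Prop.\ref{propEL} already places an entire Hilbert basis $N$ of $\sH$ inside it. Part (c) is just as short: for $q\in\{i,j,k\}$ the map $u\mapsto uq$ is $\bR$-linear (as $q$ commutes with reals) and satisfies $\|uq\|^2=\overline{q}\langle u|u\rangle q=|q|^2\|u\|^2=\|u\|^2$, so it is isometric, hence injective, and it is onto $\sH_{JK}q$ by the very definition of that set.

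For the existence half of (b), I would expand an arbitrary $x\in\sH$ in the basis $N\subset\sH_{JK}$ of Prop.\ref{propEL}, $x=\sum_{z\in N}z\langle z|x\rangle$, split each coefficient as $\langle z|x\rangle=a_z+b_zi+c_zj+d_zk$ with real parts, and regroup (the four real families are square-summable by Parseval) into $x=x_1+x_2i+x_3j+x_4k$ with $x_1=\sum_z za_z$, and similarly for the others; each $x_\ell$ lies in $\sH_{JK}$ since it is a norm-limit of real combinations of elements of $N$. Directness and the Pythagorean identity (\ref{realdecompnorm}) then fall out of one computation. Writing $q_1=1,q_2=i,q_3=j,q_4=k$, we have $\langle x|x\rangle=\sum_{a,b}\langle x_a|x_b\rangle\,\overline{q_a}q_b$ with every $\langle x_a|x_b\rangle\in\bR$ by the lemma; pairing the term $(a,b)$ with $(b,a)$ and using $\langle x_a|x_b\rangle=\langle x_b|x_a\rangle$ turns each off-diagonal contribution into $\langle x_a|x_b\rangle\,2Re(\overline{q_a}q_b)$, which vanishes because $\overline{q_a}q_b$ is purely imaginary for $a\neq b$. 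Thus $\langle x|x\rangle=\sum_\ell\|x_\ell\|^2$, which is (\ref{realdecompnorm}); taking $x=0$ forces every $x_\ell=0$, giving directness.

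Finally, for (d) let $N'\subset\sH_{JK}$ be a Hilbert basis of the real Hilbert space $\sH_{JK}$. The lemma makes $N'$ orthonormal in $\sH$ too, so it suffices to show its orthogonal complement in $\sH$ is trivial. Given $x$ with $\langle z|x\rangle=0$ for all $z\in N'$, decompose $x=x_1+x_2i+x_3j+x_4k$ by (b); then $\langle z|x\rangle=\sum_\ell\langle z|x_\ell\rangle q_\ell$ is the unique quaternionic splitting of $0$ with real components $\langle z|x_\ell\rangle$, so $\langle z|x_\ell\rangle=0$ for all $z\in N'$ and all $\ell$, whence $x_\ell=0$ by completeness of $N'$ in $\sH_{JK}$ and $x=0$. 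I expect the two genuinely substantive points to be the reality lemma and the cancellation $Re(\overline{q_a}q_b)=0$; note that the summands $\sH_{JK}q$ are \emph{not} mutually orthogonal for $\langle\cdot|\cdot\rangle$ (the cross terms $\langle x_a|x_b\rangle\overline{q_a}q_b$ need not vanish), and it is only their real parts that cancel, which is exactly why the sum is direct with additive norms yet generally not orthogonal.
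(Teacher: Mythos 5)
Your proof is correct, and its skeleton matches the paper's: a reality lemma ($\langle u|v\rangle\in\bR$ on $\sH_{JK}$), closedness of $\sH_{JK}$, the Pythagorean computation in which only the real parts of the cross terms survive and vanish, and directness read off from the norm identity. You diverge from the paper at two genuine points. First, you derive the reality lemma from $J^*=-J$, $K^*=-K$ and the eigenvector relations directly, concluding that $\langle u|v\rangle$ commutes with $j$ and $k$; the paper instead runs the computation $\langle u|v\rangle q=\langle u|L_qv\rangle=\langle L_{\overline q}u|v\rangle=q\langle u|v\rangle$ through the left multiplication $L$ of (\ref{defL}). Both are two-line arguments and equally valid. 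Second, and more substantially, for the existence half of (b) you expand $x$ in the distinguished Hilbert basis $N\subset\sH_{JK}$ of Prop.~\ref{propEL}, split each quaternionic coefficient into its four real components and regroup; the paper instead exhibits the components by the purely algebraic formulas $u_1=u-JKui-Juj-Kuk$, $u_i=u-JKui+Juj+Kuk$, etc., whose sum is $4u$. The paper's route needs no convergence argument and displays the four components as images of $u$ under explicit bounded operators, while yours requires Parseval plus the closedness of $\sH_{JK}$ but makes it transparent that $\sH_{JK}$ is exactly the closed real span of the basis $N$ (and it reuses the same basis for (a), (b) and (d)). Your treatment of (d) is a correct filling-in of what the paper only sketches. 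One small presentational remark: state explicitly that a quaternion commuting with both $j$ and $k$ also commutes with $i=jk$ and hence lies in the centre $\bR$ of $\bH$; that is the one step of the reality lemma you use without comment.
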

\begin{proof}
(a) Let us prove that $\langle \cdot|\cdot\rangle$ takes values in $\bR$ if restricted to $\sH_{JK}$. Let $u,v \in \sH_{JK}$  and $q\in\bH$, exploiting (\ref{realitycondition}) we get
\begin{equation*}
\langle u|v\rangle q=\langle u|vq\rangle=\langle u|L_qv\rangle=
\langle L_q^*u|v\rangle=\langle L_{\overline{q}}u|v\rangle=\langle u\overline{q}|v\rangle=q\langle u|v\rangle
\end{equation*} 
Since $q$ is generic, $\langle u|v\rangle$ must be real. $\langle\cdot|\cdot\rangle$ restricted to $\sH_{JK}$ is therefore a real scalar product
 denoted by $\langle\cdot|\cdot\rangle_{JK}$, because the remaining properties immediately arises from those valid in $\sH$. As aconsequence,
 the norm induced by $\langle\cdot|\cdot\rangle_{JK}$  is equal to the norm of $\sH$ restricted to $\sH_{JK}$.
As $\sH_{JK}$ is closed because $J$ and $K$ are continuous, it is complete as a normed space, since $\sH$ is, and thus it is a real Hilbert space.
 It is also obvious that $N$ of Prop. \ref{propEL} is a Hilbert basis for $\sH_{JK}$ because it is an orthonormal subset of $\sH_{JK}$  whose orthogonal is $\{0\}$.
  Since $N$ is non empty if $\sH$ is non-trivial, $\sH_{JK}\neq \{0\}$.\\
(b) The last statement in (b) arises immediately bearing in mind that when computing $||x||^2 = \langle x|x \rangle$
exploiting the decomposition $x= x_1 + x_2 i+ x_3j + x_4k$ it holds
 $\langle x_a|x_b\rangle=\langle x_b|x_a\rangle\in\bR$, because $x_a,x_b \in \sH_{JK}$, and this impleis that the mixed terms in the expansion  of the scalar product $\langle x|x \rangle$ cancel pairwise.
 Now, let us prove the first part. Let $u\in\sH$, then a direct inspection shows that $u_1:=u -JKui-Juj-Kuk\in\sH_{JK}$,
$u_{jk}:=u-JKui+Juj+Kuk\in(\sH_{JK})i$, $u_j:=u+JKui -Juj+Kuk\in(\sH_{JK})j$, and  $u_k:=u+JKui+Juj-Kuk\in(\sH_{JK})k$. Summing together the four identities we obtain $4u=u_1+ u_{i} + u_j+u_k$. 
The found decomposition is unique, and thus the decomposition of $\sH$ is direct. Indeed suppose that $x_1+x_2i+x_3j+x_4jk$ and $x'_1+x'_2i+x'_3j+x'_4jk$ are two decompositions of the same vector
 $x\in\sH$, then $(x_1-x'_1)+(x_2-x_2')i+(x_3-x'_3)j+(x_4-x'_4)jk=0$. Exploiting (\ref{realdecompnorm}) we immediately get $\|x_\alpha-x'_\alpha\|=0$, hence $x_\alpha=x'_\alpha$, for all $\alpha=1,2,3,4$ concluding the proof.
\\
(c) The proof is elementary using (b) and the basic properties of $J$ and $K$.  The proof of (d) easily arises from the first part of  (b) 
and the fact that $\langle x|y\rangle\in\bR$ if $x,y \in \sH_{JK}$.
\end{proof}
\noindent We can now pass to deal with operators and their restrictions to $\sH_{JK}$.

\begin{proposition}\label{lemmaextensreal}
With the same hypotheses as in Proposition \ref{realdecomp} and with $L$ as in (\ref{defL}), the following facts hold.\\

\noindent {\bf (a)} Consider a $\bH$-linear operator  $X:D(X)\to \sH$ where $D(X)\subset \sH$, such that $JX \subset XJ$ and
 $KX \subset XK$. Then $X_{JK} := X|_{D(X)\cap\sH_{JK}}$ is a well defined  $\bR$-linear operator in $\sH_{JK}$. \\

\noindent {\bf (b)} Consider an $\bR$-linear operator $A:D(A)\rightarrow \sH_{JK}$, where 
$D(A) \subset \sH_{JK}$. Then there exists a unique $\bH$-linear operator $\tilde{A}:D(\widetilde{A})\rightarrow \sH$, 
with $D(\widetilde{A}) \subset \sH$, such that  $J\widetilde{A}\subset \widetilde{A}J$, $K\widetilde{A}\subset \widetilde{A}K$, and 
 $A=\widetilde{A}_{JK}$,  in particular $D(A)=D(\widetilde{A})\cap \sH_{JK}$. \\

\noindent {\bf (c)} The following facts hold if $A$ is as in (b) and  $B:D(B)\rightarrow \sH_{JK}$ is another $\bR$-linear operator where  $D(B) \subset \sH_{JK}$.

(i) $L_q\widetilde{A}=\widetilde{A}L_q$ for all $q\in\bH$.

(ii)   $\widetilde{aA} = a \widetilde{A}$ for every $a\in \bR$.

(iii)  $\widetilde{A+B}=\widetilde{A}+\widetilde{B}$. 

(iv)  $\widetilde{AB}=\widetilde{A}\widetilde{B}$.
 
(v)  $A\subset B$ iff $\tilde{A}\subset\widetilde{B}$.

(vi)  $A\in\gB(\sH_{JK})$ iff  $\widetilde{A}\in\gB(\sH)$ and $\|\widetilde{A}\|=\|A\|$.

(vii) $D(A)$ is dense if and only if $D(\tilde{A})$ is dense

(viii) $\widetilde{A^*}=(\widetilde{A})^*$ if $A$ is densely defined\\
As a consequence,  $A$ is symmetric, antisymmetric, (essentially) selfadjoint,  antiselfadjoint, unitary, idempotent iff $\widetilde{A}$
 is symmetric, antisymmetric, (essentially) selfadjoint,  antiselfadjoint, unitary, idempotent

(ix) $A$ is closable iff $\widetilde{A}$ is closable. In this case  $\overline{\widetilde{A}}=\widetilde{\overline{A}}$.\\

\noindent {\bf (d)} If $\gB(\sH) \ni X_n \to X \in \gB(\sH)$ weakly, resp., strongly for $n \to +\infty$ and $JX_n=X_nJ$, and $KX_n=X_nK$,
 then $JX=XJ$, $KX=XK$ and $\gB(\sH_{JK}) \ni (A_n)_{JK} \to A_{JK} \in \gB(\sH_{JK})$ weakly, resp., strongly for $n \to +\infty$.
\end{proposition}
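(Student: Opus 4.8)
The plan is to realise all four items through a single explicit extension built from the direct-sum decomposition of Proposition \ref{realdecomp}(b), and then to transfer each property from the real Hilbert space $\sH_\bR$, where the analogous statements are already at our disposal from \cite{MO1}. For \textbf{(a)}, I would take $u\in D(X)\cap\sH_{JK}$; since $X$ is $\bH$-linear its domain is a quaternionic subspace, so $uj,uk\in D(X)$, and the hypotheses $JX\subset XJ$, $KX\subset XK$ give $J(Xu)=X(uj)=(Xu)j$ and $K(Xu)=(Xu)k$, i.e. $Xu\in\sH_{JK}$. Thus $X_{JK}$ takes values in $\sH_{JK}$ and is $\bR$-linear by restriction.

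For the existence half of \textbf{(b)} I would use the unique decomposition $x=x_1+x_2i+x_3j+x_4k$ with $x_\alpha\in\sH_{JK}$ (Prop.\ref{realdecomp}(b)) to set $D(\widetilde A):=D(A)\oplus D(A)i\oplus D(A)j\oplus D(A)k$ and
\[ \widetilde A(x_1+x_2i+x_3j+x_4k):=(Ax_1)+(Ax_2)i+(Ax_3)j+(Ax_4)k . \]
Under the identification $\sH_\bR\cong\sH_{JK}^{\oplus 4}$ furnished by (\ref{realdecompnorm}) (the four summands being mutually orthogonal for $(\cdot|\cdot)=Re\langle\cdot|\cdot\rangle$), this $\widetilde A$ is precisely $A\oplus A\oplus A\oplus A$. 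The only genuinely quaternionic point is $\bH$-linearity: right multiplication by $i,j,k$ merely permutes and signs the four slots according to the multiplication table of $\bH$, and $A^{\oplus 4}$ commutes with these permutations exactly because the \emph{same} $A$ occupies every slot; equivalently $\widetilde A$ commutes with $\cJ$ and $\cK$, hence descends to $\gB(\sH)$ by (f) of Proposition \ref{propHR}. The relations $J\widetilde A=\widetilde A J$, $K\widetilde A=\widetilde A K$, $\widetilde A_{JK}=A$ and $D(\widetilde A)\cap\sH_{JK}=D(A)$ are then read off the formula.

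For uniqueness, any $\bH$-linear $B$ with the stated commutation properties satisfies $L_qB\subset BL_q$ for all $q$ (as $L_i=JK$, $L_j=J$, $L_k=K$ generate $L$), so $D(B)$ is invariant under the component projection $x\mapsto\frac14\big(x-(JKx)i-(Jx)j-(Kx)k\big)$ and its analogues from the proof of Proposition \ref{realdecomp}(b); this forces every component of $x\in D(B)$ into $D(B)\cap\sH_{JK}=D(A)$, whence $D(B)\subseteq D(\widetilde A)$, and the reverse inclusion holds because $D(\widetilde A)$ is generated over $\bH$ by $D(A)$. On the common domain $B$ is determined by its values on $\sH_{JK}$ together with $\bH$-linearity, so $B=\widetilde A$. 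Item \textbf{(c)} then comes almost for free from $\widetilde A=A^{\oplus 4}$: (i)--(v) are immediate from the formula and uniqueness; (vi) follows from $\|\widetilde Ax\|^2=\sum_\alpha\|Ax_\alpha\|^2$ with $\sum_\alpha\|x_\alpha\|^2=\|x\|^2$; (vii) from the same orthogonal splitting; and (viii) reduces to the direct-sum adjoint fact after expanding $\langle\widetilde Ax|y\rangle$ via $\langle ue_\alpha|ve_\beta\rangle=\overline{e_\alpha}\,\langle u|v\rangle_{JK}\,e_\beta$ (with $e_1,\dots,e_4=1,i,j,k$ and $\langle u|v\rangle_{JK}\in\bR$ for $u,v\in\sH_{JK}$). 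Closability and the symmetry/selfadjointness list transfer from $A$ on $\sH_{JK}$ to $\widetilde A$ on $\sH_\bR$ through \cite{MO1}, and from $\sH_\bR$ to $\sH$ through Proposition \ref{propHR}, whose (e) guarantees that adjoints, closures and operator classes coincide for $\bH$-linear operators.

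Finally \textbf{(d)} is obtained by passing limits through the norm-preserving restriction: $JX=XJ$ and $KX=XK$ survive weak and strong limits exactly as in the double commutant argument, while $\|(X_n)_{JK}u-X_{JK}u\|=\|X_nu-Xu\|$ and $\langle X_nu|v\rangle=\langle (X_n)_{JK}u|v\rangle_{JK}$ yield strong and weak convergence on $\sH_{JK}$. The main obstacle throughout is the quaternionic-linearity of $\widetilde A$ in (b): one must verify that the a priori merely $\bR$-linear diagonal operator $A^{\oplus 4}$ genuinely commutes with right multiplication by every imaginary unit, and the cleanest route is the permutation-commutation observation above (equivalently, commutation with $\cJ$ and $\cK$). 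Everything else is bookkeeping through the orthogonal decomposition and the transfer lemmas.
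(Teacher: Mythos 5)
Your proof is correct and follows essentially the same route as the paper's: the same explicit extension $\widetilde A$ on $D(A)\oplus D(A)i\oplus D(A)j\oplus D(A)k$, the same uniqueness argument via the $L_q$-invariance of $D(B)$ together with the component projections of Proposition \ref{realdecomp}, and the same norm and adjoint computations. Your packaging of $\widetilde A$ as $A^{\oplus 4}$ under the real-orthogonal identification of $\sH_\bR$ with four copies of $\sH_{JK}$ is a mild streamlining that lets (viii) and (ix) be read off from standard direct-sum facts plus Proposition \ref{propHR}, but it is the same argument the paper carries out by hand.
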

\begin{proof} See Appendix \ref{AppProof} \end{proof}

\begin{proposition}\label{operatorfunctionquat} Let $\sH$ be a quaternionic Hilbert space and $J,K\in\gB(\sH)$ two anticommuting complex structures. 
If  $A:D(A)\rightarrow \sH$, where $D(A) \subset \sH$, is  selfadjoint and satisfy $JA\subset AJ$ and $KA \subset AK$, then the following facts hold true.
\begin{enumerate}[(a)]
\item 
If $P^{(A)}$ is the PVM of the spectral decomposition of $A$, the operators 
$(P_E^{(A)})_{JK}$ with $E\subset \bR$ Borel set, form  the PVM $P^{(A_{JK})}$ of  $A_{JK}$ in $\sH_{JK}$. Furthermore the map 
$$P^{(A)} \ni P^{(A)}_E \mapsto (P_E^{(A)})_{JK} \in P^{(A_{JK})}$$
is an isomorphism of $\sigma$-complete Boolean lattices.

\item if $f:\bR\rightarrow \bR$ is Borel measurable function, then $f(A)J=Jf(A)$ and $f(A)K=Kf(A)$ and $f(A)_{JK}=f(A_{JK})$.
\item $A$ is positive if and only if $A_{JK}$ is positive.

\item $\sigma_S(A)= \sigma(A_{JK})$, $\sigma_{pS}(A)= \sigma_p(A_{JK})$, $\sigma_{cS}(A)= \sigma_c(A_{JK})$.
\end{enumerate}
\end{proposition}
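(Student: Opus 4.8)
The plan is to reduce every assertion to the real Hilbert space $\sH_{JK}$ and to transport statements back and forth through the correspondence $X\leftrightarrow\widetilde{X}$ of Proposition \ref{lemmaextensreal}, which preserves all the relevant algebraic and topological structure. The starting observation is that, since $A$ is selfadjoint with $JA\subset AJ$ and $KA\subset AK$, Lemma \ref{commst}(b), applied with the bounded operators $J$ and $K$, forces $JP_E^{(A)}=P_E^{(A)}J$ and $KP_E^{(A)}=P_E^{(A)}K$ for every Borel set $E$. Hence each $P_E^{(A)}$ leaves $\sH_{JK}$ invariant (for $u\in\sH_{JK}$ one has $J(P_E^{(A)}u)=P_E^{(A)}(uj)=(P_E^{(A)}u)j$, and likewise for $K$), so $(P_E^{(A)})_{JK}$ is a well-defined bounded $\bR$-linear operator by Proposition \ref{lemmaextensreal}(a); it is an orthogonal projector because $\widetilde{(\cdot)}$ preserves idempotency and selfadjointness, by Proposition \ref{lemmaextensreal}(c)(viii).

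For (a) I would first check that $E\mapsto (P_E^{(A)})_{JK}$ is genuinely a PVM on $\sH_{JK}$: normalization $(P_\bR^{(A)})_{JK}=I$ and multiplicativity $(P_E^{(A)})_{JK}(P_F^{(A)})_{JK}=(P_{E\cap F}^{(A)})_{JK}$ come from Proposition \ref{lemmaextensreal}(c)(iv), while $\sigma$-additivity in the strong operator topology is inherited from that of $P^{(A)}$ through Proposition \ref{lemmaextensreal}(d), which transfers strong convergence of $J,K$-commuting sequences to the restriction. To identify this PVM with $P^{(A_{JK})}$ I would invoke the uniqueness statement of Lemma \ref{lemmast}: the domain $D(A_{JK})=D(A)\cap\sH_{JK}$ is dense in $\sH_{JK}$ by Proposition \ref{lemmaextensreal}(c)(vii), $A_{JK}$ is selfadjoint by (c)(viii), and for $u\in D(A_{JK})$ one has $A_{JK}u=Au$ and, since $P_E^{(A)}u\in\sH_{JK}$, $\langle u|Au\rangle=\int_\bR\lambda\,d\mu^{P^{(A)}}_u$ with $\mu^{P^{(A)}}_u(E)=\langle u|P_E^{(A)}u\rangle=\langle u|(P_E^{(A)})_{JK}u\rangle_{JK}$. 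This is exactly the weak spectral relation characterizing the PVM of the real selfadjoint operator $A_{JK}$, so Lemma \ref{lemmast} yields $P^{(A_{JK})}_E=(P_E^{(A)})_{JK}$. That the correspondence is an isomorphism of $\sigma$-complete Boolean lattices then follows because $\widetilde{(\cdot)}$ is a bijection preserving $\wedge,\vee,{}^\perp$ (i.e. products, sums and adjoints of the associated projectors) and carries strong limits to strong limits by Proposition \ref{lemmaextensreal}(d), exactly as in the complex case of Proposition \ref{operatorfunctioncompl}.

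Parts (b) and (c) then follow quickly. For (b), $f(A)$ commutes with $J$ and $K$ because it is a strong limit of simple functions $\sum_l c_l P^{(A)}_{F_l}$ of the PVM, each commuting with $J,K$; restricting and again using the weak relation $\langle u|f(A)u\rangle=\int_\bR f\,d\mu^{P^{(A)}}_u=\int_\bR f\,d\mu^{P^{(A_{JK})}}_u$ together with part (a) identifies $f(A)_{JK}$ and $f(A_{JK})$ as two real selfadjoint operators with the same quadratic form on a common core, hence equal. For (c), by part (a) one has $A\geq 0$ iff $P^{(A)}_{(-\infty,0)}=0$; since $\widetilde{(\cdot)}$ is injective with $\widetilde{0}=0$, this holds iff $(P^{(A)}_{(-\infty,0)})_{JK}=P^{(A_{JK})}_{(-\infty,0)}=0$, i.e. iff $A_{JK}\geq 0$.

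For (d) I would exploit that $\widetilde{(\cdot)}$ preserves products, the identity and invertibility, so that for real $\lambda$ the operator $\Delta_\lambda(A):=A^2-2\lambda A+\lambda^2 I$ restricts to $(\Delta_\lambda(A))_{JK}=(A_{JK}-\lambda I)^2$ and is invertible iff $A_{JK}-\lambda I$ is (the inverse of a bounded operator commuting with $J,K$ again commutes with $J,K$, so it too restricts). Since for selfadjoint $A$ the spherical spectrum is real, this gives $\sigma_S(A)=\sigma(A_{JK})$; the point and continuous parts are obtained by the same reasoning applied to $P^{(A)}_{\{\lambda\}}$ and to the PVM-characterization of the continuous spectrum, using part (a) and the corresponding real-case results of \cite{MO1}. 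The main obstacle throughout is part (a): securing $\sigma$-additivity of the restricted family in the strong topology and pinning it down via the uniqueness of Lemma \ref{lemmast}. Once this is in hand the remaining items reduce to the formal properties of $\widetilde{(\cdot)}$, the only other delicate point being the reconciliation of the quaternionic spherical-spectrum definition of \cite{GMP1,GMP2} with the ordinary real spectrum through the operators $\Delta_\lambda$.
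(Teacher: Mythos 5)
Your proof is correct and follows essentially the same route as the paper's: the paper proves this proposition by declaring it analogous to Proposition \ref{operatorfunctioncompl}, whose appendix proof likewise obtains commutation of $J,K$ with the PVM from Lemma \ref{commst}, identifies the restricted PVM with $P^{(A_{JK})}$ through the uniqueness statement of Lemma \ref{lemmast}, and then derives (b)--(d) from that identification. The only cosmetic deviations are in (c), where the paper uses the polar-decomposition characterization $A\geq 0 \Leftrightarrow A=|A|$ together with $|A|_{JK}=|A_{JK}|$ rather than the vanishing of $P^{(A)}_{(-\infty,0)}$, and in (d), where the paper reads off all three spectral parts directly from the support and atoms of the PVM (Theorem 6.6 of \cite{GMP2}) instead of passing through the operators $\Delta_\lambda$.
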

\begin{proof}
The proof is analogous to the one carried out for Proposition \ref{operatorfunctioncompl}.
\end{proof}
\noindent As we did for the {\em complex} case,  we can now prove the following result. 

\begin{proposition}\label{quaternidentif}
Let $\gR$ a von Neumann algebra over a quaternionic Hilbert space $\sH$ and $J,K\in\gR'$ be complex structures such that $JK=-KJ$. Then $\gR_{JK}:=\{A_{JK}\:|\: A\in\gR\}$ 
is a von Neumann algebra over $\sH_{JK}$.
\end{proposition}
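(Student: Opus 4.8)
The plan is to follow, essentially verbatim in spirit, the argument used for the complex case in Proposition \ref{complexidentif}, replacing the two-term splitting of Proposition \ref{propHJ}(b) by the four-term splitting of Proposition \ref{realdecomp}(b) and the field $\bC_j$ by $\bR$. Thus I would first verify that $\gR_{JK}$ is a unital $^*$-subalgebra of the real $C^*$-algebra $\gB(\sH_{JK})$, and then prove that it is strongly closed; by the real-Hilbert-space version of Corollary \ref{cor1} (the real double commutant theorem, Thm 2.26 in \cite{MO1}) this suffices to conclude that $\gR_{JK}$ is a von Neumann algebra over $\sH_{JK}$. Note first that, since $J,K\in\gR'$, every $A\in\gR$ commutes with $J$ and $K$, so that $A_{JK}$ is a well-defined bounded $\bR$-linear operator on $\sH_{JK}$ by (a) of Proposition \ref{lemmaextensreal}.

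For the algebraic part I would use the compatibility of the restriction $A\mapsto A_{JK}$ with the algebraic operations, encoded in (c)(ii)--(iv) and (c)(viii) of Proposition \ref{lemmaextensreal}. Since $I\in\gR$ and $I_{JK}=I$, the identity lies in $\gR_{JK}$; for $A,B\in\gR$ and $a\in\bR$ one has $A_{JK}+B_{JK}=(A+B)_{JK}$, $aA_{JK}=(aA)_{JK}$ and $A_{JK}B_{JK}=(AB)_{JK}$, all of which again belong to $\gR_{JK}$. Here the scalars are real, so, unlike the complex case, $\gR_{JK}$ is automatically closed under scalar multiplication and no analogue of the operator $J_J$ used in Proposition \ref{complexidentif} is required. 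Finally (c)(viii) yields $(A_{JK})^*=(A^*)_{JK}\in\gR_{JK}$, so $\gR_{JK}$ is indeed a unital $^*$-subalgebra of $\gB(\sH_{JK})$.

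For strong closure I would take a sequence $(C_n)_{n\in\bN}\subset\gR_{JK}$ converging strongly to some $C\in\gB(\sH_{JK})$ and write $C_n=(K_n)_{JK}$ with $K_n\in\gR$. By (c)(vi) of Proposition \ref{lemmaextensreal} the unique $\bH$-linear extension $K:=\widetilde{C}$ lies in $\gB(\sH)$ and satisfies $K_{JK}=C$. I then claim that $K_n\to K$ strongly in $\sH$: given $x\in\sH$, decompose it as $x=x_1+x_2 i+x_3 j+x_4 k$ with $x_\ell\in\sH_{JK}$ by Proposition \ref{realdecomp}(b). Since $K_n$ and $K$ are $\bH$-linear and restrict to $C_n$ and $C$ on $\sH_{JK}$, the vectors $K_n x_\ell-K x_\ell=C_n x_\ell-C x_\ell$ lie in $\sH_{JK}$ and tend to $0$, so the additive norm identity (\ref{realdecompnorm}) gives
\begin{equation*}
\|K_n x-Kx\|^2=\sum_{\ell=1}^4\|K_n x_\ell-K x_\ell\|^2\longrightarrow 0\:.
\end{equation*}
Hence $\gR\ni K_n\to K$ strongly, and since $\gR$ is strongly closed we obtain $K\in\gR$ and therefore $C=K_{JK}\in\gR_{JK}$.

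The argument involves no genuine obstacle, the only point deserving attention being the propagation of convergence from the real subspace to the whole space: strong convergence of $C_n$ is data only on $\sH_{JK}$, and one must recover strong convergence of $K_n$ on all of $\sH$. This is exactly where the four-fold decomposition of Proposition \ref{realdecomp}(b) and its orthogonal-in-norm identity (\ref{realdecompnorm}) are indispensable, and, once they are available, the passage is entirely routine.
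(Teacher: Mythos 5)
Your proof is correct and follows essentially the same route as the paper, which establishes the $^*$-subalgebra structure via Proposition \ref{lemmaextensreal} and then declares the strong-closure argument to be identical to that of Proposition \ref{complexidentif}; you have simply written out explicitly the adaptation the paper leaves implicit, using the four-fold decomposition of Proposition \ref{realdecomp}(b) and the norm identity (\ref{realdecompnorm}) in place of the two-fold complex splitting.
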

\begin{proof}
$\gR_{JK}$ is clearly a unital sub $^*$-algebra of $\gB(\sH_{JK})$ thanks to Prop. \ref{lemmaextensreal} and the fact that only real combinations are to be taken into account 
in the algebra $\gB(\sH_{JK})$ as well as in the algebra $\gB(\sH)$. The algebra $\gR_{JK}$ turns out to be closed in the strong topology on $\sH_{JK}$, the proof
 being identical to the one carried out in Prop. \ref{complexidentif}. 
\end{proof}

\section{Lie-Group representations in quaternionic Hilbert spaces}
This section is devoted to generalize to the quaternionic Hilbert space case some well-known results of representation theory especially of Lie groups in terms of unitary operators.
\subsection{G\r{a}rding domain and Lie algebra representation}
As usual,  $C_0^\infty(G)$ denotes the vector space  of real-valued infinitely differentiable compactly supported functions on a given (real finite-dimensional) Lie group $G$ and we consider 
 a strongly-continuous unitary representation $U$ of $G$ on a quaternionic Hilbert space $\sH$.
If $x\in\sH$ and $f\in C_0^\infty(G)$ we define 
\begin{equation}\label{gardvector}
x[f]=\int_G f(g) U_gx\, dg
\end{equation}
where $dg$ is the left-invariant Haar measure on $G$, as the unique vector in $\sH$ such that $\langle y|x[f]\rangle = \int_G f(g)\langle y|U_gx\rangle\, dg$ for all $y\in\sH$ - the existence and uniqueness of such a 
vector is the content of the Riesz representation theorem which holds true also on quaternionic Hilbert space, as described in \cite{GMP1}.

\noindent Next we may extend to the quaternionic Hilbert space case a well-known definition of the real and complex cases (see, e.g., \cite{MO1} for the real case).
\begin{definition}\label{defGarding}{\em Given a Lie group $G$ and a  strongly-continuous unitary representation $U$ of $G$ on a quaternionic Hilbert space $\sH$,
the $\bH$-linear subspace of $\sH$ generated by all the vectors $x[f]$ (\ref{gardvector}) is called the \textbf{G\r{a}rding Domain} associated with $U$ and denoted by $\cD_G^{(U)}$.}
\end{definition}
\begin{remark}\label{gardingrealcombin}{\em
Actually $\cD_G^{(U)}$ coincides with the {\em real} span of vectors $x[f]$. Indeed, 
\begin{equation*}
\begin{split}
&\langle y|x[f]q\rangle=\langle y|x[f]\rangle q=\left(\int_G f(g) \langle y|U_g x\rangle\, dg \right)q= \int_G f(g) \langle y|U_g x\rangle q\, dg=\\
&=\int_G f(g) \langle y|(U_g x)q\rangle \, dg=\int_G f(g) \langle y|U_g(xq)\rangle q\, dg=\langle y|(xq)[f]\rangle
\end{split}
\end{equation*}
Since this holds for any $y\in\sH$ we have that $x[f]q=(xq)[f]$.
}
\end{remark}
\noindent We can immediately prove the following result.  Due to (d) below, the elements of $D_G^{(U)}$ are also called the \textbf{smooth} vectors of $U$.
\begin{proposition}\label{propgarding}
Referring to Def.\ref{defGarding}, the following facts hold true. 
\begin{enumerate}[(a)]
\item Denoting by $U_\bR$ and $U_{\bC_j}$ the map $U$ respectively viewed as strongly-continuous unitary representations on $\sH_\bR$ and $\sH_{\bC_j}$, it holds 
$\cD_G^{(U)}=\cD_G^{(U_\bR)}=\cD_G^{(U_{\bC_j})}$
\item $\cD_G^{(U)}$ is dense in $\sH$.
\item $U_g(\cD_G^{(U)})\subset \cD_G^{(U)}$ for all $g\in G$.
\item $x\in D_G^{(U)}$ if and only if the map $G \ni g\mapsto U_gx$ is smooth at every point $g\in G$ with respect to the smooth atlas of $G$.
\end{enumerate}
\end{proposition}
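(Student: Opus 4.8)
The plan is to prove all four items by systematically reducing statements about $\sH$ to the real and complex structures $\sH_\bR$ and $\sH_{\bC_j}$, for which everything is either already known or elementary. The crux is (a), from which (d) will then follow almost for free.

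For (a), I would first argue that the generating vectors $x[f]$ produced by (\ref{gardvector}) are literally the \emph{same} vectors of $\sH$ whether one reads the defining weak integral in $\sH$, in $\sH_\bR$, or in $\sH_{\bC_j}$. For $\sH_\bR$ this is immediate: taking real parts in the defining relation gives $(y|x[f]) = \int_G f(g)(y|U_g x)\, dg$ for all $y$, which is exactly the relation characterising the G\r{a}rding vector of $U_\bR$, so uniqueness of the Riesz representative forces the two vectors to coincide. For $\sH_{\bC_j}$ one uses the formula $(x|y)_j = (x|y) - (x|\cJ y)j$ from Prop.\ref{propHC} together with the fact that each $U_g$ is $\bH$-linear, hence commutes with $\cJ$; a short computation then shows the quaternionic $x[f]$ satisfies $(y|x[f])_j = \int_G f(g)(y|U_g x)_j\, dg$, so again the vectors agree. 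Once the generators coincide, the three domains coincide as well: by Remark \ref{gardingrealcombin} the $\bH$-span $\cD_G^{(U)}$ already equals the real span of the $x[f]$, which is by definition $\cD_G^{(U_\bR)}$; and since $x[f]\,j = (xj)[f]$ is again a generator, the generating set is stable under right multiplication by $j$, so its real span is automatically a $\bC_j$-subspace, giving $\cD_G^{(U)} = \cD_G^{(U_{\bC_j})}$.

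For (b) and (c) I would argue directly in $\sH$. Density (b) follows from an approximate identity: pick $f_n \in C_0^\infty(G)$ nonnegative with $\int_G f_n\, dg = 1$ and support shrinking to the identity $e$. Then, using $\|v\| = \sup_{\|y\|=1}|\langle y|v\rangle|$ (valid in quaternionic Hilbert spaces by Cauchy-Schwartz) and $\int_G f_n\,dg=1$, one estimates
\[
\|x[f_n]-x\| \le \int_G f_n(g)\,\|U_g x - x\|\, dg \longrightarrow 0
\]
by strong continuity of $U$, so every $x$ is a limit of G\r{a}rding vectors; alternatively one may simply invoke (a) and the corresponding real-case statement. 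For invariance (c), pulling the bounded operator $U_h$ inside the weak integral and using left-invariance of the Haar measure gives $U_h\, x[f] = \int_G f(h^{-1}g)\,U_g x\, dg = x[L_h f]$, where $(L_h f)(g) := f(h^{-1}g)$ still lies in $C_0^\infty(G)$; hence $U_h$ sends each generator to a generator and, being $\bR$-linear, maps $\cD_G^{(U)}$ into itself.

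Finally, (d) I would obtain by reduction rather than by a fresh computation. Smoothness of $G \ni g \mapsto U_g x$ is a purely metric notion, and since the norms of $\sH$ and $\sH_\bR$ coincide (Prop.\ref{propHR}(a)) while $U_g x$ is the same vector in both, the curve is smooth in $\sH$ if and only if it is smooth in $\sH_\bR$. Combining this with the identity $\cD_G^{(U)} = \cD_G^{(U_\bR)}$ from (a), the equivalence in (d) becomes exactly the known real-Hilbert-space statement of \cite{MO1}. The genuine obstacle is the forward direction hidden in that real result, namely that each G\r{a}rding vector is smooth, which is handled by differentiating $U_h\,x[f] = x[L_h f]$ under the integral sign and transferring the derivatives in $h$ onto the smooth compactly supported $f$; the reduction via (a) is precisely what lets us avoid redoing this analysis in the quaternionic setting.
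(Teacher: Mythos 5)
Your proof is correct and follows essentially the same route as the paper: establish that the G\r{a}rding generators $x[f]$ coincide across $\sH$, $\sH_\bR$ and $\sH_{\bC_j}$ via uniqueness of the Riesz representative, deduce the equality of the three domains using $x[f]q=(xq)[f]$, and reduce (b), (c), (d) to the known real-Hilbert-space theory. The only cosmetic differences are that you verify the $\sH_{\bC_j}$ case and items (b), (c) by direct computation where the paper simply cites the corresponding results of \cite{MO1}; both are fine.
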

\begin{proof}
We prove (a) and (d). Items (b) and (c) immediately follow from (a) and the G\r{a}rding theory on real Hilbert spaces (e.g., see \cite{MO1} Thm 3.6). Let $x\in\sH_\bR=\sH$ and $f\in C_0^\infty(G)$,  
 $x(f)$ is the vector (\ref{gardvector}) but defined with respect to the real Hilbert space structure of $\sH_\bR$ while we keep the notation $x[f]$ for the quaternionic case. We
  intend to prove that $x(f)=x[f]$.
 $x(f)$ is the only vector such that $(y|x(f))=\int_G f(g) (y|U_gx)\, dg$ for all $y\in\sH_\bR$, where $(\cdot|\cdot)=Re\langle\cdot|\cdot\rangle$.  $x[f]$, instead, is  the only vector such that
$\langle y|x[f]\rangle = \int_G f(g)\langle y|U_gx\rangle\, dg$ for all $y\in\sH$.
Taking the real part of both sides we get
$Re\langle y|x[f]\rangle=\int_G f(g) Re\langle y|U_gx\rangle\, dg$ for all $y\in\sH=\sH_\bR$. By definition of $x(f)$ it follows $x[f]=x(f)$. This clearly proves 
that $\cD_G^{(U_\bR)}\subset\cD_G^{(U)}$. The converse inclusion holds because 
$(x[f])q=(xq)[f]$ for every $q\in\bH$ and thus $x[f]j,x[f]k,x[f]jk$ belongs to $\cD_G^{(U_\bR)}$. We have established that $\cD_G^{(U_\bR)}=\cD_G^{(U)}$.
Since  $\sH_{\bC_j}=(\sH_\bR)_\cJ$, (b) Thm 3.5 in \cite{MO1} proves $\cD_G^{(U_\bR)}=\cD_G^{(U_{\bC_j})}$.\\
(d) The notion of differentiability only uses the norm and $\bR$-linearity of $\sH$, hence $G \ni g\mapsto U_gx$ is smooth with respect to $\sH$ if 
and only if is smooth  with respect to the underlying real structure $\sH_\bR$. Since the thesis holds for real Hilbert spaces, point (a) concludes the proof.
\end{proof}
 \noindent
Let us pass to the representation theory of the   the Lie algebra $\gg$ associated to a Lie group $G$. Take ${\bf A}\in\gg$ and consider the 
one-parameter subgroup $\bR\ni t\mapsto \exp(t{\bf A})\in G$. Thanks to Theorem \ref{stonetheorem} there exists a unique antiselfadjoint 
operator $A$ on $\sH$ such that $U_{\exp(t{\bf A})}=e^{tA}$ and $A$ equals the antiselfadjoint generators of $t\mapsto U_{\exp(t{\bf A})}$ 
when interpreted as acting on $\sH_\bR$ or $\sH_{\bC_j}$. Thanks to Prop.\ref{propgarding} (a) and the properties of G\r{a}rding domain in real
 Hilbert spaces, we easily get the following result.
\begin{proposition}\label{algebrarepr} Consider a Lie group $G$ with Lie algebra $\gg$, and a  strongly-continuous unitary representation $U$ of $G$ 
on a quaternionic Hilbert space $\sH$. The following facts are valid.
\begin{enumerate}[(a)]
\item If ${\bf A}\in\gg$ and $U_{\exp(t{\bf A})}=e^{tA}$, then $A\left(\cD_G^{(U)}\right)\subset\cD_G^{(U)}$ and $\cD_G^{(U)}$ is a core for $A$, i.e., $\overline{A|_{\cD_G^{(U)}}}=A$. 
\item The map
$\gg\ni {\bf A}\mapsto u({\bf A}):=A|_{\cD_G^{(U)}}\in\mathcal{L}(\cD_G^{(U)})$
is a Lie algebra homomorphism.
\item If $u_\bR$ and $u_{\bC_j}$ denotes the Lie algebra homomorphism associated with $U$ as above in $\sH_\bR$ and $\sH_{\bC_j}$  respectively, 
it holds $u({\bf A})=u_\bR({\bf A})=u_{\bC_j}({\bf A})$. In particular the antiselfadjoint generators related to ${\bf A}$ defined on $\sH,\sH_\bR,\sH_{\bC_j}$ are equal. 
\end{enumerate}
\end{proposition}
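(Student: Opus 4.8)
The plan is to reduce every assertion to the already-established statement for the real Hilbert space $\sH_\bR$, exploiting the two facts recorded just before the proposition. First, by Stone's theorem (Theorem \ref{stonetheorem}) the antiselfadjoint generator $A$ of $t\mapsto U_{\exp(t{\bf A})}$ is literally the same operator whether $U$ is read on $\sH$, on $\sH_\bR$, or on $\sH_{\bC_j}$. Second, by (a) of Proposition \ref{propgarding} the three G\r{a}rding domains coincide, $\cD_G^{(U)}=\cD_G^{(U_\bR)}=\cD_G^{(U_{\bC_j})}$. I would begin with (c), which is then immediate: since both the generator $A$ and the common domain $\cD_G^{(U)}$ are insensitive to the chosen structure, the restrictions $u({\bf A})=A|_{\cD_G^{(U)}}$, $u_\bR({\bf A})$ and $u_{\bC_j}({\bf A})$ are one and the same map, giving $u=u_\bR=u_{\bC_j}$ and, as a byproduct, the coincidence of the generators.

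For (a) I would invoke the real-case G\r{a}rding theory (Thm 3.6 in \cite{MO1}) applied to the representation $U_\bR$ on $\sH_\bR$: it yields $A(\cD_G^{(U_\bR)})\subset\cD_G^{(U_\bR)}$ and that $\cD_G^{(U_\bR)}$ is a core for $A$ in $\sH_\bR$. The invariance $A(\cD_G^{(U)})\subset\cD_G^{(U)}$ is then a mere rewriting via $\cD_G^{(U)}=\cD_G^{(U_\bR)}$ and the fact that $A$ is the same operator. For the core property one passes from the closure taken in $\sH_\bR$ to the closure taken in $\sH$: the restriction $A|_{\cD_G^{(U)}}$ is $\bH$-linear (being the restriction of the $\bH$-linear $A$ to the invariant $\bH$-subspace $\cD_G^{(U)}$) and is closable in $\sH$, since it is dominated by the closed operator $A$; hence item (v) of (e) in Proposition \ref{propHR} guarantees that its $\sH$- and $\sH_\bR$-closures coincide. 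As the latter is $A$ by the real-case core property, so is the former, i.e. $\overline{A|_{\cD_G^{(U)}}}=A$ in $\sH$.

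For (b) the same reduction does all the work. The map $u_\bR$ is known to be a Lie algebra homomorphism on $\sH_\bR$, that is, it is $\bR$-linear and satisfies $u_\bR([{\bf A},{\bf B}])=u_\bR({\bf A})u_\bR({\bf B})-u_\bR({\bf B})u_\bR({\bf A})$ as an identity of operators on $\cD_G^{(U_\bR)}$. By (c) each $u({\bf A})$ agrees with $u_\bR({\bf A})$ as a map of vectors on the common invariant domain $\cD_G^{(U)}$, and the algebraic operations entering the homomorphism property — real linear combinations and operator composition, hence the commutator, restricted to that invariant domain — are computed by identical actions on vectors whether the ambient structure is $\sH$ or $\sH_\bR$. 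Transporting the identity verbatim shows that $u$ is a Lie algebra homomorphism into $\mathcal{L}(\cD_G^{(U)})$; only real linear combinations appear, consistently with $\gg$ being a real Lie algebra and with the fact that only real combinations of $\bH$-linear operators are well defined.

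The only step that is not a pure transcription is the closure transfer in (a), and even that is disposed of cleanly once one notes that $A|_{\cD_G^{(U)}}$ is genuinely $\bH$-linear, so that Proposition \ref{propHR}(e)(v) applies: the potential subtlety that closures with respect to the real and quaternionic norms might differ is precisely what that item rules out, the two norms being equal by (a) of Proposition \ref{propHR}. Everything else is forced by the identification of generators (Theorem \ref{stonetheorem}) and of G\r{a}rding domains (Proposition \ref{propgarding}(a)), so I expect no genuine obstacle beyond this bookkeeping.
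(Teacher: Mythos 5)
Your proof is correct and follows essentially the same route as the paper's: reduction of all three items to the real Hilbert space case via the coincidence of the generators (Theorem \ref{stonetheorem}) and of the G\r{a}rding domains (Proposition \ref{propgarding}(a)), with the real-case G\r{a}rding theory of \cite{MO1} supplying invariance, the core property and the homomorphism identity, and Proposition \ref{propHR}(e) handling the transfer of closures. The only difference is cosmetic: you spell out the closure-transfer step and the transport of the commutator identity in more detail than the paper does.
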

\begin{proof} The antiselfadjoint generators related to ${\bf A}$ defined on $\sH,\sH_\bR,\sH_\bC$ are equal due to Thm\ref{stonetheorem}. Since also $\cD_G^{(U)}$ is
 independent from the Hilbert space (Prop.\ref{propgarding}), exploiting  Th 3.6 (c) in \cite{MO1} we get $A(\cD_G^{(U)})\subset \cD_G^{(U)}$ and thus $u$ is independent as well. We have proved (c).
Now, let $\mathcal{L}_\bR(\cD_G^{(U)})$ be the Lie algebra of real-linear operators defined on $\cD_G^{(U)}$, of course 
$\mathcal{L}(\cD_G^{(U)})\subset \mathcal{L}_\bR(\cD_G^{(U)})$. The map \beq \gp\ni {\bf A}\mapsto u({\bf A}):=A|_{\cD_G^{(U)}}\in\mathcal{L}_\bR(\cD_G^{(U)})\label{umap}\eeq
is a Lie algebra homomorphism thanks to 
Thm 3.6 (d) in \cite{MO1} applied to $\sH_\bR$. Since actually $A|_{\cD_G^{(U)}}\in\mathcal{L}(\cD_G^{(U)})$, (b) is true. Finally, exploiting Thm (f) 3.6 in \cite{MO1} and (e)-(v) Prop. \ref{propHR} we conclude the proof of (a).
\end{proof}
\noindent As we did in the real theory, we can consider the real associative unital {\bf universal enveloping algebra} $E_\gp$ of the Lie algebra $\gp$
(e.g., see \cite{V} and the quick account in Appendix E of \cite{MO1}). By construction  $\gp \subset E_\gp$ as Lie subalgebra and, if $\circ$ denotes 
the product of the algebra whch extends that of $\gg$, a generic element of $E_\gp$ takes the form
\begin{equation}\label{MO1dev}
{\bf M}=c_01+\sum_{k=1}^N \sum_{j=1}^{N_k}c_{jk}{\bf A}_{j1}\circ...\circ{\bf A}_{jk}
\end{equation}
for some $N, N_k\in\bN$, $c_0,c_{jk}\in\bR$ and ${\bf A}_{jm}\in\gg$.
We also assume to  endow $E_\gp$  with the standard real involution $\null^+$ completely defined by requiring
\begin{equation}\label{Ptilde}
\left(c_01+\sum_{k=1}^N \sum_{j=1}^{N_k}c_{jk}{\bf A}_{j1}\circ...\circ{\bf A}_{jk}\right)^+=c_01+\sum_{k=1}^N \sum_{j=1}^{N_k}c_{jk}(-1)^k{\bf A}_{jk}\circ...\circ {\bf A}_{j1}\:.
\end{equation}
An element ${\bf M}\in\gp$ is said to be {\bf symmetric} if ${\bf M}^+={\bf M}$, so that $\gp$ is made of antisymmetric elements.
An important technical role is played by  the {\bf Nelson elements} of  $E_\gg$ which are those of the form \beq{\bf N}:=\sum_{i=1}^n{\bf X}_i\circ{\bf X}_i\:,\label{NelsonE}\eeq
where $\{{\bf X}_1,\dots{\bf X}_n\}$ is any basis of $\gg$. Notice that they are symmetric by construction.\\
Exactly as above we can now take advantage of the real theory and state the following theorem which has the same proof as the corresponding items of Thm 3.6 and Prop.3.8 in \cite{MO1}.
\begin{theorem}\label{thmuext}
Referring to Proposition \ref{algebrarepr} the following facts hold
\begin{enumerate}
\item The Lie algebra homomorphism $u$ defined in (\ref{umap}) uniquely extends to a real unital associative algebra representation of the universal enveloping algebra $E_\gg$ defined on $\cD_G^{(U)}$. More precisely, if ${\bf M}$ is taken as in (\ref{MO1dev}), then it holds
\begin{equation}\label{defugen}
u({\bf M}) =c_0I|_{\cD_G^{(U)}}+\sum_{k=1}^N\sum_{j=1}^{N_k}c_{jk}u({\bf A}_{j1})\cdots u({\bf A}_{jk})
\end{equation}
It also holds $u({\bf M}^+)\subset u({\bf M})^*$, in particular $u({\bf M})$ is symmetric if ${\bf M}^+={\bf M}$.
\item Suppose that ${\bf M}\in E_\gg$ satisfies both ${\bf M}={\bf M}^+$ and $[{\bf M},{\bf N}]_\gg=0$ for some Nelson element ${\bf N}$, then $u({\bf M})$ 
is essentially self-adjoint. In particular $u({\bf N})$ is always essentially self-adjoint.
\item if $B\in\gB(\sH)$ the following facts are equivalent
\begin{enumerate}[(i)]
\item $Bu({\bf A})\subset u({\bf A})B$ for every ${\bf A}\in\gg$
\item $B\overline{u({\bf A})}\subset \overline{u({\bf A})}B$ for every ${\bf A}\in\gg$
\item $BU_g=U_gB$ for every $g\in G$
\end{enumerate}
If one of these conditions is satisfies, then $B(\cD_G^{(U)})\subset \cD_G^{(U)}$
\end{enumerate}
\end{theorem}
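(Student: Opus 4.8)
The plan is to derive every assertion from the corresponding statement on the real Hilbert space $\sH_\bR$, where the results of \cite{MO1} (Thm 3.6 and Prop.3.8) are already available, and then transport them back to $\sH$ by means of the coincidences established above: the G\r{a}rding domain is literally the same subspace, $\cD_G^{(U)}=\cD_G^{(U_\bR)}$ (Prop.\ref{propgarding}(a)), the antiselfadjoint generators and the homomorphism $u$ agree with their real counterparts (Prop.\ref{algebrarepr}), and the adjoint, closure and (essential) selfadjointness operations agree in $\sH$ and $\sH_\bR$ (Prop.\ref{propHR}(e),(f)). For item 1, I would first note that by Prop.\ref{algebrarepr}(a) each $u({\bf A})$ maps the common invariant domain $\cD_G^{(U)}$ into itself, so the $u({\bf A})$ live in the associative algebra $\mathcal{L}(\cD_G^{(U)})$; since ${\bf A}\mapsto u({\bf A})$ is a Lie algebra homomorphism into this associative algebra (Prop.\ref{algebrarepr}(b)), the universal property of $E_\gg$ yields a unique extension to a unital associative algebra representation, necessarily given by (\ref{defugen}). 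The relation $u({\bf M}^+)\subset u({\bf M})^*$ then follows from the real statement applied on $\sH_\bR$ together with Prop.\ref{propHR}(e)(iii), the adjoints computed in $\sH$ and $\sH_\bR$ being the same operator; specializing to ${\bf M}={\bf M}^+$ gives symmetry.

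For item 2, the substantive input is Nelson's theorem. On $\sH_\bR$, Thm 3.6 of \cite{MO1} guarantees that $u_\bR({\bf M})$ is essentially selfadjoint whenever ${\bf M}={\bf M}^+$ commutes with a Nelson element ${\bf N}$, and that $u_\bR({\bf N})$ is always essentially selfadjoint. Because $u({\bf M})=u_\bR({\bf M})$ as the same real-linear operator on the same domain (Prop.\ref{algebrarepr}(c)) and $u({\bf M})$ is $\bH$-linear, hence commutes with $\cJ,\cK$, Prop.\ref{propHR}(f)(iii) transfers essential selfadjointness from $\sH_\bR$ to $\sH$ verbatim.

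For item 3, the implication (i)$\Rightarrow$(ii) is the standard closure argument: if $Bu({\bf A})\subset u({\bf A})B$ then $B$ already preserves $\cD_G^{(U)}$, and approximating $x\in D(\overline{u({\bf A})})$ by a sequence in the core while using boundedness of $B$ and closedness of $\overline{u({\bf A})}$ gives $B\overline{u({\bf A})}\subset \overline{u({\bf A})}B$; the reverse implication is immediate by restriction. The equivalence (ii)$\Leftrightarrow$(iii) is where the analytic content enters: since $\overline{u({\bf A})}=A$ is antiselfadjoint (Prop.\ref{algebrarepr}(a)) with $U_{\exp(t{\bf A})}=e^{tA}$, Prop.\ref{LemmaCOMM}(a), a direct consequence of the quaternionic Stone theorem (Thm.\ref{stonetheorem}), gives $BA\subset AB$ if and only if $Be^{tA}=e^{tA}B$ for all $t$; letting ${\bf A}$ range over $\gg$ and using that the exponentials generate $G$ (its identity component) yields commutation with every $U_g$. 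The final invariance $B(\cD_G^{(U)})\subset\cD_G^{(U)}$ then follows from (iii) by commuting $B$ through the smoothing integral (\ref{gardvector}), namely $Bx[f]=(Bx)[f]$, exactly as in the real case.

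I expect the main difficulty to be not any isolated step but the disciplined verification that the reduction to $\sH_\bR$ is legitimate at each point, i.e.\ that $u({\bf M})$, its adjoint, and its closure really are the same real-linear objects in $\sH$ and in $\sH_\bR$, which is precisely what Prop.\ref{propHR}(e),(f) and Prop.\ref{algebrarepr}(c) are designed to supply, together with the Stone-theorem step in (ii)$\Leftrightarrow$(iii); the only genuinely new ingredient beyond bookkeeping is the implicit use of connectedness of $G$ when passing from one-parameter subgroups to the whole group, which should be flagged explicitly.
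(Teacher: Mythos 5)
Your proposal is correct and takes essentially the same approach as the paper: the paper's own "proof" consists precisely of the remark that, thanks to the identifications $\cD_G^{(U)}=\cD_G^{(U_\bR)}$, $u=u_\bR$ and the coincidence of adjoints, closures and (essential) selfadjointness in $\sH$ and $\sH_\bR$, the statement has the same proof as the corresponding items of Thm 3.6 and Prop.\ 3.8 in \cite{MO1}. Your write-up merely makes that reduction explicit (and your flagging of the connectedness of $G$ in passing from one-parameter subgroups to all of $G$ in item 3 is a legitimate point the paper leaves tacit).
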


\subsection{Analytic vectors of unitary representations in quaternionic Hilbert spaces}

There is another important subspace of $\sH$ made of ``good vectors'' for a strongly-continuous unitary representation $U$ of a Lie group $G$, even better than  $\cD_G^{(U)}$. 
To introduce this space we need a definition.
A function $f:\bR^n\supset U\rightarrow \sH$ is called \textbf{real analytic} at $x_0\in U$ if there exists a neighbourhood $V\subset U$ of $x_0$ where
 the function $f$ can be expanded in power series
\begin{equation}
f(x)=\sum_{|\alpha|= n, n=0}^\infty (x-x_0)^\alpha v_\alpha,\ x\in V
\end{equation}
for suitable $v_\alpha\in\sH$ for every multi-index $\alpha\in\bN^n$.

\begin{definition}{\em 
Let $\sH$ be a quaternionic Hilbert space and $G\ni g\mapsto U_g$ a unitary strongly-continuous representation on $\sH$ of the Lie group $G$. 
A vector $x\in \sH$ is said to be \textbf{analytic} for $U$ if the function $g\mapsto U_gx$ is real analytic at every point $g\in G$, referring to the 
analytic atlas of $G$. The linear subspace of $H$ made of these vectors is called the \textbf{Nelson space} of the representation and denoted by $\cD_N^{(U)}$.}
\end{definition}

\begin{remark}{\em Adopting notations as in Prop.\ref{propgarding}, 
it is evident from the definition that $\cD_N^{(U)}=\cD_N^{(U_\bR)}=\cD_N^{(U)}$
}
\end{remark}
\noindent There exists another related definition of \textit{analyticity} for vectors.  Let $A: D(A) \to \sH$ an operator in a quaternionic Hilbert space $\sH$, we say that a vector
 $x \in \bigcap_{n=0}^{+\infty} D(A^n)$
is {\bf analytic} for $A$, if there exists $t_{x}>0$ such that 
\begin{equation}
 \sum_{n=0}^{+\infty} \frac{t_x^n}{n!}||A^nx|| <+\infty \label{defanalyticvector}
\end{equation}
From the elementary theory of series of powers, we know that  $t$ above can be replaced for every $z\in \bC$ with $|z|< t_x$  obtaining an absolutely convergent series.

\begin{remark}\label{remanalitic}
{\em It should be evident  that the analytic vectors for $A$ form a subspace of $D(A)$. Moreover a vector $x$ is analytic for $A$ on $\sH$ if and only if it is analytic for $A$ on $\sH_\bR$ if and only if it is analytic for $A$ on $\sH_{\bC_j}$
}
\end{remark}
\noindent One of remarkable Nelson's results, here  extended to quaternionic Hilbert spaces, states that
\begin{proposition}\label{propNseries} Consider an operator  $A : D(A) \to \sH$ on a quaternionic Hilbert space $\sH$.
\begin{enumerate}[(a)]
\item If $A$ is anti selfadjoint  and $x \in D(A)$ is analytic with $t_x >0$ as in (\ref{defanalyticvector}), then
$$e^{tA}x = \sum_{n=0}^{+\infty} \frac{t^n}{n!}A^nx\quad \mbox{if $t\in \bR$ satisfies $|t|\leq t_x$.}$$
\item If $A$ is (anti) symmetric and $D(A)$ includes a set of analytic vectors whose finite span is dense in $\sH$, then $\overline{A}$ is (anti)  selfadjoint and $D(A)$.
\end{enumerate}
\end{proposition}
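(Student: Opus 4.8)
The plan is to reduce both statements to the corresponding theorems already available on the real Hilbert space $\sH_\bR$, which are established in \cite{MO1}, and then transport the conclusions back to $\sH$ using the dictionary of Prop.\ref{propHR} together with Stone's Theorem \ref{stonetheorem}. No quaternion-specific analysis should be needed once the structural correspondences are in place, because norms and power-series partial sums are literally unchanged by forgetting the quaternionic structure.

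For part (a), I would first note that $A$ is anti-selfadjoint on $\sH_\bR$ precisely because it is such on $\sH$ ((e)-(iv) of Prop.\ref{propHR}), and that by the final clause of Theorem \ref{stonetheorem} the group $e^{tA}$ computed on $\sH$ coincides with the one generated by $A$ viewed on $\sH_\bR$. By Remark \ref{remanalitic} the vector $x$ is analytic for $A$ on $\sH_\bR$ with the same $t_x$, and since the norms of $\sH$ and $\sH_\bR$ agree ((a) of Prop.\ref{propHR}) the partial sums $\sum_{n=0}^{N}\frac{t^n}{n!}A^n x$ and their convergence are identical in the two structures. Applying the real version of Nelson's expansion formula to $A$ on $\sH_\bR$ then yields $e^{tA}x=\sum_{n=0}^{+\infty}\frac{t^n}{n!}A^n x$ for $|t|\le t_x$; since every term and the limit are the same whether read in $\sH$ or in $\sH_\bR$, the identity holds verbatim in $\sH$.

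For part (b), the argument is analogous but invokes the adjoint/closure part of the dictionary. The operator $A$ is (anti)symmetric on $\sH_\bR$ iff it is such on $\sH$ ((e)-(iv) of Prop.\ref{propHR}); its analytic vectors coincide in the two structures (Remark \ref{remanalitic}); and a finite span dense in $\sH$ is dense in $\sH_\bR$ because the norms coincide. Hence the hypotheses of the real-Hilbert-space version of Nelson's analytic-vector theorem from \cite{MO1} are met on $\sH_\bR$, giving that $\overline{A}$ is (anti)selfadjoint there. To push this back to $\sH$, I would observe that $\cJ,\cK$ are bounded and commute with $A$ ((e)-(i) of Prop.\ref{propHR}), hence they commute with $\overline{A}$, so $\overline{A}$ is quaternionic-linear; then the closures computed in $\sH$ and $\sH_\bR$ coincide ((e)-(v) of Prop.\ref{propHR}) and the $\sH$- and $\sH_\bR$-adjoints of a quaternionic-linear operator coincide ((e)-(iii) of Prop.\ref{propHR}). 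Therefore $(\overline{A})^*=\pm\overline{A}$ reads identically in both structures, and $\overline{A}$ is (anti)selfadjoint on $\sH$.

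The only genuinely delicate point is the matching of the exponential: everything rests on the fact, guaranteed by the last statement of Theorem \ref{stonetheorem}, that the quaternionic functional calculus defining $e^{tA}$ through the spectral measure on $\bC_i^+$ produces the very same operator as the real Stone generator acting on $\sH_\bR$. Once that identification is secured, both parts reduce to transcribing the real results through Prop.\ref{propHR}, with no extra convergence or domain subtleties introduced by the passage to the quaternionic setting.
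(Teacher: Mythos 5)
Your proposal is correct and follows essentially the same route as the paper: both reduce to the real Hilbert space versions of Nelson's results from \cite{MO1} via the passage to $\sH_\bR$, using the coincidence of norms, of the exponential $e^{tA}$ (Theorem \ref{stonetheorem}), and of the notions of (anti)symmetry, closure and adjoint recorded in Prop.\ref{propHR}. The paper's own proof is just a terser version of exactly this argument, so no further comparison is needed.
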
 
\begin{proof} 
Point (a). Since the thesis holds for real Hilbert spaces (Thm 3.13 in \cite{MO1}) and the construction of $e^{tA}$ with respect to the Hilbert structures 
of $\sH$ and $\sH_\bR$ are equal to each other we immediately get the thesis.  Point (b). Again, the thesis hold for real Hilbert spaces and an 
operator over $\sH$ is (anti) symmetric or (anti) self-adjoint over $\sH$ if and only if it is so over $\sH_\bR$. The thesis follows immediately.
\end{proof}
\begin{theorem}\label{teonelson}
The Nelson subspace $D_N^{(U)}$ satisfies the following properties.
\begin{enumerate}[(a)]
\item $D_N^{(U)}\subset D_G^{(U)}$
\item $U_g(D_N^{(U)})\subset D_N^{(U)}$ for any $g\in G$
\item $D_N^{(U)}$ is dense in $\sH$
\item $D_N^{(U)}$ consists of analytic vectors for every operator $u({\bf A})$ with ${\bf A}\in \gg$
\item $u({\bf A})(D_N^{(U)})\subset D_N^{(U)}$ for any ${\bf A}\in\gg$.
\item Let $p : \bR \to \bR$ be a real polynomial such that either
$$ \mbox{$p(-x) = p(x)$ for every $x\in \bR$ or $p(-x) = -p(x)$ for every $x\in \bR$\:.}$$
If  ${\bf A} \in \gg$ then
 $\overline{u(p({\bf A}))}$ is, respectively,   selfadjoint or anti selfadjoint. 
\end{enumerate}

\end{theorem}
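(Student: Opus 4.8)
The plan is to reduce everything to the real Hilbert space $\sH_\bR$, exactly as in the preceding propositions of this section. The three identifications that make this work are already available: $\cD_N^{(U)}=\cD_N^{(U_\bR)}$ together with $\cD_G^{(U)}=\cD_G^{(U_\bR)}$ (the remark following the definition of the Nelson space and Prop.\ref{propgarding}(a)); the coincidence $u(\mathbf{A})=u_\bR(\mathbf{A})$ of the Lie-algebra generators (Prop.\ref{algebrarepr}(c)), which forces $u(\mathbf{M})=u_\bR(\mathbf{M})$ for every $\mathbf{M}\in E_\gg$ since both are the unique associative extension from $\gg$ with the common domain $\cD_G^{(U)}$; and the fact that analyticity of a vector, as well as (anti)symmetry and (anti)selfadjointness of an operator, are insensitive to whether one reads $\sH$ or $\sH_\bR$ (Remark \ref{remanalitic} and Prop.\ref{propHR}(e)--(f)). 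Granting the corresponding real statements from \cite{MO1}, items (a)--(e) follow at once: (a) because analytic vectors are smooth, (c) because density in $\sH_\bR$ is literally density in $\sH$ (same set, same norm), and (b),(d),(e) because each is a condition on the set $\cD_N^{(U)}$ that is unchanged by the structure.

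For (f) I would first settle the \emph{parity}, which is purely algebraic and lives in $E_\gg$. Since every element of $\gg$ is antisymmetric, the involution (\ref{Ptilde}) sends the monomial $\mathbf{A}^{\circ k}$ to $(-1)^k\mathbf{A}^{\circ k}$, hence $p(\mathbf{A})^+=\tilde p(\mathbf{A})$ with $\tilde p(x):=p(-x)$. Thus $p$ even gives $p(\mathbf{A})^+=p(\mathbf{A})$ and $p$ odd gives $p(\mathbf{A})^+=-p(\mathbf{A})$. By Theorem \ref{thmuext}(1), where $u(\mathbf{M}^+)\subset u(\mathbf{M})^*$, the operator $u(p(\mathbf{A}))$ is then symmetric, resp.\ antisymmetric, on $\cD_G^{(U)}$.

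It remains to upgrade (anti)symmetry to essential (anti)selfadjointness. Here I would \emph{not} try to feed $\cD_N^{(U)}$ into Nelson's criterion Prop.\ref{propNseries}(b) directly: a vector analytic for the representation is analytic for $\overline{u(\mathbf{A})}$, but when $\deg p\ge 2$ the bound $\|u(\mathbf{A})^n x\|\le C\rho^n n!$ is too weak to make $x$ analytic for $p(u(\mathbf{A}))$, so that route stalls. Instead I would pass through the antiselfadjoint generator $A:=\overline{u(\mathbf{A})}$ supplied by Stone's theorem (Theorem \ref{stonetheorem}) and its spectral measure $\cP$ from \cite{GMP2}, whose support lies in $i\bR$ because $A^*=-A$. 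Defining $p(A):=\int p\,d\cP$ by (\ref{functionOP}), on $\mathrm{supp}\,\cP$ the real polynomial $p$ takes real values if $p$ is even and purely imaginary values if $p$ is odd, so $p(A)$ is selfadjoint, resp.\ antiselfadjoint, by the functional calculus. One checks $u(p(\mathbf{A}))=p(A)|_{\cD_G^{(U)}}\subset p(A)$, whence $\overline{u(p(\mathbf{A}))}\subset p(A)$; the equality, i.e.\ that $\cD_G^{(U)}$ is a \emph{core} for $p(A)$, is what I expect to be the crux.

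I would obtain the core property from the two structural features of $\cD_G^{(U)}$ already in hand: it is dense, and it is invariant under $U_{\exp(t\mathbf A)}=e^{tA}$ (Prop.\ref{propgarding}(c)). For $x\in\cD_G^{(U)}$ and $\phi$ whose Fourier transform has compact support, the regularized vector $x_\phi:=\int_\bR \phi(t)\,e^{tA}x\,dt$ lies in a bounded spectral subspace of $A$; such vectors form a core for $p(A)$, and each $x_\phi$ is a graph-norm limit of Riemann sums $\sum_k\phi(t_k)e^{t_k A}x\,\Delta t\in\cD_G^{(U)}$ (using invariance and that $e^{tA}$ commutes with $p(A)$). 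Hence $\cD_G^{(U)}$ is graph-dense in a core, so is itself a core, giving $\overline{u(p(\mathbf{A}))}=p(A)$ and the claim. Equivalently, one may transport this entire step to $\sH_\bR$ and quote the real version of (f) from \cite{MO1}, the closure and (anti)selfadjointness then transferring back by Prop.\ref{propHR}(e)(v) and (f)(iii); I would likely present this reduction as the main line and keep the functional-calculus argument as the reason the real statement holds.
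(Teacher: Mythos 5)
Your proposal is correct and follows essentially the same route as the paper: the paper's entire proof consists of quoting the real-Hilbert-space version (Thm 3.14 in \cite{MO1}) and transferring it via the identifications $D_G^{(U)}=D_G^{(U_\bR)}$, $D_N^{(U)}=D_N^{(U_\bR)}$ and Remark \ref{remanalitic}, exactly as in your main line. Your additional spectral-calculus and core argument for (f) is a (sound) justification of the real-case statement that the paper simply cites rather than reproves.
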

\begin{proof}
The thesis holds for real Hilbert spaces (Thm 3.14 in \cite{MO1}). So, taking into account the equalities $D_G^{(U)}=D_G^{(U_\bR)}$ and $D_N^{(U)}=D_N^{(U_\bR)}$
 and Remark \ref{remanalitic} we immediately get the result.
\end{proof}

\subsection{Group representations restricted to a complex Hilbert subspace  induced by a complex structure}
To conclude this section we study the case  of a group representation $U$ in a quaternionic Hilbert space equipped with a complex structure commuting with $U$.
 We start with the most elmentary situation.

\begin{lemma}\label{Lemmagenertorsrestrictoin}
Let $J$ be a complex structure in the quaternionic Hilbert space $\sH$ and
and $U:\bR\ni t\mapsto U_t\in\gB(\sH)$ a strongly-continuous one-parameter group of unitary operators such that $JU_t=U_tJ$ for every $t\in\bR$. If $A$ is the anti-selfadjoint generator of $U$, then \\

\noindent {\bf (a)} $AJ=JA$,\\

\noindent {\bf (b)} $A_J$ is the anti-selfadjoint generator of $U_J:\bR\ni t\mapsto (U_t)_J\in\gB(\sH_J)$.
\end{lemma}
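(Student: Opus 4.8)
The plan is to deduce (a) from Proposition \ref{LemmaCOMM}(a) together with the invertibility of $J$, and then to obtain (b) by comparing $A_J$ with the generator handed to us by Stone's theorem on the complex Hilbert space $\sH_J$.

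For (a) I start from the observation that $A^*=-A$ and $J\in\gB(\sH)$ commutes with $U_t=e^{tA}$ for every $t$, so Proposition \ref{LemmaCOMM}(a) at once yields the inclusion $JA\subset AJ$. To upgrade this to an identity I exploit that a complex structure is unitary and invertible with $J^{-1}=-J$. Right-multiplying $JA\subset AJ$ by the bounded operator $J^{-1}$ gives $JAJ^{-1}\subset A$, and left-multiplying by $J^{-1}$ then gives $AJ^{-1}\subset J^{-1}A$; since $J^{-1}=-J$ this is precisely $AJ\subset JA$, which combined with $JA\subset AJ$ forces $AJ=JA$. (Equivalently one argues directly from the differential characterisation \eqref{stoneproperty}: for $x\in D(A)$ the bounded $J$ commutes with $U_t$, so $\frac{d}{dt}\big|_0U_t(Jx)=J\frac{d}{dt}\big|_0U_tx=JAx$ exists, giving $Jx\in D(A)$ and $AJx=JAx$, while invertibility of $J$ supplies the reverse inclusion of domains.)

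For (b), now that $JA=AJ$ holds, Proposition \ref{propHJop}(a) guarantees that $A_J=A|_{\sH_J\cap D(A)}$ is a well-defined $\bC_j$-linear operator on $\sH_J$; moreover $A$ is its own $\bH$-linear extension in the sense of Proposition \ref{propHJop}(b), so the equivalence in Proposition \ref{propHJop}(c)(viii) transfers anti-selfadjointness from $A$ to $A_J$. Next I check that $U_J:t\mapsto (U_t)_J$ is a strongly-continuous one-parameter unitary group on $\sH_J$: each $U_t$ leaves $\sH_J$ invariant because $JU_t=U_tJ$, each restriction is unitary by Proposition \ref{propHJop}(c)(viii) and the group law holds since restriction respects composition (Proposition \ref{propHJop}(c)(iv)), while strong continuity is inherited because the norm of $\sH_J$ is the restriction of that of $\sH$ (Proposition \ref{propHJ}(b)). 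By Stone's theorem in the complex Hilbert space $\sH_J$, the group $U_J$ therefore admits a unique anti-selfadjoint generator $B$.

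It remains to identify $B$ with $A_J$. For $x\in\sH_J\cap D(A)$ the limit $\frac{d}{dt}\big|_0(U_t)_Jx=\frac{d}{dt}\big|_0U_tx=Ax$ exists in $\sH$, and $Ax=A_Jx$ lies in $\sH_J$ since $J(Ax)=A(Jx)=A(xj)=(Ax)j$; as the $\sH_J$-norm is the restriction of the $\sH$-norm and $\sH_J$ is closed, this is convergence inside $\sH_J$, so $x\in D(B)$ with $Bx=A_Jx$. Hence $A_J\subset B$, and as both are anti-selfadjoint on $\sH_J$ while an anti-selfadjoint operator admits no proper anti-selfadjoint extension, I conclude $A_J=B$, i.e. $A_J$ is the generator of $U_J$. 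The only delicate points are the domain bookkeeping in (a) and the verification that the difference quotients converge within the closed subspace $\sH_J$ rather than merely in $\sH$; both are resolved by the norm-compatibility of Proposition \ref{propHJ}(b), so no genuine obstacle remains.
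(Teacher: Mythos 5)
Your proof is correct and follows essentially the same route as the paper: part (a) is obtained from Proposition \ref{LemmaCOMM}(a) and then upgraded to an equality using that $J$ is unitary and anti-selfadjoint, and part (b) rests on the derivative characterization of the Stone generator restricted to the closed subspace $\sH_J$. The only cosmetic difference is that you conclude (b) by combining the inclusion $A_J\subset B$ with the fact that anti-selfadjoint operators admit no proper anti-selfadjoint extensions, whereas the paper directly identifies the domain of $A_J$ with that of the generator of $U_J$.
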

\begin{proof}
 We know from Stone Theorem that $U_t=e^{tA}$ for some anti-selfadjoint operator $A:D(A)\rightarrow \sH$. Suppose that $U_tJ=JU_t$ for 
 every $t\in\bR$, then Lemma \ref{LemmaCOMM} yields  $JA\subset AJ$. Since $J$ is unitary and antiselfadjoint this immediately leads to $JA=AJ$.  $A_J$ 
 is anti-selfadjoint and defined on $D(A_J)=D(A)\cap\sH_J$ by Prop.\ref{propHJop}  and $x\in D(A_J)$ if and only if $x\in\sH_J$ and there exists
$
\lim_{t\to 0}\frac{U_t x-x}{t}=\lim_{t\to 0}\frac{(U_t)_J x-x}{t}=A_Jx
$
due to Stone theorem.
This is exactly the definition of the generator of $U_J$ proving the thesis.
\end{proof}
\noindent We pass to the main result.
\begin{proposition}\label{gardintcompl}
Let $U:G\ni g\mapsto U_g\in\gB(\sH)$ a strongly-continuous unitary Lie-group representation over the quaternionic Hilbert space $\sH$ and $J$ a complex
 structure such that $JU_g=U_gJ$ for every $g\in G$. Then the following facts hold
\begin{enumerate}[(a)]
\item $J(D_G^{(U)})= D_G^{(U)}$,
\item $U_J: G\ni g\mapsto (U_g)_J\in\gB(\sH_J)$ is a strongly-continuous unitary representation over $\sH_J$,
\item $D_G^{(U_J)}=D_G^{(U)}\cap \sH_J$,
\item $Ju({\bf M}) = u({\bf M})J$ for all ${\bf M}\in E_\gg$. Moreover $u_J({\bf M})=u({\bf M})_J$. 
\end{enumerate}
\end{proposition}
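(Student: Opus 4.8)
The plan is to prove the four items in the order (a), (b), (c), (d), the genuine content residing in the identification of the G\r{a}rding domains in (c). Everything rests on two elementary properties of the G\r{a}rding vectors $x[f]$ of (\ref{gardvector}). Since $J$ is bounded, $\bH$-linear and commutes with every $U_g$, it passes through the weak integral defining $x[f]$: for all $y\in\sH$,
\[
\langle y|Jx[f]\rangle=\langle J^*y|x[f]\rangle=\int_G f(g)\,\langle y|JU_gx\rangle\, dg=\int_G f(g)\,\langle y|U_gJx\rangle\, dg=\langle y|(Jx)[f]\rangle ,
\]
so that $Jx[f]=(Jx)[f]$. Combined with Remark \ref{gardingrealcombin}, $(xq)[f]=x[f]q$, this yields the decisive observation: if $x\in\sH_J$ (i.e.\ $Jx=xj$), then $Jx[f]=(xj)[f]=x[f]j$, hence $x[f]\in\sH_J$.

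Item (a) is then immediate: $J(D_G^{(U)})\subset D_G^{(U)}$ because $Jx[f]=(Jx)[f]\in D_G^{(U)}$, and equality holds since $J^{-1}=-J$. For (b), Proposition \ref{propHJop}(a) makes each $(U_g)_J:=U_g|_{\sH_J}$ a well-defined $\bC_j$-linear operator, and the uniqueness in Proposition \ref{propHJop}(b) gives $\widetilde{(U_g)_J}=U_g$; thus Proposition \ref{propHJop}(c)-(viii) transfers unitarity from $U_g$ to $(U_g)_J$. The relation $(U_g)_J(U_h)_J=(U_{gh})_J$ holds because $U_h$ leaves $\sH_J$ invariant, and $(U_e)_J=I$; strong continuity is clear since the norms of $\sH$ and $\sH_J$ agree on $\sH_J$ (Proposition \ref{propHJ}), so $\|(U_g)_Jx-(U_{g_0})_Jx\|=\|U_gx-U_{g_0}x\|$.

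The core is (c). For $x\in\sH_J$ the G\r{a}rding vector of $U_J$ computed within $\sH_J$ coincides with $x[f]$: indeed $x[f]\in\sH_J$ by the observation above, and since $\langle\cdot|\cdot\rangle_J$ restricts $\langle\cdot|\cdot\rangle$ and $(U_g)_Jx=U_gx$, the vector $x[f]$ satisfies the weak characterisation of the $U_J$-G\r{a}rding vector against every $y\in\sH_J$. Hence $D_G^{(U_J)}$ is the $\bC_j$-span of $\{x[f]\:|\:x\in\sH_J,\ f\in C_0^\infty(G)\}$, which lies in $D_G^{(U)}\cap\sH_J$. For the converse, take $w\in D_G^{(U)}\cap\sH_J$ and write it, via Remark \ref{gardingrealcombin}, as a finite real combination $w=\sum_i x_i[f_i]$; decomposing $x_i=a_i+b_ik$ with $a_i,b_i\in\sH_J$ (Proposition \ref{propHJ}(b)) and using $(b_ik)[f_i]=b_i[f_i]k$, we get $w=p+qk$ with $p:=\sum_i a_i[f_i]$ and $q:=\sum_i b_i[f_i]$ both in $D_G^{(U_J)}\subset\sH_J$. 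Since $w\in\sH_J$, uniqueness of the decomposition $\sH=\sH_J\oplus\sH_Jk$ forces $q=0$, whence $w=p\in D_G^{(U_J)}$. I expect this cancellation of the $\sH_Jk$-component to be the one genuinely non-formal step, the naive vectorwise argument being insufficient.

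Finally, for (d) take ${\bf A}\in\gg$ with $U_{\exp(t{\bf A})}=e^{tA}$. Lemma \ref{Lemmagenertorsrestrictoin}(a) gives $AJ=JA$, and restricting to the $J$-invariant core $D_G^{(U)}$ (part (a)) yields $Ju({\bf A})=u({\bf A})J$; this propagates to every ${\bf M}\in E_\gg$ through the polynomial form (\ref{defugen}) of $u({\bf M})$, since $D_G^{(U)}$ is invariant. For the second identity, Lemma \ref{Lemmagenertorsrestrictoin}(b) identifies the generator of $t\mapsto(U_{\exp(t{\bf A})})_J$ with $A_J$, so using part (c), $u_J({\bf A})=A_J|_{D_G^{(U_J)}}=A|_{D_G^{(U)}\cap\sH_J}=u({\bf A})_J$; the general case follows because restriction to $\sH_J$ is multiplicative on operators commuting with $J$ (Proposition \ref{propHJop}(c)-(iv)), matching the product structure of both sides of (\ref{defugen}).
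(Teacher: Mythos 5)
Your proof is correct and follows essentially the same route as the paper's: the same identification of the G\r{a}rding vectors $x(f)=x[f]$, the same decomposition $w=p+qk$ with the cancellation of the $\sH_Jk$-component, and the same appeal to Lemma \ref{Lemmagenertorsrestrictoin} plus the polynomial form (\ref{defugen}) for item (d). The only cosmetic difference is in (c): you first show $x[f]\in\sH_J$ (via $Jx[f]=(Jx)[f]=x[f]j$) and then invoke uniqueness of the $U_J$-G\r{a}rding vector within $\sH_J$, whereas the paper verifies that $x(f)$ satisfies the defining weak identity against every $z\in\sH$ by splitting $z=z_1+z_2k$; both are equally valid.
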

\begin{proof} See Appendix \ref{AppProof} \end{proof}

\subsection{Group representations restricted to a real Hilbert subspace  induced by two anti commuting complex structures}

\begin{lemma}\label{Lemmagenertorsrestrictoinrea}
Let $J,K$ be anticommuting complex structures in the quaternionic Hilbert space $\sH$ and
and $U:\bR\ni t\mapsto U_t\in\gB(\sH)$ a strongly-continuous one-parameter group of unitary operators such that $JU_t=U_tJ$ and $KU_t=U_tK$ for every $t\in\bR$. If 
$A$ is the antiselfadjoint generator of $U$, then \\

\noindent {\bf (a)} $AJ=JA$ and $AK=KA$,\\

\noindent {\bf (b)} $A_{JK}$ is the anti-selfadjoint generator of $U_{JK}:\bR\ni t\mapsto (U_t)_{JK}\in\gB(\sH_{JK})$.
\end{lemma}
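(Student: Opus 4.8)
The plan is to mimic the proof of the single-complex-structure result, Lemma \ref{Lemmagenertorsrestrictoin}, replacing the complex restriction machinery of Prop.\ref{propHJop} by the real restriction machinery of Prop.\ref{lemmaextensreal} and using that $\sH_{JK}$ is a genuine real Hilbert space (Prop.\ref{realdecomp}).

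For part (a), I would first write $U_t=e^{tA}$ via Stone's theorem (Thm \ref{stonetheorem}) and then apply Prop.\ref{LemmaCOMM}(a) separately to $J$ and to $K$: since $A$ is antiselfadjoint and $J,K$ commute with every $U_t=e^{tA}$, that result yields the operator inclusions $JA\subset AJ$ and $KA\subset AK$. Because $J$ and $K$ are unitary (an antiselfadjoint $S$ with $S^2=-I$ satisfies $S^{-1}=-S$), each inclusion forces $J(D(A))=D(A)$ and $K(D(A))=D(A)$, hence upgrades to the equalities $JA=AJ$ and $KA=AK$, exactly as argued in Lemma \ref{Lemmagenertorsrestrictoin}(a).

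With (a) in hand, the commutation relations $JA\subset AJ$, $KA\subset AK$ are precisely the hypotheses of Prop.\ref{lemmaextensreal}(a), so $A_{JK}:=A|_{D(A)\cap\sH_{JK}}$ is a well-defined $\bR$-linear operator on $\sH_{JK}$; since $\widetilde{A_{JK}}=A$ by the uniqueness in Prop.\ref{lemmaextensreal}(b), the consequence of Prop.\ref{lemmaextensreal}(c)(viii) transfers antiselfadjointness from $A$ to $A_{JK}$. Likewise each $U_t$ commutes with $J,K$, so $(U_t)_{JK}\in\gB(\sH_{JK})$ is unitary by Prop.\ref{lemmaextensreal}(c)(vi),(viii); strong continuity of $t\mapsto(U_t)_{JK}$ is inherited from that of $U$ because $\sH_{JK}$ carries the restricted norm and $(U_t)_{JK}x=U_tx$ for $x\in\sH_{JK}$. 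Thus $U_{JK}$ is a strongly-continuous one-parameter unitary group on the real Hilbert space $\sH_{JK}$, which by the real case of Stone's theorem possesses a unique antiselfadjoint generator.

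It then remains to identify that generator with $A_{JK}$. Since $U_t$ commutes with $J$ and $K$ it preserves the closed subspace $\sH_{JK}$, so for $x\in\sH_{JK}$ the difference quotient $t^{-1}((U_t)_{JK}x-x)=t^{-1}(U_tx-x)$ stays in the closed subspace $\sH_{JK}$ and converges there if and only if it converges in $\sH$. By Stone's characterization (\ref{stoneproperty}) this happens if and only if $x\in D(A)$, i.e.\ if and only if $x\in D(A_{JK})$, and then the limit equals $Ax=A_{JK}x$. This is exactly the definition of the generator of $U_{JK}$, which proves (b). I expect no serious obstacle here: the only points demanding care are the upgrade from inclusion to equality in (a) and the closedness argument guaranteeing that the generator's domain is not enlarged when passing from $\sH$ to $\sH_{JK}$; both are routine and already appear, mutatis mutandis, in Lemma \ref{Lemmagenertorsrestrictoin}.
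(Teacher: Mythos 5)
Your proof is correct and follows exactly the route the paper intends: its own proof of this lemma is just the remark that it is analogous to Lemma \ref{Lemmagenertorsrestrictoin}, and you carry out that analogy faithfully, using Prop.\ref{LemmaCOMM}(a) for each of $J$ and $K$ and the real restriction machinery of Prop.\ref{lemmaextensreal} in place of Prop.\ref{propHJop}. No gaps.
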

\begin{proof}
Analogous to the proof of Lemma \ref{Lemmagenertorsrestrictoin}
\end{proof}
\begin{proposition}\label{gardingtreal}
Let $U:G\ni g\mapsto U_g\in\gB(\sH)$ a strongly-continuous unitary Lie-group representation over the quaternionic Hilbert space $\sH$ and Let $J,K$ be anticommuting 
complex structures such that $JU_g=U_gJ$ and $KU_g=U_gK$ for every $g\in G$. Then the following facts hold
\begin{enumerate}[(a)]
\item $J(D_G^{(U)})\subset D_G^{(U)}$ and $K(D_G^{(U)})\subset D_G^{(U)}$
\item $U_{JK}: G\ni g\mapsto (U_g)_{JK}\in\gB(\sH_{JK})$ is a strongly-continuous unitary representation over $\sH_{JK}$
\item $D_G^{(U_{JK})}=D_G^{(U)}\cap \sH_{JK}$
\item $u({\bf M})J=Ju({\bf M})$ and $u({\bf M})K=Ku({\bf M})$  for all ${\bf M}\in E_\gg$. Moreover $u_{JK}({\bf M})=u({\bf M})_{JK}$.
\end{enumerate}
\end{proposition}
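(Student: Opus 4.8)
The plan is to run the same reduction scheme used for the complex case in Proposition \ref{gardintcompl}, but now through the real-restriction machinery of Proposition \ref{lemmaextensreal}, the decomposition of Proposition \ref{realdecomp}, and the one-parameter statement of Lemma \ref{Lemmagenertorsrestrictoinrea}. For (a), since $J\in\gB(\sH)$ is $\bH$-linear and commutes with every $U_g$, I would push it through the weak integral (\ref{gardvector}): for $y\in\sH$ one has $\langle y|J(x[f])\rangle=\int_G f(g)\langle y|U_g Jx\rangle\,dg=\langle y|(Jx)[f]\rangle$, whence $J(x[f])=(Jx)[f]\in D_G^{(U)}$, and likewise $K(x[f])=(Kx)[f]$; as $D_G^{(U)}$ is generated by such vectors, it is preserved by $J$ and $K$. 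For (b), each $U_g$ commutes with $J$ and $K$, so $(U_g)_{JK}$ is a well-defined $\bR$-linear operator on $\sH_{JK}$ by Proposition \ref{lemmaextensreal}(a); unitarity and the homomorphism law $(U_g)_{JK}(U_h)_{JK}=(U_{gh})_{JK}$ descend to the restriction by Proposition \ref{lemmaextensreal}(c), and strong continuity is immediate because the norms of $\sH$ and $\sH_{JK}$ agree on $\sH_{JK}$ (Proposition \ref{realdecomp}).

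The core of the proof is (c). First I would check that a G\r{a}rding vector of an $\sH_{JK}$-vector stays in $\sH_{JK}$: if $x\in\sH_{JK}$ then, combining $J(x[f])=(Jx)[f]$ from step (a) with $Jx=xj$ and the identity $x[f]q=(xq)[f]$ of Remark \ref{gardingrealcombin}, one gets $J(x[f])=(xj)[f]=(x[f])j$, and similarly $K(x[f])=(x[f])k$, so $x[f]\in\sH_{JK}$; moreover the $\sH_{JK}$-valued integral defining the G\r{a}rding vector of $U_{JK}$ coincides with the $\sH$-valued one, exactly as in Proposition \ref{propgarding}(a). This yields the easy inclusion $D_G^{(U_{JK})}\subset D_G^{(U)}\cap\sH_{JK}$. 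The reverse inclusion is the main obstacle. Here I would take $w\in D_G^{(U)}\cap\sH_{JK}$, write it (Remark \ref{gardingrealcombin}) as a finite real combination $w=\sum_i x_i[f_i]$, and decompose each $x_i=a_i+b_i i+c_i j+d_i k$ with $a_i,b_i,c_i,d_i\in\sH_{JK}$ according to Proposition \ref{realdecomp}(b). Using $x[f]q=(xq)[f]$ once more gives $x_i[f_i]=a_i[f_i]+(b_i[f_i])i+(c_i[f_i])j+(d_i[f_i])k$ with all four G\r{a}rding vectors lying in $\sH_{JK}$; this is precisely the $\sH_{JK}\oplus\sH_{JK}i\oplus\sH_{JK}j\oplus\sH_{JK}k$ decomposition of $x_i[f_i]$. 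Summing over $i$ and invoking the uniqueness of the decomposition together with $w\in\sH_{JK}$ forces the $i$-, $j$- and $k$-components to vanish, so $w=\sum_i a_i[f_i]$, a real combination of G\r{a}rding vectors of $U_{JK}$, hence in $D_G^{(U_{JK})}$.

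Finally, for (d) I would first treat the commutation relations: for ${\bf A}\in\gg$ the anti-selfadjoint generator $A$ of $t\mapsto U_{\exp(t{\bf A})}$ commutes with $J$ and $K$ by Lemma \ref{Lemmagenertorsrestrictoinrea}(a), and since $J,K$ preserve the invariant core $D_G^{(U)}$ by (a), restricting gives $u({\bf A})J=Ju({\bf A})$ and $u({\bf A})K=Ku({\bf A})$ on $D_G^{(U)}$; the general case ${\bf M}\in E_\gg$ then follows because, by the explicit form (\ref{defugen}) of Theorem \ref{thmuext}, $u({\bf M})$ is a real combination of compositions of the $u({\bf A}_{jl})$, and a bounded $\bH$-linear operator commuting with each factor and preserving the domain commutes with the whole composition. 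For the identity $u_{JK}({\bf M})=u({\bf M})_{JK}$ I would first verify it on generators: by Lemma \ref{Lemmagenertorsrestrictoinrea}(b) the generator of $(U_{\exp(t{\bf A})})_{JK}$ is $A_{JK}$, so $u_{JK}({\bf A})=A_{JK}|_{D_G^{(U_{JK})}}$, while $u({\bf A})_{JK}=A|_{D_G^{(U)}\cap\sH_{JK}}$; these coincide by part (c). To pass to all of $E_\gg$ I would observe that both ${\bf M}\mapsto u_{JK}({\bf M})$ and ${\bf M}\mapsto u({\bf M})_{JK}$ are real associative-algebra homomorphisms on the common invariant domain $D_G^{(U)}\cap\sH_{JK}$ (the second because restriction respects real linear combinations and products, Proposition \ref{lemmaextensreal}(c)(ii)--(iv)) which agree on the generating subspace $\gg$, whence they agree on $E_\gg$. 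The only delicate points are the domain bookkeeping in these compositions, all of which takes place on the invariant core $D_G^{(U)}\cap\sH_{JK}$.
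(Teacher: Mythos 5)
Your proposal is correct and follows essentially the same route as the paper, whose own proof of this proposition is simply the remark that it is analogous to Proposition \ref{gardintcompl}: you have faithfully carried out that analogy, replacing the decomposition $\sH=\sH_J\oplus\sH_Jk$ by the four-fold decomposition of Proposition \ref{realdecomp}, the identity $J(x[f])=(Jx)[f]=(x[f])j$ (and its $K$-counterpart) playing the role it plays in the complex case, and the generator-then-enveloping-algebra argument for (d) matching the paper's treatment via Lemma \ref{Lemmagenertorsrestrictoinrea} and (\ref{defugen}). No gaps.
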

\begin{proof}
The proof is analogous to  Proposition \ref{gardintcompl}'s one. 
\end{proof}

\section{Irreducibility and Schur's Lemma on Quaternionic Hilbert spaces}
The notion of irreducibility will play crucial role in our work.

\begin{definition}
{\em Let $\sH$ be  a real, complex or quaternionic Hilbert space.
A family  of operators  $\gS\subset \gB(\sH)$ is said to be {\bf  irreducible} if $U(\sK)\subset \sK$ for all $U\in \gR$ and  
 a closed subspace $\sK \subset \sH$ implies $\sK=\{0\}$ or $\sK =\sH$. 
$\gS$ is said to be {\bf reducible} if it is not irreducible.}
\end{definition}
\noindent Since the definition refers to {\em closed} subspaces,  our notion of irreducibility is sometimes called {\em topological} irreducibility.

\begin{remark}\label{remirr} $\null$\\
{\em
{\bf (a)} If $\gS$ is irreducible, then $\gS'\cap \cL(\sH)=\{0,I\}$. (This is because if an orthogonal projector $P$ commutes with all elements of $\gS$, 
then the closed subspace  $\sK := P(\sH)$ is invariant under $\gS$).\\
{\bf (b)}  $\gS'\cap \cL(\sH)=\{0,I\}$ is {\em equivalent} to irreducibility of $\gS$  when $\gS$ is closed under Hermitian conjugation. (This is because
 $U(\sK)\subset \sK$ implies $PUP = UP$ if $P$ is the orthogonal projector onto $\sK$. If this holds for every $U \in \gR$ which is closed under 
 Hermitian conjugation, $PU^*P=U^*P$ holds as well.  Taking the Hermitian conjugate, $PUP=PU$, so that $UP=PU$.)}
\end{remark}

\subsection{Schur's lemma in quaternionic Hilbert spaces}
Let us pass to formulate the quaternionic version of Schur's Lemma which, as in the real Hilbert space case, has a formulation more complicated than in the standard complex Hilbert space case.

\begin{proposition} [Schur's lemma  for essentially selfadjoint operators] \label{SL}
Let $\sH$  be a quaternionic Hilbert space and let $\gS \subset \gB(\sH)$ be irreducible.\\ If the operator $A : D(A) \to \sH$, with $D(A) \subset \sH$ dense, is essentially selfadjoint and 
\begin{equation}\label{irrcomm}
 UA \subset  AU\quad \mbox{for all}\quad U \in\gS
\end{equation}
then  $\overline{A} \in \gB(\sH)$ (the bar denoting the closure of $A$) and 
\begin{equation*}
\overline{A}= a I\:,\quad \mbox{for some $a \in \bR$.}
\end{equation*}
If $A$ satisfying (\ref{irrcomm}) is selfadjoint, then $A\in \gB(\sH)$ with  $A = a I$
for some $a \in \bR$.
\end{proposition}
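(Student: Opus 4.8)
The plan is to transfer the whole problem onto the spectral measure of the selfadjoint operator $\overline{A}$ and then extract the conclusion from irreducibility. First I would check that the intertwining relation survives passage to the closure, i.e. that $U\overline{A}\subset \overline{A}U$ for every $U\in\gS$. This is the routine closure argument: from $UA\subset AU$ one has $U(D(A))\subset D(A)$ and $UAx=AUx$ for $x\in D(A)$; given $x\in D(\overline{A})$, pick $x_n\in D(A)$ with $x_n\to x$ and $Ax_n\to\overline{A}x$, and use boundedness of $U$ to get $Ux_n\in D(A)$, $Ux_n\to Ux$ and $AUx_n=UAx_n\to U\overline{A}x$, whence $Ux\in D(\overline{A})$ and $\overline{A}Ux=U\overline{A}x$.

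The decisive step is to convert this into a statement about spectral projectors. Since $\overline{A}$ is selfadjoint, Lemma \ref{commst}(b) guarantees that $U\overline{A}\subset \overline{A}U$ is equivalent to $UP^{(\overline{A})}_E=P^{(\overline{A})}_EU$ for every $E\in\cB(\bR)$. Hence each spectral projector of $\overline{A}$ lies in the commutant $\gS'$, and being an orthogonal projector it lies in $\gS'\cap\cL(\sH)$. Now irreducibility enters through Remark \ref{remirr}(a), which gives $\gS'\cap\cL(\sH)=\{0,I\}$; therefore $P^{(\overline{A})}_E\in\{0,I\}$ for every Borel set $E$.

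Finally I would show that a $\{0,I\}$-valued PVM on $\bR$ must be concentrated at a single point. Partitioning $\bR$ into the unit intervals $(n,n+1]$ with $n\in\bZ$, their projectors are mutually orthogonal and sum strongly to $I$, and each equals $0$ or $I$, so exactly one of them equals $I$. Bisecting that interval repeatedly and using $\sigma$-additivity together with continuity from above of $P^{(\overline{A})}$, I obtain nested intervals of vanishing length each carrying projector $I$, hence a single point $a\in\bR$ with $P^{(\overline{A})}_{\{a\}}=I$ and $P^{(\overline{A})}_{\bR\setminus\{a\}}=0$. The spectral formula (\ref{st}) then yields $\overline{A}=\int_\bR\lambda\, dP^{(\overline{A})}=aI\in\gB(\sH)$. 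If $A$ is already selfadjoint, then $A=\overline{A}=aI$.

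I do not expect a genuinely new obstacle here: the non-commutativity of $\bH$ has already been absorbed into Lemma \ref{commst}, which, together with the irreducibility characterization of Remark \ref{remirr}(a), is the substantive input. The only points requiring care are the closure argument of the first step and the rigorous point-mass extraction of the last step, both of which are standard and parallel the real and complex cases.
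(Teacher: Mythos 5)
Your proof is correct and follows essentially the same route the paper takes: the paper's proof of Proposition \ref{SL} is just a pointer to Prop.~2.13 of \cite{MO1}, whose argument is exactly this spectral-theorem reduction (pass to the closure, use Lemma \ref{commst}(b) to put the spectral projectors of $\overline{A}$ into $\gS'\cap\cL(\sH)=\{0,I\}$, and extract the point mass). All three steps — the closure argument, the projector commutation, and the bisection showing a $\{0,I\}$-valued PVM is a Dirac measure — are carried out correctly.
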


\begin{proof} The proof, based on the spectral theorem, is essentially identical to that of Prop. 2.13 in \cite{MO1} which is valid for real and complex Hilbert spaces. 
\end{proof}

\noindent A different and more precise result can be obtained when the class $\gR$ consists of a $^*$-closed subset of $\gB(\sH)$. In this case the statement is 
different from the one valid in complex Hilbert spaces but the same statement holds in real Hibert spaces. 

\begin{proposition}\label{SL2}
Let $\sH$ be a quaternionic Hilbert space, $\gS\subset\gB(\sH)$ a $^*$-closed subset and consider a densely-defined closed operator in $\sH$ $A:D(A)\rightarrow \sH$ such that
\begin{equation}\label{SL2'}
UA= AU,\ \ UA^*= A^*U\:, \quad \forall U \in\gS\:.
\end{equation}
If $\gS$ is irreducible then
$
A=aI+bJ
$
for some $a,b\in\bR$ and $J$ a complex structure. In particular  $D(A)=\sH$ and $A\in\gB(\sH)$ in both cases.
\end{proposition}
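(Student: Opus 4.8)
The plan is to show first that the two commutation hypotheses force $A$ to be everywhere-defined and bounded, and then to split the resulting bounded operator into its Hermitian and anti-Hermitian parts, applying the selfadjoint Schur's Lemma (Prop.\ref{SL}) to each. The only quaternionic novelty will appear at the very end, where the normalised anti-Hermitian part turns out to be a complex structure rather than a multiple of a single imaginary unit.

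First I would use both relations in (\ref{SL2'}) together. Reading $UA=AU$ and $UA^*=A^*U$ as the inclusions $UA\subset AU$ and $UA^*\subset A^*U$ (which is all that is needed), a direct domain chase shows that every $U\in\gS$ satisfies $U(A^*A)\subset (A^*A)U$: indeed, for $x\in D(A^*A)$ one has $Ux\in D(A)$ with $AUx=UAx$, and since $Ax\in D(A^*)$ and $U(D(A^*))\subset D(A^*)$ one gets $AUx=U(Ax)\in D(A^*)$, whence $Ux\in D(A^*A)$ and $(A^*A)Ux=A^*UAx=U(A^*A)x$. By Thm.\ref{PDT}(a) the operator $A^*A$ is densely defined, positive and selfadjoint, so Prop.\ref{SL} applied to the irreducible family $\gS$ yields $A^*A=c\,I$ for some real $c\ge 0$. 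Since then $D(A^*A)=\sH\subset D(A)$, this already gives $D(A)=\sH$, and from $\|Ax\|^2=\langle x|A^*Ax\rangle=c\|x\|^2$ we conclude $A\in\gB(\sH)$ with $\|A\|=\sqrt{c}$.

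Once $A$ is bounded the rest is purely algebraic. I would set $X:=\tfrac12(A+A^*)$ and $Y:=\tfrac12(A-A^*)$, so that $X^*=X$, $Y^*=-Y$, $A=X+Y$, and both $X,Y\in\gB(\sH)$ commute with every $U\in\gS$ because $A$ and $A^*$ do. Prop.\ref{SL} applied to the selfadjoint $X$ gives $X=a\,I$ for some $a\in\bR$. For the anti-selfadjoint part, observe that $-Y^2=Y^*Y$ is positive, selfadjoint and also commutes with $\gS$, so Prop.\ref{SL} yields $-Y^2=b^2 I$ for some $b\ge 0$. If $b=0$ then $\|Yx\|^2=\langle x|Y^*Yx\rangle=0$ forces $Y=0$ and $A=aI$; if $b>0$ then $J:=b^{-1}Y$ satisfies $J^*=-J$ and $J^2=b^{-2}Y^2=-I$, i.e.\ $J$ is a complex structure in the sense of Def.\ref{defCSJ}, and $A=aI+bJ$. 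In either case $A=aI+bJ$ with $a,b\in\bR$, which is the assertion.

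The only genuinely delicate point is the first step, namely justifying $U(A^*A)\subset(A^*A)U$ with the correct domain inclusions; this is exactly the place where commutation with \emph{both} $A$ and $A^*$ is indispensable, since for a merely densely defined closed $A$ one cannot control the domain of $A^*A$ from $UA\subset AU$ alone. Everything after the reduction to the bounded case is the routine splitting into Hermitian and anti-Hermitian parts, the sole quaternionic-specific feature being that the normalised anti-selfadjoint summand is automatically a complex structure.
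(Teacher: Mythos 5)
Your proposal is correct and follows essentially the same route as the paper's proof: apply the selfadjoint Schur lemma to $A^*A$ to get $A^*A=cI$, deduce $D(A)=\sH$ and boundedness, then split $A$ into selfadjoint and anti-selfadjoint parts and apply the lemma again to each (the paper uses the closed graph theorem where you use $\|Ax\|^2=c\|x\|^2$, a negligible difference). Your explicit domain chase establishing $U(A^*A)\subset(A^*A)U$ is in fact slightly more careful than the paper's one-line "on natural domains" remark.
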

\begin{proof}
$AU= UA$ and $A^*U= UA^*$ imply
$A^*AU = A^*UA= UA^*A$ on natural domains.
Since  $A$ is closed, the operator $A^*A$ is densely defined and selfadjoint (Thm \ref{PDT}). 
Proposition \ref{SL} for the selfadjoint operator 
$A^*A$ implies  $A^*A= aI$ for some real $a$. In particular $D(A^*A)=D(aI)=\sH$ so that $D(A)=\sH$  and thus, since $A$ is closed, the closed graph 
theorem (\cite{GMP1} Prop.2.11) gives $A\in \gB(\sH)$. To go on,
decompose $A=\frac{A+A^*}{2}+\frac{A^*-A}{2}$ where the two addends 
denoted by $A_S$ and $A_A$
are, respectively, selfadjoint and anti selfadjoint belong to $\gB(\sH)$  and commute with the elements of $\gS$. In particular, $U A_S=A_S U$ for any 
$U\in \gS$ gives $A_S=aI$ for some $a\in\bR$, thanks to Prop.\ref{SL}. Similarly,  $A_A^2\in \gB(\sH)$ is selfadjoint and commutes with the operators in $\gS$, 
hence $A_A^2=cI$ for some $c\in\bR$, thanks again to Proposition \ref{SL}. It must be  $c\le 0$ because, if $v\in\sH$ has unit 
norm,  $c=\langle v|cv\rangle=\langle v|A_AA_Av\rangle=-\langle A_Av|A_Av\rangle=-\|A_Av\|^2\le 0$. In particular, $c=0$ if
 and only if $A_A=0$, that is if $A$ is selfadjoint and in this case the proof ends. In the case  $c\neq 0$, define $J:=\frac{A_A}{\sqrt{-c}}$. 
 With this definition we find  $J\in \gB(\sH)$, $J^*=-J$ and $J^*J=-I$ so that $J$ is a complex structure  and
$A= aI + bJ$ for $a,b \in \bR$ ending the proof again.
\end{proof} 
\begin{remark}\label{remarkshurrep}{\em$\null$\\
{\bf (a)} If $\gS:=\{U_g\:|\:g\in G\}$ for a unitary group representation $G\ni g\mapsto U_g\in\gB(\sH)$,  the hypothesis (\ref{SL2'}) of Proposition \ref{SL2} can be weakened to
$$
U_g A\subset A U_g\:,\quad \forall g\in G\:.
$$ 
Indeed, multiplying both side  by $U_{g^{-1}}$ on the left and  by $U_g$ on the right, we get  $AU_g\subset U_gA$ so that $U_gA=AU_g$.
Taking the adjoint of this identity we also have $U_g^*A^*\subset A^*U_g^*$, because $U_g$ is bounded. Since $U_g^*=U_{g^{-1}}$
 and $g$ varies on the whole set $G$ we have found $U_gA^*\subset A^*U_g$ and thus $U_gA^*=A^*U_g$ with the same reasoning as above, recovering (\ref{SL2'}).\\
{\bf (b)} If $A$ is bounded, the hypothesis (\ref{SL2'}) of Proposition \ref{SL2} can be weakened to  $UA=AU$ for every $U\in\gS$. Indeed the second identity in (\ref{SL2'})
 immediately follows  from $UA=AU$ and $^*$-closure of $\gS$.\\
{\bf (c)} In general, irreducibility of a unitary representation of a group on $\sH$ is lost when moving from the quaternionic Hilbert space structure to the underlying
 real Hilbert space one $\sH_\bR$. Indeed consider the following example. Let $\sH=\bH$ and $G=SO(3)$ and define the representation $G \ni R\mapsto U_R$ 
 defined by $U_R(a,{\bf b}):=(a,R{\bf b})$ for all $(a,{\bf b})\in\bH$. This is clearly unitary and irreducible (we are working on a one-dimensional Hilbert space). 
 Of course $\sH_\bR=\bR^4$ and $U_R(a,{\bf 0})=(a,{\bf 0})$ for all $a\in\bR$, hence the one-dimensional subspace $\{(a,{\bf 0})\:|\:a\in\bR\}$ is invariant 
 under the action of $U$. This make the representation reducible on $\bR^4$.}
\end{remark}
\noindent To conclude this general part let us consider the case of  $^*$-closed subset of $\gB(\sH)$ equipped with one or two complex structures.

\begin{proposition}\label{complexidentif2} 
Let $\gS$ a $^*$-closed subset of $\gB(\sH)$ for a quaternionic Hilbert space $\sH$.  The following facts hold.\\

\noindent {\bf (a)} If there is a  complex structure $J\in\gS'$, then the set of complex-linear operators  $\gS_J := \{A_J\:|\: A \in \gS\} \subset \gB(\sH_J)$ is irreducible if $\gS$ is. \\

\noindent {\bf (b)} If there is a pair of complex structures $J, K \in\gS'$
with $JK=-KJ$,
 then the set of real-linear operators $\gS_{JK}  := \{A_{JK}\:|\: A \in \gS\} \subset \gB(\sH_{JK})$ is irreducible if $\gS$ is. 
\end{proposition}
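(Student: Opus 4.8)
The plan is to reduce irreducibility to the commutant characterisation of Remark \ref{remirr}(b) and then to transport orthogonal projectors between the quaternionic space $\sH$ and its restriction, using the correspondence $X \mapsto \widetilde{X}$ of Prop.\ref{propHJop} for part (a) and of Prop.\ref{lemmaextensreal} for part (b). Since in both cases $J$ (and $K$) lie in $\gS'$, every $A \in \gS$ commutes with $J$ (and $K$), so $A_J$ (resp. $A_{JK}$) is well defined and $A$ is itself the extension of its own restriction.

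First I would check that $\gS_J$ is $^*$-closed. As each $A \in \gS$ commutes with $J$, both $A_J$ and $(A^*)_J$ are well defined, and Prop.\ref{propHJop}(c)(viii) together with uniqueness of the extension gives $(A_J)^* = (A^*)_J \in \gS_J$. Hence, by Remark \ref{remirr}(b), it suffices to prove $(\gS_J)' \cap \cL(\sH_J) = \{0, I\}$. I would then take an orthogonal projector $P \in \cL(\sH_J)$ commuting with every $A_J$ and let $\widetilde{P}$ be its unique $\bH$-linear extension to $\sH$ commuting with $J$ (Prop.\ref{propHJop}(b)). Because the extension map preserves boundedness, idempotency and selfadjointness (Prop.\ref{propHJop}(c)(vi),(iv),(viii)), $\widetilde{P}$ is again an orthogonal projector, so $\widetilde{P} \in \cL(\sH)$.

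The key computation is that $\widetilde{P}$ lands in $\gS'$. Since $J \in \gS'$, each $A \in \gS$ satisfies $A = \widetilde{A_J}$, whence, using the homomorphism property $\widetilde{ST} = \widetilde{S}\,\widetilde{T}$ of Prop.\ref{propHJop}(c)(iv), one gets $\widetilde{P} A = \widetilde{P A_J}$ and $A \widetilde{P} = \widetilde{A_J P}$. As $P A_J = A_J P$ on $\sH_J$, these two extensions coincide, so $\widetilde{P} A = A \widetilde{P}$ and $\widetilde{P} \in \gS' \cap \cL(\sH)$. Irreducibility of $\gS$ via Remark \ref{remirr}(a) forces $\widetilde{P} \in \{0, I\}$, and restricting back gives $P = \widetilde{P}_J \in \{0, I\}$, which is the desired conclusion for $\gS_J$.

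Part (b) proceeds identically, replacing $\sH_J$ by the real Hilbert space $\sH_{JK}$, Prop.\ref{propHJop} by Prop.\ref{lemmaextensreal}, and using that $J, K \in \gS'$ so that each $A \in \gS$ equals $\widetilde{A_{JK}}$. The only genuine obstacle is verifying that the extension $\widetilde{P}$ of a commuting projector actually belongs to the quaternionic commutant $\gS'$; this is precisely where one needs $J$ (and $K$) to commute with $\gS$, so that the elements of $\gS$ are themselves extensions of their restrictions and the homomorphism property of $\widetilde{\,\cdot\,}$ can be applied on both sides of the commutator.
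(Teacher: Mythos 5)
Your proof is correct and follows essentially the same route as the paper's: reduce irreducibility to the triviality of commuting projectors via Remark \ref{remirr}, lift a projector $P\in(\gS_J)'\cap\cL(\sH_J)$ to $\widetilde{P}\in\gS'\cap\cL(\sH)$ using Prop.\ref{propHJop} (resp. Prop.\ref{lemmaextensreal}), and invoke irreducibility of $\gS$. You merely spell out, via the homomorphism property of the extension map and the identity $A=\widetilde{A_J}$, the equivalence of the two commutation relations that the paper states without detail.
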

\begin{proof} 
Dealing with $^*$-closed sets, irreducibility is equivalent to the non-existence of non-trivial projectors commuting with the algebra for (b) Remark \ref{remirr}.\\
(a) Suppose that $\gS$ is irreducible and let $P\in\cL(\sH_J)\cap (\gS_J)'$. We have $P=\widetilde{P}|_{\sH_J}$ for some $\widetilde{P}\in\gB(\sH)$, which is clearly an
 orthogonal  projector on $\sH$ for Prop.\ref{propHJop} because $P$ is. That proposition also implies that $P$ commutes with every element of $\gS_J$ if and only 
 if $\widetilde{P}$ commutes with every element of $\gS$. Since $\gS$ is irreducible we have the thesis.  
The proof of (b) is essentially identical.
\end{proof}

\subsection{Application to Lie-group representations}
A remarkable consequence of the properties of Nelson's technology  and our version of Schur's lemma for Lie group representations is the following proposition.

\begin{proposition}\label{propCSAI}
Let $g\mapsto U_g$ be an irreducible strongly-continuous unitary representation of a connected Lie group over a quaternionic Hilbert space $\sH$ and ${\bf M}\in E_\gg$ such that 
\begin{enumerate}[(a)]
\item $u({\bf M})$ is essentially self-adjoint,
\item $[{\bf M},{\bf A}]_\gg=0\ \ \forall {\bf A}\in\gg$
\end{enumerate}
then it holds $u({\bf M})=cI|_{D_G^{(U)}}$ for some $c\in\bR$.\\
In particular the thesis holds if ${\bf M}= {\bf M}^+$ and  (b) is valid.
\end{proposition}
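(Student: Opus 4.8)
The plan is to apply the essentially-selfadjoint version of Schur's lemma (Proposition \ref{SL}) to the operator $u({\bf M})$ with respect to the irreducible family $\gS:=\{U_g\:|\:g\in G\}$. Hypothesis (a) already supplies essential selfadjointness, so the whole problem reduces to verifying the commutation hypothesis $U_gu({\bf M})\subset u({\bf M})U_g$ for every $g\in G$; once this is in hand, Proposition \ref{SL} forces $\overline{u({\bf M})}=cI\in\gB(\sH)$ for some $c\in\bR$, whence $u({\bf M})=\overline{u({\bf M})}|_{\cD_G^{(U)}}=cI|_{\cD_G^{(U)}}$.

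First I would translate condition (b) into algebra. Since $u$ is a unital associative algebra homomorphism of $E_\gg$ acting on $\cD_G^{(U)}$ (Theorem \ref{thmuext}(1)), the identities $[{\bf M},{\bf A}]_\gg=0$ for all ${\bf A}\in\gg$ say that ${\bf M}$ lies in the centre of $E_\gg$, because $\gg$ generates $E_\gg$; infinitesimally this means $u({\bf M})u({\bf A})=u({\bf A})u({\bf M})$ on $\cD_G^{(U)}$, i.e.\ commutation with every Lie-algebra generator. The task is to upgrade this to commutation with the whole group.

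The key technical step is covariance. Fix ${\bf A}\in\gg$ and $g\in G$ and write $A_{\bf A}:=\overline{u({\bf A})}$, the antiselfadjoint generator of $t\mapsto U_{\exp(t{\bf A})}$ (Proposition \ref{algebrarepr}(a) together with Stone's Theorem \ref{stonetheorem}). From the group law the two strongly-continuous unitary groups $s\mapsto U_gU_{\exp(s{\bf A})}U_g^{-1}=U_{\exp(s\,\mathrm{Ad}_g{\bf A})}$ coincide, so by uniqueness of generators (Stone) $U_gA_{\bf A}U_g^{-1}=\overline{u(\mathrm{Ad}_g{\bf A})}$; restricting to the $U_g$-invariant core $\cD_G^{(U)}$ (Proposition \ref{propgarding}(c)) yields $U_gu({\bf A})U_g^{-1}=u(\mathrm{Ad}_g{\bf A})$ there. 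Since $u$ is multiplicative and both $U_g,U_g^{-1}$ preserve $\cD_G^{(U)}$, inserting $U_g^{-1}U_g$ between consecutive factors of a monomial ${\bf A}_{j1}\circ\cdots\circ{\bf A}_{jk}$ and using linearity extends this to $U_gu({\bf M})U_g^{-1}=u(\mathrm{Ad}_g{\bf M})$ on $\cD_G^{(U)}$ for every ${\bf M}\in E_\gg$, where $\mathrm{Ad}_g$ denotes the algebra automorphism of $E_\gg$ induced by $\mathrm{Ad}_g$ on $\gg$.

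Now I would feed in centrality and connectedness. Since $\mathrm{ad}_{\bf A}{\bf M}=[{\bf A},{\bf M}]_\gg=-[{\bf M},{\bf A}]_\gg=0$ for all ${\bf A}\in\gg$, we get $\mathrm{Ad}_{\exp({\bf A})}{\bf M}=e^{\mathrm{ad}_{\bf A}}{\bf M}={\bf M}$; as $G$ is connected it is generated by such exponentials, so $\mathrm{Ad}_g{\bf M}={\bf M}$ for all $g\in G$. Hence $U_gu({\bf M})=u({\bf M})U_g$ on $\cD_G^{(U)}$, and since $U_g$ is everywhere defined the domains match and this is the operator inclusion $U_gu({\bf M})\subset u({\bf M})U_g$, so Proposition \ref{SL} gives the claim. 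The main obstacle is precisely this covariance step: passing from infinitesimal commutation on the Gårding domain to genuine commutation with the unitaries $U_g$, where the unbounded nature of $u({\bf M})$, Stone's theorem, invariance of $\cD_G^{(U)}$, and connectedness of $G$ all must be combined carefully. Finally, for the ``in particular'' clause, if ${\bf M}={\bf M}^+$ then (b) implies that ${\bf M}$ commutes with any Nelson element ${\bf N}=\sum_i{\bf X}_i\circ{\bf X}_i$ (it commutes with each ${\bf X}_i$), so Theorem \ref{thmuext}(2) already guarantees essential selfadjointness of $u({\bf M})$ and hypothesis (a) becomes automatic.
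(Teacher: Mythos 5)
Your argument is correct, and it reaches the conclusion by a genuinely different route than the paper. The paper works on the Nelson space $D_N^{(U)}$: for an analytic vector $x$ it expands $U_{\exp(t{\bf A})}x$ and $U_{\exp(t{\bf A})}u({\bf M})x$ as power series (Proposition \ref{propNseries}), commutes the series term by term using $[u({\bf M}),u({\bf A})]=0$, obtains $U_{\exp(t{\bf A})}u({\bf M})x=u({\bf M})U_{\exp(t{\bf A})}x$ for small $|t|$, extends to all $t$ by a supremum/closability argument and to all of $G$ by connectedness, and finally invokes Proposition \ref{SL2} (obtaining $aI+bJ$) together with selfadjointness of $\overline{u({\bf M})}$ to kill the $bJ$ term. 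You instead establish the covariance identity $U_gu({\bf M})U_g^{-1}=u(\mathrm{Ad}_g{\bf M})$ on $\cD_G^{(U)}$ directly from the group law and the uniqueness part of Stone's theorem (Theorem \ref{stonetheorem}), and then reduce the whole problem to the purely algebraic observation that a central element of $E_\gg$ is $\mathrm{Ad}$-invariant when $G$ is connected; this lets you apply the simpler Proposition \ref{SL} rather than \ref{SL2}, with no need for analytic vectors. Both approaches are sound. Yours is more economical and avoids the delicate extension-in-$t$ argument entirely; the paper's analytic-vector method has the virtue of requiring only the infinitesimal commutation relations on a core of analytic vectors, without presupposing the global covariance identity, and it reuses machinery (Theorem \ref{teonelson}) that the paper needs elsewhere anyway. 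Your treatment of the ``in particular'' clause via the Nelson element and Theorem \ref{thmuext}(2) coincides with the paper's.
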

\begin{proof} 
Let $x\in D_N^{(U)}$ and ${\bf A}\in\gg$. Thanks to Theorem \ref{teonelson} it holds $x\in D_G^{(U)}$ and $x$ is analytic for $u({\bf A})$, in particular it is analytic for $\overline{u(\bf A)}$. 
Exploiting Prop.\ref{propNseries}, we have that  there exists $t_{{\bf A},x}>0$ such that
$$
U_{\exp(t{\bf A})}x=e^{t\overline{u({\bf A})}}x=\sum_{n=0}^\infty\frac{t^n}{n!}u({\bf A})^nx,\ \ |t|\le t_{{\bf A},x}\:.
$$
Moreover $D_N^{(U)}$ is invariant under the action of $u$, hence $u({\bf M})x\in D_N^{(U)}$. Then there exits $t_{{\bf A},u({\bf M})x}>0$ such that
$$
U_{\exp(t{\bf A})}u({\bf M})x=e^{t\overline{u({\bf A})}}u({\bf M})x=\sum_{n=0}^\infty\frac{t^n}{n!}u({\bf A})^nu({\bf M})x,\ \ |t|\le t_{{\bf A},u({\bf M})x}\:.
$$
Now take a positive  real  $t_x<\min\{t_{{\bf A},x},t_{{\bf A},u({\bf M})x}\}$. Using $[u({\bf M}),u({\bf A})]=0$ we have
$$
U_{\exp(t{\bf A})}u({\bf M})x=\sum_{n=0}^\infty\frac{t^n}{n!}u({\bf M})u({\bf A})^nx,\ \ |t|\le t_{x}\:.
$$
Since   $u({\bf M})$ is closable, it follows directly from the equations above and the invariance of $D_G^{(U)}$ under the action of
 $U$ that $$U_{\exp(t{\bf A})}u({\bf M})x = \sum_{n=0}^\infty\frac{t^n}{n!}u({\bf M})u({\bf A})^nx = u({\bf M})U_{\exp(t{\bf A})}x$$ for every $|t|\le 
t_x$. Actually this equality holds for every $t\in\bR$. Indeed define 
$\cZ:=\{z>0|u({\bf M})U_{\exp(t{\bf A})}x=U_{\exp(t{\bf A})}u({\bf M})x,\ |t|\le z\}$ and let $t_0:=\sup\cZ$. Suppose that $t_0<\infty$, then it is
 easy to see that the fact that  $u({\bf M})$ is closable ensures that $u({\bf M})U_{\exp(t_0{\bf A})}x=U_{\exp(t_0{\bf A})}u({\bf M})x$, hence
 $t_0\in\cZ$. We know that $y:=U_{\exp(t_0{\bf A})}x\in D_N^{(U)}$,  we can therefore  repeat the above reasoning  finding a real
 $t_y>0$ such that $u({\bf M})U_{\exp(t{\bf A})}y=U_{\exp(t{\bf A})}u({\bf M})y$ for every $|t|\le t_y$.  Noticing that  
 $\exp((t+t_0){\bf A})=\exp(t{\bf A})\exp(t_0{\bf A})$, it straightforwardly follows  that
 $u({\bf M})U_{\exp(t+t_0){\bf A}}x=U_{\exp(t+t_0){\bf A}}u({\bf M})x$ for $|t|\le t_y$, hence $t_0+t_y\in\cZ$, which is in contradiction with  the
 definition of $t_0$. This proves that $t_0=\infty$. As is well known from the elementary theory of Lie-group theory, since the
 $G$ is connected, every element is the product of a finite number of elements belonging to one parameter subgroups generated
 by $\gg$, so that  we have actually demonstrated  that $u({\bf M})U_g=U_gu({\bf M})$ on $D_N^{(U)}$ for every $g\in G$. This identity  
implies  $U_gu({\bf M})|_{D_N^{(U)}}=u({\bf M})|_{D_N^{(U)}}U_g$ on the natural domains thanks to the invariance of the Nelson space under the action of the 
group representation. In our hypotheses, $u({\bf M})|_{D_N^{(U)}}$ is  the restriction of a closable operators and thus  it is closable as 
well and so $U_g\overline{u({\bf M})|_{D_N^{(U)}}}=\overline{u({\bf M})|_{D_N^{(U)}}}U_g$ for 
every $g$. Using Proposition \ref{SL2} and Remark \ref{remarkshurrep} (a) we find $D(\overline{u({\bf M})|_{D_N^{(U)}}})=\sH$ and 
$\overline{u({\bf M})|_{D_N^{(U)}}}\in\gB(\sH)$, more precisely $\overline{u({\bf M})|_{D_N^{(U)}}}=aI+bJ$ for some $a,b\in\bR$, where 
 $J$ is some complex structure.
Since $\overline{u({\bf M})|_{D_N^{(U)}}}\subset \overline{u({\bf M})}$, the maximality of the domain gives 
$\overline{u({\bf M})|_{D_N^{(U)}}}=\overline{u({\bf M})}$. As the  latter is selfadjoint, it follows that $b=0$ and $\overline{u({\bf M})}=aI$ 
with $a\in\bR$ ending the proof of the first statement. If ${\bf M}={\bf M}^+$ then $u({\bf M})$ is symmetric for Thm \ref{thmuext} and, for the same theorem, 
 it is also essentially self-adjoint because (b) implies that ${\bf M}$ commutes with a Nelson element. 
\end{proof}

\section{Irreducible quaternionic von Neumann algebras}
This section is devoted to focus on the basic properties of irreducible quaternionic von Neumann algebras.

\subsection{The commutant of an irreducible quaternionic von Neumann algebra}
An irreducible $^*$-closed subset $\gS\subset \gB(\sH)$ in a complex Hilbert space is trivial in view of the complex version of Schur lemma. 
For real and quaternionic Hilbert spaces the picture is more complicated.
In principle, the commutant of a generic $^*$-closed irreducible subset $\gS\subset \gB(\sH)$ may contain infinitely different complex structures as suggested by Proposition \ref{SL2}. We now examine the
 quaternionic Hilbert space case  when $\gS$ is a von Neumann algebra, finding a result similar Thm 5.3 established  in \cite{MO1} dealing 
with  von Neumann algebras in real Hilbert spaces.

\begin{theorem}\label{threecommutant} Let $\gR$ be a von Neumann algebra on the quaternionic Hilbert space $\sH$. If $\gR$ is irreducible, 
then $\gR'$ is of three possible mutually exclusive types:
\begin{itemize}
\item $\gR'= \{a I\:|\: a\in \bR\}$ ({\bf quaternionic-real type}). 
\item $\gR'= \{aI + bJ\:|\:a,b \in \bR\}$  where $J$ is a complex structure determined up to its sign. Furthermore $J \in \gR$  ({\bf quaternionic-complex type}).
\item $\gR' = \{aI + bJK + cJ + dK\:|\:a,b,c, d \in \bR\}$ 
where $J,K$ and  $JK=-KJ$ are complex structures. Furthermore $J,K, JK \not \in \gR$ ({\bf quaternionic-quaternionic type})
\end{itemize}
In both the quaternionic-real and quaternionic-quaternionic cases the centre $\gZ_\gR$ of $\gR$ is $\gZ_\gR=\{aI\:|\:a\in\bR\}$, while in the quaternionic-complex case $\gZ_\gR=\{aI+bJ\:|\:a,b\in\bR\}$
\end{theorem}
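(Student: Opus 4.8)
The plan is to analyze $\gR'$ through its selfadjoint and antiselfadjoint parts, using that $\gR'$ is itself a $^*$-closed algebra (a von Neumann algebra) and that Schur's lemma sharply constrains operators commuting with the irreducible $\gR$. Every $A\in\gR'$ decomposes uniquely as $A=A_S+A_A$ with $A_S=\tfrac12(A+A^*)$ selfadjoint and $A_A=\tfrac12(A-A^*)$ antiselfadjoint, and both summands again lie in $\gR'$; so it suffices to describe the two parts separately. First I would apply Proposition \ref{SL} to each selfadjoint element of $\gR'$: being bounded and commuting with the irreducible family $\gR$, it must equal $aI$ for some $a\in\bR$. Thus the selfadjoint part of $\gR'$ is exactly $\bR I$, and the whole problem reduces to understanding the real vector space $\gR'_A:=\{A\in\gR'\:|\:A^*=-A\}$, since $\gR'=\bR I\oplus\gR'_A$.

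Next I would equip $\gR'_A$ with an inner product. For $A,B\in\gR'_A$ the anticommutator $AB+BA$ is selfadjoint and lies in the algebra $\gR'$, hence equals a real multiple of $I$ by Proposition \ref{SL}; writing $AB+BA=-2\langle A,B\rangle I$ defines a symmetric bilinear form. Taking $A=B$ gives $A^2=-\langle A,A\rangle I$ with $\langle A,A\rangle\geq 0$ (because $\langle x|A^2x\rangle=-\|Ax\|^2$), and $\langle A,A\rangle=0$ forces $A=0$, so the form is a genuine real inner product whose unit vectors are precisely the complex structures in $\gR'$ and whose orthogonality means anticommutativity. The crux is then bounding $\dim_\bR\gR'_A$. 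If this dimension is at least two, pick orthonormal $J,K$; then $JK\in\gR'_A$ is again a complex structure, and a direct check of anticommutators shows $JK\perp J$ and $JK\perp K$, so the dimension is at least three and in particular cannot be two. If it were at least four, one could choose a complex structure $L$ orthogonal to $J,K,JK$; but $L\perp J$ and $L\perp K$ force $L$ to commute with $JK$, whereas $L\perp JK$ forces $L$ to anticommute with $JK$, whence $JK\cdot L=0$ and $L=0$, a contradiction. Hence $\dim_\bR\gR'_A\in\{0,1,3\}$, and via $\gR'=\bR I\oplus\gR'_A$ these three values yield exactly the quaternionic-real, quaternionic-complex and quaternionic-quaternionic forms, the uniqueness up to sign of $J$ in the complex case following since $S^*=-S$, $S^2=-I$, $S\in\bR I\oplus\bR J$ force $S=\pm J$.

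To finish I would settle the membership and centre claims. In the complex case $\gR'=\{aI+bJ\}$ is commutative, so $\gR'\subseteq(\gR')'=\gR''=\gR$ by Theorem \ref{DCT}; in particular $J\in\gR$ and $\gZ_\gR=\gR\cap\gR'=\gR'=\{aI+bJ\}$. In the quaternionic case none of $J,K,JK$ can lie in $\gR$: any of them in $\gR$ would belong to $\gZ_\gR=\gR\cap\gR'$ and thus commute with all of $\gR'$, contradicting the anticommutation relations (e.g.\ $[JK,J]=2K\neq 0$). The centre is then $\gZ_\gR=\{Z\in\gR'\:|\:Z\text{ commutes with }\gR'\}$: trivially $\bR I$ in the real case, while imposing $[Z,J]=[Z,K]=0$ on $Z=aI+bJK+cJ+dK$ kills $b,c,d$ and leaves $\gZ_\gR=\bR I$.

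The main obstacle is the dimension bound on $\gR'_A$: everything hinges on recognizing, via Schur's lemma, that the anticommutator of two antiselfadjoint elements of $\gR'$ is a real scalar, which turns $\gR'_A$ into a Euclidean space whose complex structures obey the quaternion multiplication table, and then on the short but slightly delicate computation ruling out a fourth mutually ``orthogonal'' complex structure.
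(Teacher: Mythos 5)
Your proposal is correct, and it reaches the classification by a genuinely different route from the paper. The paper first invokes its Proposition \ref{SL2} to write every $A\in\gR'$ as $aI+bL$ with $L$ a complex structure, observes that this makes $\gR'$ an absolute-valued real associative unital algebra (i.e.\ $\|AB\|=\|A\|\,\|B\|$), and then outsources the classification to the Urbanik--Wright theorem, which identifies $\gR'$ with $\bR$, $\bC$ or $\bH$; it must then do extra work to show that the abstract generators $h^{-1}(i),h^{-1}(j),h^{-1}(k)$ are actually anti-selfadjoint, via the computation $J^*J=aI$, $J^*=-aJ$, $a=1$. You instead use only the selfadjoint Schur lemma (Proposition \ref{SL}): the selfadjoint part of $\gR'$ collapses to $\bR I$, and the anticommutator form $AB+BA=-2\langle A,B\rangle I$ turns the anti-selfadjoint part into a real Euclidean space in which unit vectors are exactly the complex structures and orthogonality is anticommutation; the dimension count $\dim_\bR\gR'_A\in\{0,1,3\}$ (the fourth orthogonal structure forces $(JK)L=0$, hence $L=0$) then yields the three types directly, with anti-selfadjointness built in from the start. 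In effect you prove the relevant special case of the Urbanik--Wright/Frobenius classification from scratch, which makes the argument self-contained and avoids Proposition \ref{SL2} entirely, at the cost of the slightly delicate orthogonality computation; the paper's version is shorter on the page but leans on an external reference and on the separate anti-selfadjointness argument. Your treatment of the membership claims ($\gR'$ commutative implies $\gR'\subset\gR''=\gR$ in the complex case; $J\in\gR$ would place $J$ in the centre and contradict $[JK,J]=2K\neq0$ in the quaternionic case) and of the centres coincides in substance with the paper's.
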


\begin{proof} If $A \in \gR'$,  Prop. \ref{SL2} implies 
 $A= aI+bL$ for some $a,b \in \bR$ and some complex structure $L$. As a consequence, $\gR'$ is a real associative unital normed algebra with
  the further property that $||AB||= ||A||\:||B||$. (Indeed, by direct computation $||(aI+ bL)x||^2= (a^2+b^2) ||x||^2$ so that $||aI+bL||^2= a^2+b^2$. 
  Furthermore, iterating the procedure, where $L'$ is another complex structure,  $||(aI+bL)(a'I+b'L')x||^2 = (a^2+b^2)(a'^2+b'^2) ||x||^2 = ||aI+bL||^2\:||a'I+b'L'||^2||x||^2$ and thus
$||(aI+bL)(a'I+b'L')||= ||aI+bL||\:||a'I+b'L'||$.) Due to  \cite{UW} there exists  a  real associative unital normed algebra isomorphism $h$ from  $\gR'$ to $\bR$, $\bC$ or $\bH$. 
In the first case, $\gR'=h^{-1}(\bR)=\{aI\:|\:a\in\bR\}$.
In the second case, $\gR' =h^{-1}(\bC) = \{aI + bJ\:|\:a,b \in \bR\}$ where 
$J := h^{-1}(i)$. Furthermore, as $h^{-1}$ is an isomorphism, $JJ= h^{-1}(jj) = h^{-1}(-1) =-I$. In the third case, $\gR' = h^{-1}(\bH) = \{aI + bJK + cJ + dK\:|\:a,b,c, d \in \bR\}$ with
$J := h^{-1}(j)$,  $K := h^{-1}(k)$,  $JK := h^{-1}(i)$ where  $i,j,k \in \bH$ (with $i=jk=-kj$) are the three imaginary units. Again, as in the real-complex case, we get $JJ= h^{-1}(jj) = h^{-1}(-1) =-I$
and $KK= h^{-1}(kk)= h^{-1}(-1) =-I$. 
Let us prove that $J$ in the quaternionic-complex case and $J,K$ in the quaternionic-quaternionic one are antiselfadjoint concluding that they are complex structures. 
The proof being the same in both cases, we deal with  $J$ only. Since $\gR'$ is a $\null^*$-algebra, it holds $J^*\in\gR'$, in particular $J^*J\in\gR'$ which is clearly
 self-adjoint and positive. Since $\gR$ is irreducible, Proposition \ref{SL} guarantees that $J^*J=aI$ for some $a\ge 0$. Multiplying both sides  by $-J$ on the right, 
 using $JJ=-I$, we get $J^*=-aJ$. Taking the adjoint on both sides yields  $J=-aJ^*$ which, in particular, assures that $a\neq 0$, $J$ being bijective since $JJ=-I$.
 So, $J^*=-\frac{1}{a}J$. Summing up, $0=J^*-J^*=\left(a-\frac{1}{a}\right)J$.  As $JJ=-I$, it must be $a-\frac{1}{a}=0$, hence $a=1$ and $J^*= -aJ= -J$ as wanted. $J$ is a complex structure.
$JK$ turns out to be a complex structure as well, since $J$ and $K$ are complex structures and $JK=-KJ$.\\
To conclude,  let us establish the form of the centers $\gZ_\gR$.
The real case is obvious. In the complex case,  $J$ commutes with $ \{aI + bJ\:|\:a,b \in \bR\} =\gR'$, so it belongs to $\gR''=\gR$ and thus $\gZ_\gR= \{aI+bJ\:|\:a,b\in\bR\}$.
 This result also implies that, in the complex case, $J$ is unique up to its sign. Indeed, let $J'$ be another complex structure in $\gR'$, 
then it commutes with $J$ (as it belongs to $\gR$). Therefore $JJ' \in \gR'$  is self adjoint and thus $JJ' = aI$, namely $J'= -aJ$, because $\gR$ is irreducible. Since $JJ=J'J' =-1$ we must have $a=\pm 1$.
 $\gZ_\gR$ for the quaternionic case is easy. Suppose that $U=a+bJ+cK+dJK \in \gR$ for some $a,b,c,d\in\bR$, then, since $\gR=\gR''$ it must be $UJ=JU$, that is $aJ+bJ^2+cKJ+dJKJ=aJ+bJ^2+cJK+dJ^2K$.
 A straightforward calculcation shows that $2cJ-2d=0$ which, taking the Hermitean conjugate, is equivalent to $2d+2cJ=0$. Combining the two equations we get $d=c=0$. Finally, since it must also be 
$UK=KU$ we get $aK+bJK=aK+bKJ$ which simplifies as $bJK=bKJ$. Since $J,K$ anticommute with each other it must be $b=0$. This concludes the proof.
\end{proof}

\subsection{The structure of irreducible von Neumann algebras in quaternionc Hilbert spaces.}
We are in a position to prove that the structure of irreducible von Neumann algebras and associated lattice of orthogonal projectors is isomorphic  the one of
 $\gB(\sH')$ where $\sH'$ is suitable. This result will play a central and crucial role in the rest of this work.

\begin{theorem}\label{quaternioni3theorem}
Let $\gR$ be an irreducible von Neumann algebra in the quaternionic Hilbert space $\sH$.
Referring to the three cases listed in Theorem \ref{threecommutant},  the following facts hold.\\
{\bf (a)} If $\gR'= \{a I\:|\: a\in \bR\}$, then \\

(i) $\gR=\gB(\sH)$,\\

(ii) $\cL_\gR(\sH)=\cL(\sH)$.\\
{\bf (b)} If $\gR'= \{aI + bJ\:|\:a,b \in \bR\}$, then \\

(i)  $\gR_J = \gB(\sH_J)$ and the map $\gR \ni A \mapsto A_J \in \gB(\sH_J)$ is a norm-preserving 

weakly-continuous (thus strongly-continuous)    $^*$-isomorphism of real unital $^*$-algebras 

which, in particular, maps $J$ to $jI$.\\

(ii) the map  $\cL_{\gR}(\sH)\ni P \mapsto P_J \in \cL(\sH_J)$ is an isomorphism of complete 

orthocomplemented lattices.\\
 {\bf (c)} If $\gR' = \{aI + bJK + cJ + dK\:|\:a,b,c, d \in \bR\}$, then \\

(i) $\gR_{JK}= \gB(\sH_{JK})$ and the map $\gR \ni A \mapsto A_{JK}\in  \gB(\sH_{JK})$ is a norm-preserving  

weakly-continuous (thus strongly-continuous) $^*$-isomorphism of real unital $^*$-algebras,\\

(ii) the map  $\cL_{\gR}(\sH)\ni P \mapsto P_{JK} \in \cL(\sH_{JK})$ is an isomorphism of complete 

orthocomplemented lattices.
\end{theorem}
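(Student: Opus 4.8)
The plan is to treat the three cases separately, reducing (b) and (c) to the already-established complex and real theory through the induced algebras $\gR_J$ and $\gR_{JK}$. Case (a) is immediate: since $\gR$ is a von Neumann algebra, $\gR=\gR''$ by Theorem \ref{DCT}, and $\gR''=(\bR I)'=\gB(\sH)$ because every bounded operator commutes with the real scalars; then $\cL_\gR(\sH)=\cL_{\gB(\sH)}(\sH)=\cL(\sH)$ by definition.

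For (b) and (c) the structural input is Prop.\ref{complexidentif} (resp.\ Prop.\ref{quaternidentif}), which guarantees that $\gR_J$ (resp.\ $\gR_{JK}$) is a von Neumann algebra on the complex Hilbert space $\sH_J$ (resp.\ the real Hilbert space $\sH_{JK}$); note that $J\in\gR'$ (resp.\ $J,K\in\gR'$), so the hypotheses of those propositions hold. By the complex (resp.\ real) double commutant theorem it then suffices to prove that the commutant of $\gR_J$ in $\gB(\sH_J)$ is $\bC_j I$ (resp.\ that the commutant of $\gR_{JK}$ in $\gB(\sH_{JK})$ is $\bR I$); indeed $\gR_J=(\gR_J)''=(\bC_j I)'=\gB(\sH_J)$ and $\gR_{JK}=(\gR_{JK})''=(\bR I)'=\gB(\sH_{JK})$ follow at once.

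The heart of the argument is this commutant computation, which rests on the multiplicativity of the tilde-correspondence. Given $B$ in the commutant of $\gR_J$, write $B=\tilde B_J$ with $\tilde B\in\gB(\sH)$ commuting with $J$ (Prop.\ref{propHJop}(b),(c)(vi)); since $J\in\gR'$ each $A\in\gR$ commutes with $J$ and hence $\widetilde{A_J}=A$. Applying the tilde map to $BA_J=A_JB$ and using its multiplicativity (Prop.\ref{propHJop}(c)(iv)) yields $\tilde BA=A\tilde B$ for all $A\in\gR$, i.e.\ $\tilde B\in\gR'$. Invoking the explicit form of $\gR'$ from Theorem \ref{threecommutant}, in case (b) $\tilde B=aI+bJ$, and restricting to $\sH_J$, where $J$ acts as right multiplication by $j$, gives $B=(a+bj)I\in\bC_j I$; in particular $J\mapsto J_J=jI$. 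The same scheme runs in case (c) with Prop.\ref{lemmaextensreal} in place of Prop.\ref{propHJop}: one gets $\tilde B=aI+bJK+cJ+dK\in\gR'$, and now the constraints $\tilde BJ=J\tilde B$ and $\tilde BK=K\tilde B$, which hold because $\tilde B$ commutes with $J$ and $K$ by construction, force $b=c=d=0$ through the relations $JKJ=K$, $J^2=K^2=-I$, $JK=-KJ$, so that $B=aI\in\bR I$. I expect this algebraic elimination to be the only genuinely computational step, and the main (though mild) obstacle.

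Finally, the isomorphism statements are assembled from the properties of the tilde-correspondence. In case (b) the map $\gR\ni A\mapsto A_J$ is additive, real-homogeneous, multiplicative and $^*$-preserving, injective (Prop.\ref{propHJop}(c)(ii)--(v),(viii)), norm-preserving ((c)(vi)), weakly continuous ((d)), and surjective onto $\gB(\sH_J)$ by part (i); hence it is a norm-preserving weakly continuous $^*$-isomorphism of real unital $^*$-algebras sending $J$ to $jI$. Restricting to self-adjoint idempotents gives a bijection $\cL_\gR(\sH)\to\cL(\sH_J)$; since $P\le Q$ is equivalent to $QP=P$ and both the map and its inverse preserve products, it is an order isomorphism, while $(I-P)_J=I-P_J$ shows it preserves orthocomplementation. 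An order isomorphism between complete lattices automatically preserves all sups and infs, so $P\mapsto P_J$ is an isomorphism of complete orthocomplemented lattices. Case (c) is identical, with $\sH_{JK}$, Prop.\ref{lemmaextensreal}(c),(d), and $\bR$-scalars replacing $\bC_j$-scalars throughout.
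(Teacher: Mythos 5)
Your proof is correct, and it reaches the key identities $\gR_J=\gB(\sH_J)$ and $\gR_{JK}=\gB(\sH_{JK})$ by a route that differs in detail from the paper's. For case (b) the paper never computes $(\gR_J)'$: it invokes Proposition \ref{complexidentif2} to transfer irreducibility from $\gR$ to $\gR_J$ and then applies the complex Schur lemma, whereas you lift an arbitrary $B\in(\gR_J)'$ to $\tilde{B}\in\gR'$, read off $\tilde{B}=aI+bJ$ from Theorem \ref{threecommutant}, and close with the double commutant theorem in $\sH_J$. For case (c) the paper's argument is closer to yours in spirit -- it lifts an arbitrary $A\in\gB(\sH_{JK})$ to $\tilde{A}$ commuting with $J,K$, observes $\tilde{A}\in(\gR')'=\gR$, and concludes $\gB(\sH_{JK})\subset\gR_{JK}$ directly, which is marginally shorter than your commutant computation but rests on the same input, namely the explicit form of $\gR'$. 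Your treatment of the lattice statements is also more abstract: the paper verifies preservation of arbitrary infima by an explicit computation ($P_J$ projects onto $\sM\cap\sH_J=\bigcap_{a}(\sM_a\cap\sH_J)$) and obtains suprema by De Morgan, while you appeal to the general fact that an order isomorphism of complete lattices preserves all sups and infs; both are sound. One small point: to apply Proposition \ref{complexidentif} in case (b) you need $J\in\gR'\cap\gR$, not merely $J\in\gR'$ as you state -- the membership $J\in\gR$ is supplied by Theorem \ref{threecommutant} in the quaternionic-complex case, and it is genuinely needed for $\gR_J$ to be closed under multiplication by $\bC_j$.
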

\begin{proof}
(a) From $\gR'=\{aI\:|\: a\in\bR\}$ it immediately follows $\gR=\gR''=\gB(\sH)$ and thus $\cL_\gR(\sH)=\cL(\sH)$. \\
(b) (i) We know from Prop.\ref{complexidentif} that $\gR$ gives rise to a complex von Neumann algebra $\gR_J$, the latter von Neumann algebra being irreducible due to Prop. \ref{complexidentif2} since the former is irreducible. The complex version of 
 Schur's lemma implies that $\gR_J = \gB(\sH_J)$. The map $\gR\ni A\mapsto A_J \in\gR_J$ is a norm-preserving weakly-continuous, strongly-continuous    $^*$-isomorphism 
 of real unital $^*$-algebras in view of Prop. \ref{propHJop}. Let us pass to (ii). Prop. \ref{propHJop} easly implies that $\cL_{\gR}(\sH)\ni P \mapsto P_J \in \cL(\sH_J)$
  is an isomorphism of orthocomplemented lattices. The only  pair of properties to be proved concerns completeness of the involved lattices of orthogonal projectors 
   and are the following ones.  (1) Given a family $\{P_a\}_{a \in A} \subset \cL_{\gR}(\sH)$ such that $\sM_a := P_a(\sH)$, defining  $P= \inf_{a \in A}P_a$ -- in other words  
   $P$   is the orthogonal projector onto $\sM:= \cap_{a\in A}\sM_a$ -- it turns out that $P_J =  \inf_{a \in A}(P_a)_J$.  
 Regarding the fact that $P_J$ is well defined,  observe that $P \in \cL_\gR(\sH)$ because  $\cL_\gR(\sH)$ is complete and thus  $P$ commutes with 
 $J$ since $\cL_\gR(\sH) \subset \gR$.  Similarly, (2)
 Given a family $\{P_a\}_{a \in A} \subset \cL_{\gR}(\sH)$ such that $M_a := P_a(\sH)$ defining  $Q= \sup_{a \in A}P_a$ -- in other words $Q \in \gR$ 
 is the orthogonal projector onto $\sN := \overline{<\cup_{a\in A}\sM_a>}$ --  it turns out that $Q_J =  \sup_{a \in A}(P_a)_J$.
 Regarding the fact that $Q_J$ is well defined,  observe that $Q \in \cL_\gR(\sH)$ because  $\cL_\gR(\sH)$ is complete and thus  $Q$ commutes with $J$  since $\cL_\gR(\sH) \subset \gR$.
 To prove (1), observe that $P_J$ is the orthogonal projector onto $\sM \cap \sH_J$ because $x=P_Jx$  if and only if both $x\in \sH_J$ (because 
 $P_J$ is a projector in $\gB(\sH_J)$) and $x=Px$ which means $x \in \sM$.   We conclude that $P_J$ is the orthogonal projector onto
  $\sM \cap \sH_J = \left(\cap_{a\in A} \sM_a\right) \cap \sH_J =\cap_{a\in A} \left(\sM_a \cap \sH_J\right)$. The orthogonal projector
   onto the last space is $\inf_{a\in A} (P_a)_J$ by definition. We have obtained $P_J =  \inf_{a \in A}(P_a)_J$.
 Property (2) is an immediate consequence of (1) and De Morgan's rule, valid for any family  orthogonal projectors  $\{Q_b\}_{b \in B} \subset \gB(\sK)$ with $\sK$ real, complex or quaternionic,
 $\sup_{b\in B} Q_b =  \left(\inf_{b\in B}Q^{\perp}_b\right)^\perp$, where $Q^\perp := I-Q$ is the orthogonal projector onto $Q(\sK)^\perp$.\\
 (c) (i) Making use of Prop.\ref{lemmaextensreal} and Propositions  \ref{quaternidentif}, \ref{complexidentif2}
 we have that  $\gR \ni A \mapsto A_{JK}\in \gR_{JK}$ is a norm-preserving  weakly-continuous, strongly-continuous $^*$-isomorphism of real
 unital $^*$-algebras where $\gR_{JK}$ is an irreducible von Neumann algebra. Let us prove that $\gR_{JK}= \gB(\sH_{JK})$.
  Let $A\in\gB(\sH_{JK})$, then Prop.\ref{lemmaextensreal} assures that there exists a unique $\widetilde{A}\in\gB(\sH)$ such
   that $A=\widetilde{A}_{JK}$. Moreover $\widetilde{A}$ commutes with $J$ and $K$. Since $\gR'=\{aI+bJ+cK+dJK\}$, it immediately
    follows that $\widetilde{A}\in\gR''=\gR$, i.e., $A\in\gR_{JK}$. This means $\gR_{JK}\supset \gB(\sH_{JK})$ and thus  $\gR_{JK} = \gB(\sH_{JK})$
     because the converse inclusion is obvious. The proof of (ii) is essentially identical to the corresponding for (b), just noticing that now we have 
     to handle two (anticommuting) complex structures $J$ and $K$. 
\end{proof}

\section{Quaternionic Wigner  relativistic elementary systems (WRES)}
Within this section, as already done in \cite{MO1},  we introduce a first notion of {\em elementary system} with respect to
the relativistic symmetry adopting the famous framework introduced by Wigner. 
A relativistic elementary system in Wigner's view is a quantum system completely determined from its symmetry properties where the symmetry group is Poincar\'e one.
So it is determined by a faithful (with the {\em caveat} discussed below) continuous unitary representation of that group and that representation also fixes the class of observables of the
 system which coincide with the selfadjoint elements 
of the von Neumann algebra  generated by the representation itself. Elementariness of the said system is translated into the irreduciblility demand of the representation.
Later we will come back on these requirements improving this model as already done in \cite{MO1}.

\subsection{Faithfulness issues}
Actually  by {\em Poincar\'{e} group} we will    
indicate here the real simply-connected Lie group obtained as  the semi-direct product
$SL(2,\bC) \ltimes \bR^4$ with respect to the abelian normal subgroup $\bR^4$.
\begin{definition}
 $\cP$, called {\bf Poincar\'{e} group} in this paper, is the semi-direct product $SL(2,\bC) \ltimes \bR^4$ of the groups $SL(2,\bC)$ and $\bR^4$, with group product defined as  
$$(A, t)\cdot (A',t'):= (AA', t+\Lambda(A)t') \quad \mbox{for $A,A' \in SL(2,\bC)$ and $t,t' \in \bR^4$,}$$ 
where $\Lambda : SL(2,\bC) \to SO(1,3)\sp\uparrow$ is the standard covering homomorphism.
\end{definition}
\noindent  $\cP$ is more properly  known as the {\em universal covering}
 of the {\em proper orthochronous Poincar\'{e} group} as understood in relativity. $\cP$   actually enters all known  physical constructions and every representation 
 of the  proper orthochronous Poincar\'{e} group
 is also a representation of $\cP$. For this reason, we require a weaker {\em local-faithfulness} assumption for every  group representation of $\cP$
 we henceforth consider, i.e., {\em the representation is only supposed to be  injective  in a 
 neighbourhood of the neutral element of the group}, 
  because  only in a neighbourhood  
 of the identity  of the group elements 
$SL(2,\bC)$ and the proper orthochronous Lorentz groups $SO(1,3)\sp\uparrow$ are identical. \\ 
To corroborate our assumption, we remark that the {\em complex} strongly-continuous unitary 
irreducible representations of 
$SL(2,\bC) \ltimes \bR^4$ with physical meaning are all locally faithful: (1)
for positive squared mass with semi-integer spin they are faithful,  (2) for positive squared mass with integer spin they are faithful up to the sign of the $SL(2,\bC)$
 element, and (3) they are faithful up to the sign of the $SL(2,\bC)$ element for zero squared mass with non-trivial momentum representation. \\
From a general point of view, local faithfulness of a continuous representation
 $\cP \ni g \mapsto U_g$ is equivalent to the physically more natural requirement that 

{\em the representation of the subgroup $\bR^4$ of spacetime translations is non-trivial}.\\ More precisely, the following general result holds  whose proof stays in Appendix B.
\begin{proposition}\label{proplocfaithfulness} 
Let $f : SL(2,\bC) \ltimes \bR^4 \to G$ be a continuous group homomorphism to the topological Hausdorff group $G$. The following two facts are equivalent.

 {\bf (a)} $f$ is injective in a neighborhood of the unit element of $SL(2,\bC) \ltimes \bR^4$.

 {\bf (b)} The associated group homomorphism from the subgroup $\bR^4 \subset  SL(2,\bC) \ltimes \bR^4$ to $G$ $$\bR^4 \ni t \mapsto f((I,t))$$ 

is not trivial.\\
If (a) and (b) are true, then either $f$ is  injective or its kernel is  $\{(\pm I, 0)\}$.
\end{proposition}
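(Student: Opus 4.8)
The plan is to study the closed normal subgroup $N := \ker f \subseteq \cP$ (it is closed because $G$ is Hausdorff and $f$ continuous) and to analyze its intersection with the translation subgroup $T := \{(I,t)\mid t\in\bR^4\}\cong\bR^4$. Note first that (b) fails exactly when $T\subseteq N$, while (a) fails exactly when the unit is not isolated in $N$, i.e. when $N$ is non-discrete. Since $T$ is a positive-dimensional subgroup, $T\subseteq N$ forces $N$ non-discrete, which already yields the easy implication (a)$\Rightarrow$(b). Thus everything reduces to a \emph{dichotomy}: either $T\subseteq N$, or $N\subseteq\{(\pm I,0)\}$ (these alternatives being mutually exclusive). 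From such a dichotomy the equivalence of (a) and (b) and the final alternative follow immediately: (b) $\Leftrightarrow T\not\subseteq N \Leftrightarrow N\subseteq\{(\pm I,0)\}\Rightarrow N$ finite $\Rightarrow$ (a), and conversely (a) $\Rightarrow N$ discrete $\Rightarrow T\not\subseteq N \Rightarrow$ (b).

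The first key step is to control $H:=N\cap T$. A direct computation of the conjugation action gives $(B,0)(I,t)(B,0)^{-1}=(I,\Lambda(B)t)$, so normality of $N$ makes $H$ a closed subgroup of $\bR^4$ invariant under the whole image $\Lambda(SL(2,\bC))=SO(1,3)^{\uparrow}_{+}$. The identity component $H_0$ of a closed subgroup of $\bR^4$ is a linear subspace, and it is preserved by every automorphism preserving $H$, hence by the connected Lorentz group; since the defining representation of $SO(1,3)^{\uparrow}_{+}$ on $\bR^4$ is irreducible, $H_0\in\{0,\bR^4\}$. If $H_0=\bR^4$ then $H=T$, i.e. $T\subseteq N$ and (b) fails. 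If $H_0=\{0\}$ then $H$ is discrete; for $x\in H$ the orbit map $g\mapsto\Lambda(g)x$ is continuous from the connected group into the discrete set $H$, hence constant, so $x$ is Lorentz-fixed and therefore $x=0$, the Minkowski representation having no nonzero invariant vector. Thus in this case $N\cap T=\{e\}$.

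The second key step treats the case $N\cap T=\{e\}$ and is where the $SL(2,\bC)$-component gets pinned down. For $(A,v)\in N$ and any $t\in\bR^4$ the commutator with the translation $(I,t)$ lies in $N$ by normality, and a short computation in the semidirect product gives
\[
(I,t)(A,v)(I,t)^{-1}(A,v)^{-1}=(I,(I-\Lambda(A))t)\in N\cap T=\{e\}.
\]
Hence $(I-\Lambda(A))t=0$ for all $t$, forcing $\Lambda(A)=I$, i.e. $A\in\ker\Lambda=\{\pm I\}$. Therefore every element of $N$ projects to $\pm I$ under $(A,v)\mapsto A$. Any element over $I$ lies in $N\cap T=\{e\}$; and if $(-I,t_0)\in N$, then, using $\Lambda(-I)=I$, its square is $(-I,t_0)^2=(I,2t_0)\in N\cap T=\{e\}$, so $t_0=0$. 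This shows $N\subseteq\{(\pm I,0)\}$ and completes the dichotomy.

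Putting the pieces together closes the argument exactly as indicated above, and since $N$ is a subgroup of the two-element set $\{(\pm I,0)\}$ it equals either $\{e\}$ (so $f$ is injective) or $\{(\pm I,0)\}$. The only non-formal inputs are the irreducibility of the Minkowski representation of the Lorentz group and the commutator identity; the main obstacle is recognizing that these two facts, applied respectively to $N\cap T$ and to the whole of $N$, are precisely what is needed, everything else being bookkeeping with the semidirect-product multiplication. I would only take care to verify that closedness of $N$ (hence of $N\cap T$) is genuinely invoked, so as to legitimately apply the structure theorem for closed subgroups of $\bR^4$.
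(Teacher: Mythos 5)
Your proof is correct, and it takes a genuinely different route from the paper's. The paper proves the implication (b)$\Rightarrow$(a) by classifying \emph{all} closed normal subgroups of $SL(2,\bC)\ltimes\bR^4$: it invokes Cartan's theorem to make $\ker f$ a Lie subgroup, identifies the connected normal subgroups with $\mathrm{Ad}$-invariant ideals of the Lie algebra (using irreducibility of the action on $\bR^4$ and simplicity of $sl(2,\bC)$), and then handles the discrete case via the fact that discrete normal subgroups of a connected group are central. You instead isolate exactly the dichotomy needed ($T\subseteq N$ or $N\subseteq\{(\pm I,0)\}$) and prove it at the group level: the structure theorem for closed subgroups of $\bR^4$ plus irreducibility of the Minkowski representation controls $N\cap T$, and the explicit commutator identity $(I,t)(A,v)(I,t)^{-1}(A,v)^{-1}=(I,(I-\Lambda(A))t)$ pins the $SL(2,\bC)$-component down to $\ker\Lambda=\{\pm I\}$, after which the squaring trick kills the translation part. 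The computations you rely on (the conjugation formula, the commutator, the orbit argument for the discrete case) are all correct. What your approach buys is self-containedness and elementarity: no Cartan theorem, no Lie algebra ideals, no appeal to the general centrality of discrete normal subgroups of the full group. What the paper's approach buys is the stronger by-product of a complete list of the closed normal subgroups, which is more than the proposition requires. Your closing remark about verifying that closedness of $N$ is genuinely used is well placed; it enters exactly where you apply the structure theorem to $N\cap T$ and where you argue that the identity being isolated in $N$ yields injectivity of $f$ on a neighbourhood of the unit.
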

\noindent Due to Proposition \ref{proplocfaithfulness}, when $G$ is the group of unitary operators in the (real, complex or quaternionic) Hilbert space  $\sH$ equipped with the (Hausdorff) strong operator 
topology, a unitary strongly-continuous representation $\cP \ni g \mapsto U_g \in G$ is locally faithful if and only if the associated representation of the group of spacetime translations is non-trivial. If $U$ 
is not faithfull the failure of injectivity is due to the sign of the elements in $SL(2,\bC)$ only.

\subsection{Wigner's approach extend to the quaternionic case}
\begin{definition}
{\em
A  real, complex, quaternionic {\bf Wigner  elementary  relativistic system} (WRES)  is a unitary strongly continuous  representation of the proper orthochronous Poincar\'{e} group $\cP$,
$$
U:\cP\ni g\mapsto U_g\in\gB(\sH)
$$
over the, respectively,  real, complex, or  quaternionic  separable Hilbert space $\sH$, and the representation  is  {\em locally faithful} (i.e., injective in a neighborhood of the unit element) and {\em irreducible}. \\ 
If $\gR_U$ is the von Neumann algebra generated by $U$, the {\bf observables} of the WRES
 are the selfadjoint operators $A$  {\bf affiliated} to $\gR_U$,  i.e., their  PVMs belong to $\cL_{\gR_U}(\sH)$.\\
$\cL_{\gR_U}(\sH)$ itself  is the set of {\bf elementary (YES-NO) observables} of the WRES.}
\end{definition}
\begin{remark}\label{remobservables}
{\em Every essentially self-adjoint operator of the form $u({\bf M})$ for some ${\bf M} \in E_{\gp}$ defines an observable 
$\overline{u({\bf M})}$ of the WRES. Indeed, every $u({\bf M})$ satisfies
$Bu({\bf M}) \subset u({\bf M}) B$ and thus $B\overline{u({\bf M})}\subset \overline{u({\bf M})}B$ for every $B \in \{U_g\}_{g \in \cP}' = \gR_U'$ as the 
reader can easily prove from Theorem \ref{thmuext}, so that the PVM of the selfadjoint operator $\overline{u({\bf M})}$
 commutes with $B$ due to Lemma \ref{commst} and therefore  it belongs to $\gR_U''= \gR_U$ as required for observables. We will generalize this example in Corollary \ref{CorollaryMT}.}
\end{remark}
\noindent As established in the previous section for the quaternionic case and in  \cite{MO1} for the real and complex case, the representation $U$ gives rise to a representation on
 its G\r{a}rding domain of the corresponding Lie algebra $\gp$
$$
u: \gp\ni {\bf A}\mapsto u({\bf A})\:.
$$
Let us fix a Minkowskian reference frame in Minkowski spacetime. Referring to that reference frame, for $i=1,2,3$,  ${\bf k}_i\in \gp$ denote the three generators of the 
\textit{boost} one-parameter subgroups along the three spatial axes, ${\bf l}_i\in\gp$ are the three generators of the \textit{spatial rotations} one-parameter subgroups
 about the three axes. Finally, for $\mu=0,1,2,3$ let ${\bf p}_\mu\in\gp$ be the four generators of the \textit{spacetime displacements} one-parameter subgroups along the four Minkowskian axes.
Next define the associated anti-selfadjoint generators
\begin{equation} \label{basispoinc}
\widetilde{K}_i:=\overline{u({\bf k}_i)},\ \widetilde{L}_i:=\overline{u({\bf l}_i)},\ \widetilde{P}_0:=\overline{u({\bf p}_0)},\ \widetilde{P}_i:=\overline{u({\bf p}_i)}\ \ i=1,2,3\:.
\end{equation}
All of these operators leave fixed  the G\r{a}rding domain $D_G^{(U)}$ which is a common core for all them.
If we change the initially fixed reference frame by means of a transformation $p \in \cP$, we obtain another set of generators related to the previous ones by the natural relations 
\begin{equation} \label{basispoinc2}
\widetilde{K}'_i=U_p \widetilde{K}_i U^{-1}_p ,\ \widetilde{L}'_i=U_p\widetilde{L}_i U_p^{-1},\ \widetilde{P}'_0 =U_p\widetilde{P}_0 U_p^{-1},\ \widetilde{P}'_i=U_p\widetilde{P}_i U_p^{-1}\ \ i=1,2,3\:.
\end{equation}
The said generators are {\em not} observables, since they are not selfadjoint and there is no trivial way to associate them with selfadjoint 
operators as instead it 
happens within the complex Hilbert space formulation where one has a standard imaginary unit at disposal. In turn, this problem makes it 
difficult to state a quantum version of Noether correspondence between generators of continuous symmetries and dynamically conserved 
quantities. For the moment we simply ignore these open issues. 

\noindent Consider the element ${\bf e}:=-{\bf p}_0^2+\sum_{i=1}^3{\bf p}^2_i\in E_\gg$ and define  the \textbf{Mass operator} as the symmetric operator
\begin{equation} \label{Moperator}
M_U^2:= u({\bf e})=\left.\left(-\widetilde{P}_0^2+\sum_{i=1}^3\widetilde{P}_i^2\right)\right|_{D_G^{(U)}}\:.
\end{equation}
By direct inspection one sees that ${\bf e}$ is symmetrice and commutes with all generators and thus,
for Proposition \ref{propCSAI}, $M_U^2$
 is essentially selfadjoint -- thus represents an observable of the WRES in view of Remark \ref{remobservables} -- and
has a trivial form $cI$ for some constant $c \in \bR$. 
Moreover it holds $U_p M_U^2 U_p^{-1} = M_U^2$ for every $p \in \cP$ and, in this sense, 
$M_U^2$ is {\bf Poincar\'e invariant}.
Indeed, exploiting the definition of Poincar\'{e} group and Stone Theorem, it is easy to see that $U_g \tilde{P}_{\mu}U_g^{-1}|_{D_G^{(U)}}=\sum_{\nu=0}^3 \Lambda_\mu^\nu \tilde{P}_\nu|_{D_G^{(U)}}$ where $\Lambda$ 
is the Lorentz matrix corresponding to the element $g\in\cP$ and this immediately yields the claim. \\
The real and complex cases were treated in \cite{MO1}, we now focus attention on the remaining quaternionic case taking advantage of the results already achieved in \cite{MO1}. 
A first technical result is the following one.
\begin{proposition}\label{WignerTHM}
Let $U:\mathcal{P}\ni g\mapsto U_g\in\gB(\sH)$ be a  WRES over the quaternionic Hilbert space $\sH$.
 If $\widetilde{P}_0=J_0|\widetilde{P}_0|$ is the polar decomposition of the time displacement generator and $M_U^2\ge 0$, then $J_0$ is a complex structure in $\sH$ with $J_0\in \gR_U\cap\gR_U'$.
\end{proposition}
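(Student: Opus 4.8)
The plan is to verify separately the three assertions packed into the statement: that $J_0$ is a complex structure, that $J_0 \in \gR_U$, and that $J_0 \in \gR_U'$. Throughout I use that $\widetilde{P}_0 = \overline{u(\mathbf{p}_0)}$ is anti-selfadjoint and that, by Proposition \ref{propCSAI}, $M_U^2 = cI$ on $\cD_G^{(U)}$ for a constant $c \in \bR$, with $c \geq 0$ because $M_U^2 \geq 0$.

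First I would show $J_0$ is a complex structure. Since $\widetilde{P}_0$ is anti-selfadjoint, its polar factor $J_0$ is an anti-selfadjoint partial isometry by Proposition \ref{LemmaCOMM}(d), so $J_0^* = -J_0$; to upgrade $J_0$ to a unitary, whence $J_0^2 = -J_0^*J_0 = -I$, it suffices by Proposition \ref{LemmaCOMM}(e) to prove $\widetilde{P}_0$ injective. For $y \in \cD_G^{(U)}$ the identity $\langle y| M_U^2 y\rangle = c\|y\|^2$ together with anti-selfadjointness of the generators gives $\|\widetilde{P}_0 y\|^2 = c\|y\|^2 + \sum_{i=1}^3\|\widetilde{P}_i y\|^2$. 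If $\widetilde{P}_0 x = 0$, approximating $x$ in the graph norm by $y_n \in \cD_G^{(U)}$ (a core for $\widetilde{P}_0$, Proposition \ref{algebrarepr}) forces $c\|x\|^2 = 0$ and $\widetilde{P}_i y_n \to 0$, hence $x \in \bigcap_{\mu}\ker\widetilde{P}_\mu$ by closedness of the $\widetilde{P}_i$. When $c>0$ this already yields $x=0$; when $c=0$ I observe that $\bigcap_\mu\ker\widetilde{P}_\mu$ is exactly the subspace on which all translations act trivially and is $\cP$-invariant (translations fix it; Lorentz elements preserve it since $\Lambda$ fixes the origin), so irreducibility together with local faithfulness (Proposition \ref{proplocfaithfulness}) forces it to be $\{0\}$. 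Either way $\widetilde{P}_0$ is injective and $J_0$ is a complex structure.

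Next, $J_0 \in \gR_U$ is routine: any $B \in \gR_U' = \{U_g\}'$ satisfies $B\widetilde{P}_0 \subset \widetilde{P}_0 B$ by Theorem \ref{thmuext}, hence $BJ_0 = J_0 B$ by Proposition \ref{LemmaCOMM}(b); arbitrariness of $B$ gives $J_0 \in \gR_U'' = \gR_U$ by Theorem \ref{DCT}. It then remains to prove $J_0 \in \gR_U'$, i.e. that $J_0$ commutes with every $U_g$; since $\cP$ is connected and generated by its one-parameter subgroups of translations, rotations and boosts, I only need commutation with each. For time translations $J_0$ commutes with $\widetilde{P}_0$, hence with $e^{t\widetilde{P}_0}$; for spatial translations and rotations the generators $\widetilde{P}_i,\widetilde{L}_i$ have one-parameter groups commuting with that of $\widetilde{P}_0$ (abelianness of translations, $[\mathbf{l}_i,\mathbf{p}_0]=0$), so Proposition \ref{polarCOMM} gives $J_0\widetilde{P}_i = \widetilde{P}_i J_0$, $J_0\widetilde{L}_i = \widetilde{L}_i J_0$ and commutation with the corresponding unitaries.

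The hard part will be the boosts, where $U_g\widetilde{P}_0 U_g^{-1}$ genuinely mixes $\widetilde{P}_0$ with a spatial momentum and no commuting-generator argument is available; this is the heart of the proof and is where the sign of $M_U^2$ enters decisively. For a boost $g$ in the first direction, $\widetilde{Q} := U_g\widetilde{P}_0 U_g^{-1} = \overline{u(\cosh\chi\,\mathbf{p}_0 + \sinh\chi\,\mathbf{p}_1)}$ is again a translation generator, so it commutes with $\widetilde{P}_0$ and hence, by Proposition \ref{polarCOMM}, with $J_0$. I would then restrict to the complex Hilbert space $\sH_{J_0}$ (Propositions \ref{propHJ}, \ref{propHJop}, \ref{operatorfunctioncompl}), where $(\widetilde{P}_0)_{J_0} = j E_0$ with $E_0 = |(\widetilde{P}_0)_{J_0}| \geq 0$ and, from the mass relation $|\widetilde{P}_0|^2 = cI + \sum_i|\widetilde{P}_i|^2$, the bound $E_0 \geq |p_1|$ for $p_1 := -j(\widetilde{P}_1)_{J_0}$. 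Consequently $(\widetilde{Q})_{J_0} = j(\cosh\chi\,E_0 + \sinh\chi\,p_1)$ with $\cosh\chi\,E_0 + \sinh\chi\,p_1 \geq e^{-|\chi|}E_0 \geq 0$, so the polar sign of $\widetilde{Q}$ computed on $\sH_{J_0}$ is $jI = (J_0)_{J_0}$; by the restriction-compatibility of polar decomposition (Theorem \ref{PDT}(c)) the polar sign of $\widetilde{Q}$ on $\sH$ is $J_0$. On the other hand, conjugating the polar decomposition of $\widetilde{P}_0$ exhibits $U_g J_0 U_g^{-1}$ as the polar sign of $\widetilde{Q}$, so uniqueness of the polar decomposition forces $U_g J_0 U_g^{-1} = J_0$. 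The same computation with $\widetilde{P}_2,\widetilde{P}_3$ covers all boosts, whence $J_0$ commutes with every $U_g$ and $J_0 \in \gR_U \cap \gR_U'$.
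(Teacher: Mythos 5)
Your proof is correct in its essentials, but it takes a genuinely different route from the paper's. The paper argues by reduction to the complex case: it first shows (Lemma \ref{lemma-a}) that $U$ must be \emph{reducible} as a representation on $\sH_{\bC_j}$, constructs an orthogonal decomposition $\sH_{\bC_j}=\sH_P\oplus\sH_{P^\perp}$ into two irreducible, locally faithful complex WRES's interchanged by $\cK$ (Lemma \ref{lemmaA}), applies the already-established complex Hilbert space theorem (Theorem 4.3(g) of \cite{MO1}) to each factor to get $J_P=\pm\cJ|_{\sH_P}$ and $J_{P^\perp}=\pm\cJ|_{\sH_{P^\perp}}$, and assembles $J_0=J_P\oplus J_{P^\perp}$. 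You instead work directly in the quaternionic space: injectivity of $\widetilde{P}_0$ from $\|\widetilde{P}_0y\|^2=c\|y\|^2+\sum_i\|\widetilde{P}_iy\|^2$ together with irreducibility and local faithfulness; commutation of $J_0$ with translations and rotations via Proposition \ref{polarCOMM}; and, for boosts, the spectral domination $|\widetilde{P}_0|\geq|\widetilde{P}_i|$ (precisely where $M_U^2\geq 0$ enters) to show that the boosted generator $\widetilde{Q}=U_g\widetilde{P}_0U_g^{-1}$ still has polar sign $J_0$, whence $U_gJ_0U_g^{-1}=J_0$ by uniqueness of the polar decomposition. Your approach is self-contained — it does not import the complex-case theorem as a black box and makes the physical mechanism (the sign of the energy is boost-invariant when the squared mass is non-negative) explicit — at the cost of some domain bookkeeping; the paper's route is shorter at this point by leaning on \cite{MO1} and yields, as a by-product, the structural splitting of $\sH_{\bC_j}$, which is of independent interest.

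Two small repairs. First, the restriction-compatibility of the polar decomposition you invoke for $\sH_{J_0}$ is not Theorem \ref{PDT}(c), which concerns $\sH_\bR$ and $\sH_{\bC_j}$; it should be assembled from Propositions \ref{propHJop} and \ref{operatorfunctioncompl} (namely $|A|_{J_0}=|A_{J_0}|$, uniqueness of the extension from $\sH_{J_0}$ to $\sH$, and the fact that both $|\widetilde{Q}|$ and its polar isometry commute with $J_0$ by Proposition \ref{LemmaCOMM}). Second, the identity $(\widetilde{Q})_{J_0}=j(\cosh\chi\,E_0+\sinh\chi\,p_1)$ and the inequality $E_0\geq|p_1|$ should be read as quadratic-form statements on the G\r{a}rding core, which suffices because positivity of a selfadjoint operator on a core implies positivity; the implication $E_0^2\geq|p_1|^2\Rightarrow E_0\geq|p_1|$ is elementary here since the spectral measures of $|\widetilde{P}_0|$ and $|\widetilde{P}_1|$ commute, the translation group being abelian. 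Neither point is a gap in the idea.
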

\begin{proof}
 We need some preliminary results stated into some lemmata.
\begin{lemma}\label{lemma-a}
With the hypotheses of Proposition \ref{WignerTHM}, $U$ defines reducible (strongly-continuous locally-faithful) unitary representations both over the real Hilbert space
  $\sH_\bR$ and the complex Hilbert space $\sH_{\bC_j}$ as defined in Prop.\ref{propHR} and Prop.\ref{propHC} respectively.
\end{lemma}
\begin{proof}
 Suppose $U: \cP \ni g\mapsto U_g$ defines a WRES on a quaternionic Hilbert space $\sH$. It is clear that $U$ is a strongly-continuous locally-faithful unitary representation over 
 $\sH_\bR$ and  $\sH_{\bC_j}$ as defined in Prop.\ref{propHR} and Prop.\ref{propHC} respectively and defines two corresponding representations 
 called $U_\bR$ and $U_{\bC_j}$. Let us prove that the representations $U_{\bC_j}$ on $\sH_{\bC_j}$ cannot be irreducible.  First, remember that 
 $D_G^{(U)}=D_G^{(U_\bR)}=D_G^{(U_{\bC_j})}$ (Prop. \ref{propgarding}) and the anti-selfadjoint generators of the one-parameter subgroups of spacetime
  displacements defined with respect to $\sH$ or $\sH_\bR$ or $\sH_{\bC_j}$ are equal to each other (Prop. \ref{algebrarepr}). Hence, $M_U^2=M_{U_\bR}^2 = M_{U_{\bC_j}}^2$ 
   and, it being  symmetric, $M_U^2$ is  positive on $\sH$ if and only if  it is positive  on $\sH_\bR$ and, in turn, it happens if and only if it is positive  on $\sH_{\bC_j}$. Thanks to 
   Theorem \ref{PDT}, the polar decompositions $J_0|\widetilde{P}_0|$ of $\widetilde{P}_0$ defined with respect to the three considered spaces are identical. 
If the representation $U_{\bC_j}$ were irreducible,  due to (g) in  Theorem 4.3 in \cite{MO1},  we would have  $J_0=\pm \cJ$,  $\cJ$ being the {\em imaginary unit} of $\sH_{\bC_j}$.
 This is  impossible, since $J_0$ is quaternionic linear and so it commutes with $\cK$ if understood as a real operator on $\sH_\bR$, while we know that $\cJ\cK=-\cK\cJ$. This also 
 implies that $U_\bR$ is reducible: If it were not the case, then $U_{\bC_j}$ would be irreducible since  $\sH_{\bC_j} = (\sH_\bR)_\cJ$
where the right-hand side is the internal complexification of $\sH_\bR$ referred to the complex structure $\cJ$ as in Sect.2.5  in \cite{MO1}.\\
\end{proof}

\begin{lemma}\label{lemmaA}
With the hypotheses of Proposition \ref{WignerTHM}, there exists a projector $P\in\cL(\sH_{\bC_j})$ such that the orthogonal decomposition  $$\sH_{\bC_j}= \sH_{P} \oplus \sH_{P^\perp}$$ together with the following statements.\\

{\bf (i)} $\cJ(\sH_P) = \sH_{P}$ and $\cJ(\sH_{P^\perp}) = \sH_{P^\perp}$ while $\cK(\sH_P) = \sH_{P^\perp}$ and $\cK(\sH_{P^\perp}) = \sH_{P}$,\\

{\bf (ii)}  $\cK : (u,v) \mapsto (-A^{-1}v, Au)$ for every $u\in \sH_P$ and $v\in \sH_{P^\perp}$ where the map $A:= \cK|_{\sH_{P}}:  \sH_{P} \to  \sH_{P^\perp}$ is $\bR$-real isometric and surjective with inverse $A^{-1}=-\cK|_{\sH_{P^\perp}}$\\

{\bf (iii)} Each complex subspace  $\sH_{P}$ and $\sH_{P^\perp}$ is separately invariant under the action of $U$ and the maps $\cP \ni g \mapsto U|_{ \sH_{P}}$, $\cP \ni g \mapsto U|_{ \sH_{P}}$ 
are strongly-continuous irreducible locally-faithful unitary complex representations such that
\beq \label{intpAU}AU_g|_{\sH_P}=U_g|_{\sH_{P^\perp}}A \quad \forall g \in \cP\:.\eeq 
\end{lemma}
\begin{proof} Since $U_{\bC_j}$ is reducible as established in Lemma \ref{lemma-a}, there must exist an orthogonal projector  $P\in\gB(\sH_{\bC_j})\setminus\{0,I\}$ such that
 $PU_g=U_gP$ for every $g\in\cP$. Interpreting $P$ as an $\bR$-linear operator in $\gB(\sH_\bR)$, define the self-adjoint operator $P_\cK:=P-\cK P\cK\in\gB(\sH_\bR)$,
 it is easy to prove that this 
operator commutes with both $\cJ$ and $\cK$, i.e., it is $\bH$-linear,  and so $P_\cK\in\gB(\sH)$. Furthermore  $P_\cK U_g=U_gP_\cK$
 for all $g\in\cP$, because $U_g$ is quaternionic linear  and therefore commutes with $\cJ$ and $\cK$ and also $PU_g=U_gP$ by hypothesis. 
 Since $U$ is irreducible and $P_\cK$ is selfadjoint, Proposition \ref{SL} implies 
  $P_\cK=aI$ for some $a\in\bR$, which, multiplying by $\cK$ on the left, can be restated as   $\cK P+P\cK=a\cK$, or $\cK P=-P\cK+a\cK$. Since $PP=P$ this can
   be rewritten as $\cK P=-P\cK P+a\cK P$. Taking the adjoint on both the sides of this identity  produces  $-P\cK=P \cK P-aP\cK$. Both identities 
together yield    $[\cK,P]=a[\cK,P]$. The case  $a\neq 1$ is not permitted because  it implies  $[P,\cK]=0$, which, together with $[P,\cJ]=0$ ($P$ is complex
    linear by hypothesis), ensures that $P$ is actually quaternionic linear. Since  $U$ is  irreducible on $\sH$, we would get $P=0,I$, which was excluded {\em a priori}. 
It remains the case  $a=1$, that is $P-\cK P\cK=I$. 
Since $P\neq 0, I$ is an orthogonal projector on $\sH_{\bC_j}$, $\sH_{\bC_j}$ can be decomposed into an orthogonal direct sum $\sH_{\bC_j} = \sH_P\oplus \sH_{P^\perp}$, where
 $P^\perp:=I-P$ and both complex subspaces are nontrivial. Evidently $\cJ(\sH_P) = \sH_{P}$ and $\cJ(\sH_{P^\perp}) = \sH_{P^\perp}$ because each space are complex subspaces
  of $\cH_{\bC_j}$ and $\cJ$ is the complex structure used to construct $\cH_{\bC_j}$ out of $\sH$.
 Let us study the interplay of $\cK$ and that decomposition proving (i) and (ii).
 Multiplying $P-\cK P\cK=I$ by $\cK$ on the left, we have $\cK P=(I-P)\cK$, while a multiplication on the right yields $P\cK=\cK(I-P)$. 
  Notice that $\cK(\sH_P)=\cK P(\sH_{\bC_j})=P^\perp \cK(\sH_{\bC_j})=P^\perp(\sH_{\bC_j})=\sH_{P^\perp}$ and, 
  similarly, $\cK(\sH_{P^\perp})=\cK P^\perp (\sH_{\bC_j})=P\cK(\sH_{\bC_j})=P(\sH_{\bC_j})=\sH_P$. Hence, also 
  exploiting the facts that $\cK$ is isometric and that $\cK(-\cK)=I$,  the map $A:\sH_P\rightarrow \sH_{P^\perp}$ defined as $A:=\cK|_{\sH_P}$ is 
  a well defined bijective $\bR$-linear isometry with inverse given by the isometry $A^{-1}=-\cK|_{\sH_{P^\perp}}$. The action of $\cK$ can be written
  $\cK(u,v)=(-A^{-1}v,Au)$ for every $u\in\sH_P$ and $v\in\sH_{P^\perp}$.
  referring to the direct decomposition $\sH_{\bC_j} = \sH_P\oplus \sH_{P^\perp}$
  Indeed, if $x\in\sH_{\bC_j}$, then $\cK x=K(Px+P^\perp x)= \cK Px + \cK P^\perp x = A(Px)-A^{-1}(P^\perp x)$, where $A(Px)\in\sH_{P^\perp}$ and $-A^{-1}(P^\perp x)\in\sH_P$. 
  Let us pass to establish (iii).
Notice that $U_gP=PU_g$ ensures that $U_g(\sH_P)\subset \sH_P$ and $U_g(\sH_{P^\perp})\subset \sH_{P^\perp}$. Moreover the restrictions 
of $U_g$ to both the factors are clearly bijective isometries on the respective domains, thanks to the facts that $g$ is arbitrary, every $U_g$ is
isometric and $U_{g^{-1}}U_g=I$, so we can refine the inclusions above and write $U_g(\sH_P)=\sH_P$ and $U_g(\sH_{P^\perp})=\sH_{P^\perp}$. 
Notice that, since $U_gx=U_g(Px+P^\perp x)=U_g(Px)+U_g(P^\perp x)$, the action of $U$ on
$\sH_{\bC_j} = \sH_P\oplus \sH_{P^\perp}$   can be written $U_g(u,v)=(U_gu,U_gv)$ for every $u\in\sH_P$ and $v\in\sH_{P^\perp}$. We end up with
 two representations $g\mapsto U_g|_{\sH_P}$ and $g\mapsto U_g|_{\sH_{P^\perp}}$ which are clearly unitary and strongly-continuous on the respective 
 complex Hilbert spaces. Finally we want to prove (\ref{intpAU}).
Let $(u,v)\in\sH_{\bC_j}$,
 using the above identities we have
\begin{equation*}
\begin{split}
&\cK U_g(u,v)=\cK(U_g|_{\sH_P}u,U_g|_{\sH_{P^\perp}}v)=(-A^{-1}U_g|_{\sH_{P^\perp}}v,AU_g|_{\sH_P}u)\\
&U_g\cK(u,v)=U_g(-A^{-1}v,Au)=(-U_g|_{\sH_P}A^{-1}v,U_g|_{\sH_{P^\perp}}Au)
\end{split}
\end{equation*}
Since $\cK U_g=U_g\cK$ and $(u,v)$ is arbitrary, we get the thesis.\\
To conclude the proof of (iii), we intend to prove that $\cP \ni g\mapsto U_g|_{\sH_P}$ and $\cP \ni g\mapsto U_g|_{\sH_{P^\perp}}$ are irreducible
 representations  on the respective complex Hilbert spaces which also are locally faithful. We will establish  the first result  for $\sH_P$ only, the other case 
   being analogous. Suppose that $g\mapsto U_g|_{\sH_P}$ is not irreducible, i.e., there exists a projector $0\le Q\le P$ with $Q\neq 0,P$
    such that $QU_g|_{\sH_P}=U_g|_{\sH_P}Q$ for every $g\in\cP$. Since we are thinking of $Q$ as a projector defined on the whole $\sH_{\bC_j}$, the found identity  
    can be rephrased as $QU_g=U_gQ$. Indeed, on $\sH_P$ that identity  reduces to $QU_g|_{\sH_P}=U_g|_{\sH_P}Q$, while    both sides of  $QU_g=U_gQ$ vanish on $\sH_{P^\perp}$.
So we can repeat the analysis carried out so far using $Q$ in place  of $P$ and finding  $\sH=\sH_Q\oplus\sH_{Q^\perp}$ with $\cK(\sH_Q)=\sH_{Q^\perp}$. However, 
 this result  implies $\sH_{Q^\perp}=\cK(\sH_Q)\subset \cK(\sH_P)=\sH_{P^\perp}$ which can be restated as $Q^\perp\le P^\perp$, i.e., $P\le Q$, which was excluded {\em a priori}. 
 We conclude that $g\mapsto U_g|_{\sH_P}$ (and also $g\mapsto U_g|_{\sH_{P^\perp}}$) is irreducible. To conclude, we want eventually  to demonstrate  that the complex 
 strongly-continuous irreducible representations of  unitary operators $\cP \ni g\mapsto U_g|_{\sH_P}$ and $\cP \ni g\mapsto U_g|_{\sH_{P^\perp}}$ are also locally 
 faithful because  $\cP \ni g\mapsto U_g$ is.  Let $W_e\subset\cP$ be a  neighbourhood of the identity of $\cP$ where $\cP \ni g\mapsto U_g$ is faithful and suppose, 
 for example, that $U_g|_{\sH_P}=U_h|_{\sH_P}$ for some $g,h\in W_e$. Exploiting (\ref{intpAU}), we get $U_g|_{\sH_{P^\perp}}=AU_g|_{\sH_{P}}A^{-1}=AU_h|_{\sH_{P}}A^{-1}=U_h|_{\sH_{P^\perp}}$ 
 and so, with obious notation, $U_g=U_g|_{\sH_{P}}\oplus U_g|_{\sH_{P^\perp}}=U_h|_{\sH_{P}}\oplus U_h|_{\sH_{P^\perp}}=U_h$. Since $U$ is faithful on $W_e$ we have that $g=h$, proving 
 the local faithfulness of the restricted representations. 
\end{proof}
\noindent A subsequent lemma is in order.
\begin{lemma} Assume the hypotheses of Proposition \ref{WignerTHM} and refer to the
orthogonal decomposition $\sH = \sH_P \oplus \sH_{P^\perp}$ and 
 two strongly-continuous irreducible locally-faithful unitary complex representations  $\cP \ni g \mapsto U|_{ \sH_{P}}$, $\cP \ni g \mapsto U|_{ \sH_{P^\perp}}$ as in (iii) of Lemma \ref{lemmaA}.
  The following facts hold.\\
  
{\bf (iv)} The partial isometry $J_0$ of the polar decomposition  $\tilde{P}_0=J_0|\tilde{P}_0|$ of the  ($\bH$-linear) anti-selfadjoint  generator $\tilde{P}_0$ in $\sH$ of  time-displacements subgroup satisfies 
\beq
J_0 = J_P \oplus J_{P^\perp}
\eeq
where  $J_P$ and $J_{P^\perp}$ are the analogous operators for the pair of complex representations  $\cP \ni g \mapsto U_g|_{ \sH_{P}}$ and $\cP \ni g \mapsto U_g|_{ \sH_{P^\perp}}$. \\
%
%

{\bf (v)} $J_0$ is a complex structure in the quaternionic Hilbert space $\sH$ commuting with the whole representation $U$ and thus $J_0 \in \gR_U'$ in particular.
\end{lemma}
\begin{proof} We simultaneously prove (iv) and (v). First of all we study the complex representations   $\cP \ni g \mapsto U|_{ \sH_{P}}$, $\cP \ni g \mapsto U|_{ \sH_{P^\perp}}$ focussing 
on  their anti-selfadjoint generators and  their mass operators  $M_{U_P}^2$ and $M_{U_{P^\perp}}^2$.
Let $\widetilde{P}_\mu$ the antiselfadjoint generator of $t\mapsto U_{\exp(t{\bf p}_\mu)}$ defined on $\sH$. As we know from the version of Stone theorem presented in 
Theorem \ref{stonetheorem},  this is in particular a well-defined linear operator over $\sH_\bR$ and $\sH_{\bC_j}$ and it coincides with the corresponding generators of
 $t\mapsto U_{\exp(t{\bf p}_\mu)}$ when reading the representation  on $\sH_\bR$ and $\sH_{\bC_j}$, respectively.
Due to Stone theorem,  $x\in D(\tilde{P}_\mu)$ if and only if there exists $\frac{d}{dt}\big|_0e^{t\tilde{P}_\mu}x$. As $e^{t\tilde{P}_\mu} (u,v)=(e^{t\tilde{P}_\mu}u,e^{t\tilde{P}_\mu}v)$ in view 
of the    orthogonal decomposition into complex subspaces $\sH \equiv \sH_{\bC_j} = \sH_P \oplus \sH_{P^\perp}$, we conclude for $x=(u,v)$ that 
\begin{equation*}
\exists \frac{d}{dt}\big|_0e^{t\widetilde{P}_\mu}x\in\sH\ \ \mbox{ if and only if }
\begin{cases}
&\exists \dfrac{d}{dt}\big|_0e^{t\widetilde{P}_\mu}u\in\sH_P  \\
& \qquad\qquad and \\
&\exists \dfrac{d}{dt}\big|_0e^{t\widetilde{P}_\mu}v\in\sH_{P^\perp}\:.
\end{cases}
\end{equation*}
As an immediate consequence,
 $P(D(\widetilde{P}_\mu))=D(\widetilde{P}_\mu)\cap \sH_P\subset D(\widetilde{P}_\mu)$ and $P^\perp(D(\widetilde{P}_\mu))=D(\widetilde{P}_\mu)\cap \sH_{P^\perp}\subset D(\widetilde{P}_\mu)$, 
 which in turn yields $D(\widetilde{P}_\mu)=P(D(\widetilde{P}_\mu))\oplus P^\perp(D(\widetilde{P}_\mu))$.
The generators of the one-parameter unitary subgroups $e^{t\widetilde{P}_\mu}|_{\sH_{P}}$ and $e^{t\widetilde{P}_\mu}|_{\sH_{P^\perp}}$ are clearly 
given by $\widetilde{P}_\mu|_{P(D(\widetilde{P}_\mu))}$ and $\widetilde{P}_\mu|_{P^\perp(D(\widetilde{P}_\mu))}$, respectively.
From $e^{tP_\mu}x=e^{t\widetilde{P}_\mu}u+e^{t\widetilde{P}_\mu}v$ and the above equivalence we have that $\widetilde{P}_\mu x=\widetilde{P}_\mu u + \widetilde{P}_\mu v$, that 
is $\widetilde{P}_\mu=\widetilde{P}_\mu|_{P(D(\widetilde{P}_\mu))}\oplus \widetilde{P}_\mu|_{P^\perp(D(\widetilde{P}_\mu))}$.
(In the following remember that, as already observed, $D_G^{(U)}=D_G^{(U_\bR)}=D_G^{(U_{\bC_j})}$ and  $0 \leq M_U^2=M_{U_\bR}^2= M^2_{U_{\bC_j}}$).
Next consider the G\r{a}rding domains defined with respect to the representations $U|_{\sH_P}$ and $U_{\sH_{P^\perp}}$ (hence subspaces of $\sH_P$ and $\sH_{P^\perp}$,
 respectively): $D_G^{(U_P)}$ and $D_G^{(U_{P^\perp})}$. We want to prove that they are both subspaces of $D_G^{(U)}$ and that $D_G^{(U)}=D_G^{(U_{P})}\oplus D_G^{(U_{P^\perp})} $ when $D_G^{(U)}$ is understood as a complex vector space.
The former property easily follows from the direct decomposition $\sH_{\bC_j}=\sH_P\oplus\sH_{P^\perp}$. Indeed a direct calculation shows that the definitions of the Garding vectors $x[f]$
 within $\sH_P (\sH_{P^\perp})$ or within $\sH_{\bC_j}$ coincide for $x\in\sH_P (\sH_{P^\perp})$. The equality $D_G^{(U_{\bC_j})}=D_G^{(U)}$   concludes the proof of this part.
To prove the latter, just notice that if $x[f]\in D_G^{(U)}$ with $x=(u,v)$, then $x[f]=u[f]+v[f]$ where $u[f]\in D_G^{(U_P)}$ and $v[f]\in D_G^{(U_{P^\perp})}$.
So, if we consider the  operators $M_{U_P}^2$ and $M_{U_{P^\perp}}^2$, defined respectively on $D_G^{(U_P)}$ and $D_G^{(U_{P^\perp})}$, using the above-proved
 decompositions for $\widetilde{P}_\mu$ we easily get $M_U^2=M_{U_P}^2\oplus M_{U_{P^\perp}}^2$. Finally,  $M_U^2\ge 0$ easily implies that $M_{U_P}^2\ge 0$
  and $M_{U_{P^\perp}}^2\ge 0$. Summarizing,  we end up with  two irreducible locally faithful strongly-continuous unitary complex representations of $\cP$: $U_P$
   on $\sH_P$ and $U_{P^\perp}$ on $\sH_{P^\perp}$ 
such that $M_{U_P}^2\ge 0$ and $M_{U_{P^\perp}}^2\ge 0$.
The respective  one-parameter subgroups of time displacement $t\mapsto U_{\exp{(t{\bf p}_0})}|_{\sH_P}$ and $t\mapsto U_{\exp{(t{\bf p}_0)}}|_{\sH_{P^\perp}}$ 
have generators $\widetilde{P}_0|_{P(D(\widetilde{P}_0))}$ and $\widetilde{P}_0|_{P^\perp(D(\widetilde{P}_0))}$, respectively. For each of them the polar decomposition
 theorem applies and gives $\widetilde{P}_0|_{P(D(\widetilde{P}_0))}=J_PX_P$ and $\widetilde{P}_0|_{P^\perp(D(\widetilde{P}_0))}=J_{P^\perp}X_{P^\perp}$. We notice for
  future convenience  that $\widetilde{P}_0 = (J_P \oplus J_{P^\perp}) (X_P \oplus X_{P^\perp})$ is the polar decomposition of $\widetilde{P}_0$ in $\sH_{\bC_j}$
as a consequence of the uniqueness property of the polar decomposition, the proof being elementary.
 We are in a position to apply (g) of Theorem 4.3 in \cite{MO1} establishing that  $J_P=\pm \cJ|_{\sH_P}$ and $J_{P^\perp}=\pm \cJ|_{\sH_{P^\perp}}$. Notice that, trivially,
  $\cJ|_{\sH_P}$ and $\cJ|_{\sH_{P^\perp}}$ are unitary operators 
 and they commute with the restrictions of $U_g$ to the respective subspaces.
In particular $J_P\oplus J_{P^\perp}$ commutes with $U_g=U_g|_{\sH_{P}}\oplus U_g|_{\sH_{P^\perp}}$. 
From  $J_P= \pm\cJ|_{\sH_P}$ and $J_{P^\perp}=\pm\cJ|_{\sH_{P^\perp}}$ we also have that $J_P\oplus J_{P^\perp}$ is an isometry such that $(J_P\oplus J_{P^\perp})^2=-I$: it  is 
a ($\bC$-linear) complex structure on $\sH_{\bC_j}$. 
 Thinking of $\widetilde{P}_0$ as the generator of time displacements in $\sH$, since the polar decomposition is unique in $\sH$ and $\sH_{\bC_j}$ as established in (c) Theorem \ref{PDT}, 
$\widetilde{P}_0 = (J_P \oplus J_{P^\perp}) (X_P \oplus X_{P^\perp})$ is also the polar decomposition of $\widetilde{P}_0 = J_0 |\widetilde{P}_0|$ in $\sH$. Uniqueness 
entails  $J_0 = J_P \oplus J_{P^\perp}$.  In particular $J_P\oplus J_{P^\perp}$ must also be quaternionic linear because $J_0$ is.  Since  $J_P\oplus J_{P^\perp}$
satisfies $(J_P\oplus J_{P^\perp})(J_P\oplus J_{P^\perp}) = -I$
and $(J_P\oplus J_{P^\perp})^* = - (J_P\oplus J_{P^\perp})$ and the notions of adjoint in $\sH$ and $\sH_{\bC_j}$ coincide, $J_0$ is a complex structure in $\sH$. Moreover, 
since $J_P\oplus J_{P^\perp}$ commutes with $U_g=U_g|_{\sH_{P}}\oplus U_g|_{\sH_{P^\perp}}$,
 $J_0$  commutes with the whole representation $U$, i.e., $J_0\in\{U_g\:|\:g\in\cP\}' = \{U_g\:|\:g\in\cP\}'''=  \gR_U'$ so that $J_0 \in \gR_U'$.
\end{proof}
\noindent To conclude the proof of the main statement it is sufficient to prove that $J_0\in\gR_U$.
Let $B\in\gR_U'$, then in particular $Be^{t\widetilde{P}_0}=e^{t\widetilde{P}_0}B$ for every $t\in\bR$. Thanks to Proposition \ref{LemmaCOMM}, we immediately 
get $BJ_0=J_0B$. This means that $J_0\in\gR_U''=\gR_U$.
\end{proof}

\begin{remark}\label{remarkequiv}
	The reader interested in an alternative proof of Proposition \ref{WignerTHM} may consult  Theorem 9.2.12 of \cite{TesiMarco}, where the real case (see \cite{MO1}) and the current quaternionic setting are treated on an equal footing.
\end{remark}

\subsection{Structure of quaternionic a WRES and equivalence with a complex WRES}
 We are now in a position to state and prove the first main result of this work, establishing in particular that a quaternionic  WRES 
$\cP \ni p \mapsto U_p$ over the quaternionic Hilbert space $\sH$ is completely equivalent to a complex Hilbert space WRES
$\cP \ni p \mapsto U_p|_{\sH_{J_0}}$
 on a suitable complex Hilbert space $\sH_{J_0}$ constructed out of a complex structure $J_0$ in $\sH$
 according to  (b) Theorem \ref{quaternioni3theorem}.
Within this equivalent formulation, everything  is in agreement with the thesis of Sol\`er theorem,  the complex structure $J_0$
is related to the polar decomposition of the generator of time displacements and,
 independently from its physical meaning,  is unique up its sign and Poincar\'e invariant, i.e., 
 starting from another initial Minkowski frame to describe the symetries of Poincar\'e group we 
 would obtain the same equivalence of quaternionic-complex structures. The standard 
 Noether correspondence of {\em selfadjoint} generators of continuous symmetries and dynamically conserved
  quantities cen be stated also for the quaternionic case using $J_0$ as imaginary unit.

\begin{theorem}\label{poinccomplexstructure}
Consider a {\em quaternionic Wigner Relativistic Elementary System}  and adopt definitions (\ref{basispoinc}) and (\ref{Moperator}) (with respect to a given Minkowski reference frame).\\
Let $\widetilde{P}_0=J_0|\widetilde{P}_0|$ be the  polar decomposition of the anti selfadjoint generator  of the subgroup of temporal displacements. 
The following facts hold provided $M^2_U \geq 0$.\\

\noindent {\bf (a)} $J_0\in \gR_U$ and $J_0$ is a complex structure on $\sH$. \\

\noindent {\bf (b)} $J_0 \in  \gR'_U$,  in particular    $J_0U_g =U_gJ_0$ for all $g \in \cP$ and so  $J_0$ is Poincar\'e invariant (starting 
from another Minkowski reference frame to define the generators of the one parameter-subgroups of $\cP$ we would obtain the same $J_0$).\\

\noindent {\bf (c)}    $J_0u({\bf A}) =  u({\bf A})J_0$ for every ${\bf A}\in \gp$ in particular $J_0(D_G^{(U)})  = D_G^{(U)}$.\\

\noindent {\bf (d)}  If  $J_1$ is  a  complex structure on $\sH$ such that either $J_1 \in \gR_U'$
or $J_1 u({\bf A}) = u({\bf A}) J_1$ for every ${\bf A}\in\gp$ are valid, then $J_1=\pm J_0$. \\

\noindent {\bf (e)}   If ${\bf A} \in \gp$, then  $J_0\overline{u({\bf A})}=\overline{u({\bf A})}J_0$ and this operator is an observable of the WRES. \\

\noindent {\bf (f)} $U_{J_0}: \cP \ni g \mapsto U_{g}|_{\sH_{J_0}}$ is a complex WRES over the complex Hilbert space $\sH_{J_0}$ and furthermore

\begin{enumerate}
\item  the von Neumann algebra and 
lattice of elementary observables associated to this complex WRES   are the  full $\gB(\sH_{J_0})$ and   full $\cL(\sH_{J_0})$ respectivly;

\item  if $A$ is an observable of the initial quaternionic WRES, then  $A_{J_0}$ is an observable of the associated complex WRES and the map $A\to A_{J_0}$ is bijective
over the full set of densely defined selfadjoint operators in $\sH_{J_0}$ (the observables of the associated complex WRES) which preserves the (point, continuous) spectra.
\end{enumerate}

\noindent {\bf (g)}  $\gR_U$ is of quaternionic-complex type $\gR_U' = \{aI+bJ_0 \:|\: a,b \in \bR\}$ and,
referring to the complex Hilbert space $\sH_{J_0}$, the restriction map $\gR_U \ni A \mapsto A|_{\sH_{J_0}} \in \gB(\sH_{J_0})$
defines 

\begin{enumerate}
\item    a norm-preserving weakly-continuous real unital $^*$-algebra 
isomorphism from  $\gR_U$ onto the full $\gB(\sH_{J_0})$ which  maps $J_0$ to $jI$;

\item  an isomorphism of orthocomplemented complete lattices from  $\cL_{\gR_U}(\sH)$ onto  the full  lattice  $\cL(\sH_{J_0})$
 in agreement  with the thesis of Sol\`er theorem;
 
 \item an isomorphism of $\sigma$-complete Boolean lattices $P^{(A)} \ni P^{(A)}_E \mapsto  P^{(A|_{J_0})}_E \in P^{(A|_{J_0})}$ for every fixed $A$ (not necessarily bounded)
   observable of the quaternionic WRES, $A|_{J_0}$ being 
 the corresponding  observable of the complex WRES. 
\end{enumerate}

\end{theorem}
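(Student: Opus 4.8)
The plan is to treat Proposition \ref{WignerTHM} as the single substantial analytic input and to obtain the whole statement by feeding its conclusion into the von Neumann algebra machinery of the previous sections. First I would dispose of (a) and the membership $J_0\in\gR_U'$ of (b), since these are literally the conclusion of Proposition \ref{WignerTHM}; the commutation $J_0U_g=U_gJ_0$ is then immediate because $U_g\in\gR_U$ while $J_0\in\gR_U'$. For the frame-independence clause of (b) I would argue that a change of Minkowski frame by $p\in\cP$ turns the time-translation generator into $U_p\widetilde{P}_0U_p^{-1}$, whose polar decomposition has partial isometry $U_pJ_0U_p^{-1}$ by uniqueness ((c) of Theorem \ref{PDT}); the Lorentz invariance of $M_U^2$ keeps the hypothesis $M_U^2\geq 0$ in force, and $U_pJ_0U_p^{-1}=J_0$ since $J_0$ commutes with every $U_p$. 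Part (c) is then exactly Proposition \ref{gardintcompl}(a) and (d) applied to the complex structure $J_0\in\{U_g\}'$.

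The structural core of the argument, which simultaneously yields (d) and the opening assertion of (g), is the observation that $J_0\in\gR_U\cap\gR_U'$ is a complex structure lying in the centre $\gZ_{\gR_U}$. Since $\{U_g\}$ is $^*$-closed and irreducible, $\gR_U=\{U_g\}''$ is irreducible, so Theorem \ref{threecommutant} applies and classifies $\gR_U'$ into three mutually exclusive types. The centre contains no complex structure in the quaternionic-real or quaternionic-quaternionic cases (there $\gZ_{\gR_U}=\{aI\:|\:a\in\bR\}$), so $\gR_U$ must be of quaternionic-complex type, i.e. $\gR_U'=\{aI+bJ_0\:|\:a,b\in\bR\}$ with $J_0$ determined up to sign. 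This gives (g) item 1 of the type statement and the first hypothesis of (d): any complex structure $J_1\in\gR_U'$ equals $\pm J_0$. For the second hypothesis of (d), if $J_1u({\bf A})=u({\bf A})J_1$ for all ${\bf A}\in\gp$, then in particular $J_1u({\bf A})\subset u({\bf A})J_1$, so Theorem \ref{thmuext}(3) forces $J_1U_g=U_gJ_1$ for all $g$, whence $J_1\in\gR_U'$ and we reduce to the first case. I expect this dichotomy---excluding the quaternionic-quaternionic type by exploiting that $J_0$ is a \emph{central} complex structure---to be the only delicate step, since everything downstream hinges on $\gR_U'$ having been pinned down as exactly $\{aI+bJ_0\}$.

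Next I would handle (e) and the remaining content of (g). From (c) and the boundedness of $J_0$ one gets $J_0\overline{u({\bf A})}=\overline{u({\bf A})}J_0$; as $\overline{u({\bf A})}$ is anti-selfadjoint and commutes with the complex structure $J_0$, the product $J_0\overline{u({\bf A})}$ is selfadjoint, and it is an observable because every $B\in\gR_U'$ commutes with $J_0\in\gR_U$ and with $\overline{u({\bf A})}$ (Theorem \ref{thmuext}(3)), hence with its PVM by Lemma \ref{commst}, placing that PVM in $\gR_U''=\gR_U$; this is precisely the generalization of Remark \ref{remobservables}. Items 1 and 2 of (g) are then read off from Theorem \ref{quaternioni3theorem}(b)(i),(ii) for the quaternionic-complex type, and item 3 is Proposition \ref{operatorfunctioncompl}(a).

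Finally I would establish (f). Since $J_0\in\{U_g\}'$, Proposition \ref{gardintcompl}(b) makes $U_{J_0}$ a strongly-continuous unitary representation on $\sH_{J_0}$, Proposition \ref{complexidentif2}(a) gives its irreducibility, and local faithfulness transfers because a quaternionic-linear $U_g$ is determined by its restriction to $\sH_{J_0}$ via Proposition \ref{propHJ}(d); hence $U_{J_0}$ is a complex WRES. Its generated von Neumann algebra is $(\gR_U)_{J_0}=\gB(\sH_{J_0})$ by Theorem \ref{quaternioni3theorem}(b)(i), which is item 1 of (f). For item 2, an observable $A$ of the quaternionic WRES commutes with $J_0\in\gR_U'$, so $A_{J_0}$ is selfadjoint in $\sH_{J_0}$ by Proposition \ref{propHJop}(c)(viii) and is an observable because the complex lattice is full; the assignment $A\mapsto A_{J_0}$ is a bijection onto all densely-defined selfadjoint operators of $\sH_{J_0}$ through the extension--restriction correspondence of Proposition \ref{propHJop}(b) together with the PVM correspondence of Proposition \ref{operatorfunctioncompl}(a), and it preserves point and continuous spectra by Proposition \ref{operatorfunctioncompl}(d).
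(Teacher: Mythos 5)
Your proposal is correct and follows essentially the same route as the paper: Proposition \ref{WignerTHM} supplies $J_0\in\gR_U\cap\gR_U'$, the quaternionic-real and quaternionic-quaternionic types are excluded because a nonzero anti-selfadjoint (central) element cannot live in $\{aI\:|\:a\in\bR\}$, and the remaining items follow from Theorem \ref{thmuext}(3), Lemma \ref{commst}, Theorem \ref{quaternioni3theorem}, and Propositions \ref{gardintcompl}, \ref{complexidentif2}, \ref{propHJop} and \ref{operatorfunctioncompl} exactly as in the text. The only cosmetic difference is that for (f) you read off the fullness of the generated algebra from $(\gR_U)_{J_0}=\gB(\sH_{J_0})$, whereas the paper invokes the complex Schur lemma directly on the irreducible restricted representation; both work.
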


\begin{proof} (a) and (b). Everything  but the last statement in (b),  immediately arise from  Prop.\ref{WignerTHM}. 
The last sentence in (b) follows from
the fact that the penultimate identity in (\ref{basispoinc2}) and the uniqueness of the polar decomposition imply that $\widetilde{P}'_0 = (U_pJ_0U_p^{-1})\: (U_p \widetilde{P}_0 U_{p}^{-1})$ 
is the polar decomposition of $\widetilde{P}'_0 = U_p\widetilde{P}_0 U_p^{-1}$.
However, $U_pJ_0=J_0U_p$ entails  $U_pJ_0U_p^{-1}=J_0$.

(c)  Is a weaker case of Proposition \ref{gardintcompl} (a) and (d) since  $U_pJ_0=J_0U_p$.

(d) Since $U$ is irreducible, $\gR_U$ is irreducible and thus $\gR'_U$ must have one of the three  mutally exclusive forms listed
 in Theorem \ref{threecommutant}. The first case is impossible since $\gR_U'$ would be made of selfadjoint elements while
  $0\neq J_0 \in \gR_U'$ is 
 anti selfadjoint. The third case is similary forbidden because $\gR_U\cap \gR_U'$ would be made of selfadjoint elements while 
 $0\neq J_0 \in \gR_U \cap \gR_U'$ is 
 anti selfadjoint. We conclude that $\gR'_U = \{aI + bJ\:|\: a,b \in \bR\}$ for some complex structure $J$ determined up to its sign. Since 
 $J_0 \in \gR_U'$ is a complex structure it must be $J_0 = \pm J$ and $\gR'_U = \{aI + bJ_0\:|\: a,b \in \bR\}$. This argument also proves the last statement observing that 
 $J_1 u({\bf A})={\bf A}J_1$ for every ${\bf A} \in \gp$ is equivalent to $J_1 \in \{U_g\}_{g\in \cP}' = \gR_U'$ in view of (3) Theorem \ref{thmuext}.
 
 (e). $J_0u({\bf A})=u({\bf A})J_0$ is valid from (c) . Taking the closures remembering that $J_0 \in \gB(\sH)$, we have 
 $J_0\overline{u({\bf A})} = \overline{u({\bf A})}J_0$. Since $\overline{u({\bf A})}$ is antiselfadjoint 
 and  exploiting  $J_0\in \gB(\sH)$, we have 
 $(J_0\overline{u({\bf A})})^*= -\overline{u({\bf A})}(-J_0) = \overline{u({\bf A})}J_0 = J_0\overline{u({\bf A})}$ which is therefore self-adjoint.
  $J_0\overline{u({\bf A})}$ commutes with $J_0$ so that its PVM is included in $\gR_U$ as wanted.

  (f).  Due to the already established properties, we only have to prove that $\cP \ni g \mapsto U_{g}|_{H_{J_0}}$
   is irreducible and locally faithfull.
   The former property is  an immediate consequence of (a) Proposition \ref{complexidentif2}, 
   the latter  straightforwardly arises from locally faithfulness of $U$ taking advantage of  (a),(b) Prop.\ref{propHJop}. So 
  $\cP \ni p \mapsto U_{g}|_{H_{J_0}}$ is a complex WRES over $\sH_{J_0}$ as requested. Next, irreducibility and the complex version 
  of Schur's lemma eventually  yield 
  $\{ U_{g}|_{H_{J_0}} \:|\: g \in \cP\}''= \gB(\sH_{J_0})$, so that, in particular  $\cL_{\gR_{U_{J_0}}}(\sH_{J_0})=\cL(\sH_{J_0})$. The last statement immediately arises from 
  Proposition \ref{propHJop} (a)(b) and the statement at the end of  (c) therein, and Prop.\ref{operatorfunctioncompl} (d).

  (g). The identity  $\gR_U' = \{aI+bJ_0 \:|\: a,b \in \bR\}$ was established above proving (d), the rest of the thesis 
  easily arises from Theorem \ref{quaternioni3theorem} and Proposition \ref{operatorfunctioncompl} (a).
\end{proof}

\begin{corollary}\label{CorollaryMT} With the same hypotheses as for Theorem \ref{poinccomplexstructure} the following facts are valid.\\
{\bf (a)} For all ${\bf M}, {\bf N} \in E_{\gp}$ the following facts hold

\begin{enumerate}

\item   
$J_0(u({\bf M})+ J_0 u({\bf N}))= (u({\bf M})+ J_0 u({\bf N}))J_0$ and 
$J_0\overline{u({\bf M})+ J_0 u({\bf N})}= \overline{u({\bf M})+ J_0 u({\bf N})}J_0$;

\item $\overline{u({\bf M}) + J_0 u({\bf N}) }$ is an observable of the quaternionic WRES if and only if it is self-adjoint.
\end{enumerate}
{\bf (b)} The identity holds 
 $$D_G^{(U_{J_0})} = D_G^{(U)}\cap \sH_J$$
and  for all ${\bf M}, {\bf N} \in E_{\gp}$,
 
\begin{enumerate}
\item $u_{J_0}({\bf M}) = u({\bf M})|_{\sH_{J_0}\cap D_G^{(U)}}$ and  $u_{J_0}({\bf M})$ univocally determines $u({\bf M})$;

\item   $\overline{u_{J_0}({\bf M}) + j u_{J_0}({\bf N})} = \overline{u({\bf M}) + J_0 u({\bf N}) }|_{\sH_{J_0}}$ and the left-hand side is self-adjoint (i.e., an observable of the 
complex associated WRES) if and only if $ \overline{u({\bf M}) + J_0 u({\bf N}) }$ is an observable of original quaternionic WRES.
\end{enumerate}
\end{corollary}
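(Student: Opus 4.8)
The plan is to deduce every assertion from Theorem \ref{poinccomplexstructure}, the restriction machinery of Propositions \ref{gardintcompl} and \ref{propHJop}, and the affiliation criterion recorded in Remark \ref{remobservables} together with Lemma \ref{commst}. First, since $J_0$ commutes with the whole representation $U$ by Theorem \ref{poinccomplexstructure}(b), Proposition \ref{gardintcompl}(d) gives $J_0 u({\bf M}) = u({\bf M})J_0$ for every ${\bf M}\in E_{\gp}$. With this, part (a)(1) is a one-line computation on $D_G^{(U)}$: using $J_0 u(\cdot)=u(\cdot)J_0$ and the centrality of $J_0^2=-I$,
\[
J_0\bigl(u({\bf M}) + J_0 u({\bf N})\bigr) = u({\bf M})J_0 + J_0 u({\bf N}) J_0 = \bigl(u({\bf M}) + J_0 u({\bf N})\bigr)J_0 \,,
\]
and, $J_0\in\gB(\sH)$ being bounded with $J_0(D_G^{(U)})=D_G^{(U)}$ (Theorem \ref{poinccomplexstructure}(c)), the commutation passes to the closure, yielding the second identity of (a)(1). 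Closability of $u({\bf M})+J_0u({\bf N})$ is guaranteed by the G\r{a}rding structure, since each summand has densely defined adjoint on $D_G^{(U)}$ (Theorem \ref{thmuext}).

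For (a)(2) the ``only if'' direction is immediate, observables being selfadjoint by definition. For ``if'', set $A:=\overline{u({\bf M}) + J_0 u({\bf N})}$ and suppose it selfadjoint; I must show its PVM lies in $\cL_{\gR_U}(\sH)$. For any $B\in\gR_U'$, Remark \ref{remobservables} (through Theorem \ref{thmuext}) gives $Bu({\bf M})\subset u({\bf M})B$ and $Bu({\bf N})\subset u({\bf N})B$, while $B$ commutes with $J_0$ because $J_0\in\gR_U$; hence $B\bigl(u({\bf M})+J_0u({\bf N})\bigr)\subset\bigl(u({\bf M})+J_0u({\bf N})\bigr)B$, and boundedness of $B$ extends this to $BA\subset AB$. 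Lemma \ref{commst}(b) then forces $B$ to commute with $P^{(A)}$, and arbitrariness of $B\in\gR_U'$ yields $P^{(A)}_E\in\gR_U''=\gR_U$ for every Borel set $E$, so $A$ is affiliated to $\gR_U$ and thus an observable. I expect this affiliation step -- carrying the unbounded commutation relations from $D_G^{(U)}$ to the closure and then to the spectral measure -- to be the only genuinely delicate point of the corollary.

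Finally, for part (b), the domain identity $D_G^{(U_{J_0})}=D_G^{(U)}\cap\sH_{J_0}$ is exactly Proposition \ref{gardintcompl}(c) applied to $J_0$, whose hypotheses hold by Theorem \ref{poinccomplexstructure}(b), and the formula $u_{J_0}({\bf M})=u({\bf M})|_{\sH_{J_0}\cap D_G^{(U)}}$ of (b)(1) is the concluding assertion of Proposition \ref{gardintcompl}(d); the uniqueness of the $\bH$-linear extension in Proposition \ref{propHJop}(b), combined with $J_0u({\bf M})=u({\bf M})J_0$, shows that $u_{J_0}({\bf M})$ determines $u({\bf M})$ univocally. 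For (b)(2) the single point to verify is that $J_0$ acts on $\sH_{J_0}$ as right multiplication by $j$: since $u({\bf N})$ preserves $\sH_{J_0}$, for $u\in\sH_{J_0}\cap D_G^{(U)}$ one has $J_0 u({\bf N})u=(u({\bf N})u)j=(j\,u_{J_0}({\bf N}))u$, whence $\bigl(u({\bf M})+J_0u({\bf N})\bigr)|_{\sH_{J_0}}=u_{J_0}({\bf M})+j\,u_{J_0}({\bf N})$ on that domain. Proposition \ref{propHJop}(c)(ix) then identifies the closures, giving $\overline{u_{J_0}({\bf M})+j u_{J_0}({\bf N})}=\overline{u({\bf M})+J_0u({\bf N})}|_{\sH_{J_0}}$, and Proposition \ref{propHJop}(c)(viii) transfers selfadjointness between the two sides. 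Combining this with (a)(2) and with $\cL_{\gR_{U_{J_0}}}(\sH_{J_0})=\cL(\sH_{J_0})$ from Theorem \ref{poinccomplexstructure}(f) (so that on the complex side ``selfadjoint'' coincides with ``observable of the complex WRES'') produces the asserted equivalence of observability.
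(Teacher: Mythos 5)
Your proof is correct and follows essentially the same route as the paper: Proposition \ref{gardintcompl} for the commutation and restriction identities, Lemma \ref{commst} together with the double commutant $\gR_U''=\gR_U$ for the affiliation step in (a)(2), and Propositions \ref{propHJop} and \ref{gardintcompl} for part (b). The only (immaterial) difference is in (a)(2), where you verify that the PVM commutes with every $B\in\gR_U'$ via the enveloping-algebra commutation of Theorem \ref{thmuext}, whereas the paper checks commutation with $J_0$ alone and then invokes the explicit form $\gR_U'=\{aI+bJ_0\:|\:a,b\in\bR\}$ from Theorem \ref{poinccomplexstructure}; both arguments land on $P^{(A)}_E\in\gR_U''=\gR_U$.
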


\begin{proof} (a) We can apply Proposition \ref{gardintcompl} (d) since  $U_pJ_0=J_0U_p$ obtaining 
$J_0 (u({\bf M}) + J_0u({\bf N})) =(u({\bf M}) + J_0u({\bf N}))J_0$. Taking the closures remembering that $J_0 \in \gB(\sH)$, we have 
 $J_0 \overline{u({\bf M}) + J_0u({\bf N})} = \overline{u({\bf M}) + J_0u({\bf N})}J_0$. If $ \overline{u({\bf M}) + J_0u({\bf N})}$ is  selfadjoint, then its PVM satisfies the same 
 identity in view of (b) Lemma \ref{commst}. In turn, since $\gR_U'= \{aI + bJ_0\:|\:a,b \in \bR\}$, the PVM of  $\overline{u({\bf M}) + J_0u({\bf N})}$
 belongs to $\gR''_U = \gR_U$ as wanted.
 
 (b)  Everything easily follows from (a) above  and (c) and (d) in Prop.\ref{gardintcompl} also noticing that the observables of the complex WRES are all of
  selfadjoint operators in $\sH_{J_0}$. The fact that $u_{J_0}({\bf M})$ univocally determines $u({\bf M})$ is consequence of (b) 
 in  Prop.\ref{propHJop}.
\end{proof}

\section{Quaternionic relativistic elementary systems (qRES)} As done in \cite{MO1} we pass to present a more precise description of an elementary relativistic system.

\subsection{Elementary systems}
 We first focus on  the general notion of {\em elementary system} without referring to a group of symmetry.
Exactly as in \cite{MO1} we adopt the following general definition whose motivation is the same as presented in Sec. 5.1 of \cite{MO1}.
\begin{definition}{\rm
An \textbf{elementary system} is an irreducible von Neumann algebra $\gR$ over a separable quaternionic Hilbert space $\sH$. }
\end{definition}
\noindent An elementary system is therefore a quantum system such that $\gR'$ does not include non-trivial orthogonal projectors: they could be intepreted as elementary observables of 
another  external system independent from the system represented by the self-adjoint operators in $\gR$.
If the center of  $\cL_\gR(\sH)$ is trivial, 
 a sufficient and quite usual condition for having an elementary system is the existence in $\gR$ of a  {\em maximal set of compatible observables}
 as discussed in Sec. 5.1 of \cite{MO1}.\\
In the hypothesis that the elementary propositions of every quantum system are described by the orthogonal projectors of a quaternionic von 
Neumann algebra, it is clear that every symmetry of the system can be realized in terms of an automorphism of the lattice of projectors itself.  According to the discussion in Sec. 5.2 of \cite{MO1},  a  
{\em relativistic elementary system} is first of all  an elementary system supporting a continuous representation of Poincar\'e group which is irreducible with respect to that representation: there are no orthogonal projectors $P \in \gR$ 
that are fixed under the representation. Continuity must have a direct physical meaning. In fact, it is  induced by
seminorms defined by the states.
The second idea behind the notion of relativistic elementary system  is that the representation of the Poincar\'e group  must determine $\gR$ itself.  The idea is to view the representation of Poincar\'e group as a unitary 
representation (defined up to generalised phases in the centre of the algebra) and to assume that $\gR$ is the algebra generated by these unitary operators.
Thanks to Theorem \ref{quaternioni3theorem} and Gleason Theorem (including the quaternionc version proposed by Varadarajan \cite{V2}) and corrected in \cite{GleasonQuat} and following a
 procedure similar to the analog exploited in \cite{MO1}.  we can completely characterize both the states and the symmetries for such an elementary  system.
\begin{proposition}\label{gleason+wigner}
Let $\gR$ be an elementary system over the separable quaternionic Hilbert space $\sH$, then the following facts hold true.
\begin{enumerate}[(a)]
\item Assuming $\mbox{dim}\: \sH\neq  2$,  if $\mu:\cL_\gR(\sH)\rightarrow [0,1]$ is a state (i.e., a $\sigma$-additive probability measure), then there exists a unique positive, selfadjoint 
unit-trace, trace-class operator  $M\in \gR$ such that (see (3) in Sect. \ref{secGLEASON})
\begin{equation}
\mu(P)=tr(PMP)\ \mbox{ for every } P\in\cL_\gR(\sH)
\end{equation}
Moreover every positive, selfadjoint unit-trace, trace-class operator of $M \in \gR$ defines a state by means of the same relation.
\item If $h:\cL_\gR(\sH)\rightarrow \cL_\gR(\sH)$ is a symmetry of the system (i.e. an automorphism of $\sigma$-complete orthocomplemented lattices), then there
exists a $\bR$-linear surjective norm-preserving map $U:\sH\rightarrow\sH$ such that
\begin{equation}\label{symmetryrealization}
h(P)=UPU^{-1}\ \mbox{ for every } P\in\cL_\gR(\sH)
\end{equation}
and the following further facts are valid.
\begin{enumerate}[(i)]
\item In both the quaternionic-real and quaternionic-quaternionic cases, $U\in\gR$.
\item In the quaternionic-complex case we have two mutually exclusive  possibilties
\begin{itemize}
\item   $U$  belongs to $U\in\gR$,  

\item  $U$ is antilinear w.r.t. $j$ and $k$ -- thus it is linear with respect to $i$ -- and  anticommutes with $J$.  In this case  $U\not\in\gR$ but $U^2\in\gR$. 
\end{itemize}
\item Every real-linear surjective norm-preserving map $V: \sH \to \sH$ that satisfies (i) or (ii) depending on the case, satisfies (\ref{symmetryrealization}) in place of $U$ for the given $h$ if and only if 
$U^{-1}V\in \gR\cap \gR'$.
\end{enumerate}
Finally, every real-linear bijective norm-preserving map $U$ that satisfies (i) or (ii) depending on the case, defines a symmetry by means of (\ref{symmetryrealization}).
\end{enumerate}
\end{proposition}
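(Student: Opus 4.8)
The plan is to run a case analysis over the three types of $\gR'$ classified in Theorem~\ref{threecommutant} and, in each case, to reduce the statement to the already established Gleason and Wigner(--Kadison--Varadarajan) theorems on the ``reduced'' Hilbert space attached to $\gR$ by Theorem~\ref{quaternioni3theorem}: the whole quaternionic $\sH$ in the quaternionic-real case, the complex $\sH_J$ in the quaternionic-complex case, and the real $\sH_{JK}$ in the quaternionic-quaternionic case. The two bridges I would use throughout are the orthocomplemented-lattice isomorphisms $P\mapsto P_J$ and $P\mapsto P_{JK}$ of Theorem~\ref{quaternioni3theorem}(ii), together with the inverse extension map $X\mapsto\widetilde X$ of Propositions~\ref{propHJop} and~\ref{lemmaextensreal}, which turns an operator on the reduced space into a quaternionic-linear operator commuting with the relevant complex structures. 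Since a Hilbert basis of $\sH_J$ (resp. $\sH_{JK}$) is also a Hilbert basis of $\sH$ by Proposition~\ref{propHJ}(d) (resp. Proposition~\ref{realdecomp}(d)), the quaternionic and the reduced dimensions coincide, so $\dim\sH\neq2$ passes to the reduced space, and for any $A$ commuting with the complex structures one gets $tr_{\sH}(A)=tr_{\sH'}(A_J)$ (resp. $=tr_{\sH'}(A_{JK})$) on such a common basis.

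For part (a) I would, in each case, feed the transported measure into the appropriate Gleason theorem: the quaternionic one \cite{GleasonQuat} directly on $\cL_\gR(\sH)=\cL(\sH)$ in the real case (where $\gR=\gB(\sH)$), the complex one \cite{G} on $\cL(\sH_J)$ in the complex case, and the real one \cite{G,MO1} on $\cL(\sH_{JK})$ in the quaternionic case. This produces a positive, selfadjoint, unit-trace, trace-class operator $M'$ on the reduced space; I set $M:=\widetilde{M'}$ and note that $M$ commutes with $J$ (resp. $J$ and $K$), hence with all of $\gR'=\{aI+bJ\}$ (resp. $\{aI+bJK+cJ+dK\}$), so $M\in\gR''=\gR$. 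Positivity and selfadjointness transfer along $\widetilde{(\cdot)}$, unit-traceness and the trace-class property follow from the common-basis trace identity, and the defining relation comes from $(PMP)_J=P_JM'P_J$ together with $tr_{\sH}(PMP)=tr_{\sH'}((PMP)_J)=\mu(P)$; uniqueness of $M$ descends from uniqueness in the reduced Gleason theorem and injectivity of $X\mapsto\widetilde X$.

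For part (b) I would transport the automorphism $h$ to an automorphism $h'$ of $\cL(\sH')$ and apply the relevant Wigner-type theorem. Varadarajan's quaternionic version \cite{V2} gives an $\bH$-linear unitary $U\in\gB(\sH)=\gR$ in the real case; the real Wigner theorem \cite{V2,MO1} gives an $\bR$-linear orthogonal $V$ on $\sH_{JK}$, whose lift $U:=\widetilde V$ lies in $\gR$ in the quaternionic case. The delicate case is the complex one, where \cite{Simon} yields $W$ on $\sH_J$ that is either unitary or \emph{anti}-unitary. If $W$ is unitary, $U:=\widetilde W\in\gR$. If $W$ is anti-unitary, using $\sH=\sH_J\oplus\sH_Jk$ (Proposition~\ref{propHJ}(b)) I would define $U(x_1+x_2k):=W(x_1)+W(x_2)k$ for $x_1,x_2\in\sH_J$; a direct computation with $kq=\overline q\,k$ for $q\in\bC_j$ shows $U$ is $\bR$-linear, isometric, surjective, antilinear in $j$ and $k$ (hence $\bC_i$-linear), and satisfies $UJ=-JU$, so $U\notin\gR$ while $U^2=\widetilde{W^2}\in\gR$ because $W^2$ is $\bC_j$-linear and commutes with $J$. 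In every case $h(P)=UPU^{-1}$, i.e. (\ref{symmetryrealization}), holds because the lattice isomorphism is injective and intertwines conjugation by $U$ with conjugation by $W$ (resp. $V$).

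The ``further facts'' are then bookkeeping. For (iii), if $U$ and $V$ both satisfy the conclusion within the admissible linearity class, then $U^{-1}V$ commutes with every $P\in\cL_\gR(\sH)$, hence with $\cL_\gR(\sH)''$; combining this with the linearity type of $U^{-1}V$ (quaternionic-linear, or the square of an antilinear map) and Proposition~\ref{2.12} forces $U^{-1}V\in\gR\cap\gR'=\gZ_\gR$, while the converse is immediate since central elements leave $UPU^{-1}$ unchanged. The closing converse -- that every admissible $U$ defines a symmetry -- reduces to checking $P\mapsto UPU^{-1}$ is a well-defined automorphism of $\cL_\gR(\sH)$, i.e. that conjugation by $U$ preserves $\gR$; this is clear when $U\in\gR$, and in the antilinear subcase it follows from $U\gR U^{-1}=\gR$, which holds because $U$ anticommutes with $J$ and $\gR_J=\gB(\sH_J)$ via Proposition~\ref{complexidentif2}. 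The main obstacle I expect is precisely the anti-unitary subcase of the complex type: building the antilinear $U$, verifying all its structural properties, and threading the correct linearity class through the uniqueness argument.
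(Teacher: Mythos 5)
Your overall strategy coincides with the paper's: transport everything to the full lattice of the reduced Hilbert space ($\sH$, $\sH_J$ or $\sH_{JK}$) via the isomorphisms of Theorem \ref{quaternioni3theorem}, invoke the appropriate Gleason and Wigner--Varadarajan theorems there, and lift back with the extension maps, using the fact that a Hilbert basis of the reduced space is a Hilbert basis of $\sH$ to control traces (the paper isolates this as a preliminary lemma, exactly as you use it). Part (a), the quaternionic-real and quaternionic-quaternionic subcases of (b), and the unitary subcase of the quaternionic-complex type all go through as you describe.

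The step that fails as written is the construction of the antilinear lift in the anti-unitary subcase, which you yourself flag as the main obstacle. With your definition $U(x_1+x_2k):=Wx_1+(Wx_2)k$ one has, for $x=x_1+x_2k$ with $x_1,x_2\in\sH_J$, that $xk=-x_2+x_1k$, whence
\[
U(xk)=-Wx_2+(Wx_1)k=(Ux)k\:,
\]
so your $U$ \emph{commutes} with right multiplication by $k$: it is antilinear w.r.t.\ $j$ but linear w.r.t.\ $k$, hence antilinear (not linear) w.r.t.\ $i=jk$. This is not the linearity class asserted in (b)(ii), and that class is precisely what part (iii), Remark \ref{remarklinearity} and the closing converse rely on; so the "direct computation" you appeal to would in fact disprove your claim. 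The cure is a sign: the paper sets $Ux:=Wx_1-(Wx_2)k$, which is antilinear w.r.t.\ both $j$ and $k$, anticommutes with $J$, still satisfies $U^2=\widetilde{W^2}\in\gR$ and still implements $h$. Two further points you file under bookkeeping deserve more care: in (iii) the paper reaches $U^{-1}V\in\gR\cap\gR'$ by applying Varadarajan's uniqueness theorem on the reduced space (commutation with $\cL_\gR(\sH)''$ alone is not obviously sufficient, since $\cL_\gR(\sH)''\subsetneq\gR$ can occur by Proposition \ref{2.12}, and what is really used is $\cL(\sH')'=\bR I$ from Lemma \ref{commutantmaximal}); and in the final converse the genuinely nontrivial item in the antilinear case is selfadjointness of $UPU^{-1}$, which the paper derives from the anti-isometry identity $\langle Ux|Uy\rangle=\langle y|x\rangle$ on $\sH_J$ obtained by polarization, not merely from $U\gR U^{-1}=\gR$.
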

\begin{proof}
We need  to prove a preliminary technical lemma.
\begin{lemma}\label{traceclass}
Referring to Theorems \ref{threecommutant} and \ref{quaternioni3theorem}, let $\gR$ be of  quaternionic-complex or quaternionic-quaternionic type.
 Then $A\in\gR$ is trace-class if and only if $A_J$, respectively $A_{JK}$, is trace-class. Moreover $tr(A_J)_{\sH_J}=tr(A)_\sH$ and $tr(A_J)_{\sH_{JK}}=tr(A)_\sH$.
\end{lemma}
\begin{proof}
We study the quaternionic-complex case, the remaining one being strictly analogous. 
Focusing on $|A| :=\sqrt{A^*A}$, proposition \ref{operatorfunctioncompl} guarantees that $|A|_J=(\sqrt{A^*A})_J=\sqrt{(A^*A)_J}=\sqrt{A_J^*A_J}=|A_J|$. If $\{f_k\}_{k\in\bN}$ is 
a Hilbert basis for $\sH_J$, it is also  a Hilbert basis also for $\sH$ (d) in view of  Prop.\ref{propHJ} and we have
\begin{equation}\label{traceclassval}
0 \leq \sum_{k=0}^\infty \langle f_k ||A_J|f_k\rangle = \sum_{k=0}^\infty \langle f_k ||A|_Jf_k\rangle=\sum_{k=0}^\infty \langle f_k ||A|f_k\rangle \leq +\infty \:, 
\end{equation}
which immediately implies  the first implication of the first statement. Now, suppose that $A$ is trace-class on $\sH$, then there exists a Hilbert basis $\{e_k\}_{k\in\bN}$ such that 
$\sum_{k=0}^\infty \langle e_k||A|e_k\rangle <\infty$. Once this is valid for one basis, it holds true for any Hilbert basis, in particular for any basis of $\sH_J$, thanks to (d) in view of  Prop.\ref{propHJ}. Using (\ref{traceclassval}) 
again we get the opposite implication. The second property  straightforwardly follows from the already exploited fact that  every Hilbert basis for $\sH_J$  is also a Hilbert basis of $\sH$ 
and the observation  that the Hermitian scalar product of $\sH_J$ is the restriction of the one of $\sH$. 
\end{proof}
\noindent Let us come back to the main thesis.

(a)  Consider a state $\mu:\cL_\gR(\sH)\rightarrow [0,1]$.  The thesis is true when $\gR$ is of quaternionic-real type since 
$\cL_\gR(\sH) = \cL(\sH)$ for  Theorem \ref{quaternioni3theorem}
and we can apply  
 Gleason Theorem as stated in (3) of Section \ref{secGLEASON}.  
Let us pass to  the quaternionic-complex case, the only we will consider,  
the quaternionic-real  case being strictly similar using the real version of Gleason theorem. 
Thanks to Theorem \ref{quaternioni3theorem}, the map $\mu$ can be reinterpreted as a $\sigma$-additive probability 
measure $\mu_J$ on the {\em whole} $\cL(\sH_J)$ such that  $\mu_J(P_J):=\mu(P)$. Since we are dealing with the full lattice of orthogonal projectors of $\sH_J$,  the complex version of 
 Gleason theorem (which is valid for real, complex and quaternionic Hilbert spaces see \cite{V2}])  guarantees the existence of a
 positive, selfadjoint, unit-trace, trace-class operator $T_J$ on $\sH_J$ such that $\mu_J(P_J)=tr(T_JP_J)_{\sH_J}$ for every $P_J\in\cL(\sH_J)$. 
Thanks to Lemma \ref{traceclass}, the extension $T$ of $T_J$, which belongs to $\gR$ thanks to the identification of the latter with $\gB(\sH_J)$, is still of trace-class and unit-trace. Moreover it is clearly positive and self-adjoint thanks 
to Proposition \ref{operatorfunctioncompl}. So, let $P\in\cL_\gR(\sH)$. The operator $TP$ is still of trace-class and commutes 
with $J$, hence $(TP)_J=T_JP_J$. Finally $tr(TP)_\sH=tr((TP)_J)_{\sH_J}=tr((T_JP_J)_{\sH_J}=\mu_J(P_J)=\mu(P)$. 
Conversely,  if $T\in\gR$ is positive, selfadjoint, unit-trace, trace-class operator,  the map $\mu(P):=tr(TP)$ for $P\in\cL_\gR(\sH)$ is a well-defined $\sigma$-continuous probability measure as the
 reader can immediately prove.

(b) Let us prove the initial statement together with (i) and (ii).  With the same argument exploited to prove (a), taking the identifications on $\gR$ as stated in Theorem
 \ref{quaternioni3theorem}  into account, we can reinterpret $h$ as
 a lattice automorphism on $\cL(\sH),\cL(\sH_J)$ or $\cL(\sH_{JK})$, respectively. We will denote it as $h,h_J$ and $h_{JK}$, respectively. We are in a position to apply Theorems 4.27 and 4.28 in \cite{V2}.
  Sticking to the quaternionic-real and quaternionic-quaternionic cases there always exists a unitary operator $U_0\in\gB(\sH)$ or $\gB(\sH_{JK})$, respectively, such 
  that $h(P)=U_0PU_0^{-1}$ or $h_{JK}(P_{JK})=U_0P_{JK}U_0^{-1}$ for every orthogonal projector. In the quaternionic-complex case, instead, there exists a either unitary or 
  antiunitary operator $U_0$ on $\sH_J$ such that $h_J(P_J)=U_0P_JU_0^{-1}$.\\
   In the quaternionic-real case  $U_0 \in \gR$, it being $\gR=\gB(\sH)$ and thus $U:=U_0$ is the wanted operator. In the quaternionic-quaternionic
   case, $U_0$ can  uniquely be  extended to a unitary operator $U\in\gR$ thanks to the isomorphism of  $\gR$ and $\gB(\sH_{JK})$  proved to exist in (c) Theorem \ref{quaternioni3theorem}.
 Since $h_{JK}(P_{JK})=h(P)_{JK}$, it is easy to 
   see that $h(P)= UPU^{-1}$ which proves that $U$ is the operator we searched for. In the quaternionic-complex case, when  $U_0$
is unitary,  reasoning as in the quaternionic-quaternionic case, we find a uniquely defined unitary extension $U\in\gR$ of $U_0$ such that $h(P)=UPU^{-1}$ so that $U$ is the operator we were looking for. 
Eventually suppose that, instead,  $U_0$ is antiunitary, hence a surjective, norm-preserving antilinear map on $\sH_J$. 
 Decompose every $x\in \sH$ as  $x=x_1+x_2k\in\sH$ for $x_1,x_2\in\sH_J$ and define 
$Ux:=U_0x_1-U_0x_2k$.
We want to prove that $U$ is the operator we are looking for.
By direct inspection one sees that $U$ is real-linear, norm-preserving and surjective, with inverse $U^{-1}x=U_0^{-1}x_1-(U_0^{-1}x_2)k$. 
Moreover $U(xj)=U(x_1j-(x_2j)k)=U_0(x_1j)+U_0(x_2j)k=-(U_0x_1)j+(U_0x_2)kj=-[U_0x_1-(U_0x_2)k]j=-(Ux)j$, hence $U$ is antilinear with respect to 
$j$. Similarly we find $U(xk)=U(-x_2+x_1k)=-U_0x_2-(U_0x_1)k=-[U_0x_1-(U_0x_2)k]k=-(Ux)k$. As a consequence we also have  $U(xjk)=(Ux)jk$. The same properties can be established  for $U^{-1}$.
Let us now prove that $U$  anticommutes with $J$. Let $x\in\sH$ as above, then $UJx=U(x_1j+(x_2j)k)=U_0(x_1j)-(U_0(x_2j))k=-(U_0x_1)j+(U_0x_2)jk=-J(U_0x_1)+(J(U_0x_2))k=-J(U_0x_1-(U_0x_2)k)=-JUx$.
Next  notice that $UUx=U_0^2x_1+(U_0^2x_2)k$. Since $U_0^2\in\gB(\sH_J)$ it can be uniquely extended to a unitary operator $W\in\gR$ and  by construction it holds $U^2=W$ as wanted. 
Finally, $U$ implements $h$ as required. Indeed,  take $P\in\cL_\gR(\sH)$, then by definition it follows
\begin{equation*}
\begin{split}
&UPU^{-1}x=U_0P_JU_0^{-1}x_1+(U_0P_JU_0^{-1}x_2)k=h(P)_Jx_1+(h(P)_Jx_2)k=\\
&=\widetilde{h(P)_J}x=h(P)x
\end{split}
\end{equation*}
Proof of (i) and (ii) are completed.\\
Let us prove (iii). 
Suppose first that we are in the quaternionic-real or quaternionic-quaternionic case and let $V\in\gR$ such that $VPV^{-1}=UPU^{-1}$ for every
 projector $P\in\cL_\gR(\sH)$. This can be rewritten as $(U^{-1}V)P(U^{-1}V)^{-1}=P$ for every  $P$. In the quaternionic-real case, Theorem 4.27 in   \cite{V2} 
and the $\bH$-linearity of $UV^{-1}$  guarantee that $U^{-1}V=aI$ for some $a\in\bR$ and the proof of (iii) ends here for this case. In the quaternionic-quaternionic case, since both $U^{-1}$ 
  and $V$ commutes with $J,K$ ($U^{-1}=U^*\in\gR$) we have $(U^{-1}V)_{JK}P_{JK}(U^{-1}V)_{JK}^{-1}=P_{JK}$ for every $P_{JK}\in \sH_{JK}$. Again, Theorem 4.27 in  \cite{V2} 
  assures that $(U^{-1}V)_{JK}=aI_{JK}$ for some $a\in\bR$. Uniqueness of the extension from $\sH_{JK}$ to $\sH$ leads to $U^{-1}V=aI$ for some $a\in\bR$ concluding the proof of (ii)
in this case, too. Let us pass to the remaining quaternionic-complex case. If $U,V\in\gR$, then we can repeat the argument exploited  for the
 quaternionic-quaternionic case finding  $U^{-1}V=aI+bJ$ for some $a,b\in\bR$ concluding the proof.
Suppose conversely that $U,V$ are instead real-linear norm-preserving surjective maps which are antilinear w.r.t $j$ and $k$ and anticommute with $J$. These 
properties hold also for $U^{-1}$ as a direct calculations shows. 
The function $U^{-1}V$ turns easily out to be a linear bounded operator which commutes with $J$, hence we can consider $(U^{-1}V)_J\in\gB(\sH_J)$. 
Moreover, since $(U^{-1}V)_JP_J(U^{-1}V)_J^{-1}=P_J$, Theorem 4.27 in  \cite{V2}  guarantees that $(U^{-1}V)_J=cI_J$ for some $c\in\bC$. Again uniqueness of the 
extension from $\sH_J$ to $\sH$ yields $U^{-1}V=aI+bJ$ for some $a,b\in\bR$. \\
Let us pass to the final statement. It is easy to prove in the cases (i) and (ii) with $U\in\gR$. So we only focus attention on the quaternionic-complex case with  
$U$ which is  real-linear, norm-preserving surjective,  antilinear with respect to $j$ and $k$ and anticommutes with $J$. Notice that the same properties hold true for $U^{-1}$. The only difficult step  is to establish
 that $h(P):=UPU^{-1}$ is an orthogonal projector of $\gR$ if $P \in \cL_\gR(\sH)$. The operator  $h(P)$ is clearly quaternionic linear, belongs to $\gB(\sH)$, 
 and is idempotent. Moreover it commutes with $J$, 
hence $h(P)\in\gR''=\gR$ for every $P\in\cL_\gR(\sH)$ because $\gR' = \{aI+bJ\:|\; a,b \in \bR\}$. To establish that $h(P) \in \cL_\gR(\sH)$ it remains to prove that $h(P)$ is selfadjoint. 
To prove it,  let us initially restrict  ourselves to $\sH_J$. Since $U$ anticommutes with $J$ and it is antilinear with respect to $j$, if  $x\in\sH_J$  it must hold $JUx=-UJx=-U(xj)=(Ux)j$. As the same holds for $U^{-1}$
 we immediately see that $U$ is a well-defined antilinear norm-preserving surjective operator on $\sH_J$. The standard complex polarization identity implies  $\langle Ux|Uy\rangle=\langle y|x\rangle$
 for every $x,y\in\sH_J$. The same property 
is valid for $U^{-1}$. Hence for $x,y\in\sH_J$, using the established property and selfadjointness of $P$,
\begin{equation*}
\langle x|UPU^{-1}y\rangle=\langle PU^{-1}y| U^{-1}x\rangle =\langle U^{-1}y| PU^{-1}x\rangle  = \langle UPU^{-1}x|y\rangle\:,
\end{equation*}
so that $((UPU^{-1})_J)^*=(UPU^{-1})_J$. This identity lifts to the whole space $\sH$ in view of Proposition \ref{propHJop}, giving $h(P)^*=h(P)$ and thus $h(P) \in \cL_\gR(\sH)$ as required.
\end{proof}
\subsection{Relativistic elementary systems in quaternionic Hilbert spaces}
 We are now in a position to state the general definition of {\em quaternionic elementary relativistic system} that, exactly as the analogous definition in \cite{MO1} for real and complex systems, includes the
notion of elementary system, the presence of an irreducible continuous representation of Poincar\'e group and the requirement that them representaion itself  defines the algebra of observables of the system. 

\begin{definition}\label{qRES}{\em
A \textbf{quaternionic relativistic elementary system} (qRES) is a couple $(\gR,h)$ where $\gR$ is an irreducible von Neumann algebra over the quaternionic Hilbert
 space $\sH$ and $h:\cP\ni g\mapsto h_g\in \mbox{Aut}(\cL_\gR(\sH))$ is a locally faithful representation of the Poincar\'{e} group satisfying the following requirements
\begin{enumerate}[(a)]
\item $h$ is irreducible, in the sense that $h_g(P)=P$ for all $g\in\cP$ implies either $P=0$ or $P=I$
\item $h$ is continuous, in the sense that the map $\cP\ni g\mapsto \mu(h_g(P))$ is continuous for every fixed $P\in\cL_\gR(\sH)$ and quantum state $\mu$
\item $h$ defines the observables of the system. That is, accordingly to Proposition \ref{gleason+wigner} and representing $h$ in terms of unitary operators $U_g\in \gR$ 
(defined up to unitary factors in the center $\gZ_\gR:= \gR \cap \gR'$),
$h_g(P) = U_gPU_g^{-1}$ for $g \in \cP$ and $P \in \cL_\gR(\sH)$, it must be
\begin{equation*}
\left(\{U_g\:|\: g\in\cP\}\cup \gZ_\gR\right)''\supset \cL_\gR(\sH)
\end{equation*}
\end{enumerate}
}
\end{definition}
\begin{remark}\label{remarklinearity} $\null$\\
{\em 
{\bf (1)} Local faithfulness requirement of $h:\cP\ni g\mapsto h_g\in \mbox{Aut}(\cL_\gR(\sH))$ is equivalent to a physically more meaningful  requirement:

  {\em The representation of spacetime translations subgroup   $\bR^4 \ni t \mapsto h_{(I,t)}$ is not trivial}.\\
 This equivalence immediately arises from Proposition \ref{proplocfaithfulness}
when equipping $\mbox{Aut}(\cL_\gR(\sH))$ with the topology induced by all off the seminorms $p_{P,\mu}$ for every $P\in \cL_\gR(\sH)$ and every state $\mu$ on $\cL_\gR(\sH)$, defined as 
$p_{P,\mu}(h) := |\mu(h(P))|$, since the resulting topological group is Hausdorff as the reader  may prove easily. As a byproduct of Proposition \ref{proplocfaithfulness}, if $h$ is not injective, this is only due to the sign of the $SL(2,\bC)$ part of its argument. \\
{\bf (2)} In (c), we have explicitly 
assumed that $U_g \in \gR$ is always valid,
excluding the case of $U_g$ antilinear w.r.t $j,k$ and anticommuting with $J$  in the quaternionic-complex case (corresponding to an {\em anti unitary} operator in the
 complex Hilbert space $\sH_J$). The reason is the following (the analogous disussion in \cite{MO1} contained a trivial mistake we correct here).  From the polar decomposition of $SL(2,\bC)$ one sees 
that every $g \in \cP = SL(2,\bC) \ltimes \bR^4$ can 
 be always decomposed into a product of this kind $g= ttbbrr$ where $r$ is a spatial rotation, $b$ a boost, and $t$ a four-translation, hence the real-linear surjective norm-preserving operator
 $U_g =  U_t^2U_r^2 U_b^2$ generates the symmetry $h_g$. It is now clear that, whether or not $U_t$, $U_r$ and $U_b$ are $\bH$-linear and commute with $J$, 
  their squares are $\bH$-linear  and commute with $J$, and so $U_g$ would be linear and commutes with $J$ in every case. }
\end{remark}

\noindent The  map $\cP\ni g\mapsto U_g$ introduced in  Definition \ref{qRES} (c) is not, in general, a group representation since we may have  $U_g U_h=\Omega(g,h)U_{gh}$
 for  operators $\Omega(g,h)\in\gU(\gZ_\gR)$ where $\gU(\gZ_\gR)$ henceforth denotes  the set of unitary operators in the center of $\gR$. In particular $U_e= \Omega(e,e)$ putting $g=h=e$
in the identity above.
Such a map  $\cP\ni g\mapsto U_g$ is known as a {\bf projective unitary representation} of $\cP$, while the function $\Omega:\cP\times\cP\rightarrow \gU(\gZ_\gR)$ is 
said to be the {\bf multiplier function} of the representation. 
\begin{remark}
{\em The structure of $\gZ_\gR$ implies  the following algebraic identifications for a qRES. We have that   $\gU(\gZ_\gR)=\bZ_2 I$ -- the multiplicators are signs -- in the 
quaternionic-real and quaternionic-quaternionic cases and  $\gU(\gZ_\gR)=U(1) I$ -- the multiplicators are complex phases --  in the quaternionic-complex case.}
\end{remark}
\noindent The associativity property of the operator multiplication easily gives the {\bf cocycle}-property,
\begin{equation}\label{multprop}
\Omega(r,s)\Omega(rs,t)=\Omega(r,st)\Omega(s,t)\quad \mbox{for all $r,s,t\in\cP$}\:.
\end{equation} 
For  any function $\chi:\cP\rightarrow \gR(\gZ_\gR)$ the map $\cP \ni g\mapsto \chi(g)U_g$ is still a projective representation associated with the same representation
 $h$ of $\cP \ni g\mapsto U_g$, whose multiplier is now given by $$\Omega_\chi(g,h)=\chi(g)\chi(h)\chi(gh)^{-1}\Omega(g,h)\quad \mbox{for all $g,h \in \cP$.}$$
A natural question then concerns the possibility of getting rid of the multipliers by finding a function $\chi$ such that $\Omega_\chi=I$ in order to end up with a proper 
unitary representation from a given projective unitary representation.
A positive answer can be given for all of the three cases. 

\begin{proposition}\label{bargmann}
Let $\gR$ and $h$ respectively  be the von Neumann algebra and  the Poincar\'{e} representation  of a qRES
as in Definition \ref{qRES}. The following facts hold.\\

\noindent {\bf (a)} There exists a locally faithful  strongly-continuous unitary representation $\mathcal{P}\ni g\mapsto U_g\in\gR$  on $\sH$ such that $h_g(P)=U_g P U_g^{-1}$ 
for every $g\in\mathcal{P}$ and every $P \in \cL_\gR(\sH)$.\\

\noindent{\bf (b)} $\mathcal{P}\ni g\mapsto U_g\in\gR$   is irreducible if understood respectively on $\sH$, $\sH_J$ or $\sH_{JK}$ according to the three cases of Theorem \ref{quaternioni3theorem}.
\end{proposition}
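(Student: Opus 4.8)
The plan is to manufacture the genuine unitary representation out of the projective one supplied by Proposition \ref{gleason+wigner}, transporting the whole problem to a real or complex Hilbert space where the corresponding results of \cite{MO1} apply, and then lifting back. First I would use Proposition \ref{gleason+wigner}(b) together with Remark \ref{remarklinearity}(2) to select, for each $g\in\cP$, a unitary $U_g\in\gR$ with $h_g(P)=U_gPU_g^{-1}$; here Remark \ref{remarklinearity}(2) is crucial, since it guarantees that $U_g$ can always be chosen $\bH$-linear inside $\gR$ (even in the quaternionic-complex case, because $U_g=U_t^2U_r^2U_b^2$). As $U_g$ is fixed only up to a factor in $\gU(\gZ_\gR)$, the assignment $g\mapsto U_g$ is a priori merely a projective representation, with multiplier $\Omega(g,h)\in\gU(\gZ_\gR)$ obeying the cocycle identity (\ref{multprop}).

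Next I would pass to the reduced picture dictated by Theorem \ref{quaternioni3theorem}: in the quaternionic-real case one regards $g\mapsto U_g$ as a projective representation on $\sH_\bR$; in the quaternionic-complex case one transports it through $A\mapsto A_J$ to a projective representation $g\mapsto (U_g)_J$ on the complex space $\sH_J$; in the quaternionic-quaternionic case one transports it through $A\mapsto A_{JK}$ to the real space $\sH_{JK}$. Under these $^*$-isomorphisms the multiplier becomes $\bZ_2$-valued (real and quaternionic types) or $U(1)$-valued (complex type), while the continuity hypothesis (b), rewritten through Gleason's theorem (Proposition \ref{gleason+wigner}(a)) as continuity of $g\mapsto\mu(h_g(P))$, turns into exactly the ray-continuity hypothesis required in \cite{MO1}. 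I would then invoke the corresponding de-projectivization (Bargmann-type) result of \cite{MO1} for real and complex Poincar\'e representations: since $\cP=SL(2,\bC)\ltimes\bR^4$ is connected and simply connected with vanishing relevant Lie-algebra cohomology, there is a phase function $\chi:\cP\to\gU(\gZ_\gR)$ making $V_g:=\chi(g)U_g$ a genuine strongly-continuous unitary representation on the reduced space. Lifting $V_g$ back through the isomorphisms of Theorem \ref{quaternioni3theorem} produces a strongly-continuous unitary representation $\cP\ni g\mapsto V_g\in\gR$ on $\sH$: the phases $\chi(g)\in\gU(\gZ_\gR)\subset\gR$ are central, so $V_g$ remains $\bH$-linear and still implements $h$. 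Local faithfulness of $V$ follows from that of $h$, since by Proposition \ref{proplocfaithfulness} it is equivalent to nontriviality of the spacetime-translation subgroup, a property manifestly inherited by $V$; this establishes (a).

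For (b) I would argue directly from irreducibility of $h$. In the quaternionic-real case $\gR=\gB(\sH)$, so any orthogonal projector $P$ commuting with every $U_g$ lies in $\cL_\gR(\sH)$ and satisfies $h_g(P)=U_gPU_g^{-1}=P$, whence $P\in\{0,I\}$ by hypothesis (a) of Definition \ref{qRES}. In the quaternionic-complex (respectively quaternionic-quaternionic) case, a projector $Q$ commuting with all $(U_g)_J$ on $\sH_J$ (respectively all $(U_g)_{JK}$ on $\sH_{JK}$) is, by Proposition \ref{propHJop} (respectively Proposition \ref{lemmaextensreal}) together with the surjectivity of the isomorphism $\gR\cong\gB(\sH_J)$ (respectively $\gR\cong\gB(\sH_{JK})$) of Theorem \ref{quaternioni3theorem}, the image of a unique projector $\widetilde Q\in\cL_\gR(\sH)$ commuting with all $U_g$; then $h_g(\widetilde Q)=\widetilde Q$ forces $\widetilde Q\in\{0,I\}$, hence $Q\in\{0,I\}$, giving the asserted irreducibility on the reduced space.

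The main obstacle is the de-projectivization in the second paragraph, that is, trivializing the multiplier and simultaneously producing a strongly-continuous lift. This is precisely Bargmann's theorem for $\cP$, and the cleanest route is to reduce to the real and complex Hilbert space settings so that the already-established results of \cite{MO1} can be quoted; the delicate point to verify is that hypothesis (b), via Gleason's theorem, really does transport to the ray-continuity hypothesis of those results, so that the cited de-projectivization is legitimately applicable.
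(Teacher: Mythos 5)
Your overall strategy coincides with the paper's: reduce to $\sH$, $\sH_J$ or $\sH_{JK}$ via Theorem \ref{quaternioni3theorem}, convert hypothesis (b) of Definition \ref{qRES} through Gleason's theorem into continuity of $g\mapsto|\langle\psi|V_g\phi\rangle|^2$, de-projectivize there, lift back, and deduce irreducibility of $U$ from irreducibility of $h$ exactly as you do in your last paragraph. Part (b) of your argument is fine.

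The genuine gap is in the de-projectivization step for the quaternionic-real and quaternionic-quaternionic types. There the multiplier takes values in $\gU(\gZ_\gR)=\bZ_2 I$, and your stated justification --- connectedness and simple connectedness of $\cP$ together with ``vanishing relevant Lie-algebra cohomology'' --- is the Bargmann argument, which trivializes $U(1)$-valued (equivalently $\bR$-valued, at the infinitesimal level) multipliers. It says nothing about a $\bZ_2$-valued cocycle: there is no Lie-algebra cohomology attached to a discrete multiplier group, and no phase function $\chi$ is produced by that route. What is actually needed (and what the paper does for the quaternionic-quaternionic case) is a different two-step argument: first, one chooses the representatives so that $U_e=I$ and the multiplier is \emph{continuous} on a connected neighbourhood of $(e,e)$; since $\bZ_2$ is discrete and $\Omega(e,e)=I$, continuity forces $\Omega\equiv I$ near the identity, so $g\mapsto U_g$ is a \emph{local} group homomorphism into the unitary group of the real Hilbert space. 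Second, one extends this local homomorphism to a global strongly-continuous representation using simple connectedness of $\cP$ and connectedness of the target unitary group (Pontryagin's extension theorem); the latter connectedness is itself nontrivial --- it holds for infinite-dimensional real Hilbert spaces by \cite{connect}, while in finite dimension $O(n)$ is disconnected and one must first shrink the neighbourhood so that the local homomorphism lands in $SO(n)$. None of this is supplied by quoting a Bargmann-type cohomological result, so as written your construction of the honest representation fails precisely in the two cases with sign-valued multipliers. (For the quaternionic-complex case, where the multiplier is $U(1)$-valued, your appeal to Bargmann's theorem on $\sH_J$ is legitimate and matches the paper.)
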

\begin{proof}
\noindent We simultaneously  prove (a) and (b). We already know that $h_g(\cdot) =V_g\cdot V_g^*$ for some unitary operator $V_g\in\gR$. By the continuity
 hypothesis on $h_g$ and (a) of Proposition \ref{gleason+wigner}, reinterpreting everything within the corresponding space as established by Theorem \ref{quaternioni3theorem}, we see that the map
\begin{equation*}
\cP \ni g\mapsto tr(P_\psi h_g(P_\phi))=|\langle\psi|V_g \phi\rangle|^2 \mbox{ for every } 
\begin{cases}
&\psi,\phi\in\sH\ \mbox{ if } \gR'\equiv \bR\\
&\psi,\phi\in\sH_J\ \mbox{ if } \gR'\equiv \bC\\
&\psi,\phi\in\sH_{JK}\ \mbox{ if } \gR'\equiv \bH\\
\end{cases}
\end{equation*}
is continuous, where $P_\chi\in\gR$ is the projector obtained by lifting to $\sH$ the projector on the one-dimensional subspace generated by the generic vector $\chi$ within the space $\sH,\sH_J$
 or $\sH_{JK}$, depending on the case (the equality in the equation derives from the last statement of Lemma \ref{traceclass}).
Let us focus on the quaternionic-complex case first. Thanks to the above remark, following  the analysis contained in the well-known paper \cite{Bargmann}, we
  get a strongly-continuous unitary representation $\cP \ni g\mapsto U_g$ on $\sH_J$ such that $U_g=\chi_g (V_g)_J$ for some $\chi_g\in U(1)$, hence 
  generating $h_J$. More precisely $U_g$ can be uniquely extended to a unique unitary operator $\tilde{U}_g\in\gR$ such that $\tilde{U}_g=(a_gI+b_gJ)V_g$. 
  The map $g\mapsto \tilde{U}_g$ generates $h$ and is still faithful as $h$ is.
Notice that since $\|\tilde{U}_gx-x\|^2=\|U_gx_1-x_1\|^2+\|U_gx_2-x_2\|^2$ the representation $\cP \ni g\mapsto \tilde{U}_g$ is also strongly continuous on $\sH$. 
Irreducibility of $U$ on $\sH_J$ follows from the following argument. 
Since the family $\gU:=\{U_g,\:|\:g\in\cP\}$ is closed under  Hermitian conjugation, thanks to Remark \ref{remirr} we need only to prove that $(\gU_J)'\cap\cL(\sH_J)=\{0,I_J\}$, 
but this is a direct consequence of the irreducibility of $h$. Indeed, if $P_J$ is a complex projector commuting with every $U_g$ then $(h_g)_J(P_J)=U_gP_JU_g^*=P_J$ for 
every $g\in\cP$ and thus  $P=0$ or $P=I$.
Let us next focus on the quaternionic-real case. Thanks to the analysis of \cite{Em} we can always find a strongly-continuous unitary representation $\cP \ni g\mapsto U_g$ on $\sH$
  such that $U_g=\chi_g V_g$ for some $\chi_g\in \bZ_2$, hence generating $h$. Again, it is straightforward to prove the irreducibiity and faithfulness of this representation on $\sH$.\\
Let us conclude the proof discussing the quaternionic-quaternionic case.
We affirm that we may always choose a representative $\cP \ni g\mapsto U_g$ on $\sH_{JK}$ equivalent to $g\mapsto (V_g)_{JK}$ such that $U_e=I_{JK}$, it is strongly 
continuous over an open neighbourhood of the identity $A_e$ and its multiplier $(g,h)\mapsto \Omega(g,h)$ is continuous over $A'_e\times A'_e$ with $A'_e\subset A_e$,  a  
smaller open neighbourhood of $e$ which can always be assumed to be connected ($\cP$ is a Lie group and as such it is locally connected).
The proof of this fact can be found within the proof of Proposition 12.38 in \cite{M2} which is valid both for complex and real Hilbert spaces since there is no distinctive role played by the imaginary unit. 
Since $\Omega(g,h) \in \{\pm I\}$ which is not connected if equipped with the topology induced by $\bR$ and $\Omega(e,e)=I$, the 
continuity of $\Omega$ guarantees that $\Omega(g,h)=I$ for every $g,h\in A'_e$. In other words  $U_gU_h=U_{gh}$ for every $g,h\in A'_e$. 
As the  group $\gU(\sH_{JK})$ of all unitary operators over $\sH_{JK}$ is a topological group with respect to the strong operator topology, 
the function $\cP \ni g\mapsto U_g$ is then a local topological-group homomorphism as in Definition B, Chapter 8, Par.47 of \cite{P}.
 Since, as established in \cite{connect}, $\gU(\sH_{JK})$  is connected if $\dim \sH_{JK}$ is not finite and $\cP$
is a simply connected Lie group, we can apply  Theorem 63 \cite{P} proving that there exists a strongly-continuous unitary representation 
$\cP \ni g\mapsto W_g \in \gU(\sH_{JK})$ such that $W_g=U_g$ on some open neighborhood of the identity $A_e''\subset A_e'$. 
If $\dim(\sH) = n < +\infty$, then $\gU(\sH_{JK})$ can be identified with the topological group $O(n)$. Its open subgroup $SO(n)$ is the connected 
component including the identity element $I$. In this situation, we can restrict ourselves to deal with a smaller initial open set $A'_e \cap B$ where $B$ is 
the pre-image through the map $U$ (which is continuous on $A'_e$) of an open set including $I$ and completely included in $SO(n)$. As $SO(n)$ is connected, 
we can finally exploit the same procedure as in the infinite dimensional case,  proving that there exists a strongly-continuous 
unitary representation $\cP \ni g\mapsto W_g \in \gU(\sH_{JK})$ such that $W_g=U_g$ on some open neighbourhood of the identity element $A_e''\subset A_e'\cap B$.
To conclude, we observe that since  the Lie group $\cP$ is connected, a standard result guarantees that every $g\in\cP$ can be written 
as $g=g_1\cdots g_n$ for some $g_1,\dots,g_n\in A_e''$. So, $W_g=W_{g_1}\cdots W_{g_n}=U_{g_1}\cdots U_{g_n}$ and
$h_g = h_{g_1}\circ \cdots \circ h_{g_n}$,
from which, dealing with the extensions $\widetilde{W}_g\in\gR$, it easily follows $h_g=\widetilde{W}\cdot \widetilde{W}_g^*$ for every $g\in\cP$. Again, 
faithfulness and irreducibility of the representation  $\cP \ni g\mapsto \tilde{W}_g$ immediately arises form the same properties of $h$.
\end{proof}
\noindent The following  technical lemma is useful in the proof of another  main  result of this work.
\begin{lemma}\label{commutantmaximal}
Let $\sH$ be an either real Hilbert space or quaternionic Hilbert space, with dimension striclty greater than one if quaternionic, then $\cL(\sH)'=\bR I$.
 In particular $\cL(\sH)''=\gB(\sH)$ and $\gB(\sH)'=\bR I$.
\end{lemma}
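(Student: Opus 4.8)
The plan is to prove the central claim $\cL(\sH)' = \bR I$; the two ``in particular'' assertions then follow formally. Indeed, once $\cL(\sH)' = \bR I$ is known, taking commutants gives $\cL(\sH)'' = (\bR I)' = \gB(\sH)$, since every bounded operator commutes with the real multiples of the identity; and from $\cL(\sH) \subset \gB(\sH)$ we get $\gB(\sH)' \subset \cL(\sH)' = \bR I$, while the reverse inclusion is trivial, whence $\gB(\sH)' = \bR I$.

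To establish $\cL(\sH)' = \bR I$, let $A \in \gB(\sH)$ commute with every orthogonal projector. The key is to test $A$ against the rank-one projectors: for a unit vector $\phi$ put $P_\phi := \phi \langle \phi | \cdot \rangle \in \cL(\sH)$. Evaluating $A P_\phi = P_\phi A$ on $\phi$ and using the $\bH$-linearity of $A$ yields the right-eigenvector relation $A\phi = \phi \lambda_\phi$ with $\lambda_\phi := \langle \phi | A\phi \rangle \in \bH$. Two facts are then needed. First, for orthogonal unit vectors $\phi, \psi$, evaluating $A$ on $(\phi+\psi)/\sqrt2$ and comparing with the eigenvector relations, the $\bH$-linear independence of $\phi, \psi$ forces $\lambda_\phi = \lambda_\psi$. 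Second — and this is the genuinely quaternionic point — the eigenvalue is in fact real: for any unit quaternion $q$ the vector $\phi q$ is again a unit vector orthogonal to $\psi$, so $\lambda_{\phi q} = \lambda_\psi = \lambda_\phi$, while directly $A(\phi q) = (A\phi)q = \phi \lambda_\phi q = (\phi q)(\overline{q}\lambda_\phi q)$ gives $\lambda_{\phi q} = \overline{q}\lambda_\phi q$; hence $\overline{q}\lambda_\phi q = \lambda_\phi$ for every unit $q$, i.e. $\lambda_\phi$ lies in the centre $\bR$ of $\bH$. Writing $\lambda := \lambda_\phi \in \bR$, a Gram--Schmidt splitting $\psi = \phi c + \phi^\perp$ of an arbitrary unit vector (with $\phi^\perp \perp \phi$), together with the fact that $\lambda$ is central, shows $A\psi = \psi\lambda$ for all $\psi$, so $A = \lambda I \in \bR I$. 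The reverse inclusion $\bR I \subset \cL(\sH)'$ is obvious.

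The main obstacle is exactly the reality step: over $\bR$ or $\bC$ the eigenvalue $\langle\phi|A\phi\rangle$ already lies in the scalar field, but over $\bH$ one must exploit the freedom of right multiplication by quaternions, and this is precisely where the hypothesis $\dim \sH > 1$ in the quaternionic case enters — it guarantees a unit vector $\psi$ orthogonal to $\phi$, without which the argument (and the conclusion) fails, as for $\sH = \bH$ one has $\cL(\bH) = \{0,I\}$ with commutant all of $\bH \supsetneq \bR I$. In the real case no orthogonality manoeuvres are required: $\lambda_\phi = \langle\phi|A\phi\rangle \in \bR$ automatically, and the same orthogonal-decomposition step gives $A = \lambda I$ in every dimension, including $1$. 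Thus a single eigenvector computation, read over $\bR$ or $\bH$, handles both cases uniformly.
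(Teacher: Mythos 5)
Your proof is correct. Note that the paper itself does not actually prove the core identity $\cL(\sH)'=\bR I$ at this point: it simply cites Lemmas 5.13 and 5.16 of \cite{MO1} and observes that the two ``in particular'' claims follow formally, exactly as you do. So you have supplied a self-contained argument where the paper defers to a reference. Your argument is sound throughout: the rank-one projectors $P_\phi$ give the right-eigenvector relation $A\phi=\phi\lambda_\phi$; orthogonal unit vectors share the same eigenvalue via the vector $(\phi+\psi)/\sqrt{2}$; and the genuinely quaternionic step --- comparing $\lambda_{\phi q}=\overline{q}\,\lambda_\phi\, q$ (from right-linearity of $A$) with $\lambda_{\phi q}=\lambda_\phi$ (from orthogonality of $\phi q$ to $\psi$) to force $\lambda_\phi$ into the centre $\bR$ of $\bH$ --- is exactly where the hypothesis $\dim\sH>1$ enters, and your counterexample $\cL(\bH)=\{0,I\}$ with commutant $\gB(\bH)\cong\bH$ correctly shows the hypothesis is not removable. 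The final Gram--Schmidt step and the deduction of $\cL(\sH)''=\gB(\sH)$ and $\gB(\sH)'=\bR I$ are also fine; in particular your passage from $\phi\mu=\phi\nu$ to $\mu=\nu$ is justified by pairing with $\langle\phi|\cdot\rangle$. The one presentational caveat is that, in the paper's convention, $\lambda I$ for $\lambda\in\bR$ acts as $x\mapsto x\lambda$, which is precisely the operator you construct, so the conclusion $A=\lambda I\in\bR I$ is stated in the correct sense.
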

\begin{proof}
 $\cL(\sH)'=\bR I$ was  established in \cite{MO1} Lemma 5.16 for $\sH$ quaternionic and in \cite{MO1}  Lemma 5.13 for $\sH$ real. The final statements immediately arise from  $\cL(\sH)'=\bR I$.
\end{proof}
\subsection{Reduction to the complex Hilbert space case}
We are ready to state and prove  the second  main result  of this work.

\begin{theorem}\label{secondmain}
Let $(\gR,h)$ be a qRES on a quaternionic Hilbert space $\sH$. Let $\cP\ni g\mapsto U_g\in\gR$ be a corresponding
 locally faithful strongly-continuous unitary representation of $\cP$ on $\sH$ generating $h$ as in Proposition \ref{bargmann}. \\
 If the associated mass operator $M_U^2$ satisfies $M_U^2\ge 0$, then the following facts are valid.\\

\noindent {\bf (a)} $\gR$ turns out to be of quaternionic-complex type with commutant generated by the complex structure $J$.\\ 

\noindent {\bf (b)  }$U$ results to be irreducible  and $\gR =\gR_U$ defining  a quanternionic WRES with $M_U^2\ge 0$. \\

\noindent As a consequence of Theorem \ref{poinccomplexstructure},  the quaternionic relativistic elementary system can be equivalently described as a complex  Wigner relativistic
  elementary system in the complex Hilbert space $\sH_J$. In particular  $J$, up to sign, coincides to the Poincar\'e invariant complex structure arising from the polar
   decomposition of the subgroup of $U$ of the temporal translations.
\end{theorem}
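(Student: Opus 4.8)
The plan is to run the trichotomy of Theorem~\ref{threecommutant} for $\gR'$ and to show that the hypothesis $M_U^2\ge 0$ is incompatible with both the quaternionic-real and the quaternionic-quaternionic cases, so that only the quaternionic-complex one can occur; once this is done, the representation $U$ will be seen to be an irreducible WRES on $\sH$ and Theorem~\ref{poinccomplexstructure} will do the remaining work. The first step is to turn condition (c) of Definition~\ref{qRES} into the assertion that $U$ generates the full operator algebra of the appropriate reduced space. In each of the three cases Theorem~\ref{quaternioni3theorem} identifies $\gR$ with $\gB(\sH')$ and $\cL_\gR(\sH)$ with $\cL(\sH')$, where $\sH'$ is $\sH$, $\sH_J$ or $\sH_{JK}$ respectively. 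Since $\gZ_\gR=\{aI\,|\,a\in\bR\}$ in the two extreme cases, whereas the restriction of $\gZ_\gR$ to $\sH_J$ reduces to the central scalars $\bC_j I$ in the quaternionic-complex case, condition (c) restricts to the statement that the von Neumann algebra generated on $\sH'$ by the restricted operators contains $\cL(\sH')$. Lemma~\ref{commutantmaximal} then forces this generated algebra to be all of $\gB(\sH')$; using that restriction to $\sH_J$ (resp.\ $\sH_{JK}$) is a weakly-continuous $^*$-isomorphism carrying double commutants to double commutants (Prop.~\ref{propHJop}(d), resp.\ Prop.~\ref{lemmaextensreal}(d)) this yields $\gR_U=\gR$ and shows that the induced representation on $\sH'$ is irreducible and generates $\gB(\sH')$, in agreement with Proposition~\ref{bargmann}(b).

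Next I would exclude the two extreme cases. In the quaternionic-real case $(\gR_U=\gB(\sH),U)$ is a quaternionic WRES with $M_U^2\ge 0$, so Proposition~\ref{WignerTHM} produces a complex structure $J_0\in\gR_U\cap\gR_U'=\gB(\sH)\cap\bR I=\bR I$, which is absurd because $J_0^2=-I$. In the quaternionic-quaternionic case the mass operator restricts to a non-negative mass operator on $\sH_{JK}$: indeed $M_{U_{JK}}^2=(M_U^2)_{JK}$ by Prop.~\ref{gardingtreal}(d), and positivity passes to $\sH_{JK}$ since $\langle x|M_U^2x\rangle\ge 0$ remains valid (and real) for $x$ in $\sH_{JK}$. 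Thus $(\gB(\sH_{JK}),U_{JK})$ is an irreducible real WRES with non-negative mass, and the real counterpart of Proposition~\ref{WignerTHM} proved in \cite{MO1} again yields a complex structure lying in $\gR_{U_{JK}}'=\gB(\sH_{JK})'=\bR I_{JK}$, once more impossible. Hence $\gR$ must be of quaternionic-complex type, which is statement (a), with $\gR'=\{aI+bJ\,|\,a,b\in\bR\}$; since this commutant contains no non-trivial orthogonal projector and $\{U_g\}$ is $^*$-closed, $U$ is irreducible on $\sH$ by Remark~\ref{remirr}(b), and together with $\gR_U=\gR$ this is exactly statement (b).

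Finally, $(\gR_U,U)$ is a quaternionic WRES with $M_U^2\ge 0$, so Theorem~\ref{poinccomplexstructure} applies verbatim: the polar decomposition $\widetilde P_0=J_0|\widetilde P_0|$ furnishes the Poincar\'e-invariant complex structure $J_0$ with $\gR_U'=\{aI+bJ_0\,|\,a,b\in\bR\}$, forcing $J=\pm J_0$, and parts (f)--(g) of that theorem give the equivalence of the qRES with the complex WRES $\cP\ni g\mapsto U_g|_{\sH_{J_0}}$ on $\sH_{J_0}=\sH_J$, together with the uniqueness up to sign of the invariant complex structure. I expect the main obstacle to be purely bookkeeping: verifying rigorously that restriction to $\sH_J$ and to $\sH_{JK}$ commutes with the formation of von Neumann algebras and sends $\cL_\gR(\sH)$ onto the full lattice $\cL(\sH')$ (so that Lemma~\ref{commutantmaximal} can be triggered), and confirming that the real WRES theorem of \cite{MO1} indeed places the time-translation complex structure in the commutant; everything else is a direct application of the spectral, polar-decomposition and Schur-type machinery already developed above.
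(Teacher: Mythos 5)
Your overall strategy coincides with the paper's: run the trichotomy of Theorem \ref{threecommutant}, kill the quaternionic-real and quaternionic-quaternionic cases by producing a complex structure sitting in a commutant that condition (c) forces to be $\bR I$, and then hand the quaternionic-complex case to Theorem \ref{poinccomplexstructure}. The two exclusion arguments are essentially the paper's, with one omission: Lemma \ref{commutantmaximal} requires $\dim\sH>1$ in the quaternionic case, so the quaternionic-real branch needs the separate observation (which the paper makes) that $\dim\sH=1$ is impossible because $SL(2,\bC)$ admits no non-trivial continuous finite-dimensional unitary representations.

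The genuine gap is in your last step. You claim that condition (c) together with Lemma \ref{commutantmaximal} yields $\gR_U=\gR$ in the quaternionic-complex case, and you then read irreducibility of $U$ on $\sH$ off the fact that $\gR'=\{aI+bJ\}$ contains no non-trivial projector. But condition (c) only states $(\{U_g\}\cup\gZ_\gR)''\supset\cL_\gR(\sH)$, and in the quaternionic-complex case $\gZ_\gR=\{aI+bJ\}$ is non-trivial: what your restriction argument actually delivers is $(\{U_g\}\cup\{J\})''=\gR$, because the \emph{complex} von Neumann algebra generated by $\{(U_g)_J\}$ on $\sH_J$ automatically contains $jI=J_J$, whereas the \emph{real} von Neumann algebra $\gR_U=\{U_g\}''$ computed in $\gB(\sH)$ does not automatically contain $J$. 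Hence $\gR_U=\gR$ (equivalently $J\in\gR_U$) does not follow at this stage, and neither does $\{U_g\}'=\gR'$; a projector in $\{U_g\}'$ need not commute with $J$, so irreducibility of $U_J$ on $\sH_J$ (Proposition \ref{bargmann}(b)) does not transfer to irreducibility of $U$ on $\sH$. This is precisely where the paper inserts a separate, substantive argument (the quaternionic adaptation of Proposition 5.17 of \cite{MO1}): assuming a non-trivial $U$-invariant subspace, one restricts the representation to it and to its orthocomplement, exploits $M^2_U\ge 0$ to produce complex structures on each piece, and contradicts the structure of the commutant. Only after $U$ is known to be irreducible on $\sH$ does Theorem \ref{poinccomplexstructure} give $\gR_U\cong\gB(\sH_{J_0})=\gB(\sH_J)\cong\gR$ and hence $\gR_U=\gR$. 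Your final paragraph (the identification $J=\pm J_0$ and the passage to the complex WRES) is correct once this missing step is supplied.
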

\begin{proof} The final statements immediately arise from (b), $M_U^2 \geq 0$ and (a), so we prove (a) and (b).\\
(a) According to Theorem \ref{threecommutant}, $\gR$ can be of three mutually exclusive types. We prove that the quaternionc-real and the quaternionic-complex types
 are not possible and thus $\gR$ must be 
of quaternionic-complex case.
Let us start by assuming that  $\gR$ is of  quaternionic-real  type.  We affirm that $\sH$ cannot have dimension $1$. If it were the case, suppose for simplicity
 that $\sH=\bH$, the representation $U$ could be seen as a strongly-continuous locally faithful  unitary  representation on the $2$-dimensional {\em complex}
  Hilbert space $\sH_\bC=\bC^2$. Thus  $U$ would include a strongly-continuous  locally faithful  unitary  representation $V$ of $SL(2,\bC)$ on the $2$-dimensional
   complex Hilbert space $\bC^2$. This is not possible since the continuous finite-dimensional  unitary  representations of $SL(2,\bC)$  are completely reducible and
    the irreducible ones are the trivial representations only \cite{knapp}. In other words, the initial representation $U$ would be the trivial representation against the local faithfulness hypothesis.
So suppose that $\sH$ has  dimension $> 1$.
We have that $g\mapsto U_g$ is irreducible on $\sH$ and $\gR=\gB(\sH)$, in particular $\cL_\gR(\sH)=\cL(\sH)$.
Notice that $\gZ_\gR=\bR I$, hence the physical assumption on the observables of the system reduces to $\cL(\sH)\subset\{U_g\:|\: g\in\cP\}''$. Lemma \ref{commutantmaximal}
 guarantees that $\{U_g\:|\: g\in\cP\}''=\gB(\sH)$. Since $U$ is irreducible, Theorem \ref{WignerTHM} applies, ensuring the existence of a unitary antiselfadjoint operator $J_0$
  that commutes with the whole representation. 
So, we have $J_0\in\{U_g\:|\: g\in\cP\}'=\gB(\sH)'=\bR I$, which is clearly impossible because $J_0$ is not selfadjoint and not vanishing.
Let us prove that $\gR$ of quaternionic-quaternionic type is similarly impossible. We know by Proposition \ref{bargmann} that $g\mapsto U_g$ is irreducible if
 understood on $\sH_{JK}$ and that $\gR=\gB(\sH_{JK})$ under the action of the restriction map, 
in particular $\cL_\gR(\sH)$ is isomorphic to $\cL(\sH_{JK})$ under the action of the restriction map.
Again,  $\gZ_\gR=\bR I$ so that the assumption $\cL_\gR(\sH)\subset (\{U_g\:|\:g\in\cP\}\cup \gZ_\gR)''$ simplifies  to  $\cL_\gR(\sH)\subset \{U_g\:|\:g\in\cP\}''$. 
Thanks to Lemma \ref{Lemmagenertorsrestrictoinrea}, the anti-selfadjoint generators of $\bR \ni t\mapsto U_{\exp(t{\bf p}_\mu)}$ commutes with $J,K$ and their restriction to $\sH_{JK}$ give 
the generators of the one-parameter subgroups $t\mapsto (U_{\exp(t{\bf p}_\mu)})_{JK}$, while Proposition \ref{gardingtreal} ensures that $D_G^{(U_{JK})}=D_G^{(U)}\cap\sH_{JK}$. 
This proves that $M_{U_{JK}}^2=(M_{U}^2)_{JK}$. Hence, since the positivity of operators is preserved when moving from $\sH$ to $\sH_{JK}$, we get $M_{U_{JK}}^2\ge 0$. 
Theorem 4.3 of  \cite{MO1}   for real Hilbert spaces applies, ensuring the existence of a real-linear unitary antiselfadjoint operator $J_0\in\gB(\sH_{JK})$ such that $J_0 (U_g)_{JK}=(U_g)_{JK}J_0$ 
 for every $g\in\cP$. Thanks to Proposition \ref{lemmaextensreal}, $J_0$ uniquely extends to a quaternionic-linear unitary antiselfadjoint operator $\tilde{J}_0$ on $\sH$
  which commutes with $J$ and $K$ and the relation $J_0 (U_g)_{JK}=(U_g)_{JK}J_0$ similarly extends to $\tilde{J}_0U_g=U_g\tilde{J}_0$ for all $g\in\cP$. Hence 
   $\tilde{J}_0\in\{U_g\:|\:g\in\cP\}'\subset\cL_\gR(\sH)'$. To conclude, since $\cL_\gR(\sH)$ is ismorphic to $\cL(\sH_{JK})$ via the restriction map from $\sH$ to $\sH_{JK}$ and the 
   commutativity of operators is preserved
    when restricting operators on $\sH$ to $\sH_{JK}$ and  when extending operators on $\sH_{JK}$ to operators on $\sH$,
     we have that $J_0\in\cL(\sH_{JK})'$. Lemma \ref{commutantmaximal} eventually guarantees that $\cL(\sH_{JK})'=\bR I$, which is clearly impossible because $J_0$ is not self-adjoint and not vanishing.\\
     (b) The thesis is true if $U$ is irreducible. Indeed, under this hypothesis Theorem  \ref{poinccomplexstructure} implies that there is an up-to-sign unique complex
      structure $J_0$ commuting with $U$ and this is obtained from the polar decomposition of the
     generator of temporal translations.
      As $\{U_g \:|\: g \in \cP\} \subset \gR$ in view of (b) Proposition \ref{gleason+wigner} and (b) Remark \ref{remarklinearity} , and since $\gR' = \{aI+bJ \:|\: a,b \in \bR\}$ for (a), it must be $J=\pm J_0$.  
 We may henceforth assume $J=J_0$ redefining $J$ up a sign if necessary.      Theorem  \ref{poinccomplexstructure} also proves that  the action on operators 
  of the restriction map from $\sH$ to $\sH_{J_0}$
  makes $\gR_U$  isomorphic to $\gB(\sH_{J_{0}})$. We know form (a) that  the same  action on operators of the restriction map from $\sH$ to $\sH_J$
  makes $\gR$  isomorphic to $\gB(\sH_J)$. Since $\sH_J= \sH_{J_0}$,  so that $\gB(\sH_J) = \gB(\sH_{J_{0}})$ 
, we conclude that $\gR= \gR_U$.\\
   To conclude the proof of (b) is  enough establishing that $\cP \ni g \mapsto U_g$ is irreducible. The proof of this fact is identical to the proof of Proposition 5.17 in  \cite{MO1}  just replacing real-complex for 
   quaternionic-complex, interpreting the real Hilbert space $\sH$ and the relevant subspaces $\sH_P$, $\sH_P^\perp$ appearing therein as our quaternionic Hilbert space and 
   corresponding quaternionic subspaces. Finally,  Theorem 4.3 in  \cite{MO1}  appearing in the proof of  Proposition 5.17 in  \cite{MO1}  has to be understood here as Theorem \ref{poinccomplexstructure}.
   \end{proof}

\section{Conclusions} This works proves that elementary relativistic quantum systems (generalizing Wigner's original ideas) described in a quaternionic Hilbert space, can be equivalently 
described in complex Hilbert spaces provided a natural spectral condition is satisfied for the observable corresponding to the squared mass of the system. The final picture removes any
 redundancy of the theory as every selfadjoint operator in the final complex Hilbert space represents an observable, the complex structure is Poincar\'e invariant, and the standard 
relation between continuous symmetries and conserved quantities is restored.  \\
Though this result seems  physically important,  it is worth stressing that it only concerns a very peculiar notion of physical system corresponding to an elementary relativistic particle.   
From the mathematical side these systems, in complex formulation, are represented by irreducible  von Neumann algebra of observables of type $I$. This restriction excludes physically 
important systems where other types of von Neumann algebras take place (Quantum Field Theory, finite temperature extended systems)  or 
simply the presence of a gauge group (quarks).\\
It is not obvious whether or not,  referring to a larger class of physical systems necessarily different from elementary relativistic particles, a real or quaternionic
 formulation of quantum theory  may have some advantage. First of all one should construct a theory of classification of quaternionic von Neumann algebra analogous to the classical
 one for complex (and real) algebras focusing on the interplay of the lattice of orthogonal projectors of a von Neumann algebra and the von Neumann algebra itself. Some very elementary  
steps towards  this classification are  the double commutant theorem (Theorem \ref{DCT}) and Proposition \ref{2.12}  we have established in this work.\\
Another interesting issue deserving a closer investigation  is the fact that Poincar\'e symmetry makes sense only in the absence of gravitation, i.e., in the framework of
 Special Relativity. This restriction leaves open the possibility that quaternionic formulations cannot be excluded when dealing with quantum field theory in (classical) curved spacetime.
All these problems will be investigated elsewhere.\\

\noindent {\bf Appendix}
\appendix

\section{Proof of some propositions}\label{AppProof}
\noindent{\bf Proof of Theorem \ref{stonetheorem}}. 
If viewing  $t\mapsto U_t$ as a one-parameter subgroup of unitary operators in $\sH_\bR$, Stone Theorem for real Hilbert spaces 
(e.g., see Thm 2.9 in \cite{MO1}) guarantees the existence of a unique antiself-adjoint operator $A$ in $\sH_\bR$ satisfying both the requirements of (\ref{stoneproperty}) with respect to the structure of $\sH_\bR$.
 If we manage to prove that $A$ commutes with $\cJ$ and $\cK$ we get an antiselfadjoint operator on $\sH$ 
 satisfying (\ref{stoneproperty}) on $\sH$  (notice that the derivative in (\ref{stoneproperty}) does not depend on the algebra of scalars of the Hilbert space, 
 as the norms of $\sH_\bR$ and $\sH$ coincide). For $x\in D(A)$,
\begin{equation*}
\lim_{t\to 0}\frac{U_t \cJ x-\cJ x}{t}=\lim_{t\to 0}\cJ\frac{U_t x-x}{t}=\cJ\lim_{t\to 0}\frac{U_t x-x}{t}=\cJ Ax
\end{equation*}
hence $\cJ x\in D(A)$ and $A\cJ x=\cJ Ax$. This means $\cJ A\subset A\cJ$ and so $\cJ A=A\cJ$, $\cJ$ being unitary and antiselfadjoint. 
The same argument applies to $\cK$. $A$ is then a quaternionic linear antiselfadjont operator satisfying (\ref{stoneproperty}). Now  we can focus on the one-parameter
 unitary subgroup $\bR \ni t\mapsto e^{tA}$ in $\sH$.  So, let $x,y\in D(A)$, then
\begin{equation*}
\begin{split}
&\left.\frac{d}{dt}\right|_{t=0} \langle x|U_{-t} e^{tA} y\rangle =\left.\frac{d}{dt}\right|_{t=0}\langle U_{t}x|e^{tA}y\rangle = \left\langle \left.\frac{d}{dt}\right|_{t=0}
 U_{t}x\bigg|y\right\rangle+\left\langle x\bigg|\left.\frac{d}{dt}\right|_{t=0}e^{tA}y\right\rangle=\\
&= \langle Ax|y\rangle+\langle x|Ay\rangle=0
\end{split}
\end{equation*}
This proves that $t\mapsto \langle x|U_{-t} e^{tA} y\rangle $ is constant and thus  $\langle x|U_{-t} e^{tA} y\rangle=\langle x|y\rangle$. Since 
$D(A)$ is dense and $U_{-t} e^{tA}\in\gB(\sH)$, we have $U_{-t} e^{tA}=I$, i.e. $U_t= e^{tA}$. Now, suppose that there exists another antiselfadjoint 
operator $B$ on $\sH$ such that $U_t=e^{tB}$. Using the discussion above for the one-parameter subgroup of the kind $e^{tC}$, we immediately
 see that $A=B$.  We have also proved, {\em en passant}, that $A$ coincides with the generator of $U$ with respect to $\sH_\bR$.  With the same 
 argument used for $\sH_\bR$, noticing that  $\sH_{\bC_j}=(\sH_\bR)_\cJ$ and $A$ commutes with $\cJ$, it is easy to prove that 
$A$ is also the antiselfadjoint of $U$ intepreted as one-parameter unitary group in $\sH_{\bC_j}$. $\Box$\\

\noindent{\bf Proof of Theorem \ref{PDT}}.
(a) Let us view $A$ as a $\bR$-linear operator in $\sH_\bR$ with $D(A)$  intepreted  as a real subspace of $\sH_\bR$. With this interpretation,
 $A$ is still densely defined closed $\bR$-linear operator ((e )Prop \ref{propHR})  and the operator $A^*A$ with this natural domain is densely defined, positive and self-adjoint
  on $\sH_\bR$ ((a) Thm 2.18 in \cite{MO1}). Again, Prop \ref{propHR} guarantees that $A^*A$ (with the
   same domain as in the real case) is a densely defined $\bH$-linear positive and selfadjoint operator in $\sH$. The proof of (a) is completed. \\
(b) Still interpreting $A$ as a $\bR$-linear operator in $\sH_\bR$, the polar decomposition theorem on real Hilbert spaces (e.g., Thm 2.18 in \cite{MO1}),
yields $A= UP$ for a unique pair of $\bR$-linear operators $U,P$ that satisfy properties (i)-(iv) and (1)-(3) in $\sH_\bR$. If we manage to prove that $U,P$ 
are $\bH$-linear these properties will be proved valid also in $\sH$. Indeed self-adjointness and positivity are preserved when moving from $\sH_\bR$ to
 $\sH$ giving (ii); the norms of the two involved Hilbert spaces are equal to each other and so point (iii) immediately follows; point (i) and (iv) are just
  function properties, not depending on the scalar field.   Finally, properties (1)-(3) immediately derive from the analogous properties valid in $\sH_\bR$ 
  (In particular, Lemma \ref{commst} implies that $\sqrt{A^*A}$ is the same operator in $\sH$, $\sH_\bR$ and $\sH_{\bC_j}$).
So, we need to prove that $U$ and $P$ commute with $\cJ,\cK$. Take $\cJ$ and  work in $\sH_\bR$. Since $A$ is $\bH$-linear, it commutes with $\cJ$, 
hence $A=\cJ^*A\cJ=\cJ^*(UP)\cJ=(\cJ^*U\cJ)(\cJ^*P\cJ)$. Since  $\cJ$ is unitary and anti selfadjoint, we see that $\cJ^*U\cJ$ and $\cJ^*P\cJ$ satisfy (i)-(iv) and so, 
by uniqueness of the polar decomposition in $\sH_\bR$, we have $\cJ^*U\cJ=U$ and $\cJ^*P\cJ=P$.  The same argument applies to $\cK$ proving that $U$ and $P$ are $\bH$-linear as wanted.
To complete the proof of (b), it is enough to prove that  the polar decomposition of $A$ in   $\sH$ is unique. Suppose $A=VQ$ on $\sH$
 for some $V,Q$ satisfying (i)-(iv).  Using an argument similar to that exploited above, we see that $A=VQ$ is a polar decomposition of $A$ in  $\sH_\bR$ and so, by  uniqueness, $V=U$ and $Q=P$.\\
(c)  The fact that the polar decompositions of $A$ in $\sH$ and $\sH_\bR$ coincide has been just established. A strictly analogous argument proves the
 same result for the polar decompositions of $A$ in $\sH_\bR$ and $\sH_{\bC_j}$. $\Box$\\

\noindent{\bf Proof of Proposition \ref{operatorfunctioncompl}}.
(a) If $JA\subset AJ$,  Lemma \ref{commst} implies $P^{(A)}_EJ=JP^{(A)}_E$ for every Borelian $E\subset \bR$ and Prop. \ref{propHJop} easily 
implies that $\mathcal{B}(\bR)\ni  E\mapsto (P_E^{(A)})_J \in \gB(\sH_J)$ is a complex PVM.
Let $x\in\sH$ be generic. 
Since the Hermitian scalar product of $\sH_J$ is the restriction of $\langle\cdot|\cdot\rangle$ to $\sH_J$ and the self-adjoint operator $A_J$
 is the analogous restriction of the self-adjoint operator $A$, 
Lemma \ref{lemmast} guarantees that $\mathcal{B}(\bR)\ni  E\mapsto (P_E^{(A)})_J \in \gB(\sH_J)$ is the PVM associated with $A_J$ in $\sH_J$. 
The proof of the last statement is now trivial.\\
(b) If $x\in\sH$, using notations as in Lemma \ref{lemmast},  $\mu_{Jx}^{P^{(A)}}(E)=\langle Jx|P^{(A)}_E Jx\rangle= \langle x|J^*P^{(A)}_EJx\rangle=\langle x|P^{(A)}_Ex\rangle=\mu^{P^{(A)}}_x(E)$.
 If, in particular,  $x\in D(f(A))$, we have $\int_\bR|f(\lambda)|^2\, d\mu^{P^{(A)}}_{Jx}(\lambda)=\int_\bR|f(\lambda)|^2\, d\mu^{P^{(A)}}_{x}(\lambda)<\infty$, which means $Jx\in D(f(A))$ and 
\begin{equation*}
\langle x|J^*f(A)Jx\rangle=\langle Jx|f(A)Jx\rangle=\int_\bR f\, d\mu_{Jx}^{P^{(A)}}= \int_\bR f\, d\mu_x^{P^{(A)}} = \langle x|f(A)x\rangle\:.
\end{equation*}
As a consequence,  $\langle x|(J^*f(A)J-f(A))x\rangle=0$ if $x\in D(f(A))\subset D(f(A)J)$.  Since both $f(A)$ and $J^*f(A)J$ are selfadjoint 
operators, $J^*f(A)J-f(A)$ is symmetric and so the vanishing of the scalar product ensures that $J^*f(A)J=f(A)$ on $D(f(A))$, i.e. $Jf(A)\subset f(A)J$. Since $J$ is 
unitary and antiselfadjoint, then $Jf(A)=f(A)J$ as wanted. Let us conclude the proof of (b). As a byproduct of the first part of (b), Prop.\ref{propHJop} implies that
 the restriction $f(A)_J$  a well-defined self adjoint operator of $\sH_J$. Let us focus on the operator $f(A_J)$ defined  on $\sH_J$ through the complex spectral Theorem.
$x\in D(f(A_J))$ iff $\int_\bR|f(\lambda)|^2\, d\mu_x^{P^{(A_J)}}<\infty$. Since for (a) $\mu_x^{P^{(A_J)}}=\mu_x^{P^{(A)}_J}$ when $x\in\sH_J$, we have that $x\in D(f(A)_J)\subset D(f(A))$ and furthermore
\begin{equation*}
\langle x|f(A_J)x\rangle = \int_\bR f\, d\mu_x^{P^{(A_J)}}=\int_\bR f\, d\mu_x^{P_J^{(A)}}=\langle x|f(A)x\rangle=\langle x|f(A)_Jx\rangle
\end{equation*} 
therefore  $\langle x|(f(A_J)-f(A)_J )x \rangle=0$ for  $x\in D(f(A_J))\subset D(f(A)_J)$. As $f(A_J)-f(A)_J$ is symmetric on $D(f(A_J))$, it is easy to prove that, $f(A_J)-f(A)_J=0$ on $D(f(A_J))$, i.e. $f(A_J)\subset f(A)_J$. 
Taking the adjoint we also have $f(A_J)^*\supset (f(A)_J)^*$.
Since both operators  are selfadjoint, it must be $f(A_J)= f(A)_J$.\\
(c) From the uniqueness of the polar decomposition for closed operators (see \cite{M2} for the complex case) it is easy to see that a selfadjoint operator $X$ is positive iff $X=|X|$. Now, thanks to (b) we
 have $|A|_J=(\sqrt{A^*A})_J=\sqrt{(A^*A)_J}=\sqrt{A^*_JA_J}=|A_J|$ which gives $A\ge 0$ iff $A=|A|$ iff $A_J=|A|_J$ iff $A_J=|A_J|$ iff $A_J\ge 0$, concluding the proof of (c). \\
(d) As we know (Theorem 4.8 \cite{GMP1}(b)), if $A=A^*$ its spherical spectrum is completely included in $\bR$  and there is not residual spectrum. Furthermore, as
 in the complex and real Hilbert space cases, $\sigma_S(A)$ is the support of $P^{(A)}$, namely, 
the complement of the largest open set $O\subset \bR$ such that $P^{(A)}_O=0$  (Theorem 6.6 (b) \cite{GMP2}).
Due to the last statement in (a), this is equivalent to saying that  $\sigma_S(A)$ is the complement of the largest open set $O\subset \bR$ such that $P^{(A_J)}_O=0$ so that $\sigma_S(A)=\sigma(A_J)$.
We also know that, exactly as in the real and complex Hilbert space cases, (Theorem 6.6 (d) \cite{GMP2})  $\lambda \in \sigma_{pS}(A)$ iff $P^{(A)}_{\{\lambda\}}\neq 0$. 
Due to the last statement in (a), this is equivalent to 
$P^{(A_J)}_{\{\lambda\}}\neq 0$, which is equivalent to   $\lambda \in \sigma_{p}(A_J)$. Since, for self-adjoint operators in quaternionic Hilbert spaces we have
 $\sigma_{cS}(A)= \sigma_S(A)\setminus \sigma_{cS}(A)$,  exaclty as in 
real and complex cases, where
$\sigma_{c}(A_J)= \sigma(A_J)\setminus \sigma_{p}(A_J)$, we also have $\sigma_{cS}(A)=\sigma_c(A_J)$
This concludes the proof.  $\Box$\\

\noindent{\bf Proof of Proposition \ref{lemmaextensreal}}.
(a). It is sufficient to observe that $JX\subset XJ$ and $KX\subset XK$ imply  that the  vectors  of the real subspace $D(X) \cap \sH_{JK}$ 
are mapped to $\sH_{JK}$ by  $X$ and thus $X_{JK} : D(X) \cap \sH_{JK} \to \sH_{JK}$ is a well defined $\bR$-linear operator in $\sH_{JK}$.\\
(b). Take $A$ as in the hypothesis, referring to the decomposition of Prop.\ref{realdecomp}, consider  the quaternionic linear 
subspace of $\sH$ defined as  $D(\widetilde{A}):=D(A)\oplus D(A)i\oplus D(A)j\oplus D(A)k$. From (\ref{defL}),  $L_q(D(\widetilde{A}))\subset D(\widetilde{A})$ 
and $D(A)=D(\widetilde{A})\cap\sH_{JK}$. Now, if $x=x_1+x_2i+x_3j+x_4k\in D(\widetilde{A})$, define
\begin{equation}\label{extension}
\widetilde{A}x:=Ax_1+(Ax_2)i+(Ax_3)j+(Ax_4)k
\end{equation}
which is a $\bR$-linear operator on $\sH$ extending $A$ and also a (right) quaternionic linear operator as it can be proved by direct inspection.
Finally one easily proves that  $J\widetilde{A}\subset\widetilde{A}J$ and $K\widetilde{A}\subset \widetilde{A}K$ using the given definition of $\widetilde{A}$.  Let us pass to the uniqueness property.
Suppose there exists a $\bH$-linear operator $B$ in $\sH$ which extends $A$ and such that $L_q(D(B))\subset D(B)$ and $D(A)=D(B)\cap \sH_{JK}$.  From this equality we see that $D(\tilde{A})\subset D(B)$ 
as $D(B)$ is a quaternionic linear subspace. Now, take any $x\in D(B)$ and decompose it as 
$x=x_1+x_2i + x_3j+x_4k$ with $x_r \in \sH_{JK}$ for $r=1,2,3,4$ according to Prop.\ref{realdecomp}. So,
\begin{equation*}
\begin{split}
D(B)\ni (L_jx)j&=(-x_3+ x_4i+ x_1j-x_2k)j=-x_1 + x_2i-x_3j+x_4k=\\
&=(-x_1-x_3j)+(x_4+x_2j)k
\end{split}
\end{equation*}
Hence, we get 
\begin{equation}\label{eq1}
\frac{1}{2}(x-(L_jx)j)=x_1+x_3j\in D(B)\mbox{ and }-\frac{1}{2}(x+(L_jx)j)k=x_4+x_2j\in D(B)
\end{equation}
Similarly, 
\begin{equation}\label{eq2}
-(L_k(x_1+x_3j))k=x_1-x_3j\in D(B) \mbox{ and }-(L_k(x_4+x_2j))k=x_4-x_2j\in D(B)
\end{equation}
Combining (\ref{eq1}) and (\ref{eq2}) together we can easily see that $x_r\in D(A)$ for every $r=1,2,3,4$. As a consequence, $D(B)\subset D(A)\oplus D(A)i\oplus D(A)j\oplus D(A)k=D(\tilde{A})$. 
Summing up we have $D(B)=D(\tilde{A})$ and $B=\widetilde{A}$ from the $\bH$-linearity of $B$. \\
(c) Let us pass to prove the properties (i)-(viii).
(i) holds  since  $J\widetilde{A}\subset\widetilde{A}J$ and $K\widetilde{A}\subset \widetilde{A}K$ are valid, (\ref{defL})  implies   
$L_q\widetilde{A} \subset \widetilde{A}L_q$. Exploiting the properties of $L$ it is straightforward to see that it actually holds $L_q\widetilde{A} = \widetilde{A}L_q$. (ii) is evidently valid 
from $D(\tilde{A}):=D(A)\oplus D(A)i\oplus D(A)j\oplus D(A)k$ and   (\ref{extension}).
(iv)  Let $B$ be as in the hypotheses, so $x=x_1+x_2i+x_3j+x_4k\in D(\widetilde{AB})$ iff $x_r\in D(AB)$  for $r= 1,2,3,4$, which is equivalent to saying   $x_r\in D(B)$ and $Bx_r\in D(A)$
for $r= 1,2,3,4$, which is equivalent to requiring 
  $x\in D(\widetilde{B})$ and $\widetilde{B}x\in D(\widetilde{A})$, which is finally equivalent to saying  $x\in D(\widetilde{A}\widetilde{B})$. 
  Therefore $D(\widetilde{AB})= D(\widetilde{A}\widetilde{B})$.
Since both $\widetilde{AB}$ and $\widetilde{A}\widetilde{B}$ extend $AB$ to  $\sH$,  commutes with $J$ and $K$ and 
$D(AB)= D(\widetilde{AB}) \cap \sH_{JK} =  D(\widetilde{A}\widetilde{B}) \cap \sH_{JK}$,  uniqueness of such an 
extension gives $\widetilde{AB}=\widetilde{A}\widetilde{B}$. The proofs of (iii) and (v) are analogous.
Let us pass to (vi). Suppose that $D(A) =\sH_{JK}$ and let $x\in\sH$
be decomposed as $x=x_1+x_2i+x_3j+x_4k$ with $x_r\in \sH_{JK}$ for $r=1,2,3,4$.
Exploiting (\ref{extension}) and Prop.\ref{realdecomp} we have $D(\widetilde{A})= \sH$ and
$\|\widetilde{A}\|\le \|A\| \leq +\infty$ because
$$
\|\widetilde{A}x\|^2=\sum_{n=1}^4\|Ax_n\|^2\le \|A\|^2\sum_{n=1}^4\|x_n\|^2=\|A\|^2\|x\|^2
$$
On the other hand, since $\widetilde{A}u=Au$ for  $0 \neq u\in \sH_{JK} \subset \sH$,  it must be $\|A\|\leq  \|\tilde{A}\| \leq +\infty$ and thus 
$\|\tilde{A}\| =\|A\| \leq +\infty$.  In particular $A\in \gB(\sH_{JK})$ immediately implies $\widetilde{A} \in \gB(\sH)$.
 The converse is trivially true as, if $\widetilde{A} \in \gB(\sH)$ then $D(A)= \gB(\sH)\cap \gB(\sH_{JK}) = \gB(\sH_{JK})$ and furthermore using the argument 
 above, $A \in \gB(\sH_{JK})$.  The proof of (vii) immediately follows from $D(\tilde{A}):=D(A)\oplus D(A)i\oplus D(A)j\oplus D(A)k$
 and (b) in Prop.\ref{realdecomp}. Let us pass to (viii).  Notice that $\widetilde{A^*}\subset (\tilde{A})^*$. indeed,
assume that   $x\in D(\widetilde{A^*})$ and $y\in D(\widetilde{A})$, then it holds with obvious notation
\begin{equation*}
\begin{split}
\langle x|\widetilde{A}y\rangle &= \sum_{\alpha=1}^4\sum_{\beta=1}^4\langle x_\alpha\imath_\alpha|(Ay_\beta)\imath_\beta\rangle=\sum_{\alpha=1}^4\sum_{\beta=1}^4 \overline{\imath_\alpha}\langle x_\alpha|Ay_\beta\rangle\imath_\beta=\\
&=\sum_{\alpha=1}^4\sum_{\beta=1}^4 \overline{\imath_\alpha}\langle A^*x_\alpha|y_\beta\rangle\imath_\beta = \langle \widetilde{A^*}x|y\rangle
\end{split}
\end{equation*}
As a consequence $\widetilde{A^*}\subset (\tilde{A})^*$. Let us prove the converse inclusion $\widetilde{A^*}\supset (\tilde{A})^*$ to conclude. It is enough
 establishing  that $D((\widetilde{A})^*)\subset D(\widetilde{A^*})$. To this end, suppose $x\in D((\widetilde{A})^*)$. By definition this means that there exists $z\in \sH$ 
 such that $\langle x|\widetilde{A}y\rangle=\langle z|y\rangle$ for every $y\in D(\widetilde{A})$. Take in particular  $y=y_1+0i+0j+0k\in D(A)\subset D(\widetilde{A})$, then
\begin{equation*}
\sum_{\alpha=1}^4\overline{\imath_\alpha}\langle x_\alpha|Ay_1\rangle=\langle x|\tilde{A}y\rangle=\langle z|y\rangle = \sum_{\alpha=1}^4\overline{\imath_\alpha}\langle z_\alpha|y_1\rangle\:,
\end{equation*}
where $\langle x_\alpha|Ay_1\rangle,\langle z_\alpha|y_1\rangle\in\bR$ since the restriction of $\langle\cdot|\cdot\rangle$ to $\sH_{JK}$ is real-valued. This implies
 that $\langle x_\alpha|Ay_1\rangle=\langle z_\alpha|y_1\rangle$ for every $\alpha$. Since $y_1\in D(A)$ is generic this means that $x_\alpha\in D(A^*)$ for every 
 $\alpha=1,2,3,4$, i.e., $x\in D(\widetilde{A^*})$ as wanted concluding the proof of (viii).
We prove (ix) concluding the proof. Suppose that  $\widetilde{A}$ is closable but $A$ is not, then there exits $\{a_n\}_{n\in \bN}\subset D(A)$ such that $a_n\to 0$ as $n\to +\infty$ and $A a_n\to y\neq 0$. 
Since  $A \subset \widetilde{A}$, we have $\widetilde{A}a_n\to y\neq 0$, which is impossible. Now, on the contrary, suppose that $A$ is closable and $\widetilde{A}$ is not. It must therefore  be  
$D(\widetilde{A})\ni x_n\to 0$ and $\widetilde{A}x_n \to z\neq 0$.  Proposition \ref{realdecomp} easily entails $(x_n)_r\rightarrow 0$  
and $A(x_n)_{r_0} \to z_{r_0}\neq 0 $ for at least one $r_0=1,2,3,4$,  which is impossible since $A$ is assumed to be closable. The fact that $\overline{\widetilde{A}}=\widetilde{\overline{A}}$ 
is now an easy consequence of the definition of $\widetilde{A}$, (b) in Proposition \ref{realdecomp} and the fact that, for general operators in real Banach spaces,
$x \in D(\overline{B})$ if and only if there exists a sequence $D(B) \ni x_n \to x$ such that  $Bx_n \to y_x$ for some $y_x$, where $\overline{B}x := y_x$.\\
The proof of (d) is elementary and follows form the definitions. $\Box$\\

\noindent {\bf Proof of Proposition \ref{gardintcompl}}.\\
(a)  $J(D_G^{(U)})\subset  D_G^{(U)}$  immediately arises from boundedness of $J$, $JU_g=U_gJ$ and the definition of $D_G^{(U)}$. As a consequence 
 $J(J(D_G^{(U)}))\subset  J(D_G^{(U)})$ so that $D_G^{(U)} \subset J(D_G^{(U)})$ because $JJ=-I$ and $D_G^{(U)}$ is a subspace. Summing up, $J(D_G^{(U)})=  D_G^{(U)}$.\\
(b)  Since  $U_gJ=JU_g$, every $U_g$ admits  $\sH_J$ as invariant space giving rise to a unitary operator for Prop.\ref{propHJop} and  (b) is manifestly true. \\ (c) If $x\in\sH_J$, 
define $x(f)$ as the only vector of $\sH_J$ such that $\langle y|x(f)\rangle=\int_G f(g) \langle y|U_gx\rangle\, dg$ for all $y\in\sH_J$. Let $x[f]$ be the only vector of $\sH$ 
such that $\langle z|x[f]\rangle=\int_G f(g)\langle z|U_g x\rangle\, dg$ for all $z\in\sH$. We can decompose  $z=z_1+z_2k$ with $z_1,z_2 \in \sH_J$ due to (b) Prop.\ref{propHJ}, then
$$
\langle z|x(f)\rangle=\langle z_1|x(f)\rangle - k\langle z_2|x(f)\rangle k=\int_G f(g) \langle z_1|U_gx\rangle\, dg -k\left( \int_G f(g) \langle z_2|U_gx\rangle\, dg\right)= \: 
$$
$$
= \int_G f(g) \left(\langle z_1|U_gx\rangle - k\langle z_2|U_gx\rangle k\right) \, dg=  \int_G f(g) \langle z|U_gx\rangle\, dg
= \langle z|x[f]\rangle\:.
$$
Since this holds for any $z\in\sH$, it must be $x(f)=x[f]$. $D_G^{(U_J)}$ is made of $\bC_j$ complex linear combinations of vectors $x(f) (= x[f])$. As $D_G^U$ is 
closed with respect generic quaternionic linear combinations, $D_G^{(U_J)}\subset D_G^{(U)}$.  In particular $D_G^{(U_J)}\subset D_G^{(U)}\cap\sH_J$.
Let us prove the converse inclusion.
Take $\sum_{\alpha=1}^n  x_\alpha[f_\alpha]c_\alpha\in D_G^{(U)}\cap \sH_J$ for some $c_\alpha\in\bH, x_\alpha\in\sH$ and real-valued functions $f_\alpha\in C^\infty_0(G)$.  Thanks to 
Remark \ref{gardingrealcombin} $x[f]q=(xq)[f]$ and thus we can henceforth  suppose that $c_k=1$. In view of (b) Prop.\ref{propHJ},
 $x_\alpha=u_\alpha+v_\alpha k$ for some $u_\alpha,v_\alpha\in\sH_J$, then it is easy to see that $x_\alpha[f_\alpha]=u_\alpha[f_\alpha]+v_\alpha[f_\alpha]k$, where 
 $u_\alpha[f_\alpha],v_\alpha[f_\alpha]\in D_G^{(U_J)}\subset \sH_J$ (see discussion above) and so
\begin{equation*}
\sum_{\alpha}x_\alpha[f_\alpha]=\sum_{\alpha}u_\alpha[f_\alpha]+\left(\sum_\alpha v_\alpha[f_\alpha]\right)k\:.
\end{equation*}
Since $\sum_{\alpha}x_\alpha[f_\alpha]\in\sH_J$ it must be $\sum_\alpha v_\alpha[f_\alpha]=0$ and $\sum_{\alpha}x_\alpha[f_\alpha]=\sum_{\alpha}u_\alpha[f_\alpha]\in D_G^{(U_J)}$. We
 have found that $D_G^{(U)}\cap \sH_J \subset D_G^{(U_J)}$ concluding the proof of (c).
(d) First assume ${\bf M}={\bf A} \in \gg$. Consider $u({\bf A})=\tilde{A}|_{D_G^{(U)}}$ where $\widetilde{A}:D(\tilde{A})\rightarrow \sH$ is the anti-selfadjoint generator of
 $t\mapsto U_{\exp(t{\bf A})}$ on $\sH$. Thanks to lemma \ref{Lemmagenertorsrestrictoin} we have two important facts. First, it holds $\widetilde{A}J=J\widetilde{A}$ 
 which, together with $J(D_G^{(U)})= D_G^{(U)}$, yields $u({\bf A})J=Ju({\bf A})$ and so  we can consider $u({\bf A})_J$ which is clearly given
  by $u({\bf A})_J=\widetilde{A}|_{D_G^{(U)}\cap \sH_J}=\widetilde{A}|_{D_G^{(U_J)}}$. Second, it ensures that  $\widetilde{A}_J$ is the generator of $t\mapsto (U_{\exp(t{\bf A})})_J$ on $\sH_J$, 
and so, by definition, the map $u_J$ is necessarily given by $u_J({\bf A})=\widetilde{A}_J|_{D_G^{(U_J)}}=\widetilde{A}|_{D_G^{(U_J)}}$. 
Putting the two conclusions together we get the thesis for ${\bf M}={\bf A} \in \gg$. The extension to a generic element of $E_\gg$ is trivial making use of (\ref{defugen}). $\Box$\\

\noindent {\bf Proof of Proposition \ref{proplocfaithfulness}}. \\
Let us prove that (a) implies  (b). Assuming (a) is true, there is a neighborhood of $\{(I,0)\}$ where $f$ is injective, therefore the map $\bR^4 \ni t \mapsto f((I,t)) \in G$ cannot be the constant function 
always attaining the neutral element  $e$ of $G$, so (b) is valid.\\
Let us now prove that (b) implies  (a).  $Ker(f)$ is a normal subgroup of 
$SL(2,\bC)\ltimes \bR^4$ and is also closed because $f$ is continuous and, if $e\in G$ is the unit elements, $\{e\}$ is closed because $G$ is Hausdorff. Since $Ker(f)$ is a closed subgroup of a Lie group, it must be a
Lie subgroup due to Cartan theorem. As we prove below, the normal Lie subgroups of $SL(2,\bC)\ltimes \bR^4$ are (1) $SL(2,\bC)\ltimes \bR^4$ itself, (2) $\{(\pm I, t) \:|\: t \in \bR^4\}$ and its subgroups (3) 
$\bR^4 \equiv \{(I, t) \:|\: t \in \bR^4\}$, (4) $\{(\pm I, 0)\}$, (5) $\{(I,0)\}$. Since, by hypothesis 
$\bR^4 \ni t \mapsto f((I,t)) \in G$ is injective, $Ker(f)$ cannot be as in the cases  (1),(2),(3). Only (4) and (5) are possible. In both cases $f$ is injective at least  in a neighbourhood of $(I,0)$ so (a) is valid.\\
Regarding the last statement in the thesis of Proposition \ref{proplocfaithfulness}, observe that holding (b), from the above analysis we conclude that  $f$ is not injective if and only if
$Ker(f) = \{(\pm I, 0)\}$.  \\
To conclude we have to prove that the normal Lie subgroups of $SL(2,\bC)\ltimes \bR^4$ are those mentioned. More precisely (2), (3) and (4), since (1) and (5) are trivial. 
First one has to identify the connected normal Lie subgroups. They
correspond
to ideals of the Lie algebra invariant under the adjoint action. In the
Lie algebra,
the adjoint action on $\bR^4$ is irreducible and on the quotient we have the
adjoint action
of the real simple Lie algebra $sl(2,\bC)$ which is also irreducible.
Hence there is only one
non-trivial ideal corresponding to the translation group. This shows that
closed normal subgroups are either discrete or their identity component
consists of the translation group.
Let us start with the discrete case. Discrete normal subgroups of
connected Lie groups are central.
And it is easy to figure out the centre, which is $\{\pm I\}$ in $SL(2,\bC)$; so
this is the only discrete normal
subgroup.
If $N$ is a Lie normal subgroup whose identity component consists of
translations,
 then its quotient by $\bR^4$ is a discrete subgroup of $SL(2,\bC)$, hence
either trivial or the centre.
This argument produces precisely $3$ non-trivial closed normal subgroups, i.e.,  (2),(3) and (4)
and these are all.
$\Box$

\subsection*{\bf Acknowledgements}. The authors are grateful to Professor K.-H. Neeb for useful  discussions and technical suggestions.

\end{document}